
\documentclass[12pt,a4paper]{article}%
\usepackage{amsfonts}
\usepackage{amssymb}
\usepackage{amsmath}
\usepackage{amsthm}
\usepackage{color}
\usepackage{graphics}
\usepackage{epsfig,amssymb,latexsym,verbatim}
\usepackage{amsthm,amscd,graphics,latexsym}
\usepackage{epstopdf}
\usepackage{graphicx}
\usepackage{multirow}
\usepackage{multicol}
\usepackage{float}
\usepackage{hyperref}
\usepackage{graphicx, amssymb}
\usepackage{breakcites}
\usepackage{bm}
\usepackage[normalem]{ulem}
\usepackage{xr}%
\setcounter{MaxMatrixCols}{30}
\providecommand{\U}[1]{\protect\rule{.1in}{.1in}}
\floatstyle{ruled}
\newfloat{algorithm}{tbp}{loa}
\floatname{algorithm}{Algorithm}

\newtheorem{theorem}{\textbf{Theorem}}

\newtheorem{proposition}{\textbf{Proposition}}
\newtheorem{lemma}{\textbf{Lemma}}

\def \be{\begin{equation}}
\def \ee{\end{equation}}
\def \bea{\begin{eqnarray}}
\def \eea{\end{eqnarray}}
\oddsidemargin  2mm
\evensidemargin 2mm
\headheight -0.03mm
\headsep  -0.03mm
\hoffset -3mm
\textheight 250mm
\textwidth 165mm
\topmargin -2mm
\topskip -2mm

\begin{document}

\title{{\normalsize {{\Large \textbf{Estimation in Semiparametric Quantile Factor
Models}} }}}
\author{Shujie Ma\\University of California at Riverside\thanks{Ma's research was partially
supported by NSF grants DMS 1306972 and DMS 1712558.}
\and Oliver Linton\\University of Cambridge
\and Jiti Gao\\Monash University\thanks{Gao's research was supported by the Australian
Research Council Discovery Grants Program for its support under Grant numbers:
DP150101012 \& DP170104421.}}
\maketitle

\begin{abstract}
We propose an estimation methodology for a semiparametric quantile factor
panel model. We provide tools for inference that are robust to the existence
of moments and to the form of weak cross-sectional dependence in the
idiosyncratic error term. We apply our method to daily stock return data.

\textit{Keywords:} Cross--Sectional Dependence; Fama--French Model; Inference;
Sieve Estimation \medskip

\textit{JEL classification}: C14; C21; C23; G12

\end{abstract}

\newpage

\renewcommand{\theequation}{1.\arabic{equation}}

\setcounter{equation}{0}

\section{Introduction}

Factor models are widely used to capture the co-movement of a large number of time series and to model
covariance matrices. They provide useful dimensionality reduction in many applications from climate modelling to
finance. Perhaps the current state of the art for factor
modelling is Fan, Liao, and Micheva (2013), which allowed the idiosyncratic covariance matrix to be
non-diagonal but sparse, and used thresholding techniques {\color{black} (Cai and Liu, 2011)} to impose sparsity
and thereby obtain a better estimator of the covariance matrix and its inverse
in this big-data setting.
The usual approach ignores covariate information that can sometimes be informative.
Connor, Hagmann and Linton (2012) developed a semiparametric factor regression methodology
that introduces covariate information into the factor loading parameters. This model is well motivated
in finance applications where it can be understood as a properly formulated version of the popular Fama-French (1992)
approach to modelling returns with observable characteristics. The model also makes sense
in other contexts where covariate information is available. Their application was to monthly stock returns,
which is where the finance literature was focussed. {\color{black} Moreover, Fan, Liao and Wang (2016) proposed a Projected-PCA approach which employs principal component analysis to the projected data matrix onto a linear space spanned by covariates.
It is worth noting that most existing works in the literature of factor models require
at least four moments to establish their theoretical properties. See, for instance, Bai and Ng (2002), Bai and Li (2012), Lam and Yao (2012), Connor, Hagmann and Linton (2012), Fan, Liao, and Micheva (2013), Fan, Liao and Wang (2016), Li et al. (2017), among others.} This may not be a binding restriction for monthly stock returns, but for daily
stock returns this is a bit strong.

Quantile methods are widely used in statistics. They have the advantage of being robust to large observations.
They can also provide more information about the conditional distribution away from the centre, which is relevant in many applications.
In this paper, we propose estimation and inferential methodology for the
quantile version of the Connor, Hagmann and Linton (2012) model. Our contribution is summarized as follows.

First, we propose an estimation algorithm for this model. We use sieve
techniques to obtain preliminary estimators of the nonparametric beta
functions, see Chen (2011) for a review, and use these to estimate the factor return vector
at each time period. We then update the loading functions and factor returns
sequentially. We compute the estimator in two steps for computational reasons.
We have $J\times T$ unknown factor return parameters as well as $J\times
K_{N}$ sieve parameters to estimate, and simultaneous estimation of these parameters
without penalization would be challenging. Penalization of the factor returns
here is not well motivated so we do not pursue this. Instead we first
estimate the unrestricted additive quantile regression function for each time
period and then impose the factor structure in a sequential fashion.

Second, we derive the limiting properties of our estimated factor returns and
factor loading functions under the assumption that the included factors all
have non zero mean and under weak conditions on cross-section and temporal
dependence. A key consideration in the panel modelling of stock returns is
what position to take on the cross sectional dependence in the idiosyncratic
part of stock returns. Early studies assumed iid in the cross section, but
this turns out to be not necessary. More recent work has allowed for cross
sectional dependence in a variety of ways. Connor, Hagmann and Linton (2012)
imposed a known industry cluster/block structure where the number of
industries goes to infinity as do the number of members of the industry. Under
this structure one obtains a CLT and inference can be conducted by estimating
only the intra block covariances. Robinson and Thawornkaiwong (2012)
considered a linear process structure driven by independent shocks. Dong, Gao
and Peng (2015) introduced a spatial mixing structure to accommodate both
serial correlation and cross--sectional dependence for a general panel data
setting. Conley (1999) studied that under a lattice structure or some
observable or estimable distance function that determines the ordering, one
can consistently estimate the asymptotic covariance matrix. However, this type
of structure is hard to justify for stock returns, and in that case their
approach does not deliver consistent inference.
Connor and Koraczyck (1993) considered a different cross-sectional dependence
structure, namely, they supposed that there was an ordering of the cross
sectional units such that weak dependence of the alpha mixing variety was
held. They do not assume\ knowledge of the ordering as this was not needed for
their main results. We adopt and generalize their structure. In fact, we allow
for weak dependence simultaneously in the cross-section and time series
dependence. This structure affects the limiting distribution of the estimated
factor returns in a complicated fashion, and the usual Newey--West type of
standard errors can't be adapted to account for the cross-sectional dependence
here because the ordering is not assumed to be known. To conduct inference we
have to take account of the correlation structure. We use the so-called fix-b
asymptotics to achieve this, namely, we construct a test statistic based on an
inconsistent fixed-b estimator of the
correlation structure, as in Kiefer and Vogelsang (2002), and show that it has
a pivotal limiting distribution that is a functional of a Gaussian process.

Third, our estimation procedure only requires that the time series mean of
factor returns be non zero. A number of authors have noted that in the
presence of a weak factor, regression identification strategies can break down
(Bryzgalova, 2015). In view of this we provide a test of whether a given
factor is present or not in each time period.

Fourth, we apply our procedure to CRSP daily data and show how the factor
loading functions vary nonlinearly with state. The median regression
estimators are comparable to those of Connor, Hagmann and Linton (2012) and
can be used to test asset pricing theories under comparable quantile
restrictions, see for example, Bassett, Koenker and Kordas (2004), and to
design investment strategies. The lower quantile estimators could be used for
risk management purposes. The advantage of the quantile method is its
robustness to heavy tails in the response distribution, which may be present
in daily data. Indeed our theory does not require any moment conditions.

The organization of this paper is given as follows. Section 2 proposes the
main model and then discusses some identification issues. An estimation method
based on B--splines is then proposed in Section 3. Section 4 establishes an
asymptotic theory for the proposed estimation method. Section 5 discusses a
covariance estimation problem and then considers testing for the factors
involved in the main model. Section 6 gives an empirical application of the
proposed model and estimation theory to model the dependence of daily returns
on a set of characteristic variables. Section 7 concludes the paper with some
discussion. All the mathematical proofs of the main results are given in an appendix and on-line supplemental materials.

\renewcommand{\theequation}{2.\arabic{equation}}

\setcounter{equation}{0}

\section{The model and identification}

\label{SEC:intro}

We introduce some notations which will be used throughout the paper. For any
positive numbers $a_{n}$ and $b_{n}$, let $a_{n}\asymp b_{n}$ denote
lim$_{n\rightarrow\infty}a_{n}/b_{n}=c$, for a positive constant $c$, and let
$a_{n}\gg b_{n}$ denote $a_{n}^{-1}b_{n}=o(1)$. For any vector $\mathbf{a=(}%
a_{1},\ldots,a_{n})^{\intercal}\in\mathbb{R}^{n}$, denote $||\mathbf{a||}%
=\left(  \sum\nolimits_{i=1}^{n}a_{i}^{2}\right)  ^{1/2}$. For any symmetric
matrix $\mathbf{A}_{s\times s}$, denote its $L_{2}$ norm as $\left\Vert
\mathbf{A}\right\Vert =\max_{\mathbf{\zeta\in}\mathbb{R}^{s}\mathbf{,\zeta
\neq0}}\left\Vert \mathbf{A\zeta}\right\Vert \left\Vert \mathbf{\zeta
}\right\Vert ^{-1}$. We use $\left(  N,T\right)  \rightarrow\infty$ to denote
that $N$ and $T$ pass to infinity jointly.

We consider the following model for the $\tau^{\text{th}}$ conditional
quantile function of the response $y_{it}$ for the $i^{\text{th}}$ asset at
time $t$ given as
\begin{equation}
Q_{y_{it}}(\tau|X_{i},f_{t})=f_{ut}+\sum\nolimits_{j=1}^{J}g_{j}(X_{ji}%
)f_{jt},\label{EQ:factorquantile}%
\end{equation}
i.e., we suppose that
\begin{equation}
y_{it}=f_{ut}+\sum\nolimits_{j=1}^{J}g_{j}(X_{ji})f_{jt}+\varepsilon_{it},
\label{q}%
\end{equation}
for $i=1,\ldots,N$ and $t=1,\ldots,T$, where $y_{it}$ is the excess return to
security $i$ at time $t$; $f_{ut}$ and $f_{jt}$ are factor returns, which are
unobservable; $g_{j}(X_{ji})$ are the factor betas, which are unknown but
smooth functions of $X_{ji}$, where $X_{ji}$ are observable security
characteristics, and $X_{ji}$ lies in a compact set $\mathcal{X}_{ji}$. Let
$X_{i}=(X_{1i},\ldots,X_{Ji})^{\intercal}$ and $f_{t}=(f_{ut},f_{1t}%
,\ldots,f_{Jt})^{\intercal}$. The
error terms $\varepsilon_{it}$ are the asset-specific or idiosyncratic returns
and they satisfy that the conditional $\tau^{\text{th}}$ quantile of
$\varepsilon_{it}$ given $\left(  X_{i},f_{t}\right)  $ is zero. The factors
$f_{ut}$ and $f_{jt}$ and the factor betas $g_{j}(\cdot)$ should
be $\tau$ specific. For notational simplicity, we suppress the $\tau$
subscripts. For model identifiability, we assume that:

\textsc{Assumption A0}. \emph{For some probability measures }$P_{j}$\emph{ we
have} $\int g_{j}(x_{j})dP_{j}(x_{j})=0$ \emph{and} $\int\left(  g_{j}(x_{j})\right)  ^{2}dP_{j}(x_{j})=1$ for all $j=1,\ldots,J$. Furthermore,
$\lim\inf_{T\rightarrow\infty}\left\vert \sum_{t=1}^{T}f_{jt}/T\right\vert
>0$ for each $j$.

The case where $\tau=1/2$ corresponds to the conditional median, and is
comparable to the conditional mean model used in Connor, Hagmann and Linton (2012). The advantage of the median over the mean
is its robustness to heavy tails and outliers, which is especially important
with daily data. The case where $\tau=0.01,$ say, might be of interest for the
purposes of risk management, since this corresponds to a standard
Value-at-Risk threshold in which case (\ref{EQ:factorquantile}) gives the
conditional Value-at-Risk given the characteristics and the factor returns at
time $t.$ To obtain an ex-ante measure we should have to employ a forecasting
model for the factor returns.

Suppose that the $\tau^{\text{th}}$ conditional quantile function $Q_{y_{it}%
}(\tau|X_{i}=x)$ of the response $y_{it}$ at time $t$ given the covariate
$X_{i}=x$ is additive
\begin{equation}
H_{t}(\tau|x)=h_{ut}+\sum\nolimits_{j=1}^{J}h_{jt}(x_{j}),
\label{EQ:quantileinitial}%
\end{equation}
where $h_{jt}(\cdot)$ are unknown functions without loss of generality
satisfying $\int h_{jt}(x_{j})dP_{j}(x_{j})=0$ for $t=1,\ldots,T$
(Horowitz and Lee, 2005). Under the factor structure (\ref{EQ:factorquantile}%
), we have for all $j$%
\begin{equation}
\int\left(  \frac{1}{T}\sum_{t=1}^{T}h_{jt}(x_{j})\right)  ^{2}%
dP_{j}(x_{j})=\int g_{j}(x_{j})^{2}dP_{j}(x_{j})\times\left(  \frac{1}%
{T}\sum_{t=1}^{T}f_{jt}\right)  ^{2}=\left(  \frac{1}{T}\sum_{t=1}%
^{T}f_{jt}\right)  ^{2}. \label{m0}%
\end{equation}
Provided $\sum_{t=1}^{T}f_{jt}\neq0$, we can identify $g_{j}(x_{j})$
by
\begin{equation}
g_{j}(x_{j})=\frac{\frac{1}{T}\sum_{t=1}^{T}h_{jt}(x_{j})}{\sqrt
{\int\left(  \frac{1}{T}\sum_{t=1}^{T}h_{jt}(x_{j})\right)  ^{2}%
dP_{j}(x_{j})}}. \label{EQ:g0}%
\end{equation}
We will use this as the basis for the proposal of the estimation method in
Section 3 below.

\renewcommand{\theequation}{3.\arabic{equation}}

\setcounter{equation}{0}

\section{Estimation\label{estimation}}

\subsection{Factor returns and characteristic-beta functions \label{estimator}%
}

We propose an iterative algorithm to estimate the factor returns and the
characteristic-beta functions. The algorithm makes use of the structure in
(\ref{q}) so that it circumvents the \textquotedblleft curse of
dimensionality" (Bellman, 1961) while retaining flexibility of the nonparametric
regression. The right hand side of (\ref{EQ:factorquantile}) is bilinear in
unknown quantities, so it seems difficult to avoid such an algorithmic approach.

To estimate $g_{j}(\cdot)$, we first approximate them by B-spline
functions described as follows. Let $b_{j}(x_{j})=\{b_{j,1}(x_{j}%
),\ldots,b_{j,K_{N}}(x_{j})\}^{\intercal}$ be a set of normalized B-spline
functions of order $m$ (see, for example, de Boor (2001)), where $K_{N}%
=L_{N}+m$, and $L_{N}$ is the number of interior knots satisfying
$L_{N}\rightarrow\infty$ as $N\rightarrow\infty$. We adopt the centered
B-spline basis functions $B_{j}(x_{j})=\{B_{j,1}(x_{j}),\ldots,B_{j,K_{N}%
}(x_{j})\}^{\intercal},$ where
\[
B_{jk}(x_{j})=\sqrt{K_{N}}\left[  b_{j,k}(x_{j})-N^{-1}\sum\nolimits_{i=1}%
^{N}b_{j,k}(X_{ji})\right]  ,
\]
so that $N^{-1}\sum\nolimits_{i=1}^{N}B_{jk}(X_{ji})=0$ and var$\{B_{jk}%
(X_{j})\} \asymp1$. We first approximate the unknown functions $g_{j}(x_{j})$
by B-splines such that $g_{j}(x_{j})$ $\approx B_{j}(x_{j})^{\intercal
}\boldsymbol{\lambda}_{j}$, where $\boldsymbol{\lambda}_{j}=(\lambda
_{j,1},\ldots,\lambda_{j,K_{N}})^{\intercal}$ are spline coefficients. Hence
$N^{-1}\sum\nolimits_{i=1}^{N}B_{j}(X_{ji})^{\intercal}\boldsymbol{\lambda
}_{j}=0$. Denote $f_{t}=\{f_{ut},(f_{jt},1\leq j\leq J)^{\intercal
}\}^{\intercal}$. Let $\boldsymbol{\lambda}=(\boldsymbol{\lambda}%
_{1}^{\intercal},\ldots,\boldsymbol{\lambda}_{J}^{\intercal})^{\intercal}$ and
let $\rho_{\tau}(u)=u(\tau-I(u<0))$ be the quantile check function. The
iterative algorithm is described as follows:

1. Find the initial estimates $\widehat{f}^{[0]}$ and $\widehat{g}_{j}%
^{[0]}(\cdot)$.

2. For given $\widehat{f}^{[i]}$, we obtain
\[
\widehat{\boldsymbol{\lambda}}^{[i+1]}=\arg\min_{\boldsymbol{\lambda}%
\in\mathbb{R}^{JK_{N}}}\sum\nolimits_{i=1}^{N}\sum\nolimits_{t=1}^{T}%
\rho_{\tau}\left(  y_{it}-\widehat{f}_{ut}^{[i]}-\sum\nolimits_{j=1}^{J}%
B_{j}(X_{ji})^{\intercal}\boldsymbol{\lambda}_{j}\widehat{f}_{jt}%
^{[i]}\right)  .
\]
Let $\widehat{g}_{j}^{\ast\lbrack i+1]}(x_{j})$ $=B_{j}(x_{j})^{\intercal
}\widehat{\boldsymbol{\lambda}}_{j}^{[i+1]}$. The estimate for $g_{j}(x_{j})$
at the $(i+1)^{\text{th}}$ step is
\[
\widehat{g}_{j}^{[i+1]}(x_{j})=\frac{\widehat{g}_{j}^{\ast\lbrack i+1]}%
(x_{j})}{\sqrt{N^{-1}\sum\nolimits_{i=1}^{N}\widehat{g}_{j}^{\ast\lbrack
i+1]}(X_{ji})^{2}}}.
\]

3. For given $\widehat{g}_{j}^{[i+1]}(x_{j})$, we obtain for $t=1,\ldots,T$
\[
\widehat{f}_{t}^{[i+1]}=\arg\min_{f_{t}\in\mathbb{R}^{J+1}}\sum\nolimits_{i=1}%
^{N}\rho_{\tau}\left(  y_{it}-f_{ut}-\sum\nolimits_{j=1}^{J}\widehat{g}%
_{j}^{[i+1]}(X_{ji})f_{jt}\right)  .
\]

We repeat steps 2 and 3, and consider that the algorithm converges at the
$(i+1)^{\text{th}}$ step when $||\widehat{f}^{[i+1]}-\widehat{f}%
^{[i]}||<\epsilon$ and $||\widehat{\boldsymbol{\lambda}}^{[i+1]}%
-\widehat{\boldsymbol{\lambda}}^{[i]}||<\epsilon$ for a small positive value
$\epsilon$. Then the final estimates are $\widehat{f}_{t}=\widehat{f}%
_{t}^{[i+1]}$ and $\widehat{g}_{j}(x_{j})$ $=\widehat{g}_{j}^{[i+1]}(x_{j})$.
Our experience in numerical analysis suggests that the proposed method
converges well and rapidly using the consistent initial values proposed in
Section \ref{initial}. The algorithm stops after a finite number of iterations
by using the consistent initial values.

\subsection{Initial estimators\label{initial}}

We first approximate the unknown functions $h_{jt}(x_{j})$ by B-splines such
that $h_{jt}(x_{j})\approx B_{j}(x_{j})^{\intercal}\boldsymbol{\theta}_{jt}$,
where $\boldsymbol{\theta}_{jt}=(\theta_{jt,1},\ldots,\theta_{jt,K_{N}%
})^{\intercal}$ are spline coefficients. Let $\boldsymbol{\theta}%
_{t}=(\boldsymbol{\theta}_{1t}^{\intercal},\ldots,\boldsymbol{\theta}%
_{Jt}^{\intercal})^{\intercal}$. Then the estimators ($\widetilde{h}%
_{ut},\widetilde{\boldsymbol{\theta}}_{t}^{\intercal})^{\intercal}$ of
$(h_{ut},\boldsymbol{\theta}_{t}^{\intercal})^{\intercal}$ are obtained by
minimizing%
\[
\sum\nolimits_{i=1}^{N}\rho_{\tau}(y_{it}-h_{ut}-\sum\nolimits_{j=1}^{J}%
B_{j}(X_{ji})^{\intercal}\boldsymbol{\theta}_{jt})
\]
with respect to $(h_{ut},\boldsymbol{\theta}_{t}^{\intercal})^{\intercal}%
\in\mathbb{R}^{JK_{N}+1}.$ As a result, the estimator of $h_{jt}(x_{j})$
is $\widetilde{h}_{jt}(x_{j})=B_{j}(x_{j})^{\intercal}%
\widetilde{\boldsymbol{\theta}}_{jt}$. \ We then obtain the initial estimators
of $g_{j}(x_{j})$
\begin{equation}
\widehat{g}_{j}^{[0]}(x_{j})=\frac{T^{-1}\sum_{t=1}^{T}\widetilde{h}%
_{jt}(x_{j})}{\sqrt{N^{-1}\sum_{i=1}^{N}\left(  \frac{1}{T}\sum_{t=1}%
^{T}\widetilde{h}_{jt}(X_{ji})\right)  ^{2}}}.\label{EQ:ghat0}%
\end{equation}
The initial estimator of $f_{t}$ is
\begin{equation}
\widehat{f}_{t}^{[0]}=\arg\min_{f_{t}\in\mathbb{R}^{J+1}}\sum\nolimits_{i=1}%
^{N}\rho_{\tau}(y_{it}-f_{ut}-\sum\nolimits_{j=1}^{J}\widehat{g}_{j}%
^{[0]}(X_{ji})f_{jt}) \label{EQ:fhat0}%
\end{equation}
for $t=1,\ldots,T$.

\renewcommand{\theequation}{4.\arabic{equation}}

\setcounter{equation}{0}

\section{Asymptotic theory of the estimators\label{asymptotic}}

We suppose that there is some relabelling of the cross-sectional units
$i_{l_{1}},\ldots,i_{l_{N}},$ whose generic index we denote by $i^{\ast}$,
such that the cross sectional dependence decays with the distance $|i^{\ast
}-j^{\ast}|$. This assumption has been made in Connor and Korajczyk (1993) and
Lee and Robinson (2016). Our estimation procedure does not need
to know the ordering of the data. However, to develop a robust inference
procedure that accounts for heteroscedasticity and cross-sectional correlation
(HAC), we need to order the data across $i$. As discussed in Lee and Robinson
(2016), in some economic applications, data may be ordered according to some
explanatory variables. Such considerations are pursued in our real data
analysis with detailed discussions given in Section \ref{application}.
For notational simplicity, we denote the indices as $\{i,1\leq i\leq N\}$
after the ordering.

Let $g_{j}^{0}(\cdot)$ for $j=1,\ldots,J$ and $f_{t}^{0}=(f_{ut}^{0}%
,f_{1t}^{0},\ldots,f_{Jt}^{0})^{\intercal}$ be the true factor betas and
factor returns in model (\ref{q}). For model identifiability, assume
$E\{g_{j}^{0}(X_{ji})\}=0$ and $E\{g_{j}^{0}(X_{ji})\}^{2}=1$. Let $\mathbb{N}$ denote the collection of all positive
integers. We use a $\phi$-mixing coefficient to specify the dependence
structure. Let $\{W_{it}:1\leq i\leq N,1\leq t\leq T\},$ where $W_{it}%
=(X_{i}^{^{\intercal}},f_{t}^{^{\intercal}},\varepsilon_{it})^{^{\intercal}}$
and $\varepsilon_{it}=y_{it}-f_{ut}^{0}-\sum\nolimits_{j=1}^{J}g_{j}%
^{0}(X_{ji})f_{jt}^{0}.$ For $S_{1},S_{2}\subset\lbrack1,\ldots,N]\times
\lbrack1,\ldots,T]$, let%
\[
\phi(S_{1},S_{2})\equiv\sup\{|P(A|B)-P(A)|:A\in\sigma(W_{it},(i,t)\in
S_{1}),B\in\sigma(W_{it},(i,t)\in S_{2})\},
\]
where $\sigma\left(  \cdot\right)  $ denotes a $\sigma$-field. Then the $\phi
$-mixing coefficient of $\{W_{it}\}$ for any $k\in\mathbb{N}$ is defined as
\[
\phi(k)\equiv\sup\{\phi(S_{1},S_{2}):d(S_{1},S_{2})\geq k\},
\]
where
\[
d(S_{1},S_{2})\equiv\min\{\sqrt{|t-s|^{2}+|i-j|^{2}}:(i,t)\in S_{1},(j,s)\in
S_{2}\}.
\]
Without loss of generality, we assume that $\mathcal{X}_{ji}=[a,b]$. Denote
$h_{t}^{0}(x)=\{h_{jt}^{0}(x_{j}),1\leq j\leq J\}^{\intercal}$, where
$h_{jt}^{0}(\cdot)$ are the true unknown functions in
(\ref{EQ:quantileinitial}) and $x=(x_{1},\ldots,x_{J}%
)^{\intercal}$. Let $G_{i}^{0}(X_{i}%
)=\{1,g_{1}^{0}(X_{1i}),\ldots,g_{J}^{0}(X_{Ji})\}^{\intercal}.$ We make the
following assumptions.

\begin{enumerate}
\item[(C1)] $\{W_{it}\}$ is a random field of $\phi$-mixing random variables.
The $\phi$-mixing coefficient of $\{W_{it}\}$ satisfies $\phi(k)\leq
K_{1}e^{-\lambda_{1}k}$ for $K_{1},\lambda_{1}>0$. For each given $i$,
$\{W_{it}\}$ is a strictly stationary sequence.

\item[(C2)] The conditional density $p_{i}\left(  \varepsilon\left\vert
x_{i},f_{t}\right.  \right)  $ of $\varepsilon_{it}$ given $\left(
x_{i},f_{t}\right)  $ satisfies the Lipschitz condition of order $1$ and
$\inf_{1\leq i\leq N,1\leq t\leq T}p_{i}\left(  0\left\vert x_{i}%
,f_{t}\right.  \right)  >0$. For every $1\leq j\leq J$, the density function
$p_{X_{ji}}(\cdot)$ of $X_{ji}$ is bounded away from $0$ and satisfies the
Lipschitz condition of order $1$ on $[a,b]$. The density function $f_{X_{i}%
}(\cdot)$ of $X_{i}$ is absolutely continuous on $[a,b]^{J}$.

\item[(C3)] The functions $g_{j}^{0}$ and $h_{jt}^{0}$ are $r$-times
continuously differentiable on its support for some $r>2$. The spline order
satisfies $m\geq r$.

\item[(C4)] There exist some constants $0<c_{h}\leq C_{h}<\infty$ such that
$c_{h}\leq\left(  \frac{1}{T}\sum_{t=1}^{T}f_{jt}^{0}\right)  ^{2}\leq C_{h}$
for all $j$ with probability tending to one.

\item[(C5)] The eigenvalues of the $(J+1)\times(J+1)$ matrix $N^{-1}%
\sum\nolimits_{i=1}^{N}E(G_{i}^{0}(X_{i})G_{i}^{0}(X_{i})^{\intercal})$ are
bounded away from zero.

\item[(C6)] Let $\Omega_{N}^{0}$ be the covariance matrix of $N^{-1/2}%
\sum\nolimits_{i=1}^{N}G_{i}^{0}(X_{i})(\tau-I(\varepsilon_{it}<0))$. The
eigenvalues of $\Omega_{N}^{0}$ are bounded away from zero and infinity.
\end{enumerate}

We allow that $\{W_{it}\}$ are weakly dependent across $i$ and $t$, but need
to satisfy the strong mixing condition given in Condition (C1). Moreover,
Condition (C1) implies that $\{X_{i}\}$ is marginally cross-sectional mixing,
and $\{f_{t}\}$ is marginally temporally mixing. Similar assumptions are used
in Gao, Lu and Tj{\o }stheim (2006) for an alpha--mixing condition in a
spatial data setting, and Dong, Gao and Peng (2016) for introducing a spatial
mixing condition in a panel data setting. Conditions (C2)\ and (C3) are
commonly used in the nonparametric smoothing literature, see for example,
Horowitz and Lee (2005), and Ma, Song and Wang (2013). Conditions (C4)\ and
(C5)\ are similar to Conditions A2, A5 and A7 of Connor, Matthias and Linton (2012).

Define
\[
\Lambda_{Nt}^{0}=N^{-1}\sum\nolimits_{i=1}^{N}E\{p_{i}\left(  0\left\vert
X_{i},f_{t}\right.  \right)  G_{i}^{0}(X_{i})G_{i}^{0}(X_{i})^{\intercal}\}.
\]
and
\begin{equation}
\mathbf{\Sigma}_{Nt}^{0}=\tau(1-\tau)(\Lambda_{Nt}^{0})^{-1}\Omega_{N}%
^{0}(\Lambda_{Nt}^{0})^{-1}. \label{EQ:SigN}%
\end{equation}
The theorem below presents the asymptotic distribution of the final estimator
$\widehat{f}_{t}$. Define
\begin{equation}
\phi_{NT}=\sqrt{K_{N}/(NT)}+K_{N}^{3/2}N^{-3/4}\sqrt{\log NT}+K_{N}^{-r}.
\label{EQ:phNT}%
\end{equation}
Let $d_{NT}$ be a sequence satisfying
\begin{equation}
d_{NT}=O(\phi_{NT}). \label{dNT}%
\end{equation}

\begin{theorem}
\label{THM:fhat}Assume that Conditions (C1)-(C5) hold,\ and $K_{N}^{4}%
N^{-1}=o(1)$, $K_{N}^{-r+2}(\log T)=o(1)$ and $K_{N}^{-1}(\log NT)(\log
N)^{4}=o(1).$ Suppose that the algorithm in Section \ref{estimator} converges
within a finite number of iterations. Then, for any $t$ there is a
stochastically bounded sequence $\delta_{N,jt}$ such that as $N\rightarrow
\infty$,
\[
\sqrt{N}(\mathbf{\Sigma}_{Nt}^{0})^{-1/2}(\widehat{f}_{t}-f_{t}^{0}%
-d_{NT}\delta_{N,t})\overset{\mathcal{D}}{\rightarrow}\mathcal{N}%
(\mathbf{0},\mathbf{I}_{J+1}),
\]
where $\delta_{N,t}=(\delta_{N,jt},0\leq j\leq J)^{\intercal}$, $d_{NT}$ is
given in (\ref{dNT}), and $\mathbf{I}_{J+1}$ is the $(J+1)\times(J+1)$
identity matrix.
\end{theorem}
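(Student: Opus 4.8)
The plan is to treat $\widehat f_t$ as the (approximate) solution of a weighted quantile regression of $y_{it}$ on the estimated loading vector $\widehat G_i(X_i)=\{1,\widehat g_1(X_{1i}),\dots,\widehat g_J(X_{Ji})\}^{\intercal}$, and to compare it to the infeasible estimator $\bar f_t$ that uses the true loadings $G_i^0(X_i)$. The core of the argument is a Bahadur representation for $\bar f_t$ together with a careful accounting of the error introduced by replacing $G_i^0$ with $\widehat g_j^{[0]}$ (or the converged $\widehat g_j$), which is what produces the bias term $d_{NT}\delta_{N,t}$.

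First I would establish, from the earlier-stated convergence of the algorithm and the identification formula (\ref{EQ:g0})--(\ref{EQ:ghat0}), a uniform rate for the characteristic-beta estimators: $\sup_{x_j}|\widehat g_j(x_j)-g_j^0(x_j)|=O_p(\phi_{NT})$. This follows by controlling the initial additive quantile B-spline estimators $\widetilde h_{jt}$ uniformly in $t$ (standard sieve-quantile arguments under (C2)--(C3), (C1) giving the needed exponential tail and the $\log NT$ factor via a chaining bound), then averaging over $t$ and applying the normalization; the three terms in $\phi_{NT}$ are respectively the stochastic error of each time-$t$ fit, the empirical-process fluctuation from estimating the centering, and the spline bias $K_N^{-r}$. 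The stated rate conditions $K_N^4N^{-1}=o(1)$, $K_N^{-r+2}\log T=o(1)$, $K_N^{-1}(\log NT)(\log N)^4=o(1)$ are exactly what make $\phi_{NT}=o(N^{-1/4})$ (indeed $\sqrt N\phi_{NT}^2\to0$) so that the quadratic remainder in the Bahadur expansion is negligible.

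Next, for fixed $t$, I would write the first-order (subgradient) condition for $\widehat f_t$ and do a standard quantile-regression linearization around $f_t^0$: convexity of the objective plus the conditional-density lower bound in (C2) gives $\widehat f_t - f_t^0 = (\Lambda_{Nt}^0)^{-1} N^{-1}\sum_i \widehat G_i(X_i)(\tau - I(\varepsilon_{it}<0)) + \text{(remainder)} + \text{(loading-error term)}$, where I must show (i) the Hessian built from $\widehat G_i$ converges to $\Lambda_{Nt}^0$ (uses the uniform rate above plus (C2), (C5)), (ii) the nonsmoothness remainder is $o_p(N^{-1/2})$ via a stochastic-equicontinuity argument for the empirical process indexed by $f_t$ in a shrinking neighborhood (the mixing condition (C1) supplies the requisite covariance control across $i$, and here the $(\log N)^4$ factor enters), and (iii) the score term with $\widehat G_i$ differs from the one with $G_i^0$ by exactly a term of order $\phi_{NT}$ — its leading part is a deterministic (given the data) vector, which we package as $d_{NT}\delta_{N,t}$ with $\delta_{N,t}$ stochastically bounded and $d_{NT}=O(\phi_{NT})$. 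Finally, applying a CLT for the $\phi$-mixing random field $\{G_i^0(X_i)(\tau-I(\varepsilon_{it}<0))\}_i$ (stationary in the sense of (C1), with $\Omega_N^0$ having eigenvalues bounded away from $0$ and $\infty$ by (C6)) gives $\sqrt N\,(\Omega_N^0)^{-1/2}N^{-1}\sum_i G_i^0(X_i)(\tau-I(\varepsilon_{it}<0))\overset{\mathcal D}{\to}\mathcal N(0,\mathbf I)$, and pre-multiplying by $(\Lambda_{Nt}^0)^{-1}$ and rescaling by $\tau(1-\tau)$ yields the stated limit with sandwich variance $\mathbf\Sigma_{Nt}^0$ in (\ref{EQ:SigN}).

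The main obstacle I expect is controlling the loading-error term (step (iii)) sharply enough: one must show that replacing $g_j^0$ by $\widehat g_j$ perturbs the quantile score by something whose stochastic part is $o_p(N^{-1/2})$ while its systematic part is genuinely $O(\phi_{NT})$ — not larger — uniformly over the relevant neighborhood, and that this can be done despite $\widehat g_j$ and $\varepsilon_{it}$ being dependent across $i$. This requires a delicate decomposition of $N^{-1}\sum_i(\widehat G_i-G_i^0)(X_i)(\tau-I(\varepsilon_{it}<0))$ into a "plug-in bias" piece (handled by smoothness and the identification normalization) and a covariance-type piece that is killed by the $\phi$-mixing decay together with the rate restriction $K_N^4N^{-1}=o(1)$; making the constants in $\delta_{N,t}$ stochastically bounded rather than merely $O_p$ of something slowly growing is where the $(\log N)^4$ and $(\log NT)$ logarithmic factors in the hypotheses are consumed.
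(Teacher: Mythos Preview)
Your overall architecture matches the paper's: define the infeasible estimator $f_t^\ast$ that minimizes $\sum_i\rho_\tau(y_{it}-f_{ut}-\sum_j g_j^0(X_{ji})f_{jt})$, establish a Bahadur representation and CLT for $f_t^\ast-f_t^0$ (the paper's Lemma~\ref{LEM:fstar}), then show $\sqrt N(\widehat f_t^{[0]}-f_t^\ast-d_{NT}\delta_{N,t})=o_p(1)$ uniformly over $\|\boldsymbol\lambda_j-\boldsymbol\lambda_j^0\|\le\widetilde C\,d_{NT}^\ast$ (Lemma~\ref{LEM:fhat0}), and finally iterate through the algorithm. Your steps (i), (ii) and the CLT are essentially what the paper does via the decomposition $L_{Nt}(f_t,g)=E\{L_{Nt}\mid\mathbf X,\mathbf F\}-(f_t-f_t^0)^\intercal W_{Nt,1}+W_{Nt,2}-E(W_{Nt,2}\mid\mathbf X,\mathbf F)$ and Lemmas~S.4--S.6.

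There is, however, one concrete misidentification in step (iii). You locate the bias $d_{NT}\delta_{N,t}$ in the term $N^{-1}\sum_i(\widehat G_i-G_i^0)(X_i)\{\tau-I(\varepsilon_{it}<0)\}$. In the paper this is $W_{Nt,12}$, and because $E[\psi_\tau(\varepsilon_{it})\mid X_i,f_t]=0$ it has \emph{zero} conditional mean for each fixed $g$ in the ball, with variance $O(N^{-1}d_{NT}^2)$; hence $W_{Nt,12}=o_p(N^{-1/2})$ and contributes no bias. The term that actually generates $d_{NT}\delta_{N,t}$ is the one you did not write down: when you linearize at $f_t^0$ with the \emph{estimated} loadings, the score is $\psi_\tau\bigl(y_{it}-f_t^{0\intercal}\widehat G_i(X_i)\bigr)$, not $\psi_\tau(\varepsilon_{it})$. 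Their difference gives
\[
W_{Nt,13}=N^{-1}\sum_i G_i(X_i)\bigl\{\psi_\tau\bigl(y_{it}-f_t^{0\intercal}G_i(X_i)\bigr)-\psi_\tau(\varepsilon_{it})\bigr\},
\]
whose conditional expectation is $N^{-1}\sum_i g_j(X_{ji})p_i(0\mid X_i,f_t)f_t^{0\intercal}(G_i^0-G_i)(X_i)+O(d_{NT}^2)$ --- this is the $d_{NT}\delta_{N,t}$ bias, and its centered part is again $o_p(N^{-1/2})$. So your ``plug-in bias vs.\ covariance'' decomposition is aimed at the wrong object; redirect it to the indicator-shift term and the rest of your outline goes through exactly as in the paper.
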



The next theorem establishes the rate of convergence of the final estimator
$\widehat{g}_{j}(x_{j})$.

\begin{theorem}
\label{THM:ghat}Suppose that the same conditions as given in Theorem
\ref{THM:fhat} hold. Then, for each $j$,
\begin{equation}
\left[  \int\{\widehat{g}_{j}(x_{j})-g_{j}^{0}(x_{j})\}^{2}dx_{j}\right]
^{1/2}=O_{p}(\phi_{NT})+o_{p}(N^{-1/2}), \label{l2rate}%
\end{equation}
where $\phi_{NT}$ is given in (\ref{EQ:phNT}).
\end{theorem}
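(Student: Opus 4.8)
The plan is to leverage the identification formula (\ref{EQ:g0}) together with the rate already obtained for $\widehat{f}_t$ in Theorem \ref{THM:fhat}. Recall that the final estimator is $\widehat{g}_{j}(x_{j}) = \widehat{g}_{j}^{\ast}(x_{j})/\sqrt{N^{-1}\sum_{i}\widehat{g}_{j}^{\ast}(X_{ji})^{2}}$, where $\widehat{g}_{j}^{\ast}(x_{j}) = B_{j}(x_{j})^{\intercal}\widehat{\boldsymbol{\lambda}}_{j}$ and $\widehat{\boldsymbol{\lambda}}$ solves the weighted check-function problem in Step 2 of the algorithm with the converged factor returns plugged in. First I would analyze the unnormalized estimator $\widehat{g}_{j}^{\ast}$. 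Writing the first-order (subgradient) condition for the minimization over $\boldsymbol{\lambda}$, I would perform a Bahadur-type linearization: the check-function objective is convex, so by the standard argument (e.g. convexity lemma of Pollard / Hjort-Pollard, or Knight's identity) the minimizer admits an expansion around the ``pseudo-true'' spline coefficient $\boldsymbol{\lambda}_{j}^{0}$ defined by the best $L_2(P_j)$ spline approximation to $g_{j}^{0}$. The Hessian of the population criterion is governed by the matrix built from $p_i(0\mid X_i,f_t)$, the weights $\widehat{f}_{jt}$, and the spline cross-products $B_j(X_{ji})B_j(X_{ji})^{\intercal}$; Conditions (C2), (C4), (C5) and the spline properties (the design matrix $N^{-1}\sum_i B_j(X_{ji})B_j(X_{ji})^{\intercal}$ has eigenvalues bounded away from $0$ and $\infty$) ensure this Hessian is uniformly invertible.

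Second, I would control the stochastic and bias terms in this expansion. The stochastic term is a (double) sum of the form $(NT)^{-1}\sum_{i,t}B_j(X_{ji})(\tau - I(\varepsilon_{it}<0))\widehat{f}_{jt}$; using the $\phi$-mixing Condition (C1), a Bernstein-type inequality for mixing random fields, and $\dim(B_j)=K_N$, this is $O_p(\sqrt{K_N/(NT)})$ after accounting for the normalization $\mathrm{var}\{B_{jk}\}\asymp 1$ (note $B_{jk}$ carries the factor $\sqrt{K_N}$). The spline approximation bias contributes $O(K_N^{-r})$ in the sup-norm by Condition (C3) and de Boor's approximation theory. Crucially, the factor returns are estimated, not known: replacing $f_{jt}^{0}$ by $\widehat{f}_{jt}$ injects the error $\widehat{f}_{jt}-f_{jt}^{0}$, which by Theorem \ref{THM:fhat} is $O_p(N^{-1/2}) + O(d_{NT}) = O_p(N^{-1/2}+\phi_{NT})$ uniformly (modulo the $\delta_{N,t}$ bias); propagating this through the linear functional and through the normalization constant gives the additional term, which after tracking orders is dominated by $\phi_{NT}$ up to the stated $o_p(N^{-1/2})$ slack. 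The remainder terms in the Bahadur expansion are handled by a standard maximal-inequality / monotonicity argument, contributing lower order under the bandwidth conditions $K_N^{4}N^{-1}=o(1)$ and $K_N^{-1}(\log NT)(\log N)^4 = o(1)$, which is where the middle term $K_N^{3/2}N^{-3/4}\sqrt{\log NT}$ in $\phi_{NT}$ comes from.

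Third, I would convert the coefficient-level bound into the $L_2$ bound on the function. Since $\|\widehat{g}_{j}^{\ast}-\widetilde{g}_{j}^{0}\|_{L_2}^2 \asymp (\widehat{\boldsymbol{\lambda}}_j - \boldsymbol{\lambda}_j^{0})^{\intercal}(N^{-1}\sum_i B_j(X_{ji})B_j(X_{ji})^{\intercal})(\widehat{\boldsymbol{\lambda}}_j - \boldsymbol{\lambda}_j^{0})$ (with $\widetilde{g}_{j}^{0} = B_j^{\intercal}\boldsymbol{\lambda}_j^0$ the spline projection) and the design matrix is well-conditioned, the $L_2$ rate matches the Euclidean rate of $\widehat{\boldsymbol{\lambda}}_j - \boldsymbol{\lambda}_j^{0}$, plus the $O(K_N^{-r})$ projection error. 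Then for the \emph{normalized} estimator, I would write $\widehat{g}_j = \widehat{g}_j^{\ast}/\widehat{c}_j$ with $\widehat{c}_j^2 = N^{-1}\sum_i\widehat{g}_j^{\ast}(X_{ji})^2$, show $\widehat{c}_j \to 1$ (using $E\{g_j^0(X_{ji})\}^2=1$, Condition (C4)'s analogue $c_h \le (T^{-1}\sum f_{jt}^0)^2 \le C_h$, and the consistency just established), and expand $\widehat{g}_j - g_j^{0} = (\widehat{g}_j^{\ast} - g_j^0) - g_j^0(\widehat{c}_j - 1)/\widehat{c}_j + \ldots$; both pieces inherit the rate $O_p(\phi_{NT}) + o_p(N^{-1/2})$. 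A minor subtlety is that $\widehat{g}_j^{\ast}$ is built from the converged iterate $\widehat{f}^{[i+1]}$, so I must invoke the assumed finite-iteration convergence and the fact (from the proof of Theorem \ref{THM:fhat}) that the iteration map is a contraction near the truth, so the converged $\widehat{f}_{jt}$ indeed satisfies the Theorem \ref{THM:fhat} rate.

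The main obstacle I anticipate is the uniform-in-$t$ control needed when plugging $\widehat{f}_{jt}$ into the weighted quantile regression: one needs that the sum $(NT)^{-1}\sum_{i,t}B_j(X_{ji})(\tau - I(\varepsilon_{it}<0))(\widehat{f}_{jt}-f_{jt}^0)$ is genuinely lower order, which requires either a uniform (over $t$) rate for $\widehat{f}_t$ — stronger than the pointwise-in-$t$ statement of Theorem \ref{THM:fhat} — or a more careful decoupling argument exploiting that $\widehat{f}_{jt}$ depends on the data only through a low-dimensional ($J{+}1$) projection while $\varepsilon_{it}$ conditioned on $(X_i,f_t)$ is centered at its $\tau$-quantile. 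Handling this cross-term, together with the associated stochastic equicontinuity of the $K_N$-dimensional empirical process indexed jointly by $\boldsymbol{\lambda}$ and by the perturbation of $f$, is the technically delicate part; everything else is a fairly routine combination of convex $M$-estimation, spline approximation bounds, and Bernstein inequalities for $\phi$-mixing fields.
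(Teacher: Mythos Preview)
Your proposal is essentially the paper's approach: the paper packages your Bahadur linearization, stochastic/bias decomposition, and normalization step into Lemma~\ref{LEM:gjtilda} (with the plug-in error from $\widehat f_t$ handled in the supporting Lemmas~\ref{LEM:VNt11}--\ref{LEM:VNt1}), and then proves Theorem~\ref{THM:ghat} by showing this lemma applies to every iterate $\widehat g_j^{[k]}$ once it applies to $\widehat g_j^{[0]}$. Two small corrections. First, the paper does \emph{not} show the iteration map is a contraction; it only shows that each step preserves the rate $O_p(\phi_{NT})+o_p(N^{-1/2})$ and then invokes the \emph{assumed} finite-step convergence of the algorithm, so you should not claim contraction. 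Second, for your ``main obstacle'' the paper does need (and establishes) uniform-in-$t$ control: the supporting lemmas carry $\sup_{1\le t\le T}$ throughout via Bernstein-for-$\phi$-mixing plus union bounds, and the cross-term is handled in Lemma~\ref{LEM:VNt1} by decomposing $V_{NT,1}(f)$, bounding the centered piece uniformly over the ball $\|f_t-f_t^0\|\le C_f(d_{NT}+N^{-1/2})$, and controlling the conditional mean by plugging in the Bahadur representation of $\widehat f_t^{[0]}$ from Lemma~\ref{LEM:fhat0}---exactly the ``decoupling'' you anticipated.
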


\textbf{Remark 1:} \ The orders $\sqrt{K_{N}/(NT)}$ and
$K_{N}^{-r}$ are from the noise and bias terms for nonparametric estimation,
respectively, and the order $K_{N}^{3/2}N^{-3/4}\sqrt{\log NT}$ from the
approximation of the Bahadur representation in the quantile regression
setting. This says that if the order $K_{N}\asymp(NT)^{1/(2r+1)}$ is chosen,
and $T=O(N^{\alpha})$, where $\alpha<1/2-3/(2r+3)$, then the rate of $\phi_{NT}$ is $O_{P}((NT)^{-r/(2r+1)}),$ which is optimal, see for
example, Chen and Christensen (2015).

\textbf{Remark 2:} By using the asymptotic normality provided in Theorem
\ref{THM:fhat}, we can conduct inference for $f_{t}^{0}$ for each $t$, such as
constructing the confidence interval. Note that in the asymptotic distribution
in Theorem \ref{THM:fhat}, there is a bias term $d_{NT}\delta_{N,t}$ involved.
Let $K_{N}\asymp(NT)^{1/(2r+1)}$ and $T\asymp N^{\alpha}$, where
$1/(2r)<\alpha<1/2-3/(2r+3)$ and $r>3$. Then the asymptotic bias is negligible
and thus we have
\begin{equation}
\sqrt{N}(\mathbf{\Sigma}_{Nt}^{0})^{-1/2}(\widehat{f}_{t}-f_{t}^{0}%
)\rightarrow\mathcal{N}(\mathbf{0},\mathbf{I}_{J+1}). \label{EQ:CI}%
\end{equation}

\textbf{Remark 3}. It is possible to develop inferential results for $g_{j}$
following Chen and Liao (2012) and Chen and Pouzo (2015). As is usual in
nonparametric estimation, the weak cross-sectional and temporal dependence
does not affect the limiting distribution, and so standard techniques can be
applied. In fact, one may conclude the estimation algorithm with a kernel step
and demonstrate the oracle efficiency property, see Horowitz and Mammen (2011).

\renewcommand{\theequation}{5.\arabic{equation}}\setcounter{equation}{0}

\section{Covariance estimation and hypothesis testing for the
factors\label{covariance}}

In order to construct the confidence interval given in (\ref{EQ:CI}) we need
to estimate $\Omega_{N}^{0}$ and $\Lambda_{Nt}^{0}$, since they are unknown.
For estimation of $\Lambda_{Nt}^{0}$, if we use its sample analogue, the
conditional density $p_{i}\left(  0\left\vert X_{i},f_{t}\right.  \right)  $
needs to be estimated. Instead of using this direct way, we use the Powell's
kernel estimation idea in Powell (1991), and estimate $\Lambda_{Nt}^{0}$ by
\begin{equation}
\widehat{\Lambda}_{Nt}=(Nh)^{-1}\sum\nolimits_{i=1}^{N}K\left(  \frac
{y_{it}-\widehat{f}_{ut}-\sum\nolimits_{j=1}^{J}\widehat{g}_{j}(X_{ji}%
)\widehat{f}_{jt}}{h}\right)  \widehat{G}_{i}(X_{i})\widehat{G}_{i}%
(X_{i})^{\intercal}, \label{EQ:LamNhat}%
\end{equation}
where $\widehat{G}_{i}(X_{i})=\{1,\widehat{g}_{1}(X_{1i}),\ldots
,\widehat{g}_{J}(X_{Ji})\}^{\intercal},$ while $K(\cdot)$ is the uniform
kernel $K(u)=2^{-1}I(|u|\leq1)$ and $h$ is a bandwidth.

First, we show that the estimator $\widehat{\Lambda}_{Nt}$ is a consistent
estimator of $\Lambda_{Nt}^{0}$ given in the theorem below.

\begin{theorem}
\label{THM:consistentGAMMA}Suppose that the same conditions as given in
Theorem \ref{THM:fhat} hold, and $h\rightarrow0$, $h^{-1}\phi_{NT}=o(1)$,
$h^{-1}N^{-1/2}=O(1)$, where $\phi_{NT}$ is given in (\ref{EQ:phNT}). Then, we
have $||\widehat{\Lambda}_{Nt}-\Lambda_{Nt}^{0}||=o_{p}(1)$.
\end{theorem}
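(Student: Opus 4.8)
The plan is to prove consistency of $\widehat{\Lambda}_{Nt}$ by decomposing the difference $\widehat{\Lambda}_{Nt}-\Lambda_{Nt}^{0}$ into three pieces: (i) the error from replacing the true residual $\varepsilon_{it}=y_{it}-f_{ut}^{0}-\sum_{j}g_{j}^{0}(X_{ji})f_{jt}^{0}$ by the estimated residual $\widehat{\varepsilon}_{it}=y_{it}-\widehat{f}_{ut}-\sum_{j}\widehat{g}_{j}(X_{ji})\widehat{f}_{jt}$ inside the kernel; (ii) the error from replacing $\widehat{G}_{i}(X_{i})\widehat{G}_{i}(X_{i})^{\intercal}$ by $G_{i}^{0}(X_{i})G_{i}^{0}(X_{i})^{\intercal}$ in the outer factor; and (iii) the ``oracle'' part $(Nh)^{-1}\sum_{i}K(\varepsilon_{it}/h)G_{i}^{0}(X_{i})G_{i}^{0}(X_{i})^{\intercal}$, which I must show converges in probability to $\Lambda_{Nt}^{0}$. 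For piece (iii), I would condition on $(X_{i},f_{t})$ and use the identity $E\{(2h)^{-1}I(|\varepsilon_{it}|\le h)\mid X_{i},f_{t}\}=(2h)^{-1}\int_{-h}^{h}p_{i}(\varepsilon\mid X_{i},f_{t})d\varepsilon\to p_{i}(0\mid X_{i},f_{t})$ by continuity of the conditional density (Condition (C2), Lipschitz of order $1$, so the error is $O(h)$); the variance term is controlled using the $\phi$-mixing structure (Condition (C1), exponential decay) together with the boundedness of $g_{j}^{0}$ implied by (C3) and compact support, via a standard blocking/covariance-summation argument for mixing random fields, giving a variance of order $(Nh)^{-1}$ which vanishes since $h^{-1}N^{-1/2}=O(1)$ and $h\to 0$ force $Nh\to\infty$.

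For piece (ii), I would use the uniform rate of convergence of $\widehat{g}_{j}$ to $g_{j}^{0}$: from Theorem \ref{THM:ghat} the $L_2$ rate is $O_p(\phi_{NT})+o_p(N^{-1/2})$, and combined with the sup-norm control of spline estimators (available from the arguments behind Theorems \ref{THM:fhat}--\ref{THM:ghat} under $K_{N}^{-1}(\log NT)(\log N)^{4}=o(1)$) one gets $\sup_{x_j}|\widehat{g}_{j}(x_{j})-g_{j}^{0}(x_{j})|=o_p(1)$; since the kernel $K(\cdot)$ is bounded and the averaging is over $N$ terms, the resulting perturbation of the outer product is $o_p(1)$ uniformly. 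For piece (i), the key point is that $K(u)=\frac{1}{2}I(|u|\le 1)$ is an indicator, so $|K(\widehat{\varepsilon}_{it}/h)-K(\varepsilon_{it}/h)|$ is nonzero only when $\varepsilon_{it}$ lies within $|\widehat{\varepsilon}_{it}-\varepsilon_{it}|$ of $\pm h$; I would bound $|\widehat{\varepsilon}_{it}-\varepsilon_{it}|\le|\widehat{f}_{ut}-f_{ut}^{0}|+\sum_{j}|\widehat{g}_{j}(X_{ji})\widehat{f}_{jt}-g_{j}^{0}(X_{ji})f_{jt}^{0}|=O_p(\phi_{NT}+N^{-1/2})$ uniformly in $i$ (using Theorem \ref{THM:fhat} for $\widehat{f}_{t}$ and the sup-norm rate for $\widehat{g}_{j}$), so the fraction of indices affected is $O_p(\phi_{NT}/h+N^{-1/2}/h)$ by the boundedness of the conditional density near zero; under the hypotheses $h^{-1}\phi_{NT}=o(1)$ and $h^{-1}N^{-1/2}=O(1)$ one needs a little care — the second is only $O(1)$, not $o(1)$ — so I would argue that the mass assigned to the $O(N^{-1/2})$-neighborhood of $\pm h$ is itself $o_p(1)$ because it is of order $(N^{-1/2}/h)\cdot h=N^{-1/2}\to0$ after accounting for the $(Nh)^{-1}$ normalization, i.e. the contribution is $(Nh)^{-1}\cdot N\cdot O(N^{-1/2})=O(N^{-1/2}h^{-1})\cdot\text{(density bound)}$ but weighted by the measure of the shifted band, which resolves to $o_p(1)$.

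The main obstacle I anticipate is piece (i) combined with the borderline bandwidth condition $h^{-1}N^{-1/2}=O(1)$: one must be careful that the estimation error in the residuals does not blow up the kernel-difference term, and because the Powell kernel is discontinuous the usual Taylor-expansion argument is unavailable, so the count of ``boundary-crossing'' observations must be controlled through the conditional density bound $\inf_i p_i(0\mid x_i,f_t)>0$ and the Lipschitz property, together with a conditioning argument that turns the indicator differences into a probability of order $(\|\widehat\varepsilon_{\cdot t}-\varepsilon_{\cdot t}\|_\infty)/h$ per observation. A secondary technical point is establishing the uniform (sup-norm) consistency of $\widehat g_j$ and $\widehat f_t$ with the stated rates over all $i$, which is needed for pieces (i) and (ii); this should follow from the representations underlying Theorems \ref{THM:fhat} and \ref{THM:ghat} together with the spline conditions $K_N^4 N^{-1}=o(1)$ and $K_N^{-1}(\log NT)(\log N)^4=o(1)$, but assembling it cleanly is where most of the work lies. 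Everything else — the conditional-expectation computation in piece (iii) and the mixing-based variance bound — is routine given Conditions (C1)--(C2).
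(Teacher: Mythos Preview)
Your three–piece decomposition is exactly the one the paper uses: the paper introduces the oracle $\widetilde{\Lambda}_{Nt}=(Nh)^{-1}\sum_i K(\varepsilon_{it}/h)G_i^{0}G_i^{0\intercal}$, splits $\widehat{\Lambda}_{Nt}-\widetilde{\Lambda}_{Nt}$ into your pieces (i) and (ii) (their $D_{Nt,1}$ and $D_{Nt,2}$), and handles $\widetilde{\Lambda}_{Nt}-\Lambda_{Nt}^{0}$ as your piece (iii) via a bias $O(h)$ from the Lipschitz density and a variance controlled by the $\phi$-mixing sum. So the overall route and the arguments for (ii) and (iii) coincide with the paper's.

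Your worry about piece (i) under the borderline hypothesis $h^{-1}N^{-1/2}=O(1)$ is well placed, and in fact the paper's own proof glosses over exactly this point: it asserts $E\{\widehat d_{it}(X_i)\}^{2}\le c_1\phi_{NT}^{2}+o(N^{-1})$ ``by Theorem~\ref{THM:ghat}'', which accounts for $\widehat g_j-g_j^{0}$ but silently drops the $O_p(N^{-1})$ contribution from $\widehat f_t-f_t^{0}$; with that contribution restored the crude bound on $E\|D_{Nt,1}\|$ is only $O_p(h^{-1}\phi_{NT})+O_p(h^{-1}N^{-1/2})=o_p(1)+O_p(1)$. Your attempted fix (``mass of the shifted band'') double-counts and does not close the gap.

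A clean way to close it is to use the symmetry of the uniform kernel. Writing $d=\widehat d_{it}(X_i)$ and $F_i,p_i$ for the conditional c.d.f.\ and density of $\varepsilon_{it}$ given $(X_i,f_t)$, one has
\[
E\bigl[I(|\varepsilon_{it}|\le h)-I(|\varepsilon_{it}-d|\le h)\,\big|\,X_i,f_t\bigr]
= -d\{p_i(h)-p_i(-h)\}+O(d^{2})=O(|d|h)+O(d^{2}),
\]
by a one–step Taylor expansion and the Lipschitz property in (C2). Hence the conditional mean of $D_{Nt,1}$ has norm $O_p\bigl(N^{-1}\sum_i|\widehat d_{it}|\bigr)+O_p\bigl(h^{-1}N^{-1}\sum_i\widehat d_{it}^{2}\bigr)=O_p(\delta)+O_p(h^{-1}\delta^{2})$ with $\delta=O_p(N^{-1/2}+\phi_{NT})$, which is $o_p(1)$ since $h^{-1}N^{-1}=(h^{-1}N^{-1/2})N^{-1/2}=o(1)$ and $h^{-1}\phi_{NT}^{2}=(h^{-1}\phi_{NT})\phi_{NT}=o(1)$. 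For the fluctuation, the squared indicator difference is bounded by the indicator of a band of width $O(|\widehat d_{it}|)$, so under (C1) the conditional variance of $a^{\intercal}D_{Nt,1}a$ is $O_p\bigl(N^{-1}h^{-2}\sup_i|\widehat d_{it}|\bigr)=O_p\bigl((h^{-1}N^{-1/2})^{2}N^{-1/2}\bigr)+O_p\bigl((h^{-1}N^{-1/2})(h^{-1}\phi_{NT})\bigr)=o_p(1)$. This delivers $D_{Nt,1}=o_p(1)$ under the stated $O(1)$ condition, and the rest of your outline (pieces (ii) and (iii)) goes through as in the paper.
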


Moreover, the exact form of $\Omega_{N}^{0}$ defined in Condition (C6) is
given by{\small
\begin{align*}
\Omega_{N}^{0}  &  =(NT)^{-1}\sum\nolimits_{t=1}^{T}E\left[  \left\{
\sum\nolimits_{i=1}^{N}G_{i}^{0}(X_{i})(\tau-I(\varepsilon_{it}<0))\right\}
\left\{  \sum\nolimits_{i=1}^{N}G_{i}^{0}(X_{i})(\tau-I(\varepsilon
_{it}<0))\right\}  ^{\intercal}\right] \\
&  =\frac{\tau(1-\tau)}{N}\sum\nolimits_{i=1}^{N}E\{G_{i}^{0}(X_{i})G_{i}%
^{0}(X_{i})^{\intercal}\}+(NT)^{-1}\sum\nolimits_{t=1}^{T}\sum\nolimits_{i\neq
j}^{N}E(v_{it}v_{jt}^{\intercal}),
\end{align*}
where $v_{it}=G_{i}^{0}(X_{i})(\tau-I(\varepsilon_{it}<0))$} for
$i=1,\ldots,N$. To estimate $\Omega_{N}^{0}$, its sample analogue is not
consistent. Kernel-based robust estimators that account for HAC are developed
(Conley, 1999), and are shown to be consistent under a variety of sets of
conditions. It requires to use a truncation lag or \textquotedblleft
bandwidth\textquotedblright, which tends to infinity at a slower rate of $N$.
As pointed out by Kiefer and Vogelsang (2005), this is a convenient assumption
mathematically to ensure consistency, but it is unrealistic in finite sample
studies. Adopting the idea in Kiefer and Vogelsang (2005), we let the
bandwidth $M$ be proportional to the sample size $N$, i.e., $M=bN$ for
$b\in(0,1]$, and then we derive the fixed-b asymptotics (Kiefer and Vogelsang;
2005) for the HAC estimator of $\Omega_{N}^{0}$ under the quantile setting.
The HAC estimator is given as $\widehat{\Omega}_{N,M}=T^{-1}\sum
\nolimits_{t=1}^{T}\widehat{\Omega}_{Nt,M}$, where
\begin{equation}
\widehat{\Omega}_{Nt,M}=\frac{\tau(1-\tau)}{N}\sum\nolimits_{i=1}%
^{N}\widehat{G}_{i}(X_{i})\widehat{G}_{i}(X_{i})^{\intercal}+N^{-1}%
\sum\nolimits_{i\neq j}^{N}K^{\ast}\left(  \frac{i-j}{M}\right)
\widehat{v}_{it}\widehat{v}_{jt}^{\intercal}, \label{EQ:OmegaNt}%
\end{equation}
where $\widehat{v}_{it}=\widehat{G}_{i}(X_{i})(\tau-I(\widehat{\varepsilon
}_{it}<0))$ for $i=1,\ldots,N$, $\widehat{\varepsilon}_{it}=y_{it}%
-\widehat{f}_{ut}-\sum\nolimits_{j=1}^{J}\widehat{g}_{j}(X_{ji})\widehat{f}%
_{jt}$, $K^{\ast}(u)$ is a symmetric kernel weighting function satisfying
$K^{\ast}(0)=1$, and $|K^{\ast}(u)|\leq1$, and $M$ trims the sample
autocovariances and acts as a truncation lag. Consistency of $\widehat{\Omega
}_{N,M}$ needs that $M\rightarrow\infty$ and $M/N\rightarrow0$. \ The
following theorem provides the limiting distribution of $\widehat{\Omega
}_{N,M=bN}$ when $M=bN$ for $b\in(0,1]$.

Next, we will show asymptotic theory for the HAC covariance estimator under a
sequence where the smoothing parameter $M$ equals to $bN$. Let $\Omega
^{0}=\lim_{N\rightarrow\infty}\Omega_{N}^{0}$, and $\Omega^{0}$ can be written
as $\Omega^{0}=\Upsilon \Upsilon^{\intercal}$, where $\Upsilon$ is a lower
triangular matrix obtained from the Cholesky decomposition of $\Omega^{0}$.

\begin{theorem}
\label{THM:consistenOmega}Suppose that the same conditions as given in Theorem
\ref{THM:fhat} hold, and $\phi_{NT}N^{1/2}=o(1)$, and $K^{\ast\prime\prime
}(u)$ exists for $u\in\lbrack-1,1]$ and is continuous. Let $M=bN$ for
$b\in(0,1]$. Then as $N\rightarrow\infty$,%
\[
\widehat{\Omega}_{N,M=bN}\overset{\mathcal{D}}{\rightarrow}\Upsilon
\int\nolimits_{0}^{1}\int\nolimits_{0}^{1}-\frac{1}{b^{2}}K^{\ast\prime\prime
}\left(  \frac{r-s}{b}\right)  B_{J+1}(r)B_{J+1}(s)^{\intercal}drds\Upsilon
^{\intercal}\text{,}%
\]
where $B_{J+1}(r)=W_{J+1}(r)-rW_{J+1}(1)$ denotes a $(J+1)\times1$ vector of
standard Brownian bridges, and $W_{J+1}(r)$ denotes a $(J+1)$-vector of
independent standard Wiener processes where $r\in\lbrack0,1]$.
\end{theorem}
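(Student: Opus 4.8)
The plan is to follow the standard fixed-$b$ strategy of Kiefer and Vogelsang (2005), adapted to the quantile score setting, with the key preliminary work being to replace the estimated quantities $\widehat{v}_{it}$ by their population counterparts and to establish a functional central limit theorem for the partial sums of the score vectors across the cross-sectional index $i$. First, I would write $\widehat{\Omega}_{Nt,M}$ in the equivalent ``double-sum'' form, using the identity (valid when $K^{\ast}$ is twice continuously differentiable and $M=bN$) that expresses a kernel-weighted autocovariance estimator in terms of $K^{\ast\prime\prime}$ applied to the scaled partial sums $\widehat{S}_{Nt}(r)=N^{-1/2}\sum_{i=1}^{\lfloor Nr\rfloor}\widehat{v}_{it}$. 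Concretely, after the usual summation-by-parts manipulation, $\widehat{\Omega}_{Nt,bN}$ becomes $-b^{-2}\int_{0}^{1}\int_{0}^{1}K^{\ast\prime\prime}((r-s)/b)\,\widetilde{S}_{Nt}(r)\widetilde{S}_{Nt}(s)^{\intercal}\,dr\,ds$ plus endpoint terms, where $\widetilde{S}_{Nt}$ is the partial sum recentered by its full sum (this recentering is what turns the Wiener process limit into a Brownian bridge, and it appears because the first-order condition for $\widehat{f}_{t}$ forces $\sum_{i}\widehat{v}_{it}$ to be essentially zero).

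Second, I would show that the estimation error is asymptotically negligible inside these partial sums uniformly in $r$: that is, $\sup_{r\in[0,1]}\|N^{-1/2}\sum_{i=1}^{\lfloor Nr\rfloor}(\widehat{v}_{it}-v_{it}^{\ast})\|=o_{p}(1)$, where $v_{it}^{\ast}$ is the infeasible version built from $f_{t}^{0}$ and $g_{j}^{0}$. This is where I invoke Theorem~\ref{THM:fhat} and Theorem~\ref{THM:ghat}: the difference decomposes into a piece coming from $\widehat{f}_{t}-f_{t}^{0}=O_{p}(N^{-1/2})$ (using that the bias $d_{NT}\delta_{N,t}$ is $O(\phi_{NT})$ and the condition $\phi_{NT}N^{1/2}=o(1)$ kills it), a piece from $\widehat{g}_{j}-g_{j}^{0}=O_{p}(\phi_{NT})$, and — crucially — a piece from the indicator $I(\widehat{\varepsilon}_{it}<0)-I(\varepsilon_{it}<0)$, whose partial sums must be controlled by a stochastic-equicontinuity / maximal-inequality argument for the empirical process indexed by $r$ and by the perturbation direction, using the Lipschitz continuity of the conditional density in (C2) and the $\phi$-mixing in (C1). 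After replacement, I would also argue that replacing $\widehat{G}_{i}\widehat{G}_{i}^{\intercal}$ by its population mean in the ``diagonal'' term $\tau(1-\tau)N^{-1}\sum_{i}\widehat{G}_{i}\widehat{G}_{i}^{\intercal}$ contributes only a lower-order term to the limit, so that the diagonal part is absorbed correctly.

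Third, with the estimation error removed, the problem reduces to a functional CLT for $\widetilde{S}_{Nt}^{\ast}(r)=N^{-1/2}\sum_{i=1}^{\lfloor Nr\rfloor}(v_{it}^{\ast}-\bar v_{t}^{\ast})$. By Condition (C1) the array $\{v_{it}^{\ast}\}_{i}$ is $\phi$-mixing in $i$ for each fixed $t$ with exponentially decaying coefficients, so an invariance principle for mixing triangular arrays (e.g. of the Herrndorf / Oodaira–Yoshihara type, or via a blocking argument) gives $S_{Nt}^{\ast}(\lfloor N\cdot\rfloor)\Rightarrow \Upsilon_{t}W_{J+1}(\cdot)$ in the Skorokhod space, where $\Upsilon_{t}\Upsilon_{t}^{\intercal}$ is the long-run variance of the $i$-sequence; here $\Omega_{N}^{0}$ is precisely this long-run variance and $\Omega^{0}=\Upsilon\Upsilon^{\intercal}$. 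Recentering gives $\widetilde{S}_{Nt}^{\ast}(\cdot)\Rightarrow \Upsilon B_{J+1}(\cdot)$. Then averaging over $t$ and applying the continuous mapping theorem to the bilinear functional $(f,g)\mapsto -b^{-2}\int_{0}^{1}\int_{0}^{1}K^{\ast\prime\prime}((r-s)/b)f(r)g(s)^{\intercal}dr\,ds$ — which is continuous on $D[0,1]$ by the assumed continuity of $K^{\ast\prime\prime}$ — yields the stated limit, with the Brownian bridges from different $t$ being the same process in the limit (or, more carefully, the average over $t$ of i.i.d.-in-distribution bridges, which by stationarity across $t$ and the fact that $T$ is held fixed relative to this $N$-limit produces the same functional form). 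I expect the main obstacle to be step two: controlling the partial-sum empirical process of the indicator terms $I(\widehat{\varepsilon}_{it}<0)-I(\varepsilon_{it}<0)$ uniformly in the cutoff $r$, since $\widehat{f}_{t}$ and $\widehat{g}_{j}$ are random and the nonparametric error $\widehat{g}_{j}-g_{j}^{0}$ enters multiplicatively through $f_{jt}^{0}$; this requires a careful chaining argument combining the uniform rate from Theorem~\ref{THM:ghat}, the $\sqrt{N}$-rate from Theorem~\ref{THM:fhat}, a Bahadur-type linearization, and the mixing-based maximal inequality, and it is the place where the bandwidth-type condition $\phi_{NT}N^{1/2}=o(1)$ is consumed.
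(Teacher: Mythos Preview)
Your overall strategy matches the paper's proof, but step two contains a genuine error. You claim $\sup_{r\in[0,1]}\bigl\|N^{-1/2}\sum_{i=1}^{\lfloor Nr\rfloor}(\widehat{v}_{it}-v_{it}^{\ast})\bigr\|=o_{p}(1)$; this is false. The contribution of $I(\widehat{\varepsilon}_{it}<0)-I(\varepsilon_{it}<0)$ to the partial sums does \emph{not} vanish: a Bahadur linearisation (which you mention but misplace) gives
\[
N^{-1/2}\sum_{i=1}^{\lfloor Nr\rfloor}\widehat{v}_{it}
=\digamma_{Nt}(r)-\Lambda_{Nt}(r)\{\Lambda_{Nt}(1)\}^{-1}\digamma_{Nt}(1)+o_{p}(1),
\]
where $\digamma_{Nt}(r)=N^{-1/2}\sum_{i=1}^{\lfloor Nr\rfloor}v_{it}^{\ast}$ and $\Lambda_{Nt}(r)=N^{-1}\sum_{i=1}^{\lfloor Nr\rfloor}p_{i}(0\mid X_{i},f_{t})G_{i}^{0}(X_{i})G_{i}^{0}(X_{i})^{\intercal}$. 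The second term is $O_{p}(1)$: it arises because $\widehat{f}_{t}-f_{t}^{0}\approx\{\Lambda_{Nt}(1)\}^{-1}N^{-1}\sum_{i}v_{it}^{\ast}$, and linearising the indicator over the first $\lfloor Nr\rfloor$ observations produces the prefactor $\Lambda_{Nt}(r)$. It is precisely this non-negligible correction, combined with $\|\Lambda_{Nt}(r)-r\Lambda_{t}^{0}\|=o_{p}(1)$, that turns the Wiener limit of $\digamma_{Nt}(\cdot)$ into $\Upsilon B_{J+1}(\cdot)$.

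Your step three then reintroduces the bridge by recentering $v_{it}^{\ast}$ by its sample mean, but this is circular: if step two were correct there would be no reason to recenter (the $v_{it}^{\ast}$ already have conditional mean zero), and the recentered object $\widetilde{S}_{Nt}^{\ast}$ is not what the quadratic form contains after a valid step two. The fix is to keep the estimation effect explicit as above, apply the FCLT only to $\digamma_{Nt}(\cdot)$, and then read off $N^{-1/2}\widehat{S}_{\lfloor Nr\rfloor t}\Rightarrow\Upsilon\{W_{J+1}(r)-rW_{J+1}(1)\}$ directly; the continuous mapping theorem then finishes exactly as you describe. A secondary point: your parenthetical about averaging over $t$ (``$T$ is held fixed relative to this $N$-limit'') is not consistent with the conditions of Theorem~\ref{THM:fhat}, under which $T\to\infty$ jointly with $N$; the paper itself is terse here, but you should not rely on $T$ being fixed.
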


Theorem \ref{THM:consistenOmega} establishes the limiting distribution of
$\widehat{\Omega}_{N,M=bN}$, although $\widehat{\Omega}_{N,M=bN}$ is an
inconsistent estimator of $\Omega^{0}$. However, it can be used to construct
asymptotically pivotal tests involving $f_{t}^{0}$.

Consider testing the null hypothesis $H_{0}$: $Rf_{t}^{0}=r$ against the
alternative hypothesis $H_{1}$: $Rf_{t}^{0}\neq r$, where $R$ is a
$q\times(J+1)$ matrix with rank $q$ and $r$ is a $q\times1$ vector. We
construct an $F$-type statistic given as
\[
F_{Nt,b}=N(R\widehat{f}_{t}-r)^{\intercal}\{R\tau(1-\tau)\widehat{\Lambda
}_{Nt}^{-1}\widehat{\Omega}_{N,M=bN}\widehat{\Lambda}_{Nt}^{-1}R^{\intercal
}\}^{-1}(R\widehat{f}_{t}-r)/q.
\]
When $q=1$, we can construct a $t$-type statistic:%
\[
T_{Nt,b}=\frac{N^{1/2}(R\widehat{f}_{t}-r)}{\sqrt{R\tau(1-\tau
)\widehat{\Lambda}_{Nt}^{-1}\widehat{\Omega}_{N,M=bN}\widehat{\Lambda}%
_{Nt}^{-1}\}^{-1}R^{\intercal}}}.
\]
The limiting distributions of $F_{Nt,b}$ and $T_{Nt,b}$ under the null
hypothesis are given in the following theorem.

\begin{theorem}
\label{THM:null}Suppose that the same conditions as given in Theorem
\ref{THM:fhat} hold, and $\phi_{NT}N^{1/2}=o(1)$, and $K^{\ast\prime\prime
}(u)$ exists for $u\in\lbrack-1,1]$ and is continuous. Let $M=bN$ for
$b\in(0,1]$. Then under the null hypothesis $H_{0}$: $Rf_{t}^{0}=r$, as
$N\rightarrow\infty$,%
\[
F_{Nt,b}\overset{\mathcal{D}}{\rightarrow}\{ \tau(1-\tau)\}^{-1}%
W_{q}(1)^{\intercal}\left\{  \int\nolimits_{0}^{1}\int\nolimits_{0}^{1}%
-\frac{1}{b^{2}}K^{\ast\prime\prime}\left(  \frac{r-s}{b}\right)
B_{q}(r)B_{q}(s)^{\intercal}drds\right\}  ^{-1}W_{q}(1)/q\text{.}%
\]
If $q=1$, then as $N\rightarrow\infty$,%
\[
T_{Nt,b}\overset{\mathcal{D}}{\rightarrow}\frac{W_{1}(1)}{\sqrt{\tau(1-\tau
)}\sqrt{\int\nolimits_{0}^{1}\int\nolimits_{0}^{1}-\frac{1}{b^{2}}%
K^{\ast\prime\prime}\left(  \frac{r-s}{b}\right)  B_{1}(r)B_{1}(s)drds}}.
\]

\end{theorem}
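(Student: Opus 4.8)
The plan is to derive the limiting distributions of $F_{Nt,b}$ and $T_{Nt,b}$ by combining Theorem~\ref{THM:fhat} (asymptotic normality of $\widehat f_t$), Theorem~\ref{THM:consistentGAMMA} (consistency of $\widehat\Lambda_{Nt}$), and Theorem~\ref{THM:consistenOmega} (the fixed-$b$ limit of $\widehat\Omega_{N,M=bN}$), together with the continuous mapping theorem. First I would note that under the stated rate condition $\phi_{NT}N^{1/2}=o(1)$, the bias term $d_{NT}\delta_{N,t}$ in Theorem~\ref{THM:fhat} is $o_p(N^{-1/2})$ (since $d_{NT}=O(\phi_{NT})$ and $\delta_{N,t}$ is stochastically bounded), so that $\sqrt N(\mathbf\Sigma_{Nt}^0)^{-1/2}(\widehat f_t-f_t^0)\overset{\mathcal D}{\to}\mathcal N(\mathbf 0,\mathbf I_{J+1})$ with $\mathbf\Sigma_{Nt}^0=\tau(1-\tau)(\Lambda_{Nt}^0)^{-1}\Omega_N^0(\Lambda_{Nt}^0)^{-1}$. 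Equivalently, writing $\Omega^0=\Upsilon\Upsilon^\intercal$, we have $\sqrt N\,(\Lambda_{Nt}^0)(\widehat f_t-f_t^0)=\tau(1-\tau)^{1/2}\Upsilon\,\xi_N+o_p(1)$ for a limiting standard normal $\xi_N$; more precisely I would express the Bahadur representation $\sqrt N(\widehat f_t-f_t^0)=(\Lambda_{Nt}^0)^{-1}N^{-1/2}\sum_i v_{it}+o_p(1)$ and recall from the proof of Theorem~\ref{THM:consistenOmega} that the partial-sum process $N^{-1/2}\sum_{i\le\lfloor Nr\rfloor}v_{it}$ converges weakly to $\sqrt{\tau(1-\tau)}\,\Upsilon W_{J+1}(r)$, whose value at $r=1$ is the vector whose normalized version is $\xi_N$.

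Next I would handle the null hypothesis. Under $H_0$: $Rf_t^0=r$, we have $R\widehat f_t-r=R(\widehat f_t-f_t^0)$, so $\sqrt N(R\widehat f_t-r)=R(\Lambda_{Nt}^0)^{-1}N^{-1/2}\sum_i v_{it}+o_p(1)\overset{\mathcal D}{\to}\sqrt{\tau(1-\tau)}\,R(\Lambda_{Nt}^0)^{-1}\Upsilon W_{J+1}(1)$. For the denominator, Theorem~\ref{THM:consistentGAMMA} gives $\widehat\Lambda_{Nt}^{-1}\to(\Lambda_{Nt}^0)^{-1}$ in probability (the inverse being continuous on the set of matrices with eigenvalues bounded below, guaranteed by (C5) and the consistency statement), and Theorem~\ref{THM:consistenOmega} gives $\widehat\Omega_{N,M=bN}\overset{\mathcal D}{\to}\Upsilon\,\mathcal Q_b\,\Upsilon^\intercal$ where $\mathcal Q_b=\int_0^1\int_0^1-b^{-2}K^{\ast\prime\prime}((r-s)/b)B_{J+1}(r)B_{J+1}(s)^\intercal\,drds$. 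The crucial point is that this convergence is \emph{joint} with the numerator: both are continuous functionals of the same partial-sum process $N^{-1/2}\sum_{i\le\lfloor N\cdot\rfloor}v_{it}$ (the numerator is the endpoint, $\widehat\Omega_{N,M=bN}$ is the double-integral functional of the bridge obtained by subtracting the linear interpolant), so I would invoke a joint functional central limit theorem and continuous mapping. Hence $R\tau(1-\tau)\widehat\Lambda_{Nt}^{-1}\widehat\Omega_{N,M=bN}\widehat\Lambda_{Nt}^{-1}R^\intercal\overset{\mathcal D}{\to}\tau(1-\tau)\,A\mathcal Q_bA^\intercal$ jointly with the numerator converging to $\sqrt{\tau(1-\tau)}\,AW_{J+1}(1)$, where $A:=R(\Lambda_{Nt}^0)^{-1}\Upsilon$.

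The final step is an algebraic reduction: plugging these into $F_{Nt,b}=N(R\widehat f_t-r)^\intercal\{\cdots\}^{-1}(R\widehat f_t-r)/q$ gives the limit $\{\tau(1-\tau)\}^{-1}(AW_{J+1}(1))^\intercal(A\mathcal Q_bA^\intercal)^{-1}(AW_{J+1}(1))/q$. Here I would use the standard fact that $A$ is a $q\times(J+1)$ matrix of full row rank $q$ (since $R$ has rank $q$, $(\Lambda_{Nt}^0)^{-1}$ is nonsingular, and $\Upsilon$ is nonsingular), so $AW_{J+1}(1)\overset{d}{=}\Sigma_A^{1/2}W_q(1)$ and $A B_{J+1}(r)\overset{d}{=}\Sigma_A^{1/2}B_q(r)$ where $\Sigma_A=AA^\intercal$ — both transformations use the same $A$ so the identity is distributional for the whole process jointly — and the $\Sigma_A^{1/2}$ factors cancel in the quadratic form, leaving $\{\tau(1-\tau)\}^{-1}W_q(1)^\intercal\mathcal Q_{b,q}^{-1}W_q(1)/q$ with $\mathcal Q_{b,q}=\int_0^1\int_0^1-b^{-2}K^{\ast\prime\prime}((r-s)/b)B_q(r)B_q(s)^\intercal\,drds$, which is the claimed limit. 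The $q=1$ case for $T_{Nt,b}$ follows identically, with the scalar $\Sigma_A^{1/2}$ canceling between numerator and the square root in the denominator (and the sign being handled since $W_1(1)/\sqrt{\cdot}$ is symmetric). The main obstacle is establishing the \emph{joint} weak convergence of the numerator and $\widehat\Omega_{N,M=bN}$ as functionals of a single process — one must verify that the error terms from replacing $\widehat v_{it}$ by $v_{it}$ (controlled via $\phi_{NT}N^{1/2}=o(1)$ and the arguments already used in proving Theorems~\ref{THM:consistentGAMMA} and~\ref{THM:consistenOmega}) are uniformly negligible in the Skorokhod topology, not just pointwise, so that the continuous mapping theorem applies to the whole vector; the rest is bookkeeping with Cholesky factors and the cancellation of $\Sigma_A^{1/2}$.
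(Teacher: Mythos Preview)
Your approach is essentially identical to the paper's: combine the Bahadur representation for $\widehat f_t$ (so that under $H_0$, $N^{1/2}(R\widehat f_t-r)=R\Lambda_{Nt}^{-1}\digamma_{Nt}(1)+o_p(1)$), the joint functional convergence established in the proof of Theorem~\ref{THM:consistenOmega}, the consistency of $\widehat\Lambda_{Nt}$, and then cancel the $q\times q$ square-root factor (the paper writes $\Upsilon_t^\ast$ where you write $\Sigma_A^{1/2}$). The structure and the key idea---that numerator and $\widehat\Omega_{N,M=bN}$ are functionals of the \emph{same} partial-sum process, so joint convergence plus continuous mapping applies---match exactly.

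There is, however, a bookkeeping slip in your normalization that you should fix. You write that $N^{-1/2}\sum_{i\le\lfloor Nr\rfloor}v_{it}\Rightarrow\sqrt{\tau(1-\tau)}\,\Upsilon W_{J+1}(r)$, but by definition $\Omega^0=\Upsilon\Upsilon^{\intercal}$ is the \emph{full} limiting covariance of $N^{-1/2}\sum_i v_{it}$, which already absorbs the factor $\tau(1-\tau)$ coming from $\mathrm{Var}(\tau-I(\varepsilon_{it}<0))$. The correct limit is $\Upsilon W_{J+1}(r)$ with no extra $\sqrt{\tau(1-\tau)}$ (this is exactly what the paper uses in its display~(A.34)). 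If you carry your version through the algebra consistently, the $\tau(1-\tau)$ from the numerator squared would cancel the explicit $\tau(1-\tau)$ in the denominator and you would \emph{lose} the $\{\tau(1-\tau)\}^{-1}$ in the final limit---so your asserted conclusion does not actually follow from your intermediate step. Drop the spurious $\sqrt{\tau(1-\tau)}$ in the numerator and the computation then yields the stated limit directly.
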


Let $\Lambda_{t}^{0}=\lim_{N\rightarrow\infty}\Lambda_{Nt}^{0}$. The limiting
distributions of $F_{Nt,b}$ and $T_{Nt,b}$ under the alternative hypothesis
$H_{1}$: $Rf_{t}^{0}=r+cN^{-1/2}$ are given in the following theorem.

\begin{theorem}
\label{THM:alternative}Let $\Upsilon_{t}^{\ast}=(R\Lambda_{t}^{-1}\Omega
^{0}\Lambda_{t}^{-1}R^{\intercal})^{1/2}$. Suppose that the same conditions as
given in Theorem \ref{THM:fhat} hold, and $\phi_{NT}N^{1/2}=o(1)$, and
$K^{\ast\prime\prime}(u)$ exists for $u\in\lbrack-1,1]$ and is continuous. Let
$M=bN$ for $b\in(0,1]$. Then under the alternative hypothesis $H_{1}$:
$Rf_{t}^{0}=r+cN^{-1/2}$, as $N\rightarrow\infty$,%
\begin{align*}
&  F_{Nt,b}\overset{\mathcal{D}}{\rightarrow}\{ \tau(1-\tau)\}^{-1}\{
\Upsilon_{t}^{\ast-1}c+W_{q}(1)\}^{\intercal}\times\\
&  \left\{  \int\nolimits_{0}^{1}\int\nolimits_{0}^{1}-\frac{1}{b^{2}}%
K^{\ast\prime\prime}\left(  \frac{r-s}{b}\right)  B_{q}(r)B_{q}(s)^{\intercal
}drds\right\}  ^{-1}\{ \Upsilon_{t}^{\ast-1}c+W_{q}(1)\}/q\text{.}%
\end{align*}
If $q=1$, then as $N\rightarrow\infty$,%
\[
T_{Nt,b}\overset{\mathcal{D}}{\rightarrow}\frac{\Upsilon_{t}^{\ast-1}%
c+W_{1}(1)}{\sqrt{\tau(1-\tau)}\sqrt{\int\nolimits_{0}^{1}\int\nolimits_{0}%
^{1}-\frac{1}{b^{2}}K^{\ast\prime\prime}\left(  \frac{r-s}{b}\right)
B_{1}(r)B_{1}(s)drds}}.
\]

\end{theorem}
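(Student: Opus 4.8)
The plan is to reduce the alternative-hypothesis statement to the null-hypothesis statement (Theorem~\ref{THM:null}) by a standard shift argument. Under $H_1$: $Rf_t^0 = r + cN^{-1/2}$, write
\[
N^{1/2}(R\widehat{f}_t - r) = N^{1/2}R(\widehat{f}_t - f_t^0) + N^{1/2}(Rf_t^0 - r) = N^{1/2}R(\widehat{f}_t - f_t^0) + c.
\]
By Theorem~\ref{THM:fhat}, under the stated rate conditions $\phi_{NT}N^{1/2}=o(1)$ the bias term $d_{NT}\delta_{N,t}$ is asymptotically negligible, so $N^{1/2}(\mathbf{\Sigma}_{Nt}^0)^{-1/2}R(\widehat{f}_t - f_t^0)$ behaves like a $q$-dimensional standard Gaussian; equivalently, $N^{1/2}R(\widehat{f}_t - f_t^0)$ converges to a centered Gaussian with covariance $R\tau(1-\tau)\Lambda_t^{-1}\Omega^0\Lambda_t^{-1}R^{\intercal} = \Upsilon_t^{\ast}\Upsilon_t^{\ast\intercal}$. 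Hence $N^{1/2}(R\widehat{f}_t - r)$ converges to $\Upsilon_t^{\ast}(W_q(1) + \Upsilon_t^{\ast -1}c)$ in distribution — the same Gaussian limit as under $H_0$ but with a deterministic mean shift $c$.

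Next I would handle the denominator. By Theorem~\ref{THM:consistenOmega}, $\widehat{\Omega}_{N,M=bN}$ converges in distribution to $\Upsilon\, P_b(B_{J+1})\, \Upsilon^{\intercal}$ where $P_b$ is the double-integral functional of the Brownian bridge, and by Theorem~\ref{THM:consistentGAMMA}, $\widehat{\Lambda}_{Nt} \overset{p}{\rightarrow} \Lambda_t^0$. The crucial structural fact, inherited from the fixed-$b$ literature (Kiefer and Vogelsang, 2005), is that the same Wiener process driving the partial-sum process for $\widehat{\Omega}_{N,M=bN}$ also drives the numerator $N^{1/2}(R\widehat{f}_t - f_t^0)$; concretely, $N^{1/2}(R\widehat{f}_t - f_t^0)$ and $\widehat{\Omega}_{N,M=bN}$ converge jointly, with the numerator's Gaussian limit equal to $R\Lambda_t^{-1}\Upsilon W_{J+1}(1)$. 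Applying the continuous mapping theorem to the joint convergence, and noting that $R\Lambda_t^{-1}\widehat{\Omega}_{N,M=bN}\Lambda_t^{-1}R^{\intercal}$ converges to $\Upsilon_t^{\ast}\, P_b(B_q)\, \Upsilon_t^{\ast\intercal}$ with $B_q(r) = W_q(r) - rW_q(1)$ for the relevant $q$-dimensional projection of the Wiener process (since Brownian bridges transform covariantly under linear maps), the $\Upsilon_t^{\ast}$ factors cancel in the quadratic form, leaving
\[
F_{Nt,b} \overset{\mathcal{D}}{\rightarrow} \{\tau(1-\tau)\}^{-1}\{\Upsilon_t^{\ast -1}c + W_q(1)\}^{\intercal}\{P_b(B_q)\}^{-1}\{\Upsilon_t^{\ast -1}c + W_q(1)\}/q,
\]
as claimed. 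The $q=1$ case for $T_{Nt,b}$ follows by taking the square root of the scalar version, with the sign of $W_1(1)$ preserved because the numerator is not squared.

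The main obstacle is establishing the \emph{joint} weak convergence of the numerator process and the HAC estimator, and verifying that the same limiting Wiener process appears in both — this is where the argument of Theorem~\ref{THM:consistenOmega} must be reused rather than merely cited. Specifically, one must show that the Bahadur-type linearization of $\widehat{f}_t - f_t^0$, together with the estimation error in $\widehat{g}_j$ and $\widehat{\varepsilon}_{it}$ (controlled at rate $\phi_{NT} = o(N^{-1/2})$ so it is asymptotically negligible in the $N^{1/2}$-scaled quantities), produces a functional central limit theorem for the partial sums $N^{-1/2}\sum_{i=1}^{\lfloor Nr\rfloor} v_{it}$ over $r \in [0,1]$, under the cross-sectional $\phi$-mixing of Condition~(C1). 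Once that invariance principle is in hand, the numerator is a continuous functional (evaluation at $r=1$) and the HAC estimator is another continuous functional (the weighted double sum, which converges to $P_b$ applied to the bridge after centering), so everything else is the continuous mapping theorem plus the algebraic cancellation of $\Upsilon_t^{\ast}$. The local-alternative shift $c$ enters only additively and does not interact with any of the limiting-distribution machinery, so it is the routine part.
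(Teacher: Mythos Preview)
Your proposal is correct and follows essentially the same route as the paper: the paper also writes $N^{1/2}(R\widehat{f}_t-r)=c+R\Lambda_{Nt}(1)^{-1}\digamma_{Nt}(1)+o_p(1)$ under $H_1$, invokes the joint convergence $(\Lambda_{Nt}(r),\digamma_{Nt}(r),D_{bN}(r))\Rightarrow(r\Lambda_t^0,\Upsilon W_{J+1}(r),b^{-2}K^{\ast\prime\prime}(r/b))$ established in the proof of Theorem~\ref{THM:consistenOmega}, and then uses the identity $c+\Upsilon_t^{\ast}W_q(1)=\Upsilon_t^{\ast}(\Upsilon_t^{\ast-1}c+W_q(1))$ together with the continuous mapping theorem. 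One small slip: the asymptotic covariance of $N^{1/2}R(\widehat f_t-f_t^0)$ is $R\Lambda_t^{-1}\Omega^0\Lambda_t^{-1}R^{\intercal}=\Upsilon_t^{\ast}\Upsilon_t^{\ast\intercal}$ \emph{without} the $\tau(1-\tau)$ factor (that factor sits explicitly in the denominator of $F_{Nt,b}$ and surfaces as $\{\tau(1-\tau)\}^{-1}$ in the limit), but this does not affect your final display.
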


\textbf{Remark. }If $K^{\ast}(x)$ is the Bartlett kernel, then
\begin{align*}
&  \int\nolimits_{0}^{1}\int\nolimits_{0}^{1}-\frac{1}{b^{2}}K^{\ast
\prime\prime}\left(  \frac{r-s}{b}\right)  B_{q}(r)B_{q}(s)^{\intercal}drds\\
=\frac{2}{b}\int\nolimits_{0}^{1}B_{q}(r)B_{q}(r)^{\intercal}dr  &  -\frac
{1}{b}\int\nolimits_{0}^{1-b}\{B_{q}(r+b)B_{q}(r)^{\intercal}+B_{q}%
(r)B_{q}(r+b)^{\intercal}\}dr.
\end{align*}
These results allow one to test whether the factors are zero in a particular
time period or not. Our tests are robust to the form of the cross-sectional
dependence in the idiosyncratic error.

\renewcommand{\theequation}{6.\arabic{equation}}

\setcounter{equation}{0}

\section{Application\label{application}}

In a series of important papers, Fama and French (hereafter denoted FF),
demonstrated that there have been large return
premia associated with size and value, which are observable characteristics of stocks.
They contended that these return premia can be ascribed to a rational asset pricing
paradigm in which the size and value characteristics proxy for assets'
sensitivities to pervasive sources of risk in the economy. FF (1993) used a simple portfolio sorting approach to estimating their factor model.
Connor, Hagmann, and Linton (2012) used kernel-based semiparametric regression methodology to capture the same phenomenon.

In our data analysis, we use all securities from Center for Research in
Security Prices (CRSP) which have complete daily return records from 2005 to
2013, and have two-digit Standard Industrial Classification code (from CRSP),
market capitalization (from Compustat) and book value (from Compustat)
records. We use daily returns in excess of the risk-free return of 347 stocks.
We consider the same four characteristic variables as given in Connor,
Hagmann and Linton (2012), and Fan, Liao and Wang (2016), which are size,
value, momentum and volatility. Connor, Hagmann and Linton (2012) provided
some detailed descriptions of these characteristics. They are calculated using
the same method as described in Fan, Liao and Wang (2016).

We fit the quantile factor model (\ref{EQ:factorquantile}) for each year, so
that there are $T=251$ observations. By taking the same strategy as He and Shi (1996), we select the number of interior knots $L_{N}$ by minimizing the
Bayesian information criterion (BIC) given as{\small
\[
\text{BIC(}L_{N})=\log\{(NT)^{-1}\sum\nolimits_{i=1}^{N}\sum\nolimits_{t=1}%
^{T}\rho_{\tau}(y_{it}-\widehat{f}_{ut}-\sum\nolimits_{j=1}^{J}\widehat{g}%
_{j}(X_{ji})\widehat{f}_{jt})\}+\frac{\log(NT)}{2NT}J(L_{N}+m).
\]
}For the estimator $\widehat{\Lambda}_{Nt}$ given in (\ref{EQ:LamNhat}), the
optimal order for the bandwidth $h$ is in the order of $N^{-1/5}$. We let $h=\kappa N^{-1/5}$ in our numerical analysis and
take different values for $\kappa$. For the estimator $\widehat{\Omega
}_{Nt,M=bN}$ given in (\ref{EQ:OmegaNt}), we use different values for $b$, and
use the Bartlett kernel as suggested in Kiefer and Vogelsang (2005).

Figures \ref{FIG:fig1}-\ref{FIG:fig3} show the plots of the four estimated
loading functions for the year of 2009, 2010, 2011, and 2012 at different
quantiles $\tau=0.2$, $0.5$ and $0.8$. We observe that the estimated loading
functions have similar shapes for these four years. Moreover, for the size,
value and momentum characteristics, the estimated functions show a clear
nonlinear pattern, and at different quantiles, the curves are different for
the same characteristic. For example, for the size characteristic, the
estimated loading function fluctuates around zero and it has a sharp drop
after the value of size variable exceeds certain value at the quantiles
$\tau=0.2$ and $0.8$. However, it has a smooth decreasing pattern for the
median with $\tau=0.5$. For the momentum characteristic, the estimated
function shows different curves at the three quantiles.

\begin{figure}
\caption{\small{The plots of the estimated loading functions for the year of 2009
(dotted-dashed red lines), 2010 (dotted magenta lines), 2011 (dashed blue
lines), and 2012 (solid black lines) at $\tau=0.2$.}}%
\label{FIG:fig1}
\centering
{\normalsize $%
\begin{array}
[c]{cc}%
\includegraphics[width=7cm,height=6cm]{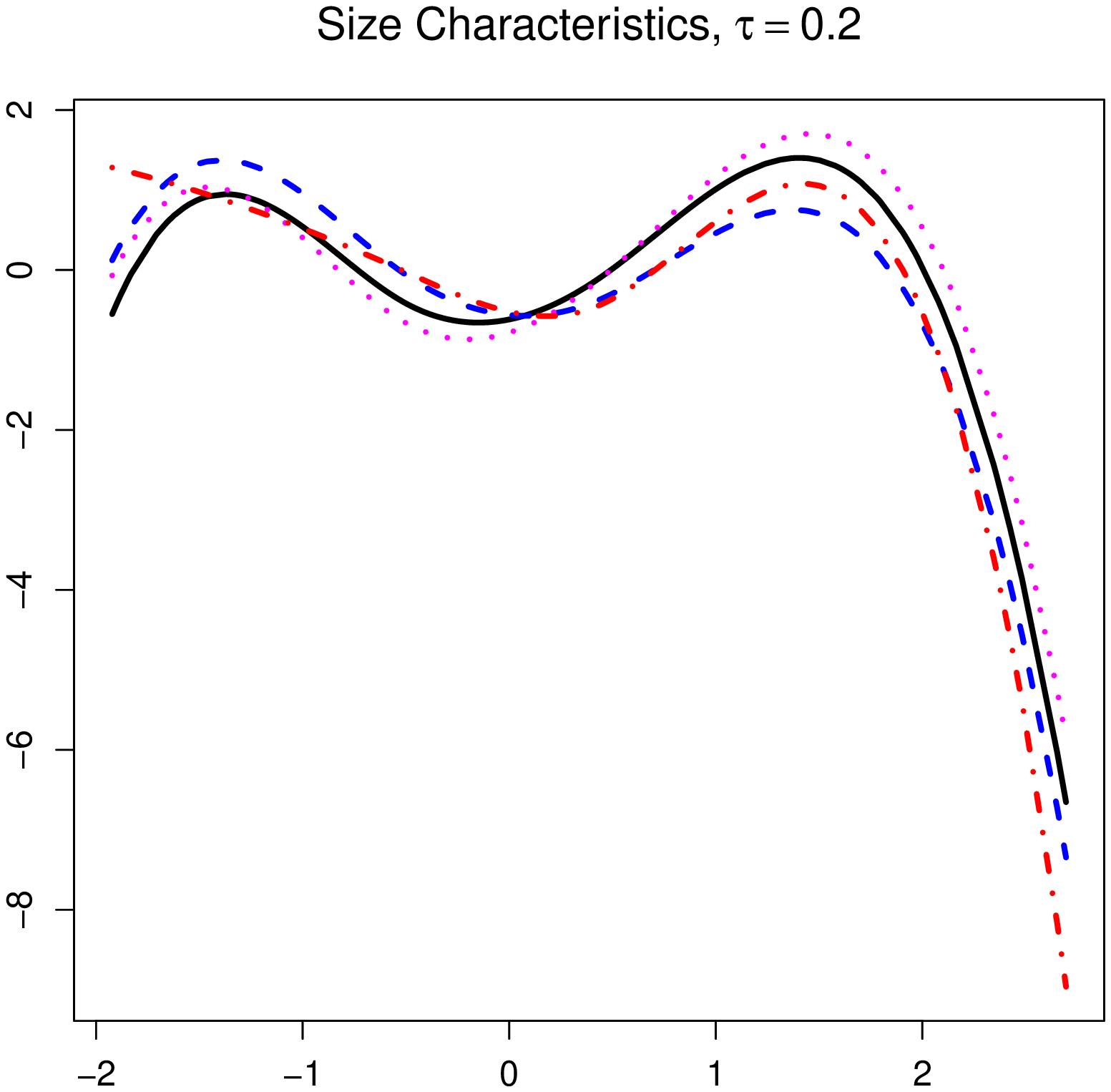} &
\includegraphics[width=7cm,height=6cm]{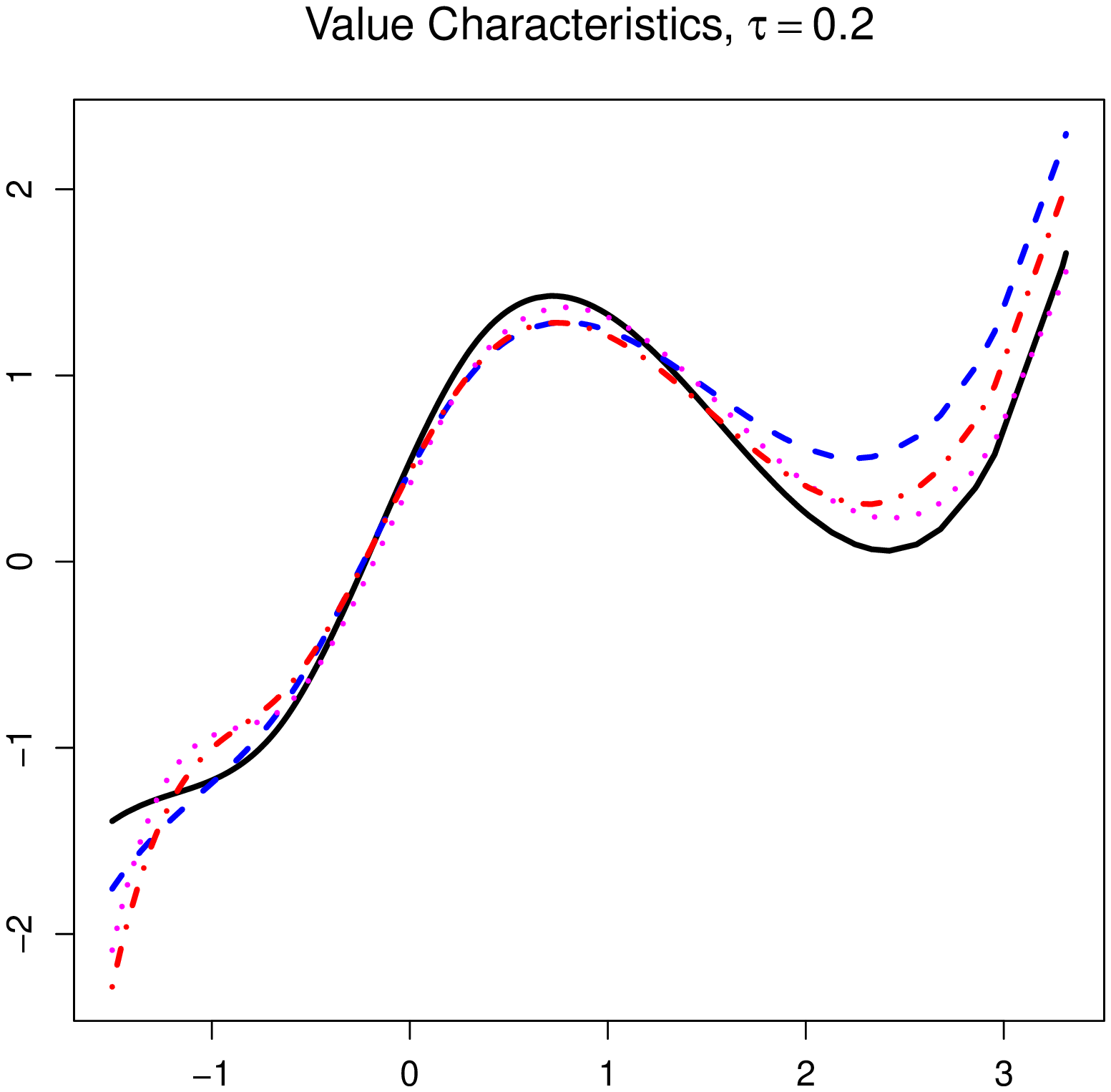}\\
\includegraphics[width=7cm,height=6cm]{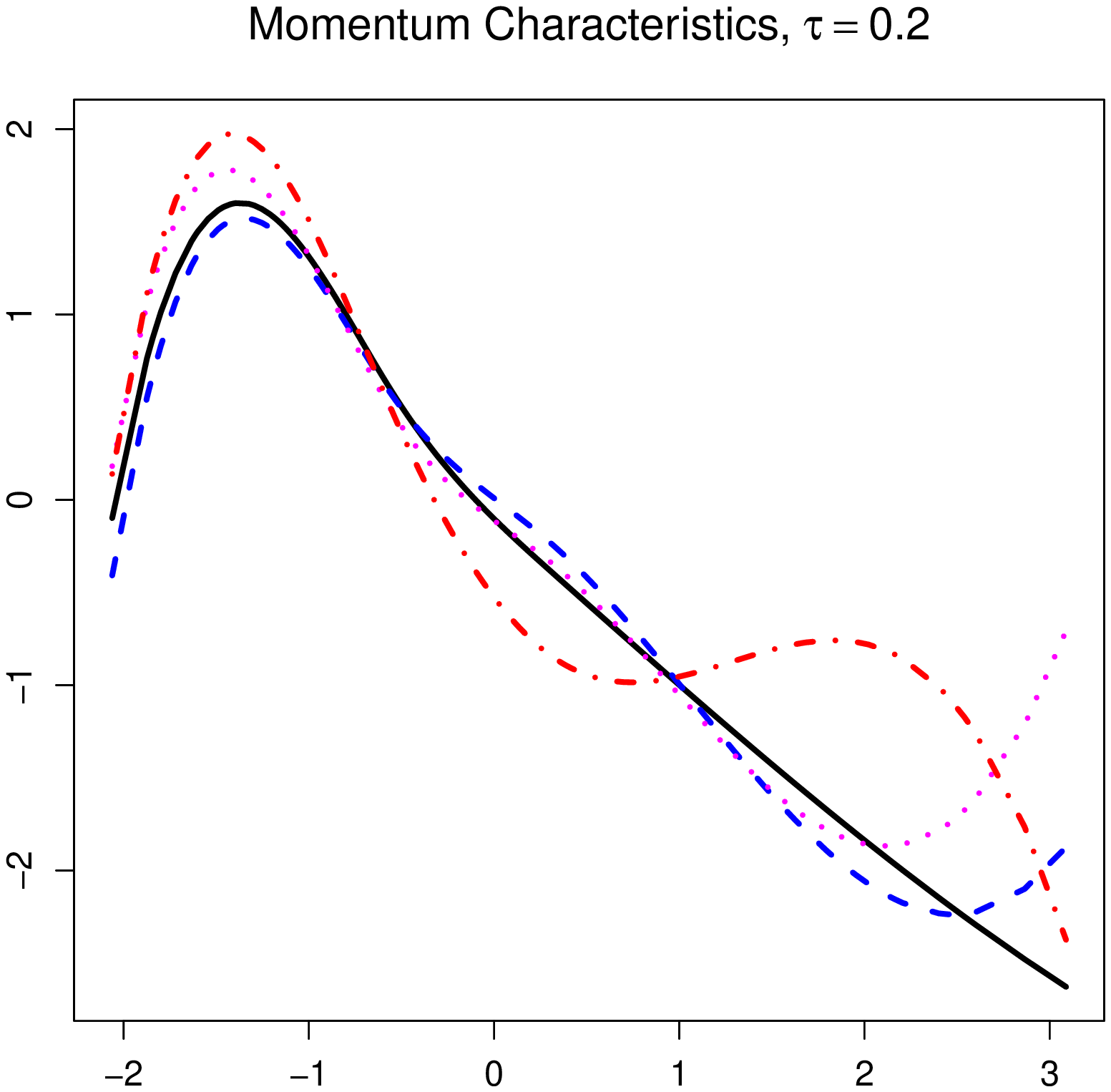} &
\includegraphics[width=7cm,height=6cm]{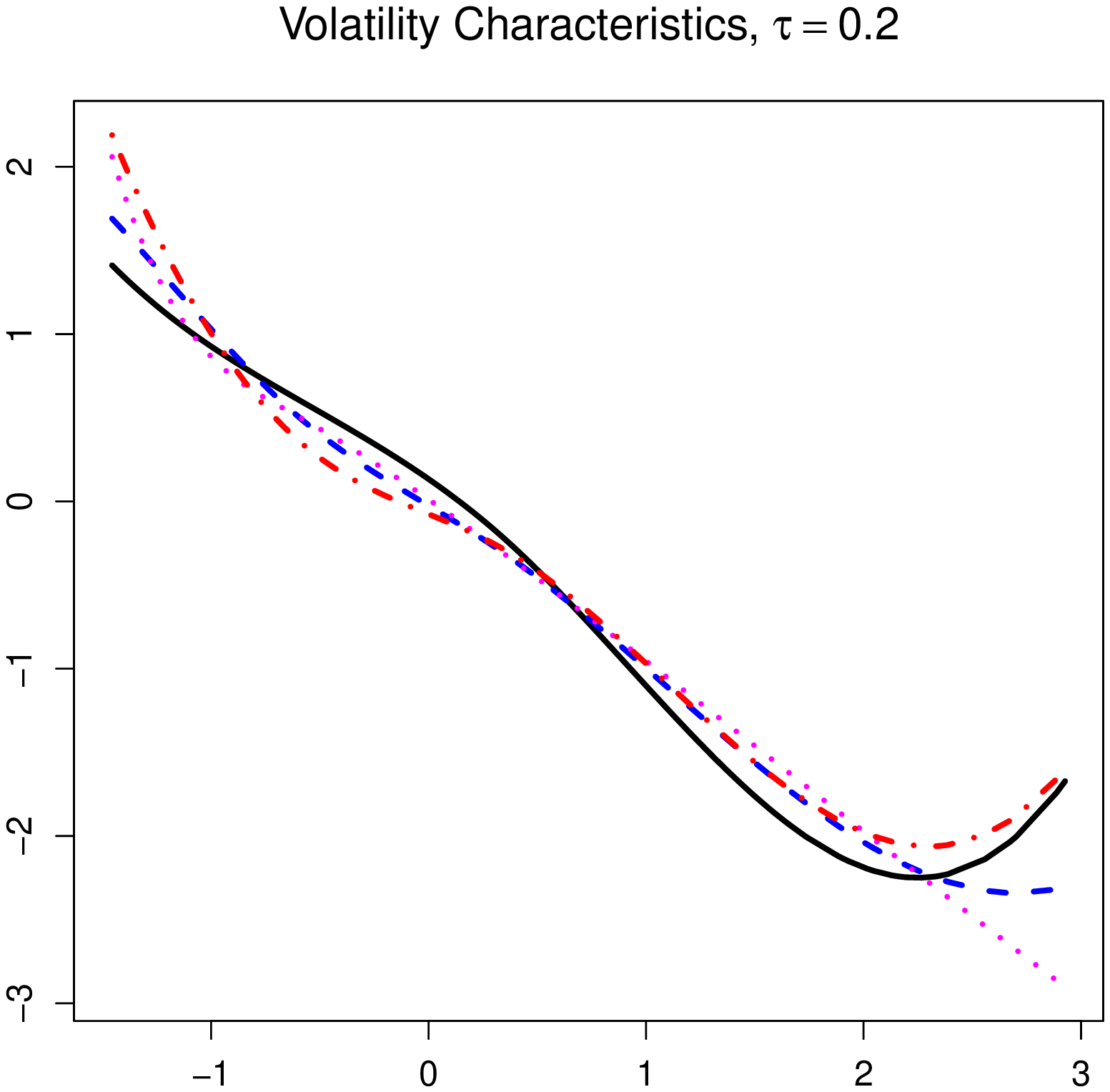}
\end{array}
$}
\end{figure}


\begin{figure}
\vspace{-1cm}
\caption{\small{The plots of the estimated loading functions for the year of 2009
(dotted-dashed red lines), 2010 (dotted magenta lines), 2011 (dashed blue
lines), and 2012 (solid black lines) at $\tau=0.5$.}}%
\label{FIG:fig2}
\centering
{\normalsize $%
\begin{array}
[c]{cc}%
\includegraphics[width=7cm,height=6cm]{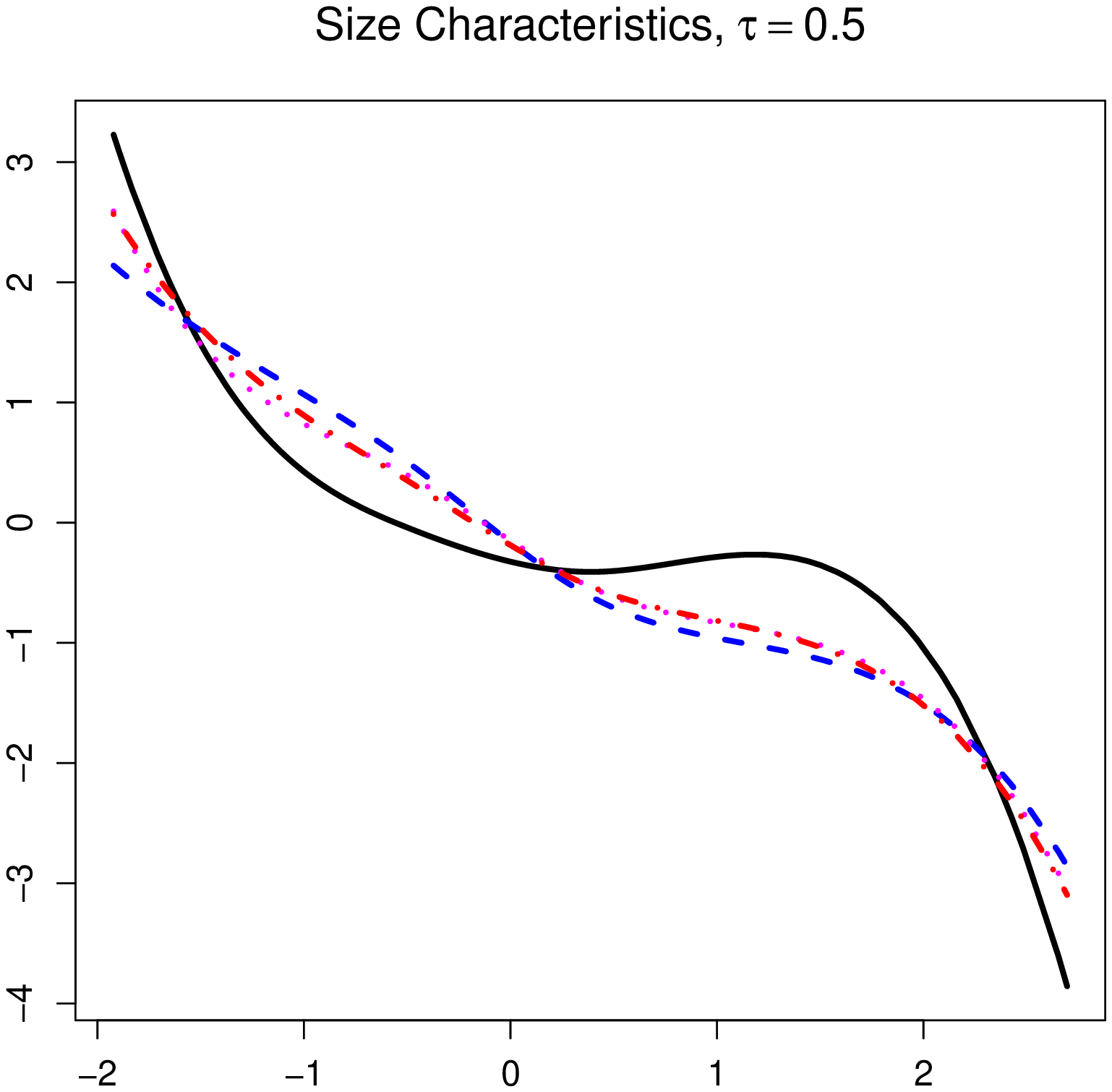} &
\includegraphics[width=7cm,height=6cm]{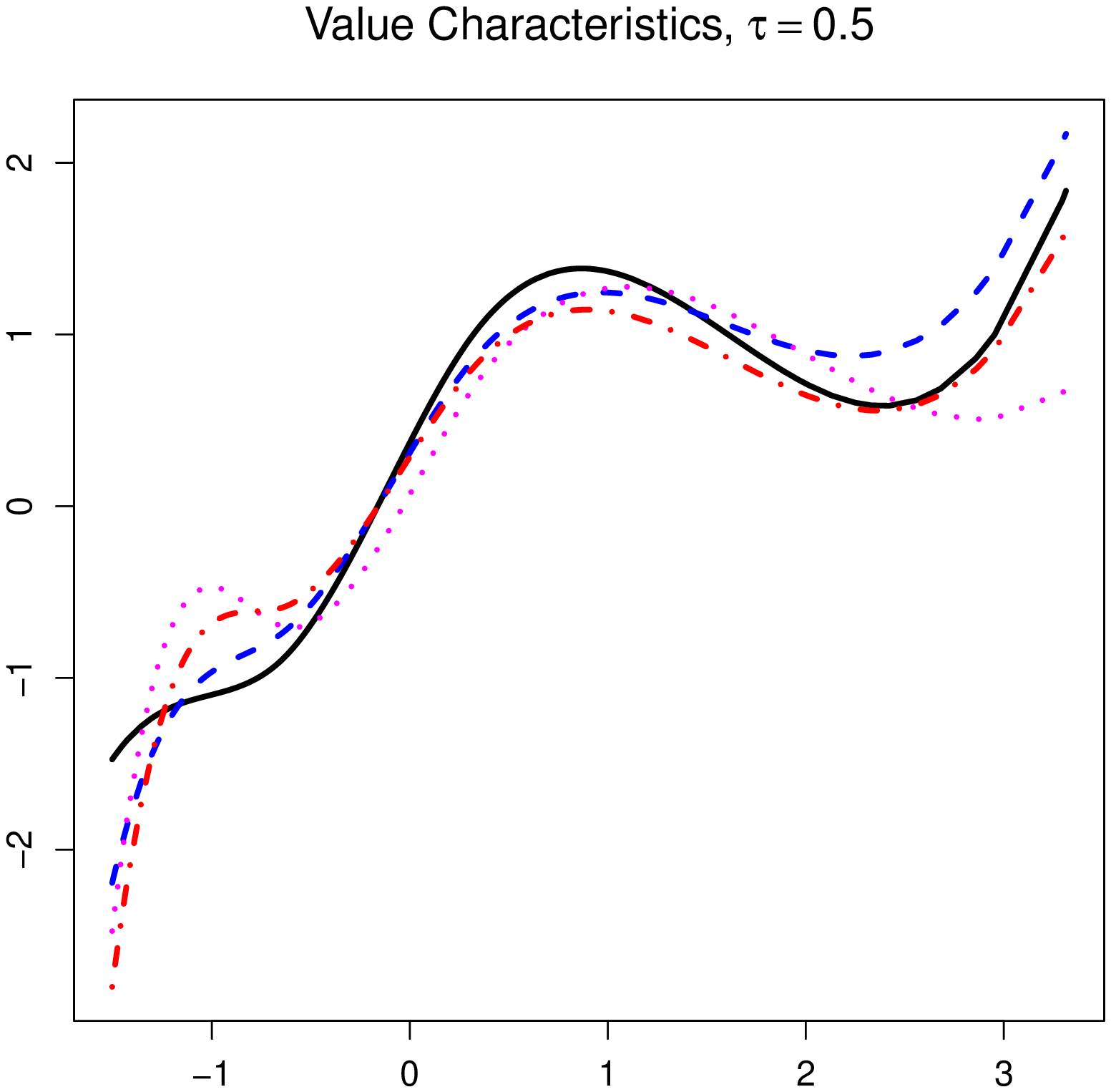}\\
\includegraphics[width=7cm,height=6cm]{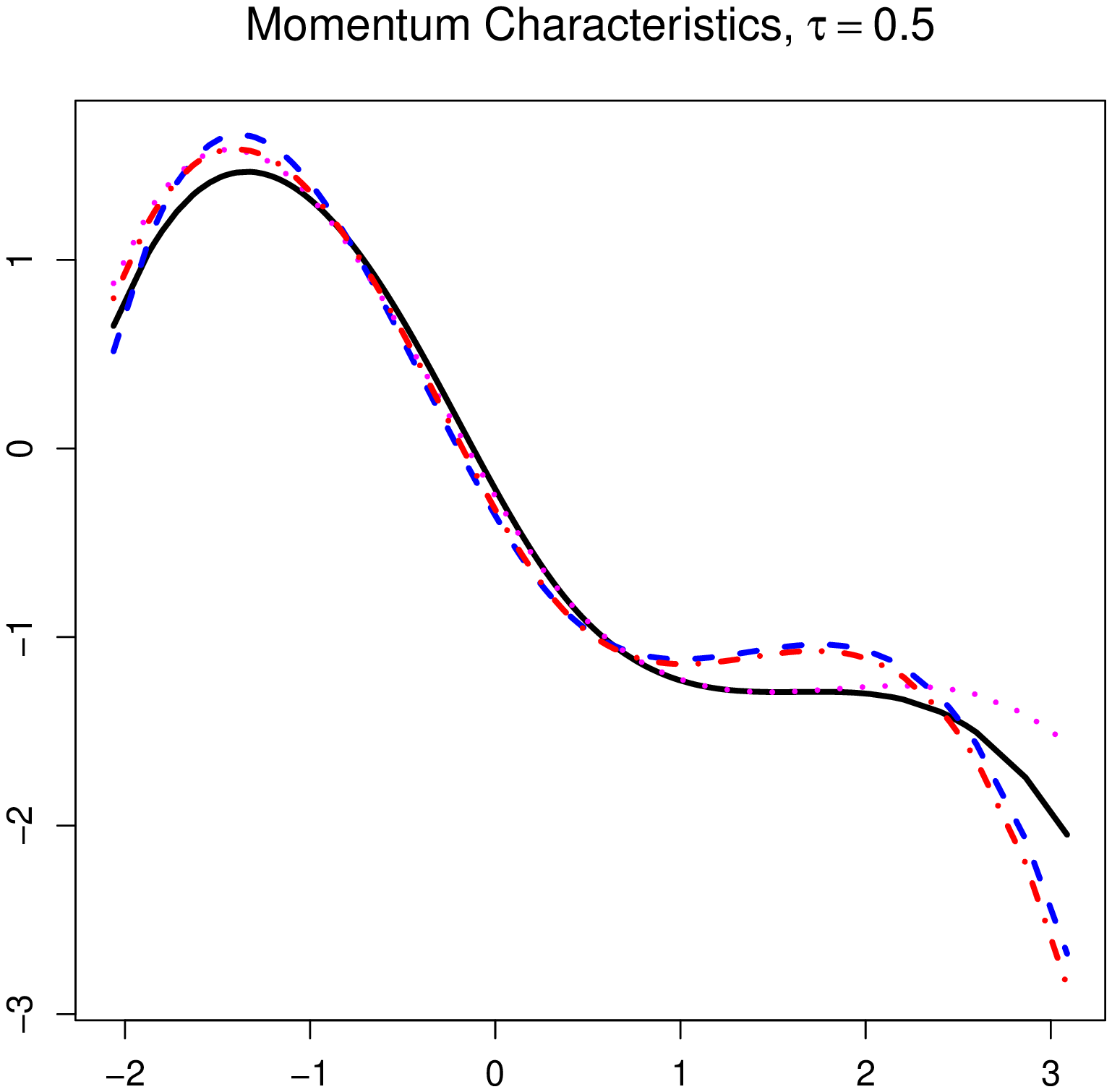} &
\includegraphics[width=7cm,height=6cm]{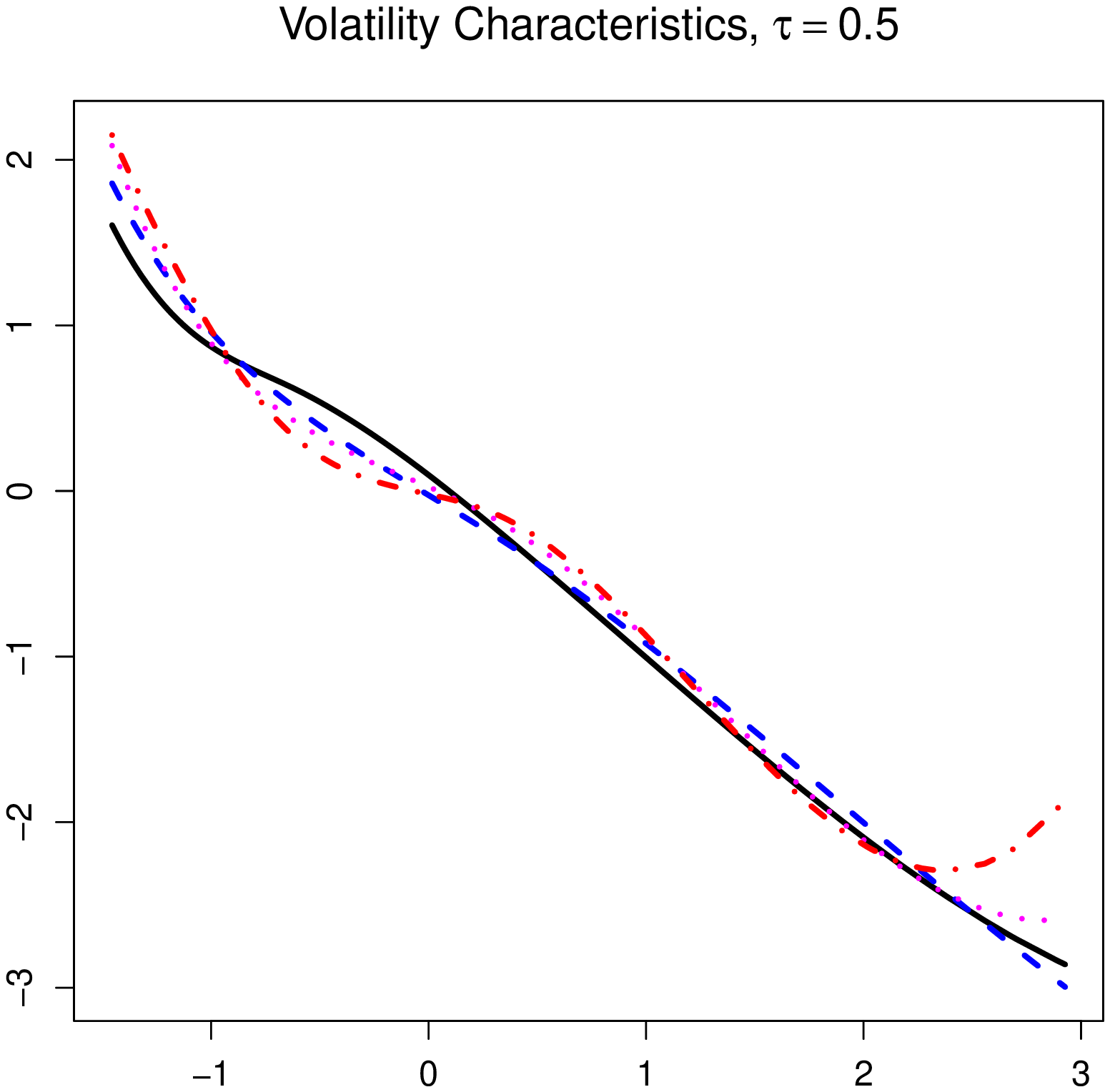}
\end{array}
$}

\vspace*{\floatsep}
\vspace*{-1.5cm}

\caption{\small{The plots of the estimated loading functions for the year of 2009
(dotted-dashed red lines), 2010 (dotted magenta lines), 2011 (dashed blue
lines), and 2012 (solid black lines) at $\tau=0.8$.}}%
\label{FIG:fig3}
\centering
{\normalsize $%
\begin{array}
[c]{cc}%
\includegraphics[width=7cm,height=6cm]{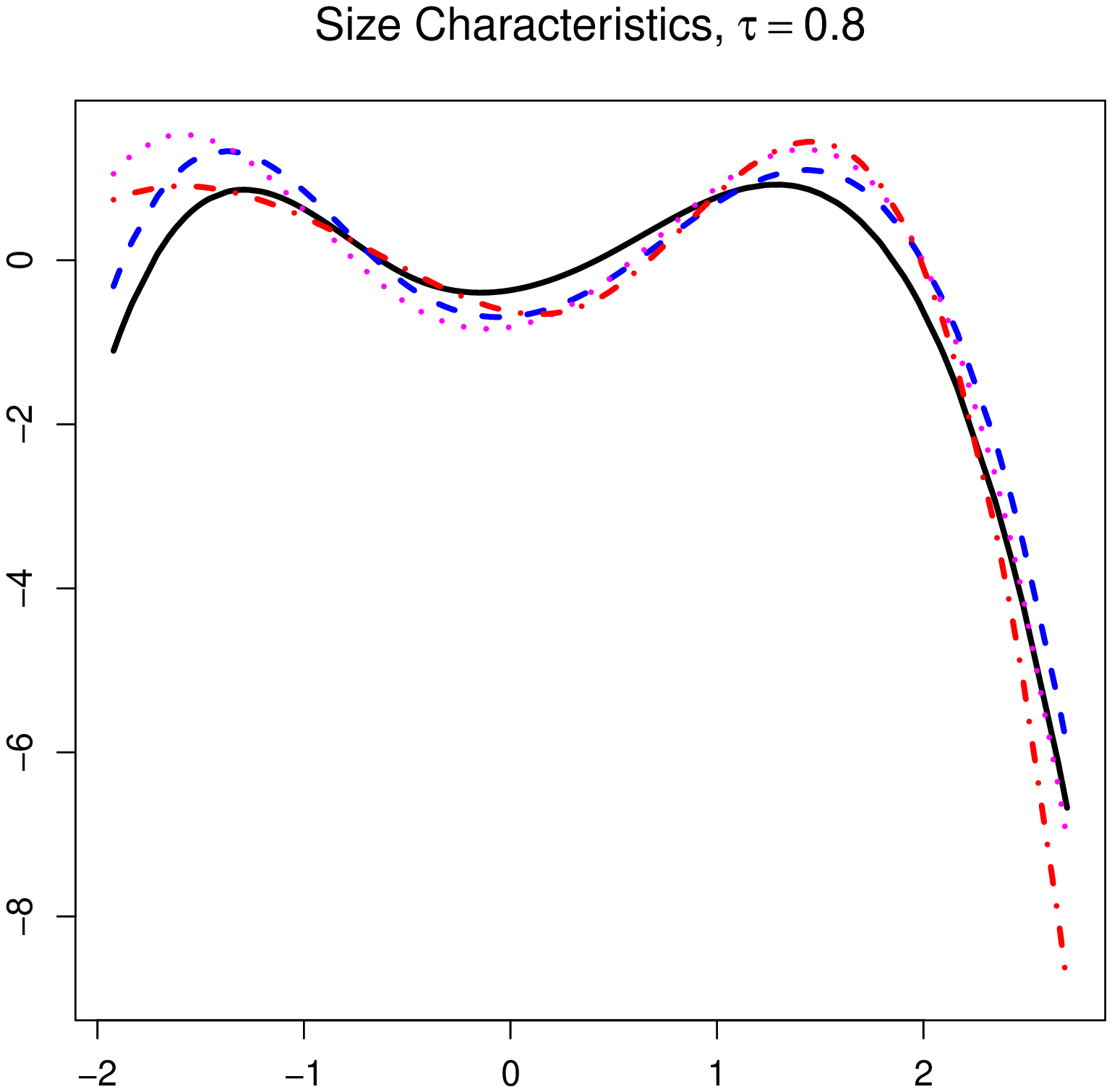} &
\includegraphics[width=7cm,height=6cm]{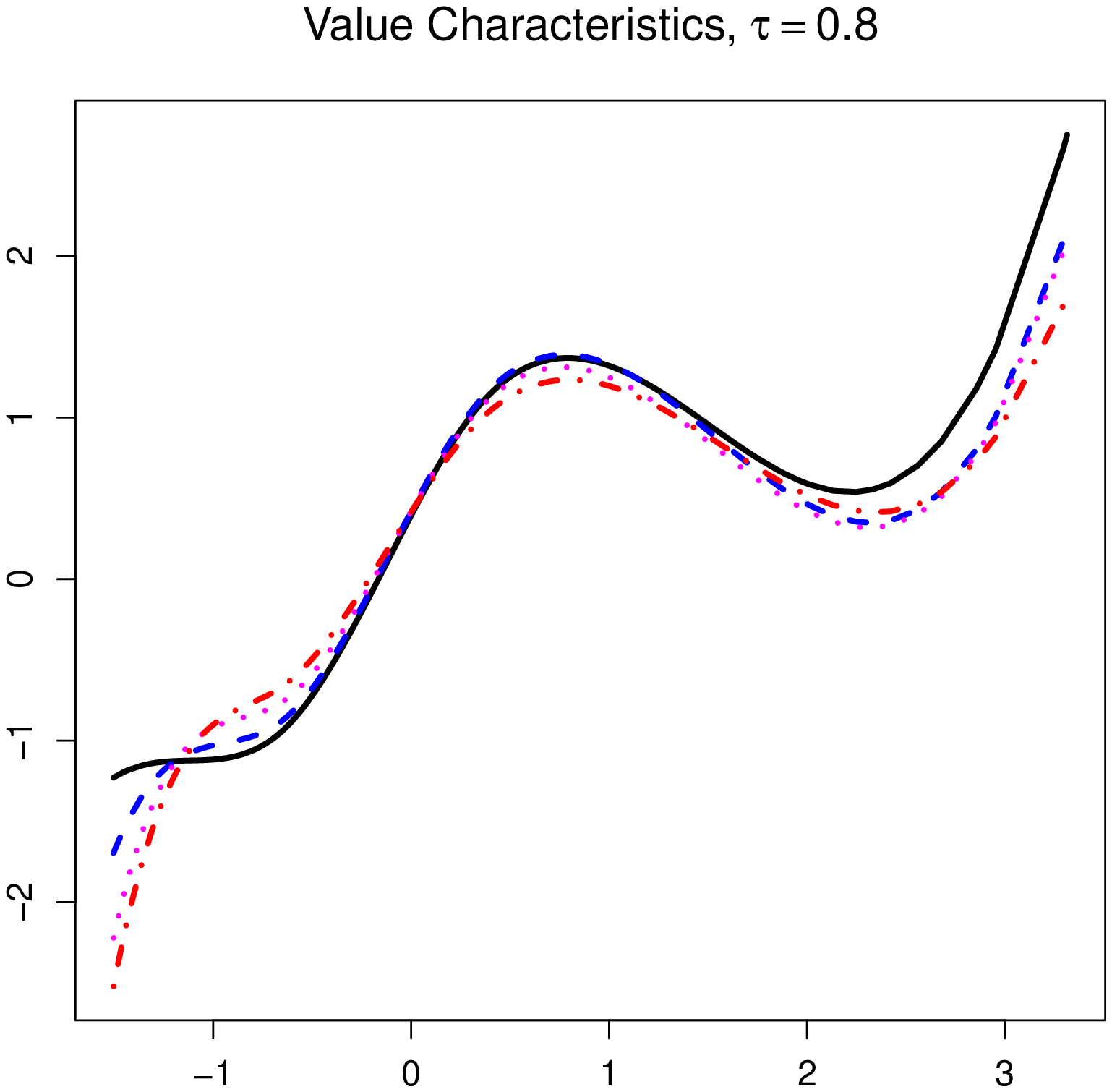}\\
\includegraphics[width=7cm,height=6cm]{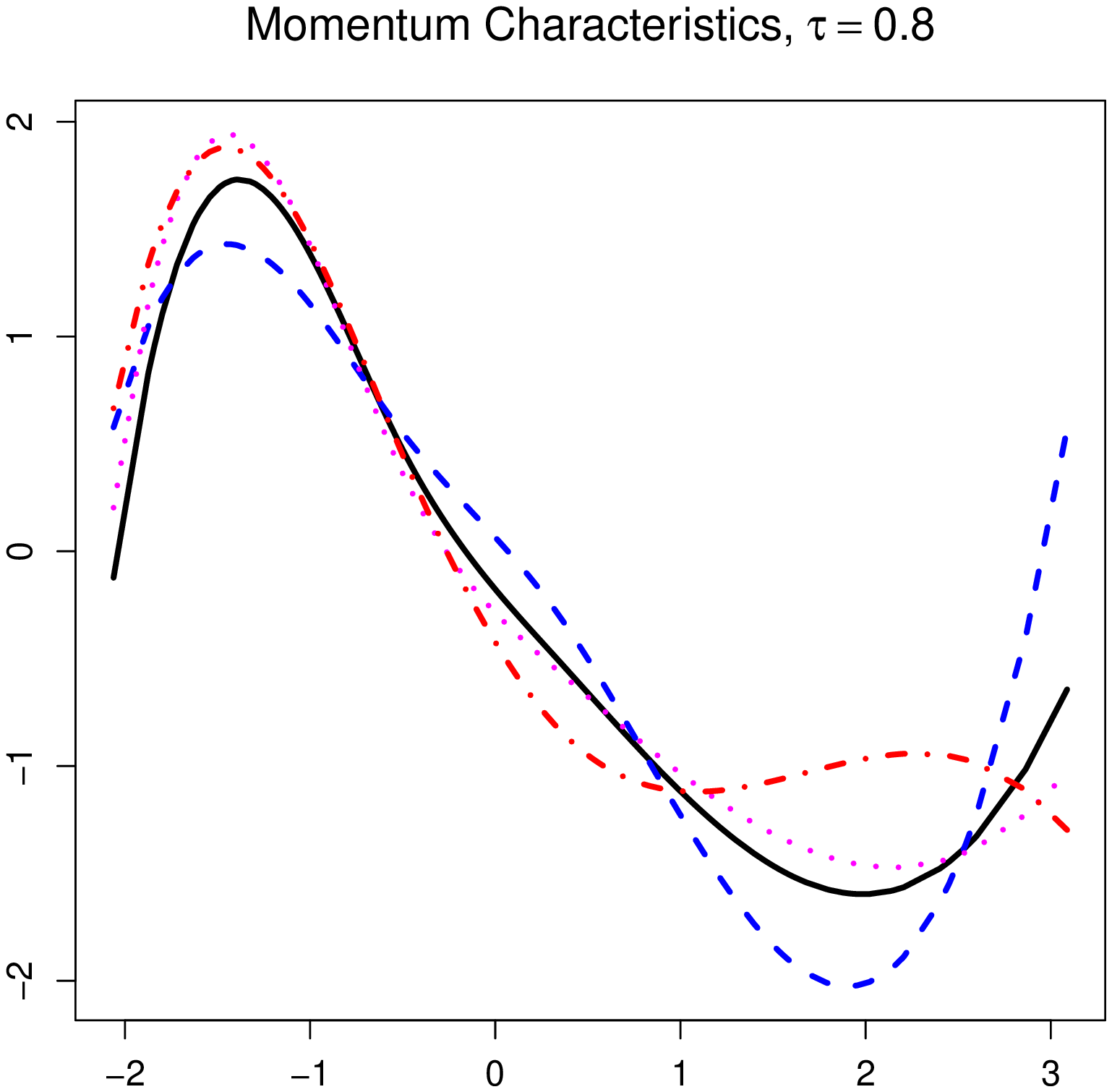} &
\includegraphics[width=7cm,height=6cm]{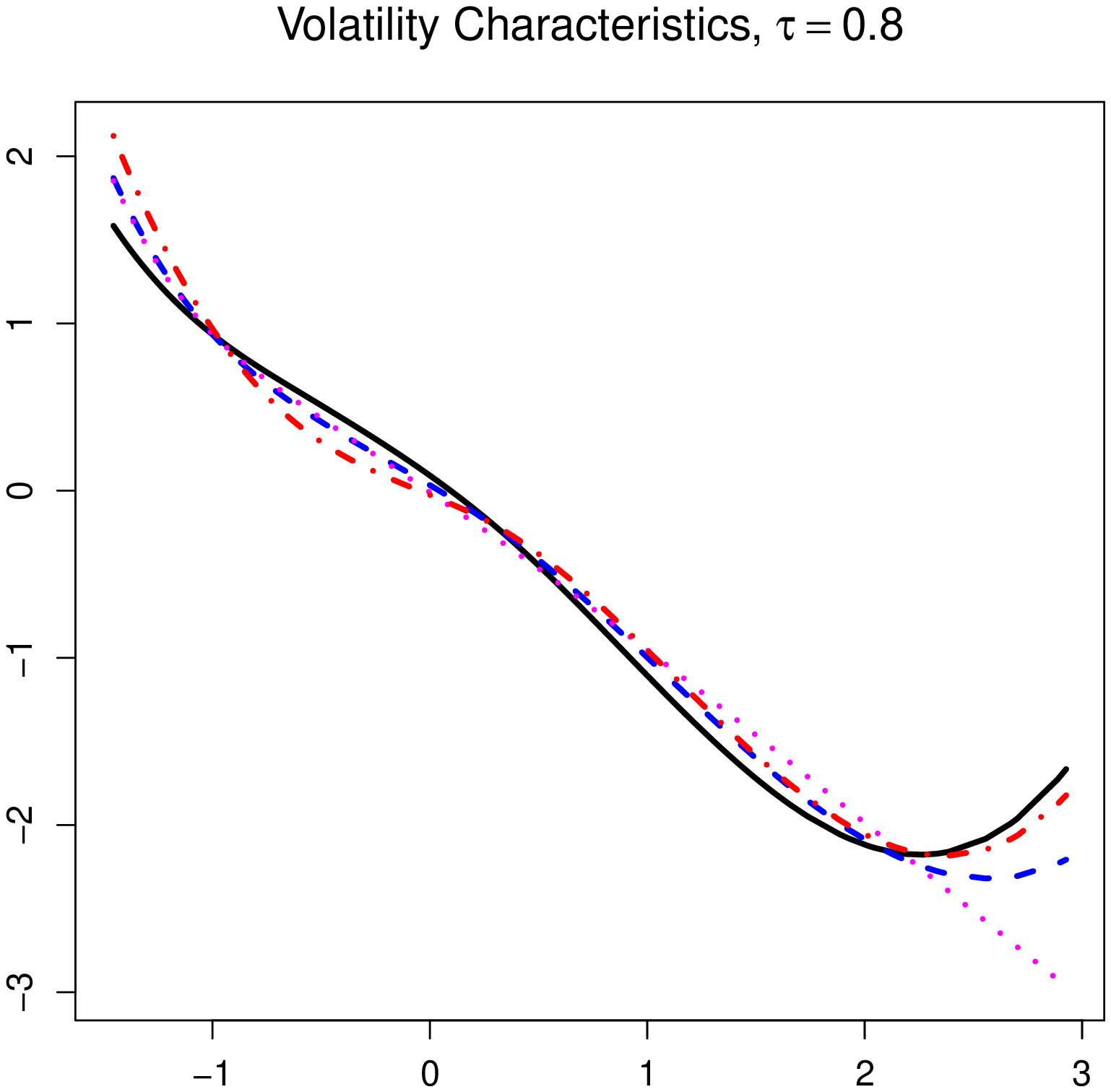}
\end{array}
$ }
\end{figure}

Next, we let $\kappa=0.5,1,1.5$ and $b=0.2,0.4,0.6$, respectively, for
calculation of $\widehat{\Lambda}_{Nt}$ and $\widehat{\Omega}_{Nt,M=bN}$. For
obtaining the robust estimator $\widehat{\Omega}_{Nt,M=bN}$, the data need to
be ordered across $i$. We consider two different orderings. First, we take the
same strategy as Lee and Robinson (2016) by ordering the data according to
firm size, since firms of similar size may be subject to similar shocks.
Second, we use the information of the four explanatory variables by ordering
the data according to the first principal component of the covariate matrix.
Using the year of 2012, we test for the statistical significance of each
factor at each time point, based on the proposed $t$-type statistic and its
distribution given in Theorem \ref{THM:null}. Then for each factor, we find
the percentage of the $t$-type statistics that are significant at a $95\%$
confidence level across the 251 time periods. Based on the two different
ordering strategies, Tables 1 and 2,
respectively, show the annualized standard deviations of the factor returns,
the percentage of significant $t$-type statistics for each factor, and the
median p-value at $\tau=0.5$. We can see that the results are consistent for
different values of $\kappa$ and $b$ and for the two different orderings of
the data. Moreover, all five factors are statistically significant with the
median p-value smaller than 0.05.

{\samepage{\small \begin{table}
\caption{Factor return statistics at $\tau=0.5$ for the year of 2012
when the data are ordered according to the firm size.} %
\label{TAB:factor}
\centering
{\small \hspace{-0.5cm}
\begin{tabular*}
{0.95\textwidth}[c]{ll@{\extracolsep{\fill}}ccccc}\hline
$(\kappa,b)$ &  & Intercept & Size & Value & Momentum & Volatility\\\hline
& Annualized volatility & $0.024$ & $0.023$ & $0.025$ & $0.025$ & $0.029$\\
$(0.5,0.2)$ & \% Periods significant & $92.43$ & $65.34$ & $62.95$ & $64.54$ &
$74.10$\\
& Overall p-value & $<0.001$ & $0.010$ & $0.016$ & $0.009$ & $0.001$\\\hline
& Annualized volatility & $0.020$ & $0.020$ & $0.022$ & $0.022$ & $0.026$\\
$(0.5,0.4)$ & \% Periods significant & $91.63$ & $58.17$ & $57.20$ & $58.17$ &
$66.93$\\
& Overall p-value & $<0.001$ & $0.020$ & $0.019$ & $0.020$ & $0.010$\\\hline
& Annualized volatility & $0.018$ & $0.019$ & $0.019$ & $0.020$ & $0.023$\\
$(0.5,0.6)$ & \% Periods significant & $90.84$ & $55.78$ & $56.40$ & $55.38$ &
$66.93$\\
& Overal p-value & $<0.001$ & $0.032$ & $0.028$ & $0.023$ & $0.006$\\\hline
& Annualized volatility & $0.026$ & $0.026$ & $0.027$ & $0.027$ & $0.032$\\
$(1.0,0.2)$ & \% Periods significant & $92.03$ & $62.95$ & $63.60$ & $62.15$ &
$71.31$\\
& Overall p-value & $<0.001$ & $0.014$ & $0.019$ & $0.011$ & $0.002$\\\hline
& Annualized volatility & $0.022$ & $0.023$ & $0.023$ & $0.024$ & $0.028$\\
$(1.0,0.4)$ & \% Periods significant & $90.44$ & $55.20$ & $56.40$ & $55.98$ &
$65.74$\\
& Overall p-value & $<0.001$ & $0.036$ & $0.030$ & $0.033$ & $0.011$\\\hline
& Annualized volatility & $0.019$ & $0.022$ & $0.021$ & $0.021$ & $0.025$\\
$(1.0,0.6)$ & \% Periods significant & $89.24$ & $56.20$ & $55.40$ & $58.80$ &
$62.95$\\
& Overall p-value & $<0.001$ & $0.032$ & $0.032$ & $0.026$ & $0.016$\\\hline
& Annualized volatility & $0.027$ & $0.028$ & $0.029$ & $0.029$ & $0.034$\\
$(1.5,0.2)$ & \% Periods significant & $92.03$ & $59.76$ & $55.38$ & $61.75$ &
$70.12$\\
& Overall p-value & $<0.001$ & $0.021$ & $0.032$ & $0.015$ & $0.003$\\\hline
& Annualized volatility & $0.023$ & $0.025$ & $0.025$ & $0.026$ & $0.031$\\
$(1.5,0.4)$ & \% Periods significant & $90.44$ & $56.57$ & $55.94$ & $55.94$ &
$63.75$\\
& Overall p-value & $<0.001$ & $0.030$ & $0.030$ & $0.036$ & $0.014$\\\hline
& Annualized volatility & $0.020$ & $0.019$ & $0.022$ & $0.022$ & $0.026$\\
$(1.5,0.6)$ & \% Periods significant & $88.44$ & $58.14$ & $56.80$ & $56.00$ &
$61.75$\\
& Overall p-value & $<0.001$ & $0.027$ & $0.028$ & $0.024$ & $0.018$\\\hline
\end{tabular*}
}
\end{table}

\begin{table}
\caption{Factor return statistics at $\tau=0.5$ for the year of 2012
when the data are ordered according to the first principal
component of the covariate matrix.}%
\label{TAB:factor2}
\centering
{\small \hspace{-0.5cm}
\begin{tabular*}
{0.95\textwidth}[c]{ll@{\extracolsep{\fill}}ccccc}\hline
$(\kappa,b)$ &  & Intercept & Size & Value & Momentum & Volatility\\\hline
& Annualized volatility & $0.023$ & $0.027$ & $0.025$ & $0.025$ & $0.027$\\
$(0.5,0.2)$ & \% Periods significant & $94.02$ & $62.15$ & $62.55$ & $67.73$ &
$75.30$\\
& Overall p-value & $<0.001$ & $0.023$ & $0.018$ & $0.011$ & $<0.001$\\\hline
& Annualized volatility & $0.019$ & $0.024$ & $0.022$ & $0.021$ & $0.023$\\
$(0.5,0.4)$ & \% Periods significant & $92.43$ & $57.60$ & $54.20$ & $58.96$ &
$70.92$\\
& Overall p-value & $<0.001$ & $0.023$ & $0.032$ & $0.019$ & $0.001$\\\hline
& Annualized volatility & $0.016$ & $0.021$ & $0.020$ & $0.019$ & $0.020$\\
$(0.5,0.6)$ & \% Periods significant & $92.83$ & $55.60$ & $56.40$ & $61.60$ &
$71.31$\\
& Overal p-value & $<0.001$ & $0.028$ & $0.028$ & $0.018$ & $0.004$\\\hline
& Annualized volatility & $0.025$ & $0.031$ & $0.027$ & $0.027$ & $0.030$\\
$(1.0,0.2)$ & \% Periods significant & $93.23$ & $56.18$ & $58.96$ & $66.14$ &
$73.71$\\
& Overall p-value & $<0.001$ & $0.036$ & $0.023$ & $0.014$ & $0.002$\\\hline
& Annualized volatility & $0.020$ & $0.026$ & $0.024$ & $0.024$ & $0.025$\\
$(1.0,0.4)$ & \% Periods significant & $92.03$ & $54.80$ & $56.20$ & $59.60$ &
$71.20$\\
& Overall p-value & $<0.001$ & $0.030$ & $0.030$ & $0.019$ & $0.002$\\\hline
& Annualized volatility & $0.016$ & $0.024$ & $0.022$ & $0.022$ & $0.022$\\
$(1.0,0.6)$ & \% Periods significant & $92.80$ & $56.20$ & $55.40$ & $56.80$ &
$68.80$\\
& Overall p-value & $<0.001$ & $0.027$ & $0.031$ & $0.029$ & $0.002$\\\hline
& Annualized volatility & $0.027$ & $0.030$ & $0.029$ & $0.028$ & $0.032$\\
$(1.5,0.2)$ & \% Periods significant & $92.03$ & $56.00$ & $54.40$ & $68.00$ &
$74.00$\\
& Overall p-value & $<0.001$ & $0.033$ & $0.032$ & $0.013$ & $0.002$\\\hline
& Annualized volatility & $0.021$ & $0.028$ & $0.026$ & $0.026$ & $0.026$\\
$(1.5,0.4)$ & \% Periods significant & $92.03$ & $56.60$ & $55.90$ & $55.20$ &
$68.00$\\
& Overall p-value & $<0.001$ & $0.028$ & $0.028$ & $0.030$ & $0.002$\\\hline
& Annualized volatility & $0.018$ & $0.025$ & $0.024$ & $0.023$ & $0.024$\\
$(1.5,0.6)$ & \% Periods significant & $92.03$ & $58.10$ & $54.80$ & $56.00$ &
$67.60$\\
& Overall p-value & $<0.001$ & $0.027$ & $0.030$ & $0.029$ & $0.003$\\\hline
\end{tabular*}
}
\end{table}}}

\renewcommand{\theequation}{7.\arabic{equation}}

\setcounter{equation}{0}

\section{Conclusions and discussion}

We have taken for granted that the $J$ factors are present in the sense that
\begin{equation}
\underset{T\rightarrow\infty}{\mathrm{p}\lim}\frac{1}{T}\sum_{t=1}^{T}%
f_{jt}^{0}\neq0 \label{mf}%
\end{equation}
for $j=1,\ldots,J$. For the factors in our application this is quite a
standard assumption, but in some cases one might wish to test this because if
this condition fails, then the right hand side of (\ref{m0}) is close to zero
and this equation cannot identify $g_{j}^{0}(x_{j}).$ We outline below a test
of the hypothesis (\ref{mf}) based on the unstructured additive quantile
regression model (\ref{EQ:quantileinitial}). A more limited objective is to
test whether for a given time period $t,$ $f_{jt}=0$.

We are interested in testing the hypothesis that
\begin{equation}
H_{0_{A_{j}}}:\lim_{T\rightarrow\infty}\frac{1}{T}\sum_{t=1}^{T}h_{jt}%
(x_{j})=0\text{ for all }x_{j}, \label{ho}%
\end{equation}
against the general alternative that $\lim_{T\rightarrow\infty}\frac{1}{T}%
\sum_{t=1}^{T}h_{jt}(x_{j})=\mu_{j}(x_{j})$ with $\int\mu_{j}(x_{j})^{2}%
dP_{j}(x_{j})>0.$ We also may be interested in a joint test $H_{0}=\cap_{j\in
I_{J}}H_{0_{A_{j}}},$ where $I_{J}$ is a set of integers, which is a subset of
$\{1,2,\ldots,J\}.$ These are tests of the presence of a factor.

We let%
\[
\widehat{\tau}_{j,N,T}=\frac{\int\left(  \frac{1}{T}\sum_{t=1}^{T}%
\widehat{h}_{jt}(x_{j})\right)  ^{2}dP_{j}(x_{j})-a_{N,T}}{s_{N,T}},
\]
where $\widehat{h}_{jt}(\cdot)$ is an estimator of the additive component
function $h_{jt}(\cdot)$ from the quantile additive model at time $t,$ while
$a_{N,T}$ and $s_{N,T}$ are constants to be determined. Under the null
hypothesis (\ref{ho}) we may show that%
\[
\widehat{\tau}_{j,n,T}\overset{\mathcal{D}}{\rightarrow}\mathcal{N}(0,1),
\]
while under the alternative we have $\widehat{\tau}_{j,n,T}\rightarrow\infty$
with probability approaching one. To ensure that $\widehat{\tau
}_{j,n,T}$ has an asymptotic distribution, we may need a two-step estimator
for the additive functions $h_{jt}(\cdot)$ as given in Horowitz and Mammen
(2011) or Ma and Yang (2011). This interesting and challenging technical
problem deserves further investigation, and it can be a good future research
topic.

\section{Acknowledge}
Ma's research was partially supported by NSF grants DMS 1306972 and DMS 1712558. Gao's research was supported by the Australian
Research Council Discovery Grants Program for its support under Grant numbers:
DP150101012 \& DP170104421.

\renewcommand{\theequation}{A.\arabic{equation}}

\setcounter{equation}{0}

\section{Appendix}

\label{SEC:appendix}

We first introduce some notations which will be used throughout the
Appendix. Let $\lambda_{\max}\left(  \mathbf{A}\right)  $ and $\lambda_{\min
}\left(  \mathbf{A}\right)  $ denote the largest and smallest eigenvalues of a
symmetric matrix $\mathbf{A}$, respectively. For an $m\times n$ real matrix $\mathbf{A}$, we denote $\left\Vert \mathbf{A}\right\Vert _{\infty
}=\max_{1\leq i\leq m}\sum_{j=1}^{n}\left\vert A_{ij}\right\vert $. For any
vector $\mathbf{a=(}a_{1},\ldots,a_{n})^{\intercal}\in\mathbb{R}^{n}$, denote
$||\mathbf{a||}_{\infty}=\max_{1\leq i\leq n}|a_{i}|$. We first study the
asymptotic properties of the initial estimators $\widehat{g}_{j}^{[0]}(x_{j})$
of $g_{j}^{0}(x_{j})$. The following proposition gives the convergence rate of
$\widehat{g}_{j}^{[0]}(x_{j})$ that will be used in
the proofs of Theorems \ref{THM:fhat} and \ref{THM:ghat}.

\begin{proposition}
{\normalsize \label{THM:ghat0} Let Conditions (C1)-(C4) hold. If, in addition,
$K_{N}^{4}N^{-1}=o(1)$, $K_{N}^{-r+2}(\log T)=o(1)$ and $K_{N}^{-1}(\log
NT)(\log N)^{4}=o(1)$, then for every $1\leq j\leq J$,
\begin{align*}
\sup\nolimits_{x_{j}\in\lbrack a,b]}|\widehat{g}_{j}^{[0]}(x_{j})-g_{j}%
^{0}(x_{j})|  &  =O_{p}(K_{N}/\sqrt{NT}+K_{N}^{2}N^{-3/4}\sqrt{\log NT}%
+K_{N}^{-r})+o_{p}(N^{-1/2}),\\
\left[  \int\{\widehat{g}_{j}^{[0]}(x_{j})-g_{j}^{0}(x_{j})\}^{2}%
dx_{j}\right]  ^{1/2}  &  =O_{p}(\sqrt{K_{N}/(NT)}+K_{N}^{3/2}N^{-3/4}%
\sqrt{\log NT}+K_{N}^{-r})+o_{p}(N^{-1/2}).
\end{align*}
}
\end{proposition}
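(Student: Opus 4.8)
Set $\bar f_j^0=T^{-1}\sum_{t=1}^T f_{jt}^0$ and $\bar h_j^0(x_j)=T^{-1}\sum_{t=1}^T h_{jt}^0(x_j)$. Under the factor structure $h_{jt}^0(x_j)=g_j^0(x_j)f_{jt}^0$, so $\bar h_j^0(x_j)=g_j^0(x_j)\bar f_j^0$, $\int\bar h_j^0(x_j)^2dP_j(x_j)=(\bar f_j^0)^2$, and (with the sign normalisation $\bar f_j^0>0$) $g_j^0(x_j)=\bar h_j^0(x_j)/|\bar f_j^0|$. Writing $A_N(x_j)=T^{-1}\sum_t\widetilde h_{jt}(x_j)$ and $C_N=N^{-1}\sum_i A_N(X_{ji})^2$ for the numerator and the squared normaliser of $\widehat g_j^{[0]}$, the plan is to (i) obtain sup-norm and $L^2$ rates for $A_N-\bar h_j^0$, (ii) show $C_N=(\bar f_j^0)^2+O_p(\phi_{NT})+O_p(N^{-1/2})$, with $(\bar f_j^0)^2$ bounded away from $0$ and $\infty$ with probability tending to one by (C4), and (iii) combine these via a first-order expansion of $C_N^{-1/2}$.

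For (i) the main task is a Bahadur representation, uniform in $t$, for the additive quantile B-spline estimator. Fixing $t$, I would stack the centred bases as $\mathbf B(X_i)=(1,B_1(X_{1i})^\intercal,\dots,B_J(X_{Ji})^\intercal)^\intercal$ and let $\boldsymbol\theta_t^0$ be best spline coefficients, so that $\sup_x|h_{jt}^0(x_j)-B_j(x_j)^\intercal\boldsymbol\theta_{jt}^0|=O(K_N^{-r})$ uniformly in $t$ by de Boor (2001) under (C3). Since the criterion is convex, a standard Hjort--Pollard/Kato argument---using a lower bound on the eigenvalues of $\boldsymbol\Phi_t=N^{-1}\sum_iE\{p_i(0|X_i,f_t)\mathbf B(X_i)\mathbf B(X_i)^\intercal\}$, which follows from (C2), (C5) and the well-conditioning of B-spline Gram matrices---first yields $\|\widetilde{\boldsymbol\theta}_t-\boldsymbol\theta_t^0\|=O_p(\sqrt{K_N/N})$ and then $\widetilde{\boldsymbol\theta}_t-\boldsymbol\theta_t^0=\boldsymbol\Phi_t^{-1}\big(N^{-1}\sum_i\mathbf B(X_i)\{\tau-I(\varepsilon_{it}<0)\}\big)+\mathbf R_{Nt}$, where $\mathbf R_{Nt}$ is controlled via a maximal inequality for the oscillation of the non-smooth subgradient empirical process over a $\sqrt{K_N/N}$-ball for the $\phi$-mixing field (C1), giving $\max_{t\le T}\|\mathbf R_{Nt}\|=O_p(K_N^{3/2}N^{-3/4}\sqrt{\log NT})$, the extra $\log T$ coming from a union bound over $t$. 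The restrictions $K_N^4N^{-1}=o(1)$, $K_N^{-r+2}(\log T)=o(1)$ and $K_N^{-1}(\log NT)(\log N)^4=o(1)$ are exactly what make this step succeed. Averaging over $t$ and applying $B_j(x_j)^\intercal$, the leading term is $B_j(x_j)^\intercal\{(NT)^{-1}\sum_{t,i}[\boldsymbol\Phi_t^{-1}\mathbf B(X_i)]_j(\tau-I(\varepsilon_{it}<0))\}$; since $\tau-I(\varepsilon_{it}<0)$ is conditionally mean zero and the summand array is $\phi$-mixing in $(i,t)$ with second moment $\asymp K_N$, a variance/Bernstein bound for mixing fields gives it Euclidean norm $O_p(\sqrt{K_N/(NT)})$ (after first replacing $\boldsymbol\Phi_t$ by the empirical Hessian, the error being negligible under $K_N^4N^{-1}=o(1)$), while the averaged remainder is $\le\max_t\|\mathbf R_{Nt}\|$. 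Using $\|B_j(x_j)\|=O(\sqrt{K_N})$ and $\lambda_{\max}(\int B_jB_j^\intercal dx_j)=O(1)$, the sup norm picks up the extra $\sqrt{K_N}$ and the integrated square norm does not, whence $\sup_{x_j}|A_N-\bar h_j^0|=O_p(K_N(NT)^{-1/2}+K_N^2N^{-3/4}\sqrt{\log NT}+K_N^{-r})+o_p(N^{-1/2})$ and $[\int(A_N-\bar h_j^0)^2dx_j]^{1/2}=O_p(\phi_{NT})+o_p(N^{-1/2})$.

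For (ii)--(iii) I would expand $C_N=N^{-1}\sum_i\bar h_j^0(X_{ji})^2+2N^{-1}\sum_i\bar h_j^0(X_{ji})\{A_N(X_{ji})-\bar h_j^0(X_{ji})\}+N^{-1}\sum_i\{A_N(X_{ji})-\bar h_j^0(X_{ji})\}^2$: the first term equals $(\bar f_j^0)^2N^{-1}\sum_ig_j^0(X_{ji})^2=(\bar f_j^0)^2\{1+O_p(N^{-1/2})\}$ by a variance bound for the $\phi$-mixing sequence $\{X_i\}$ and $E\{g_j^0(X_{ji})^2\}=1$, while the other two are $O_p(\phi_{NT})$ and $O_p(\phi_{NT}^2)$ by Cauchy--Schwarz and the well-conditioning of $N^{-1}\sum_iB_j(X_{ji})B_j(X_{ji})^\intercal$; with (C4) this gives $C_N^{-1/2}=|\bar f_j^0|^{-1}\{1+O_p(\phi_{NT}+N^{-1/2})\}$. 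Then $\widehat g_j^{[0]}(x_j)-g_j^0(x_j)=\{A_N(x_j)-\bar h_j^0(x_j)\}/|\bar f_j^0|-g_j^0(x_j)\,O_p(\phi_{NT}+N^{-1/2})$ up to products of these terms, and the two asserted bounds follow from step (i) and the boundedness of $g_j^0$ on $[a,b]$ implied by (C3), the $o_p(N^{-1/2})$ term absorbing the normaliser's contribution. The principal obstacle is the uniform-in-$t$ Bahadur step: controlling increments of a non-smooth subgradient process over a shrinking ball for a growing number $\asymp K_N$ of spline coefficients under $\phi$-mixing (rather than i.i.d.) cross-sections, while paying only a $\sqrt{\log NT}$ price for uniformity over the $T$ periods.
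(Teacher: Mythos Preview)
Your proposal is essentially the paper's own argument: a uniform-in-$t$ Bahadur representation for the spline quantile estimator (the paper obtains this via auxiliary lemmas patterned on Horowitz and Lee (2005), with the same remainder rate $K_N^{3/2}N^{-3/4}\sqrt{\log NT}$ coming from a Bernstein inequality for the $\phi$-mixing array together with a union bound over $t$), then time-averaging the leading score term to gain the $(NT)^{-1/2}$ rate, and finally a first-order expansion of the normaliser.

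One small point deserves care. In step~(ii)--(iii) you compare $C_N$ to the \emph{population} quantity $(\bar f_j^0)^2$ and pick up an $O_p(N^{-1/2})$ contribution from $N^{-1}\sum_i g_j^0(X_{ji})^2-1$; you then assert this is ``absorbed'' into the $o_p(N^{-1/2})$ term, which it is not. The paper sidesteps this by comparing the two \emph{empirical} normalisers $\varpi_{NT}=\{N^{-1}\sum_i A_N(X_{ji})^2\}^{1/2}$ and $\varpi_{NT}^0=\{N^{-1}\sum_i \bar h_j^0(X_{ji})^2\}^{1/2}$ directly: since both use the same empirical average over $i$, the difference $\varpi_{NT}^{-1}-(\varpi_{NT}^0)^{-1}$ is driven solely by $A_N-\bar h_j^0$ (evaluated at the design points) and inherits the rate $O_p(\phi_{NT})+o_p(N^{-1/2})$ without an additional $O_p(N^{-1/2})$ from empirical-versus-population normalisation. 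If you reroute your expansion through $\varpi_{NT}^0$ rather than $|\bar f_j^0|$, your argument goes through as stated.
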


\subsection{{\protect\normalsize Proof of Proposition \ref{THM:ghat0}}}
\label{SEC:ghat0}

According to the result on page 149
of de Boor (2001), for $h_{jt}^{0}$ satisfying the smoothness condition given
in (C2), there exists $\boldsymbol{\theta}_{jt}^{0}\in\mathbb{R}^{K_{n}}$ such
that $h_{jt}^{0}(x_{j})=\widetilde{h}_{jt}^{0}(x_{j})+b_{jt}(x_{j})$
\begin{equation}
\widetilde{h}_{jt}^{0}(x_{j})=B_{j}(x_{j})^{\intercal}\boldsymbol{\theta}%
_{jt}^{0}\text{ and }\sup_{j,t}\sup_{x_{j}\in\lbrack a,b]}|b_{jt}%
(x_{j})|=O(K_{N}^{-r}). \label{EQ:Rjt}%
\end{equation}
Denote $\widetilde{h}_{t}^{0}(x)=\{\widetilde{h}_{jt}^{0}(x_{j}),1\leq j\leq
J\}^{\intercal}$, and
\[
b_{t}(x)=\sum\nolimits_{j=1}^{J}h_{jt}^{0}(x_{j})-B(x)^{\intercal
}\boldsymbol{\theta}_{t}^{0},
\]
where $B(x)=\{B_{1}(x_{1})^{\intercal},\ldots,B_{J}(x_{J})^{\intercal
}\}^{\intercal}$ and $\boldsymbol{\theta}_{t}^{0}=(\boldsymbol{\theta}%
_{1t}^{0\intercal},\ldots,\boldsymbol{\theta}_{Jt}^{0\intercal})^{\intercal}$.
Then by (\ref{EQ:Rjt}), we have
\[
\sup\nolimits_{x\in\lbrack a,b]^{J}}|b_{t}(x)|=O(K_{N}^{-r}).
\]
Then $\mathbb{B}(x)(\widetilde{h}_{ut},\widetilde{\boldsymbol{\theta}}%
_{t}^{\intercal})^{\intercal}=(\widetilde{h}_{ut},\widetilde{h}_{t}%
(x)^{\intercal})^{\intercal}$ and $\mathbb{B}(x)(h_{ut}^{0},\boldsymbol{\theta
}_{t}^{0\intercal})^{\intercal}=(h_{ut}^{0},\widetilde{h}_{t}^{0}%
(x)^{\intercal})^{\intercal}$, where
\begin{equation}
\mathbb{B}(x)=\left[  \text{diag\{}1,B_{1}(x_{1})^{\intercal},\ldots
,B_{J}(x_{J})^{\intercal}\}\right]  _{(1+J)\times(1+JK_{N})}, \label{EQ:Bx}%
\end{equation}
$\widetilde{h}_{t}(x)=\{\widetilde{h}_{jt}(x_{j}),1\leq j\leq
J\}^{\intercal}$, and $\widetilde{h}_{jt}(\cdot)$ are the estimators given in Section \ref{initial}.
We first give the Bernstein inequality for a $\phi$-mixing sequence, which is
used through our proof.

\begin{lemma}
{\normalsize \label{LEM:Bernstein} Let $\left\{  \xi_{i}\right\}  $ be a sequence
of centered real-valued random variables. Let $S_{n}=\sum_{i=1}^{n}\xi_{i}$.
Suppose the sequence has the $\phi$-mixing coefficient satisfying $\phi
(k)\leq\exp(-2ck)$ for some $c>0$ and $\sup_{i\geq1}|\xi_{i}|\leq M$. Then
there is a positive constant $C_{1}$ depending only on $c$ such that for all
$n\geq2$%
\[
P(\left\vert S_{n}\right\vert \geq\varepsilon)\leq\exp(-\frac{C_{1}%
\varepsilon^{2}}{v^{2}n+M^{2}+\varepsilon M(\log n)^{2}}),
\]
where $v^{2}=\sup_{i>0}($\textrm{var}$\left(  \xi_{i}\right)  +2\sum_{j>i}%
|$\textrm{cov}$(\xi_{i},\xi_{j})|)$. }
\end{lemma}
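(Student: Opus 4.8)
The plan is to bound the upper tail through the exponential Markov inequality $P(S_n\geq\varepsilon)\leq e^{-\lambda\varepsilon}E[e^{\lambda S_n}]$ for $\lambda>0$, to obtain a sharp upper bound on the Laplace transform $E[e^{\lambda S_n}]$ valid on a range $0\leq\lambda\leq\lambda_0$, and then to optimize over $\lambda$; the lower tail follows by applying the same argument to $\{-\xi_i\}$, which inherits the same mixing coefficients and the same $v^2$ and $M$. The bound is of Merlev\`ede--Peligrad--Rio type, and its three additive denominator terms are produced by three regimes of this optimization: $v^2n$ from the Gaussian (variance-dominated) interior optimum, $\varepsilon M(\log n)^2$ from the corner where the admissible range of $\lambda$ is saturated, and $M^2$ as a floor/boundary contribution that keeps the bound nontrivial for small $\varepsilon$ or small $v^2$.

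The workhorse is the covariance inequality for $\phi$-mixing in its $L_\infty$--$L_1$ form: if $U$ is $\sigma(\xi_i:i\leq a)$-measurable and bounded and $V$ is $\sigma(\xi_i:i\geq a+k)$-measurable and integrable, then $|E[UV]-E[U]E[V]|\leq 2\phi(k)\|U\|_\infty E|V|$. I would decompose $[1,n]$ into alternating big blocks of length $p$ and gaps (small blocks) of length $q$, write $U_k=\sum_{i\in I_k}\xi_i$ for the big-block sums, and control the joint Laplace transform $E[\prod_k e^{\lambda U_k}]$ by telescoping: at each step $e^{\lambda U_1}$ is measurable with respect to the first block and $\prod_{k\geq2}e^{\lambda U_k}$ with respect to the later blocks, separated by a gap of length $q$, so the inequality gives $E[\prod_{k\geq1}e^{\lambda U_k}]\leq(E[e^{\lambda U_1}]+2\phi(q)e^{\lambda pM})E[\prod_{k\geq2}e^{\lambda U_k}]$. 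Iterating yields $E[\prod_k e^{\lambda U_k}]\leq\prod_k E[e^{\lambda U_k}]\cdot(1+C\phi(q))^{r}$, where $r\asymp n/p$ is the number of blocks. Choosing $q\asymp(\log n)/c$ makes $\phi(q)\leq e^{-2cq}\leq n^{-2}$, so $r\phi(q)\to0$ and the mixing correction is $1+o(1)$.

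For the variance bookkeeping I would use the definition of $v^2$ directly: since $v^2=\sup_i(\mathrm{Var}(\xi_i)+2\sum_{j>i}|\mathrm{cov}(\xi_i,\xi_j)|)$, the partial sum over any interval $I$ satisfies $\mathrm{Var}(\sum_{i\in I}\xi_i)\leq|I|\,v^2$, hence $\mathrm{Var}(U_k)\leq p\,v^2$ and $\mathrm{Var}(S_n)\leq n\,v^2$. Combining $E[U_k]=0$, the bound $|U_k|\leq pM$, and the elementary inequality $e^x\leq1+x+\tfrac12 x^2e^{|x|}$ gives, for $\lambda pM$ bounded, the single-block estimate $E[e^{\lambda U_k}]\leq\exp(\tfrac12\lambda^2 p\,v^2 e^{\lambda pM})$, and multiplying over the blocks produces the aggregate factor $\exp(\tfrac12\lambda^2 n v^2(1+o(1)))$. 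The small-block (gap) sum is handled by the identical scheme with the roles of $p$ and $q$ interchanged. Taking $p\asymp(\log n)^2$ and $q\asymp\log n$ keeps all the error estimates in force, and the single-block constraint $\lambda pM\lesssim1$ becomes $\lambda\lesssim1/(M(\log n)^2)$; it is precisely this saturation level $\lambda_0\asymp1/(M(\log n)^2)$ that injects the $(\log n)^2$ into the final denominator.

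The main obstacle is controlling the telescoped product without it exploding: the naive $L_\infty$ version of the covariance inequality would put $\|\prod_{k\geq2}e^{\lambda U_k}\|_\infty\leq e^{\lambda nM}$ on the remaining factor and be useless, so the essential point is to retain $E|\,\prod_{k\geq2}e^{\lambda U_k}|$ (the quantity being recursively bounded) via the $L_1$ form, and to verify that the per-cut factors $(1+C\phi(q))$ multiply up to $1+o(1)$ and that the leftover boundary block at the end contributes only the $M^2$ term rather than inflating the variance proxy. Once the Laplace-transform bound $\log E[e^{\lambda S_n}]\leq\tfrac12\lambda^2 n v^2(1+o(1))+o(1)$ is secured for $0\leq\lambda\leq\lambda_0$, the stated inequality follows from the standard two-regime minimization of $-\lambda\varepsilon+\tfrac12\lambda^2 n v^2$ over $[0,\lambda_0]$: the interior optimum $\lambda^\ast=\varepsilon/(nv^2)$ delivers the $v^2n$ term when $\lambda^\ast\leq\lambda_0$, the corner $\lambda=\lambda_0$ delivers the $\varepsilon M(\log n)^2$ term otherwise, and collecting the absolute constants (depending only on $c$) absorbs the $M^2$ floor, yielding the claimed exponent with a suitable $C_1$.
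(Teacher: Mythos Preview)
Your proposal takes a genuinely different route from the paper. The paper does not prove this lemma from scratch at all: it simply observes that the stated inequality is Theorem~2 of Merlev\`ede, Peligrad and Rio (2009) for geometrically $\alpha$-mixing sequences, and then notes that $\alpha(k)\leq\phi(k)$, so the hypothesis $\phi(k)\leq e^{-2ck}$ implies the $\alpha$-mixing hypothesis of that theorem with the same constant. That is the entire argument.

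You instead outline a direct proof via a single-level big-block/small-block decomposition, the $\phi$-mixing covariance inequality, and Chernoff optimization. The overall architecture is correct and is in the spirit of the Merlev\`ede--Peligrad--Rio argument, and your identification of the two regimes (interior optimum giving $nv^{2}$, saturation at $\lambda_{0}\asymp 1/(M(\log n)^{2})$ giving $\varepsilon M(\log n)^{2}$) is exactly how the denominator arises. Two caveats are worth flagging if you intend to write this out in full. First, the precise $(\log n)^{2}$ and the additive $M^{2}$ in the MPR bound come from a recursive Cantor-type blocking rather than a one-shot Bernstein blocking; a single level of blocks with $p\asymp(\log n)^{2}$, $q\asymp\log n$ will give a bound of the right shape, but matching the exact MPR denominator (and in particular justifying the $M^{2}$ term beyond ``floor/boundary'') requires more bookkeeping than your sketch indicates. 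Second, the $L_{\infty}$--$L_{1}$ covariance inequality for $\phi$-mixing is directional; in general one has $|\mathrm{Cov}(U,V)|\leq 2\phi(\mathcal{A},\mathcal{B})\|U\|_{1}\|V\|_{\infty}$ with $U$ on the conditioning side, so peeling from the left as you describe would need the reversed coefficient. In this paper that is harmless because the mixing coefficient is defined by a supremum over unordered pairs $(S_{1},S_{2})$ and is therefore symmetric, but in a stand-alone writeup you would either telescope from the right or invoke that symmetry explicitly.

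The trade-off is clear: the paper's route is a two-line reduction but imports a nontrivial external result; your route is self-contained and exhibits where each denominator term comes from, at the cost of a substantially longer and more delicate argument.
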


\begin{proof}
{\normalsize The result of Lemma \ref{LEM:Bernstein} is given in Theorem 2 on page
275 of Merlev\'{e}de, Peligrad and Rio (2009) when the sequence $\left\{
\xi_{i}\right\}  $ has the $\alpha$-mixing coefficient satisfying
$\alpha(k)\leq\exp(-2ck)$ for some $c>0$. Thus, this result also holds for the
sequence having the $\phi$-mixing coefficient\ satisfying $\phi(k)\leq
\exp(-2ck)$, since $\alpha(k)\leq\phi(k)\leq\exp(-2ck)$. }
\end{proof}



Denote $B(X_{i})=\{B_{1}%
(X_{1i})^{\intercal},\ldots,B_{J}(X_{Ji})^{\intercal}\}^{\intercal}$ and
$Z_{i}=[\{1,B(X_{i})^{\intercal}\}^{\intercal}]_{(1+JK_{N})\times1}$.
Denote $\boldsymbol{\vartheta}_{t}=(h_{ut},\boldsymbol{\theta}_{t}^{\intercal
})^{\intercal}$ and $\boldsymbol{\vartheta}_{t}^{0}=(h_{ut}^{0}%
,\boldsymbol{\theta}_{t}^{0\intercal})^{\intercal}$. Define
\begin{align*}
G_{tN,i}(\boldsymbol{\vartheta}_{t})  &  =[\tau-I\{\varepsilon_{it}\leq
Z_{i}^{\intercal}(\boldsymbol{\vartheta}_{t}-\boldsymbol{\vartheta}_{t}%
^{0})-b_{t}(X_{i})\}]Z_{i},\\
G_{tN,i}^{\ast}(\boldsymbol{\vartheta}_{t})  &  =[\tau-F_{i}[\{Z_{i}%
^{\intercal}(\boldsymbol{\vartheta}_{t}-\boldsymbol{\vartheta}_{t}^{0}%
)-b_{t}(X_{i})\}|X_{i},f_{t}]]Z_{i},
\end{align*}
where $F_{i}(\varepsilon|X_{i},f_{t})=P(\varepsilon_{it}\leq\varepsilon
|X_{i},f_{t})$, and $\widetilde{G}_{tN,i}(\boldsymbol{\vartheta}_{t}%
)=G_{tN,i}(\boldsymbol{\vartheta}_{t})-G_{tN,i}^{\ast}(\boldsymbol{\vartheta
}_{t})$. \ Let $d(N)=(1+JK_{N})$.

Let $\Psi_{Nt}=N^{-1}\sum\nolimits_{i=1}^{N}p_{i}\left(  0\left\vert
X_{i},f_{t}\right.  \right)  Z_{i}Z_{i}^{\intercal}$. By the same reasoning as
the proofs for (ii) of Lemma A.7 in Ma and Yang (2011), we have with
probability approaching 1, as $N\rightarrow\infty$, there exist constants
$0<C_{1}\leq C_{2}<\infty$ such that%
\begin{equation}
C_{1}\leq\lambda_{\min}(\Psi_{Nt})\leq\lambda_{\max}(\Psi_{Nt})\leq C_{2},
\label{EQ:ZZ}%
\end{equation}
uniformly in $t=1,...,T$.

Next lemma presents the Bahadur representation for
$\widetilde{\boldsymbol{\vartheta}}_{t}=(\widetilde{h}_{ut}%
,\widetilde{\boldsymbol{\theta}}_{t}^{\intercal})^{\intercal}$ using the results in Lemmas
\ref{LEM:Gtildatheta-theta0}-\ref{LEM:Gstar} given in the Supplemental Materials.

\begin{lemma}
\label{LEM:bahardur}Under Conditions (C1)-(C3), and $K_{N}^{3}%
N^{-1}=o(1)$, $K_{N}^{2}N^{-1}(\log NT)^{2}(\log N)^{8}=o(1)$ and
$K_{N}^{-r+1}(\log T)=o(1)$,%
\begin{equation}
\widetilde{\boldsymbol{\vartheta}}_{t}-\boldsymbol{\vartheta}_{t}^{0}%
=D_{Nt,1}+D_{Nt,2}+R_{Nt}, \label{varthetatilda}%
\end{equation}
where
\begin{equation}
D_{Nt,1}=\Psi_{Nt}^{-1}\left[
N^{-1}\sum\nolimits_{i=1}^{N}Z_{i}(\tau-I(\varepsilon_{it}<0))\right]  ,
\label{EQ:DNt1}%
\end{equation}%
\[
D_{Nt,2}=\Psi_{Nt}^{-1}\left[  N^{-1}\sum\nolimits_{i=1}^{N}Z_{i}%
\{p_{i}\left(  0\left\vert X_{i},f_{t}\right.  \right)  \sum\nolimits_{j=1}%
^{J}b_{jt}(X_{ji})\}\right]  ,
\]
uniformly in $t$, and the remaining term $R_{Nt}$ satisfies
\begin{align*}
\sup_{1\leq t\leq T}||R_{Nt}||  &  =O_{p}(K_{N}^{3/2}N^{-1}+K_{N}%
^{3/2}N^{-3/4}\sqrt{\log NT}+K_{N}^{1/2-2r}+N^{-1/2}K_{N}^{-r/2+1/2}\sqrt{\log
K_{N}T})\\
&  =O_{p}(K_{N}^{3/2}N^{-3/4}\sqrt{\log NT}+K_{N}^{1/2-2r})+o_{p}(N^{-1/2}).
\end{align*}
\end{lemma}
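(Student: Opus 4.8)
The plan is to run the standard three–term decomposition of the quantile subgradient, adapted to the growing sieve dimension $d(N)=1+JK_N$ and made uniform over $t=1,\dots,T$. Write $\delta_t=\widetilde{\boldsymbol{\vartheta}}_t-\boldsymbol{\vartheta}_t^0$ and recall $y_{it}=Z_i^\intercal\boldsymbol{\vartheta}_t^0+b_t(X_i)+\varepsilon_{it}$, so that the residual entering the check loss is $\varepsilon_{it}-Z_i^\intercal\delta_t+b_t(X_i)$ and $G_{tN,i}(\widetilde{\boldsymbol{\vartheta}}_t)$ is the corresponding score. First I would record the first–order condition for the convex program: at the minimizer $\|N^{-1}\sum_{i=1}^N G_{tN,i}(\widetilde{\boldsymbol{\vartheta}}_t)\|$ is bounded by a $K_N$–polynomial times $N^{-1}$ (since one term can be ``wrong'' at a kink and $\|Z_i\|\asymp K_N^{1/2}$), which is dominated by $K_N^{3/2}N^{-1}$. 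Next I would invoke Lemmas~\ref{LEM:Gtildatheta-theta0}--\ref{LEM:Gstar} (equivalently, a convexity/quadratic–minorization argument together with \eqref{EQ:ZZ}) to obtain the preliminary uniform rate $\sup_{1\le t\le T}\|\delta_t\|=O_p(K_N^{1/2}N^{-1/2})$ up to logarithmic factors, which confines $\delta_t$ to a shrinking ball and legitimizes the local expansions to follow.

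The core is then to split $G_{tN,i}=G_{tN,i}^\ast+\widetilde G_{tN,i}$ and treat the two pieces separately. For the smooth part $G_{tN,i}^\ast$, a second–order Taylor expansion of $F_i(\cdot|X_i,f_t)$ about $0$, using $F_i(0|X_i,f_t)=\tau$ and the Lipschitz condition on $p_i$ in (C2), gives $N^{-1}\sum_i G_{tN,i}^\ast(\widetilde{\boldsymbol{\vartheta}}_t)=-\Psi_{Nt}\delta_t+\Psi_{Nt}D_{Nt,2}+\mathcal R^{(1)}_{Nt}$, where $\Psi_{Nt}D_{Nt,2}=N^{-1}\sum_i p_i(0|X_i,f_t)\,b_t(X_i)\,Z_i$ carries the spline bias (recall $b_t(x)=\sum_j b_{jt}(x_j)$), and, since $\|Z_i\|\asymp K_N^{1/2}$, $\sup_x|b_t(x)|=O(K_N^{-r})$ by \eqref{EQ:Rjt}, and $\|\delta_t\|$ has the preliminary rate, the quadratic remainder satisfies $\sup_t\|\mathcal R^{(1)}_{Nt}\|=O_p(K_N^{3/2}N^{-1}+K_N^{1/2-2r})$. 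For the centered indicator part I would establish, by the Bernstein inequality for $\phi$–mixing sequences (Lemma~\ref{LEM:Bernstein}) combined with a chaining argument over the shrinking ball and a union bound over $t$, the stochastic–equicontinuity bound $\sup_{1\le t\le T}\|N^{-1}\sum_i\widetilde G_{tN,i}(\widetilde{\boldsymbol{\vartheta}}_t)-N^{-1}\sum_i\widetilde G_{tN,i}(\boldsymbol{\vartheta}_t^0)\|=O_p(K_N^{3/2}N^{-3/4}\sqrt{\log NT})$ — this is precisely the content delegated to Lemmas~\ref{LEM:Gtildatheta-theta0}--\ref{LEM:Gstar}. Finally, at $\boldsymbol{\vartheta}_t^0$ I would write $N^{-1}\sum_i\widetilde G_{tN,i}(\boldsymbol{\vartheta}_t^0)=N^{-1}\sum_i Z_i(\tau-I(\varepsilon_{it}<0))+\mathcal R^{(2)}_{Nt}$, where $\mathcal R^{(2)}_{Nt}$ is the centered difference between the indicators $I(\varepsilon_{it}\le-b_t(X_i))$ and $I(\varepsilon_{it}<0)$ together with the matching difference of the $F_i$'s; the conditional mean of the combination vanishes, and because the underlying event has probability $O(K_N^{-r})$, one more application of Lemma~\ref{LEM:Bernstein} (with variance proxy of order $K_N^{1-r}$) and a union bound over $t$ yield $\sup_t\|\mathcal R^{(2)}_{Nt}\|=O_p(N^{-1/2}K_N^{-r/2+1/2}\sqrt{\log K_N T})$.

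Adding the two expansions and equating with the first–order slack gives $-\Psi_{Nt}(\delta_t-D_{Nt,2})+N^{-1}\sum_i Z_i(\tau-I(\varepsilon_{it}<0))=\mathcal R^{(1)}_{Nt}+\mathcal R^{(2)}_{Nt}+O_p(K_N^{3/2}N^{-3/4}\sqrt{\log NT})$ uniformly in $t$. By \eqref{EQ:ZZ} the eigenvalues of $\Psi_{Nt}$ are bounded away from $0$ and $\infty$ uniformly in $t$ with probability tending to one, so $\Psi_{Nt}^{-1}$ exists and is uniformly bounded; left–multiplying by it produces \eqref{varthetatilda} with $D_{Nt,1}$ as in \eqref{EQ:DNt1} and $R_{Nt}=\Psi_{Nt}^{-1}(\mathcal R^{(1)}_{Nt}+\mathcal R^{(2)}_{Nt}+\text{slack})$, whose norm is $O_p(K_N^{3/2}N^{-1}+K_N^{3/2}N^{-3/4}\sqrt{\log NT}+K_N^{1/2-2r}+N^{-1/2}K_N^{-r/2+1/2}\sqrt{\log K_N T})$; under $K_N^3N^{-1}=o(1)$, $K_N^2N^{-1}(\log NT)^2(\log N)^8=o(1)$ and $K_N^{-r+1}(\log T)=o(1)$ the first and fourth terms are $o_p(N^{-1/2})$, leaving the stated $O_p(K_N^{3/2}N^{-3/4}\sqrt{\log NT}+K_N^{1/2-2r})+o_p(N^{-1/2})$. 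I expect the main obstacle to be the stochastic–equicontinuity step: one must control the oscillation of a $K_N$–dimensional indicator–type empirical process over a shrinking ball, \emph{uniformly} across the $T$ time points and under joint temporal and cross–sectional $\phi$–mixing, and the chaining must be arranged so that the mixing inflation of the Bernstein variance proxy together with the $\log(NT)$ from the union bound reproduce exactly the $K_N^{3/2}N^{-3/4}\sqrt{\log NT}$ rate and no larger — this is where the growth restrictions on $K_N$ are genuinely consumed.
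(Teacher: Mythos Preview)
Your proposal is correct and follows essentially the same route as the paper's proof: the decomposition $G_{tN,i}=G_{tN,i}^\ast+\widetilde G_{tN,i}$, the Taylor expansion of the smooth part (the paper's Lemma~\ref{LEM:Gstar}, yielding the $K_N^{1/2}\|\delta_t\|^2+K_N^{1/2-2r}$ remainder), the first--order slack bound $O_p(K_N^{3/2}N^{-1})$ (Lemma~\ref{LEM:Gtildatilda}), the stochastic--equicontinuity bound $O_p(K_N^{3/2}N^{-3/4}\sqrt{\log NT})$ (Lemma~\ref{LEM:Gtildatheta-theta0}), and the comparison of $\widetilde G_{tN,i}(\boldsymbol{\vartheta}_t^0)$ with $Z_i(\tau-I(\varepsilon_{it}<0))$ via Bernstein with variance proxy $O(K_N^{-r})$ are exactly the four ingredients the paper assembles, in the same order and with the same rates. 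The only cosmetic difference is that the paper reaches the linearization by rearranging the identity in Lemma~\ref{LEM:Gstar} for $\Psi_{Nt}^{-1}G_{tN,i}^\ast$, whereas you Taylor--expand $G_{tN,i}^\ast$ directly and then left--multiply by $\Psi_{Nt}^{-1}$ at the end; these are algebraically equivalent.
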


\begin{proof}
By Lemma \ref{LEM:Gstar} in the Supplemental Materials, we have
\[
\widetilde{\boldsymbol{\vartheta}}_{t}-\boldsymbol{\vartheta}_{t}^{0}%
=N^{-1}\Psi_{Nt}^{-1}\sum\nolimits_{i=1}^{N}p_{i}\left(  0\left\vert
X_{i},f_{t}\right.  \right)  Z_{i}b_{t}(X_{i})-\Psi_{Nt}^{-1}G_{tN,i}^{\ast
}(\widetilde{\boldsymbol{\vartheta}}_{t})+R_{Nt}^{\ast}.
\]
Moreover,
\[
\Psi_{Nt}^{-1}G_{tN,i}^{\ast}(\widetilde{\boldsymbol{\vartheta}}_{t}%
)=\Psi_{Nt}^{-1}G_{tN,i}(\widetilde{\boldsymbol{\vartheta}}_{t})-\Psi
_{Nt}^{-1}\widetilde{G}_{tN,i}(\boldsymbol{\vartheta}_{t}^{0})-\Psi_{Nt}%
^{-1}[\widetilde{G}_{tN,i}(\widetilde{\boldsymbol{\vartheta}}_{t}%
)-\widetilde{G}_{tN,i}(\boldsymbol{\vartheta}_{t}^{0})].
\]
Thus,%
\begin{equation}
\widetilde{\boldsymbol{\vartheta}}_{t}-\boldsymbol{\vartheta}_{t}^{0}%
=\Psi_{Nt}^{-1}N^{-1}\sum\nolimits_{i=1}^{N}\widetilde{G}_{tN,i}%
(\boldsymbol{\vartheta}_{t}^{0})+\Psi_{Nt}^{-1}N^{-1}\sum\nolimits_{i=1}%
^{N}p_{i}\left(  0\left\vert X_{i},f_{t}\right.  \right)  Z_{i}b_{t}%
(X_{i})+R_{Nt}^{\ast\ast},\label{EQ:vartheattilda}%
\end{equation}
where
\begin{equation}
R_{Nt}^{\ast\ast}=-\Psi_{Nt}^{-1}N^{-1}\sum\nolimits_{i=1}^{N}G_{tN,i}%
(\widetilde{\boldsymbol{\vartheta}}_{t})+\Psi_{Nt}^{-1}N^{-1}\sum
\nolimits_{i=1}^{N}[\widetilde{G}_{tN,i}(\widetilde{\boldsymbol{\vartheta}%
}_{t})-\widetilde{G}_{tN,i}(\boldsymbol{\vartheta}_{t}^{0})]+R_{Nt}^{\ast
}.\label{EQ:Rstarstar}%
\end{equation}
By\ Lemmas \ref{LEM:Gtildatheta-theta0} and \ref{LEM:Gtildatilda} in the Supplemental Materials and
(\ref{EQ:ZZ}), we have%
\begin{align*}
\sup_{1\leq t\leq T}||R_{Nt}^{\ast\ast}|| &  \leq\sup_{1\leq t\leq T}%
||\Psi_{Nt}^{-1}||\sup_{1\leq t\leq T}||N^{-1}\sum\nolimits_{i=1}^{N}%
G_{tN,i}(\widetilde{\boldsymbol{\vartheta}}_{t})||\\
&  +\sup_{1\leq t\leq T}||\Psi_{Nt}^{-1}||\sup_{1\leq t\leq T}||N^{-1}%
\sum\nolimits_{i=1}^{N}[\widetilde{G}_{tN,i}(\widetilde{\boldsymbol{\vartheta
}}_{t})-\widetilde{G}_{tN,i}(\boldsymbol{\vartheta}_{t}^{0})]||+\sup_{1\leq
t\leq T}||R_{Nt}^{\ast}||\\
&  =O_{p}(K_{N}^{3/2}N^{-1}+(K_{N}^{2}N)^{-3/4}\sqrt{\log NT}+K_{N}^{1/2-2r}).
\end{align*}
Define $\overline{G}_{tN,i\ell}(\boldsymbol{\vartheta}_{t}^{0})=\{\tau
-I(\varepsilon_{it}\leq0)\}Z_{i,\ell}$ and $\overline{G}_{tN,i}%
(\boldsymbol{\vartheta}_{t}^{0})=\{\overline{G}_{tN,i\ell}%
(\boldsymbol{\vartheta}_{t}^{0}),1\leq\ell\leq d(N)\}$. Then $E\{\widetilde{G}%
_{tN,i\ell}(\boldsymbol{\vartheta}_{t}^{0})-\overline{G}_{tN,i\ell
}(\boldsymbol{\vartheta}_{t}^{0})\}=0$. Moreover,
\[
E\{\widetilde{G}_{tN,i\ell}(\boldsymbol{\vartheta}_{t}^{0})-\overline
{G}_{tN,i\ell}(\boldsymbol{\vartheta}_{t}^{0})\}^{2}\leq E\left[
I\{\varepsilon_{it}\leq-b_{t}(X_{i})\}-I\{\varepsilon_{it}\leq0\}Z_{i,\ell
}\right]  ^{2}\leq CK_{N}^{-r}%
\]
for some constant $0<C<\infty$, and by Condition (C1), we have
\begin{align*}
&  E\{\widetilde{G}_{tN,i\ell}(\boldsymbol{\vartheta}_{t}^{0})-\overline
{G}_{tN,i\ell}(\boldsymbol{\vartheta}_{t}^{0})\}\{\widetilde{G}_{tN,i^{\prime
}\ell}(\boldsymbol{\vartheta}_{t}^{0})-\overline{G}_{tN,i^{\prime}\ell
}(\boldsymbol{\vartheta}_{t}^{0})\}\\
&  \leq2\times4^{2}\{\phi(|i^{\prime}-i|)\}^{1/2}[E\{\widetilde{G}_{tN,i\ell
}(\boldsymbol{\vartheta}_{t}^{0})-\overline{G}_{tN,i\ell}%
(\boldsymbol{\vartheta}_{t}^{0})\}^{2}E\{\widetilde{G}_{tN,i^{\prime}\ell
}(\boldsymbol{\vartheta}_{t}^{0})-\overline{G}_{tN,i^{\prime}\ell
}(\boldsymbol{\vartheta}_{t}^{0})\}^{2}]^{1/2}\\
&  \leq C^{\prime}K_{1}e^{-\lambda_{1}|i^{\prime}-i|/2}K_{N}^{-r}.
\end{align*}
Hence, by the above results, we have
\begin{align*}
&  E[N^{-1}\sum\nolimits_{i=1}^{N}\{\widetilde{G}_{tN,i\ell}%
(\boldsymbol{\vartheta}_{t}^{0})-\overline{G}_{tN,i\ell}(\boldsymbol{\vartheta
}_{t}^{0})\}]^{2}\\
&  \leq N^{-1}CK_{N}^{-r}+N^{-2}\sum\nolimits_{i\neq i^{\prime}}C^{\prime
}K_{1}e^{-\lambda_{1}|i^{\prime}-i|}K_{N}^{-r}\\
&  \leq CN^{-1}K_{N}^{-r}+C^{\prime}K_{1}N^{-2}N(1-e^{-\lambda_{1}/2}%
)^{-1}K_{N}^{-r}\leq C^{\prime\prime}N^{-1}K_{N}^{-r},
\end{align*}
for some constant $0<C^{\prime\prime}<\infty$. Thus
\begin{align*}
E||N^{-1}\sum\nolimits_{i=1}^{N}\{\widetilde{G}_{tN,i}(\boldsymbol{\vartheta
}_{t}^{0})-\overline{G}_{tN,i}(\boldsymbol{\vartheta}_{t}^{0})\}||^{2} &
=\sum\nolimits_{\ell=1}^{d(N)}E[N^{-1}\sum\nolimits_{i=1}^{N}\{\widetilde{G}%
_{tN,i\ell}(\boldsymbol{\vartheta}_{t}^{0})-\overline{G}_{tN,i\ell
}(\boldsymbol{\vartheta}_{t}^{0})\}]^{2}\\
&  \leq C^{\prime\prime}(1+JK_{N})N^{-1}K_{N}^{-r}.
\end{align*}
Therefore, by the Bernstein's inequality in Lemma \ref{LEM:Bernstein} and the union bound of probability, following the same procedure as the proof for Lemma \ref{LEM:Gtildatheta-theta0} given in the Supplemental Materials,
we have
\begin{equation}
\sup_{1\leq t\leq T}||N^{-1}\sum\nolimits_{i=1}^{N}\{\widetilde{G}%
_{tN,i}(\boldsymbol{\vartheta}_{t}^{0})-\overline{G}_{tN,i}%
(\boldsymbol{\vartheta}_{t}^{0})\}||=O_{p}(N^{-1/2}K_{N}^{-r/2+1/2}\sqrt{\log
K_{N}T}).\label{EQ;Gtilda-Gbar}%
\end{equation}
Therefore, by (\ref{EQ:vartheattilda}), (\ref{EQ:Rstarstar}) and
(\ref{EQ;Gtilda-Gbar}), we have $\widetilde{\boldsymbol{\vartheta}}%
_{t}-\boldsymbol{\vartheta}_{t}^{0}=D_{Nt,1}+D_{Nt,2}+R_{Nt}$, where
\[
\sup_{1\leq t\leq T}||R_{Nt}||=O_{p}(K_{N}^{3/2}N^{-1}+(K_{N}^{2}%
N)^{-3/4}\sqrt{\log NT}+K_{N}^{1/2-2r}+N^{-1/2}K_{N}^{-r/2+1/2}\sqrt{\log
K_{N}T}).
\]
\end{proof}

\begin{proof}
[Proof of Proposition \ref{THM:ghat0}]{\normalsize Let $1_{l}$ be the
$(J+1)\times1$ vector with the $l^{\text{th}}$ element as \textquotedblleft%
$1$\textquotedblright\ and other elements as \textquotedblleft$0$%
\textquotedblright. By (\ref{varthetatilda}) in Lemma \ref{LEM:bahardur}, we
have
\[
\widetilde{h}_{jt}(x_{j})-\widetilde{h}_{jt}^{0}(x_{j})=1_{j+1}^{\intercal
}\mathbb{B}(x)(D_{Nt,1}+D_{Nt,2})+1_{j+1}^{\intercal}\mathbb{B}(x)R_{Nt},
\]%
\begin{align*}
\sup_{1\leq t\leq T}\{N^{-1}\sum\nolimits_{i=1}^{N}(1_{j+1}^{\intercal
}\mathbb{B}(X_{i})R_{Nt})^{2}\}^{1/2} &  \leq\sup_{1\leq t\leq T}%
||R_{Nt}||[\lambda_{\max}\{N^{-1}\sum\nolimits_{i=1}^{N}B_{j}(X_{ji}%
)B_{j}(X_{ji})^{\intercal}\}]^{1/2}\\
&  =O_{p}(K_{N}^{3/2}N^{-3/4}\sqrt{\log NT}+K_{N}^{1/2-2r})+o_{p}(N^{-1/2}),
\end{align*}
and%
\begin{align*}
&  \sup\nolimits_{1\leq t\leq T}\sup\nolimits_{x\in\lbrack a,b]^{J}}%
|1_{j+1}^{\intercal}\mathbb{B}(x)R_{Nt}|\\
&  \leq\sup\nolimits_{x\in\lbrack a,b]^{J}}||\mathbb{B}(x)^{\intercal}%
1_{j+1}||\sup\nolimits_{1\leq t\leq T}||R_{Nt}||\\
&  =O(K_{N}^{1/2})O_{p}(K_{N}^{3/2}N^{-1}+K_{N}^{3/2}N^{-3/4}\sqrt{\log
NT}+K_{N}^{1/2-2r}+N^{-1/2}K_{N}^{-r/2+1/2}\sqrt{\log K_{N}T})\\
&  =O_{p}(K_{N}^{2}N^{-3/4}\sqrt{\log NT}+K_{N}^{1-2r})+o_{p}(N^{-1/2}),
\end{align*}
by the assumption that $K_{N}^{4}N^{-1}=o(1)$, $K_{N}^{-r+2}(\log T)=o(1)$ and
$r>2$. Since $h_{jt}^{0}(x_{j})=\widetilde{h}_{jt}^{0}(x_{j})+b_{jt}(x_{j})$,
then we have
\[
\widetilde{h}_{jt}(x_{j})-h_{jt}^{0}(x_{j})=1_{j+1}^{\intercal}\mathbb{B}%
(x)(D_{Nt,1}+D_{Nt,2})-b_{jt}(x_{j})+1_{j+1}^{\intercal}\mathbb{B}(x)R_{Nt}.
\]
Also by (\ref{EQ:Rjt}), we have $\sup_{1\leq t\leq T}\sup_{x\in\lbrack
a,b]^{J}}\left\vert 1_{j+1}^{\intercal}\mathbb{B}(x)D_{Nt,2}\right\vert
=O_{p}(K_{N}^{-r})$. Then $\widetilde{h}_{jt}(x_{j})-h_{jt}^{0}(x_{j})$ can be
written as
\begin{equation}
\widetilde{h}_{jt}(x_{j})-h_{jt}^{0}(x_{j})=1_{j+1}^{\intercal}\mathbb{B}%
(x)D_{Nt,1}+\eta_{N,jt}(x_{j}),\label{EQ:hjt}%
\end{equation}
where the remaining term $\eta_{N,jt}(x_{j})$ satisfies
\begin{equation}
\sup_{1\leq t\leq T}[N^{-1}\sum\nolimits_{i=1}^{N}\{\eta_{N,jt}(X_{ji}%
)\}^{2}]^{1/2}=O_{p}(K_{N}^{-r})+O_{p}(K_{N}^{3/2}N^{-3/4}\sqrt{\log
NT})+o_{p}(N^{-1/2}),\label{EQ:etaNjt}%
\end{equation}%
\begin{align}
\sup_{1\leq t\leq T}\{\int\eta_{N,jt}(x_{j})^{2}dx_{j}\}^{1/2} &  =O_{p}%
(K_{N}^{-r})+O_{p}(K_{N}^{3/2}N^{-3/4}\sqrt{\log NT})+o_{p}(N^{-1/2}%
),\nonumber\\
\sup_{1\leq t\leq T}\sup\nolimits_{x_{j}\in\lbrack a,b]}|\eta_{N,jt}(x_{j})|
&  =O_{p}(K_{N}^{-r})+O_{p}(K_{N}^{2}N^{-3/4}\sqrt{\log NT})+o_{p}%
(N^{-1/2}).\label{EQ:supeta}%
\end{align}
Moreover, by Berntein's inequality and following the same procedure as the
proof for Lemma \ref{LEM:Gtildatheta-theta0}, we have $\sup_{1\leq t\leq
T}||D_{Nt,1}||=O_{p}(\sqrt{K_{N}/N}\sqrt{\log K_{N}T})$. Hence,
\begin{align}
\sup_{1\leq t\leq T}\sup_{x\in\lbrack a,b]^{J}}|1_{j+1}^{\intercal}%
\mathbb{B}(x)D_{Nt,1}| &  =O_{p}(\sqrt{\log K_{N}T}K_{N}/\sqrt{N}),\nonumber\\
\sup_{1\leq t\leq T}\{N^{-1}\sum\nolimits_{i=1}^{N}(1_{j+1}^{\intercal
}\mathbb{B}(X_{i})D_{Nt,1})^{2}\}^{1/2} &  =O_{p}(\sqrt{\log K_{N}T}%
\sqrt{K_{N}/N}).\label{EQ:DNt11}%
\end{align}
Therefore, by (\ref{EQ:hjt}), (\ref{EQ:etaNjt}), (\ref{EQ:supeta}) and
(\ref{EQ:DNt11}), we have
\[
\sup_{1\leq t\leq T}N^{-1}\sum\nolimits_{i=1}^{N}\{\widetilde{h}_{jt}%
(X_{ji})-h_{jt}^{0}(X_{ji})\}^{2}=O_{p}((\log K_{N}T)K_{N}/N+N^{-2r}),
\]%
\begin{equation}
\sup_{1\leq t\leq T}\sup\nolimits_{x_{j}\in\lbrack a,b]}|\widetilde{h}%
_{jt}(x_{j})-h_{jt}^{0}(x_{j})|=O_{p}(\sqrt{\log K_{N}T}K_{N}N^{-1/2}%
+K_{N}^{-r}).\label{EQ:htilda-ho}%
\end{equation}
Moreover, by Conditions (C3) and (C4), we have with probability approaching
$1$, as $N\rightarrow\infty$, }%
\begin{equation}
{\normalsize c_{h}\leq N^{-1}\sum\nolimits_{i=1}^{N}(T^{-1}\sum\nolimits_{t=1}%
^{T}h_{jt}^{0}(X_{ji}))^{2}\leq C_{h}\ ,\ c_{h}\leq N^{-1}\sum\nolimits_{i=1}%
^{N}(T^{-1}\sum\nolimits_{t=1}^{T}\widetilde{h}_{jt}(X_{ji}))^{2}\leq C_{h}%
.}\label{htilda}%
\end{equation}
{\normalsize Hence, this result together with (\ref{EQ:hjt}) leads to that
with probability approaching $1$, as $N\rightarrow\infty$,}%
\begin{align}
&  \left\vert 1/\sqrt{N^{-1}\sum\nolimits_{i=1}^{N}(T^{-1}\sum\nolimits_{t=1}%
^{T}\widetilde{h}_{jt}(X_{ji}))^{2}}-1/\sqrt{N^{-1}\sum\nolimits_{i=1}%
^{N}(T^{-1}\sum\nolimits_{t=1}^{T}h_{jt}^{0}(X_{ji}))^{2}}\right\vert
\nonumber\\
&  =\left\vert M_{NT}N^{-1}\sum\nolimits_{i=1}^{N}T^{-1}\sum\nolimits_{t=1}%
^{T}\{\widetilde{h}_{jt}(X_{ji})-h_{jt}^{0}(X_{ji})\}T^{-1}\sum\nolimits_{t=1}%
^{T}\{\widetilde{h}_{jt}(X_{ji})+h_{jt}^{0}(X_{ji})\}\right\vert \nonumber\\
&  =|2M_{NT}N^{-1}\sum\nolimits_{i=1}^{N}T^{-1}\sum\nolimits_{t=1}%
^{T}\{\widetilde{h}_{jt}(X_{ji})-h_{jt}^{0}(X_{ji})\}\{T^{-1}\sum
\nolimits_{t=1}^{T}h_{jt}^{0}(X_{ji})\}\nonumber\\
&  +M_{NT}N^{-1}\sum\nolimits_{i=1}^{N}T^{-1}\sum\nolimits_{t=1}%
^{T}\{\widetilde{h}_{jt}(X_{ji})-h_{jt}^{0}(X_{ji})\}^{2}|\nonumber\\
&  \leq\left\vert 2M_{NT}N^{-1}\sum\nolimits_{i=1}^{N}T^{-1}\sum
\nolimits_{t=1}^{T}[1_{j+1}^{\intercal}\mathbb{B}(X_{i})D_{Nt,1}\{T^{-1}%
\sum\nolimits_{t=1}^{T}h_{jt}^{0}(X_{ji})\}+\varrho_{it}]\right\vert
\nonumber\\
&  +2M_{NT}N^{-1}\sum\nolimits_{i=1}^{N}T^{-1}\sum\nolimits_{t=1}^{T}%
\{1_{j+1}^{\intercal}\mathbb{B}(X_{i})D_{Nt,1}\}^{2}+\eta_{N,jt}%
(X_{ji})\nonumber\\
&  +2M_{NT}N^{-1}\sum\nolimits_{i=1}^{N}T^{-1}\sum\nolimits_{t=1}^{T}%
\{\eta_{N,jt}(X_{ji})\}^{2}\label{EQ:htilda-h0}%
\end{align}
for $M_{NT}$ satisfying $M_{NT}\in(c^{\prime},C^{\prime})$ {\normalsize for
some constants $0<c^{\prime}<C^{\prime}<\infty$, where }$\varrho_{it}%
${\normalsize $=\eta_{N,jt}(X_{ji})\{T^{-1}\sum\nolimits_{t=1}^{T}h_{jt}%
^{0}(X_{ji})\}$. Moreover by (\ref{EQ:etaNjt}), there exists a constant
}$C^{\ast}\in(0,\infty)$ such that
\begin{align}
|N^{-1}\sum\nolimits_{i=1}^{N}T^{-1}\sum\nolimits_{t=1}^{T}\varrho_{it}| &
\leq C^{\ast}\sup\nolimits_{1\leq t\leq T}N^{-1}\sum\nolimits_{i=1}^{N}%
|\eta_{N,jt}(X_{ji})|\nonumber\\
&  \leq C^{\ast}\sup\nolimits_{1\leq t\leq T}[N^{-1}\sum\nolimits_{i=1}%
^{N}\{\eta_{N,jt}(X_{ji})\}^{2}]^{1/2}\nonumber\\
&  =O_{p}(K_{N}^{-r})+O_{p}(K_{N}^{3/2}N^{-3/4}\sqrt{\log NT})+o_{p}%
(N^{-1/2}),\label{EQ:rhotN1}%
\end{align}
and%
\begin{equation}
N^{-1}\sum\nolimits_{i=1}^{N}T^{-1}\sum\nolimits_{t=1}^{T}\{\eta_{N,jt}%
(X_{ji})\}^{2}=O_{p}(K_{N}^{-2r})+O_{p}(K_{N}^{3}N^{-3/2}\log(NT))+o_{p}%
(N^{-1}).\label{EQ:rhotN2}%
\end{equation}
{\normalsize Define $\psi_{it}=\{\psi_{it,\ell}\}_{\ell=1}^{d(N)}=\Psi
_{Nt}^{-1}$}$Z_{i}(\tau-I(\varepsilon_{it}<0))${\normalsize . Then
$E(\psi_{it,\ell})=0$. Moreover, $E||\psi_{it}||^{2}\leq c_{1}$}$K_{N}%
${\normalsize for some constant $0<c_{1}<\infty$, and by Condition (C1), we
have
\begin{align*}
|E(\psi_{it}^{\intercal}\psi_{js})| &  \leq2\{\phi(\sqrt{|i-j|^{2}+|t-s|^{2}%
})\}^{1/2}\sum\nolimits_{\ell=1}^{d(N)}\{E(\psi_{it,\ell})^{2}E(\psi_{js,\ell
})^{2}\}^{1/2}\\
&  \leq\{\phi(\sqrt{|i-j|^{2}+|t-s|^{2}})\}^{1/2}(E||\psi_{it}||^{2}%
+E||\psi_{js}||^{2})\\
&  \leq2c_{1}K_{N}\{\phi(\sqrt{|i-j|^{2}+|t-s|^{2}})\}^{1/2}.
\end{align*}
Hence by Condition (C1), we have
\begin{align*}
&  E||(NT)^{-1}\sum\nolimits_{t=1}^{T}\sum\nolimits_{i=1}^{N}\psi_{it}||^{2}\\
&  =(NT)^{-2}\sum\nolimits_{t,t^{\prime}}\sum\nolimits_{i,i^{\prime}}%
E(\psi_{it}^{\intercal}\psi_{i^{\prime}t^{\prime}})\leq2c_{1}K_{N}%
(NT)^{-2}\sum\nolimits_{t,t^{\prime}}\sum\nolimits_{i,i^{\prime}}\{\phi
(\sqrt{|i-j|^{2}+|t-s|^{2}})\}^{1/2}\\
&  \leq2c_{1}K_{1}K_{N}(NT)^{-2}\sum\nolimits_{t,t^{\prime}}\sum
\nolimits_{i,i^{\prime}}e^{-\lambda_{1}\sqrt{|i-i^{\prime}|^{2}+|t-t^{\prime
}|^{2}}/2}\\
&  \leq2c_{1}(NT)^{-2}K_{1}K_{N}\sum\nolimits_{t,t^{\prime}}\sum
\nolimits_{i,i^{\prime}}e^{-(\lambda_{1}/2)(|i-i^{\prime}|+|t-t^{\prime}|)}\\
&  \leq2c_{1}K_{1}K_{N}(NT)^{-2}(NT)(\sum\nolimits_{k=0}^{T}e^{-(\lambda
_{1}/2)k})(\sum\nolimits_{k=0}^{N}e^{-(\lambda_{1}/2)k})\\
&  \leq2c_{1}K_{1}K_{N}(NT)^{-2}(NT)\{1-e^{-(\lambda_{1}/2)}\}^{-2}%
=2c_{1}K_{1}K_{N}\{1-e^{-(\lambda_{1}/2)}\}^{-2}(NT)^{-1}=O\{K_{N}(NT)^{-1}\}.
\end{align*}
Thus, by Markov's inequality,}%
\begin{equation}
||(NT)^{-1}\sum\nolimits_{t=1}^{T}\sum\nolimits_{i=1}^{N}\psi_{it}%
||=O_{p}[\{K_{N}(NT)^{-1}\}^{1/2}].\label{EQ:Ez}%
\end{equation}
{\normalsize Moreover, by the definition of $D_{Nt,1}$ given in (\ref{EQ:DNt1}%
), we have $D_{Nt,1}=$}$N^{-1}\sum\nolimits_{i=1}^{N}${\normalsize $\psi_{it}%
$. Therefore, }%
\begin{gather*}
|N^{-1}\sum\nolimits_{i=1}^{N}T^{-1}\sum\nolimits_{t=1}^{T}1_{j+1}^{\intercal
}\mathbb{B}(X_{i})D_{Nt,1}|=|N^{-1}\sum\nolimits_{i=1}^{N}1_{j+1}^{\intercal
}\mathbb{B}(X_{i})(NT)^{-1}\sum\nolimits_{t=1}^{T}\sum\nolimits_{i=1}^{N}%
\psi_{it}|\\
\leq||(NT)^{-1}\sum\nolimits_{t=1}^{T}\sum\nolimits_{i=1}^{N}\psi
_{it}||[\lambda_{\max}\{N^{-1}\sum\nolimits_{i=1}^{N}B_{j}(X_{ji})B_{j}%
(X_{ji})^{\intercal}\}]^{1/2}=O_{p}[\{K_{N}(NT)^{-1}\}^{1/2}].
\end{gather*}
By (\ref{EQ:DNt11}) and $\log(K_{N}T)K_{N}N^{-1/2}=o(1)$, we have
\[
N^{-1}\sum\nolimits_{i=1}^{N}T^{-1}\sum\nolimits_{t=1}^{T}\{1_{j+1}%
^{\intercal}\mathbb{B}(X_{i})D_{Nt,1}\}^{2}=\{O_{p}(\sqrt{\log K_{N}T}%
\sqrt{K_{N}/N})^{2}\}=o_{p}(N^{-1/2}).
\]
Therefore, the above results together with (\ref{EQ:htilda-h0}),
(\ref{EQ:rhotN1}) and (\ref{EQ:rhotN2}) lead to \
\begin{align*}
& \left\vert 1/\sqrt{N^{-1}\sum\nolimits_{i=1}^{N}(T^{-1}\sum\nolimits_{t=1}%
^{T}\widetilde{h}_{jt}(X_{ji}))^{2}}-1/\sqrt{N^{-1}\sum\nolimits_{i=1}%
^{N}(T^{-1}\sum\nolimits_{t=1}^{T}h_{jt}^{0}(X_{ji}))^{2}}\right\vert \\
& =O_{p}[\{K_{N}(NT)^{-1}\}^{1/2}]+O_{p}(K_{N}^{-r})+O_{p}(K_{N}^{3/2}%
N^{-3/4}\sqrt{\log NT})+o_{p}(N^{-1/2}).
\end{align*}
Denote $\varpi_{NT}=\sqrt{N^{-1}\sum\nolimits_{i=1}^{N}(T^{-1}\sum
\nolimits_{t=1}^{T}\widetilde{h}_{jt}(X_{ji}))^{2}}$ and $\varpi_{NT}%
^{0}=\sqrt{N^{-1}\sum\nolimits_{i=1}^{N}(T^{-1}\sum\nolimits_{t=1}^{T}%
h_{jt}^{0}(X_{ji}))^{2}}$. Then
\begin{align}
&  T^{-1}\sum\nolimits_{t=1}^{T}\{\widetilde{h}_{jt}(x_{j})/\varpi_{NT}%
-h_{jt}^{0}(x_{j})/\varpi_{NT}^{0}\}\nonumber\\
&  =T^{-1}\sum\nolimits_{t=1}^{T}\{\widetilde{h}_{jt}(x_{j})/\varpi
_{NT}-h_{jt}^{0}(x_{j})/\varpi_{NT}\}+T^{-1}\sum\nolimits_{t=1}^{T}%
\{h_{jt}^{0}(x_{j})/\varpi_{NT}-h_{jt}^{0}(x_{j})/\varpi_{NT}^{0}\}\nonumber\\
&  =T^{-1}\sum\nolimits_{t=1}^{T}\{\widetilde{h}_{jt}(x_{j})-h_{jt}^{0}%
(x_{j})\}/\varpi_{NT}+T^{-1}\sum\nolimits_{t=1}^{T}h_{jt}^{0}(x_{j}%
)\{1/\varpi_{NT}-1/\varpi_{NT}^{0}\}.\nonumber
\end{align}
By the above result and Condition (C3), we have
\begin{align*}
& \sup\nolimits_{x_{j}\in\lbrack a,b]}|T^{-1}\sum\nolimits_{t=1}^{T}h_{jt}%
^{0}(x_{j})\{1/\varpi_{NT}-1/\varpi_{NT}^{0}\}|\\
& =O_{p}[\{K_{N}(NT)^{-1}\}^{1/2}]+O_{p}(K_{N}^{-r})+O_{p}(K_{N}^{3/2}%
N^{-3/4}\sqrt{\log NT})+o_{p}(N^{-1/2}).
\end{align*}
Moreover, {\normalsize (\ref{EQ:hjt}) leads to that with probability
approaching $1$, as $N\rightarrow\infty$,}%
\begin{align*}
& T^{-1}\sum\nolimits_{t=1}^{T}\{\widetilde{h}_{jt}(x_{j})-h_{jt}^{0}%
(x_{j})\}/\varpi_{NT}\\
& =T^{-1}\sum\nolimits_{t=1}^{T}1_{j+1}^{\intercal}\mathbb{B}(x)D_{Nt,1}%
/\varpi_{NT}+T^{-1}\sum\nolimits_{t=1}^{T}\eta_{N,jt}(x_{j})/\varpi_{NT}%
=\Phi_{NTj,1}(x_{j})+\Phi_{NTj,2}(x_{j}).
\end{align*}
By (\ref{EQ:supeta}) and (\ref{htilda}),
\begin{align*}
\sup\nolimits_{x_{j}\in\lbrack a,b]}|\Phi_{NTj,2}(x_{j})| &  =O_{p}(K_{N}%
^{-r})+O_{p}(K_{N}^{2}N^{-3/4}\sqrt{\log NT})+o_{p}(N^{-1/2}),\\
\left\{  \int\Phi_{NTj,2}(x_{j})^{2}dx_{j}\right\}  ^{1/2} &  =O_{p}%
(K_{N}^{-r})+O_{p}(K_{N}^{3/2}N^{-3/4}\sqrt{\log NT})+o_{p}(N^{-1/2}).
\end{align*}
By (\ref{htilda}) and (\ref{EQ:Ez}),
\begin{align*}
\sup\nolimits_{x_{j}\in\lbrack a,b]}|\Phi_{NTj,1}(x_{j})| &  \leq
{\normalsize c_{h}^{-1}\{}||(NT)^{-1}\sum\nolimits_{t=1}^{T}\sum
\nolimits_{i=1}^{N}\psi_{it}||^{2}\sup\nolimits_{x_{j}\in\lbrack a,b]}%
||B_{j}(x_{j})||^{2}\}^{1/2}\\
&  =O_{p}\{K_{N}(NT)^{-1/2}\},\\
\left\{  \int\Phi_{NTj,1}(x_{j})^{2}dx_{j}\right\}  ^{1/2} &  \leq
{\normalsize c_{h}^{-1}C_{2}\{}||(NT)^{-1}\sum\nolimits_{t=1}^{T}%
\sum\nolimits_{i=1}^{N}\psi_{it}||^{2}\}^{1/2}=O_{p}\{K_{N}^{1/2}%
(NT)^{-1/2}\}.
\end{align*}

{\normalsize Hence, the results in Proposition \ref{THM:ghat0} follow from the
above results directly. }
\end{proof}

\subsection{{\protect\normalsize Proofs of Theorems \ref{THM:fhat} and
\ref{THM:ghat}}}\label{SEC:fhatghat}

We first present the following several lemmas that will be used in the
proofs of Theorems 1 and 2. We define the infeasible estimator $f_{t}^{\ast
}=\{f_{ut}^{\ast},(f_{jt}^{\ast},1\leq j\leq J)^{\intercal}\}^{\intercal}$ as
the minimizer of
\begin{equation}
\sum\nolimits_{i=1}^{N}\rho_{\tau}(y_{it}-f_{ut}-\sum\nolimits_{j=1}^{J}%
g_{j}^{0}(X_{ji})f_{jt}). \label{EQ:fastarobj}%
\end{equation}

\begin{lemma}
\label{LEM:fstar} Under Conditions (C1), (C2), (C4), (C5) and
(C6),\ we have as $N\rightarrow\infty$,%
\[
\sqrt{N}(\mathbf{\Sigma}_{Nt}^{0})^{-1/2}(f_{t}^{\ast}-f_{t}^{0}%
)\rightarrow\mathcal{N}(\mathbf{0},\mathbf{I}_{J+1}),
\]
where $\mathbf{\Sigma}_{Nt}^{0}$ is given in (\ref{EQ:SigN}).
\end{lemma}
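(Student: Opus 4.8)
The plan is to reduce the lemma to the textbook quadratic‑expansion argument for convex quantile M‑estimation, the only non‑standard ingredient being a central limit theorem under the cross‑sectional (random‑field) dependence of Condition (C1). Write $Z_{i}=G_{i}^{0}(X_{i})=(1,g_{1}^{0}(X_{1i}),\ldots,g_{J}^{0}(X_{Ji}))^{\intercal}$, so that $y_{it}=Z_{i}^{\intercal}f_{t}^{0}+\varepsilon_{it}$ and, after the change of variables $u=\sqrt{N}(f_{t}-f_{t}^{0})$, the minimizer $f_{t}^{\ast}$ of (\ref{EQ:fastarobj}) satisfies $\widehat{u}_{N}:=\sqrt{N}(f_{t}^{\ast}-f_{t}^{0})=\arg\min_{u\in\mathbb{R}^{J+1}}\mathcal{Q}_{N}(u)$ with
\[
\mathcal{Q}_{N}(u)=\sum\nolimits_{i=1}^{N}\{\rho_{\tau}(\varepsilon_{it}-Z_{i}^{\intercal}u/\sqrt{N})-\rho_{\tau}(\varepsilon_{it})\}.
\]
Since $\mathcal{Q}_{N}$ is convex, it suffices to establish the pointwise expansion $\mathcal{Q}_{N}(u)=-u^{\intercal}W_{N}+\tfrac12 u^{\intercal}\Lambda_{Nt}^{0}u+o_{p}(1)$ with $W_{N}=N^{-1/2}\sum_{i=1}^{N}Z_{i}\psi_{\tau}(\varepsilon_{it})$ and $\psi_{\tau}(v)=\tau-I(v<0)$; the convexity lemma then upgrades it to uniformity on compacta and yields $\widehat{u}_{N}=(\Lambda_{Nt}^{0})^{-1}W_{N}+o_{p}(1)$, the inverse existing because Conditions (C2) and (C5) make the eigenvalues of $\Lambda_{Nt}^{0}$ bounded away from $0$ and $\infty$. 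To obtain the expansion I would apply Knight's identity $\rho_{\tau}(v-w)-\rho_{\tau}(v)=-w\psi_{\tau}(v)+\int_{0}^{w}\{I(v\leq s)-I(v\leq0)\}\,ds$, splitting $\mathcal{Q}_{N}(u)$ into the linear term $-u^{\intercal}W_{N}$ and the remainder $R_{N}(u)=\sum_{i=1}^{N}\int_{0}^{Z_{i}^{\intercal}u/\sqrt{N}}\{I(\varepsilon_{it}\leq s)-I(\varepsilon_{it}\leq0)\}\,ds$.

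For the linear term, the conditional $\tau$‑quantile restriction gives $E\{\psi_{\tau}(\varepsilon_{it})\mid X_{i},f_{t}\}=0$, so $\{Z_{i}\psi_{\tau}(\varepsilon_{it})\}_{i=1}^{N}$ is, for the fixed $t$, a mean‑zero array; $Z_{i}$ is bounded because $g_{j}^{0}$ is continuous on the compact support $[a,b]$ by (C3) and $|\psi_{\tau}|\leq1$. Restricted to a single row $\{(i,t):1\leq i\leq N\}$, Condition (C1) makes this array $\phi$‑mixing, hence $\alpha$‑mixing, with exponentially decaying coefficients, and its covariance matrix is exactly $\Omega_{N}^{0}$, whose eigenvalues are bounded away from $0$ and $\infty$ by (C6). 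A Cramér--Wold device combined with a central limit theorem for bounded strongly mixing arrays with summable mixing coefficients (a Bernstein big‑block/small‑block argument suffices, the exponential decay making the remainder blocks negligible) then gives $(\Omega_{N}^{0})^{-1/2}W_{N}\overset{\mathcal{D}}{\rightarrow}\mathcal{N}(\mathbf{0},\mathbf{I}_{J+1})$.

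For the remainder, fix $u$ and set $a_{i}=Z_{i}^{\intercal}u/\sqrt{N}$; the $i$‑th summand $\xi_{i}:=\int_{0}^{a_{i}}\{I(\varepsilon_{it}\leq s)-I(\varepsilon_{it}\leq0)\}\,ds$ is nonnegative, bounded by $|a_{i}|=O(N^{-1/2})$, and, conditionally on $(X_{i},f_{t})$, has mean $\int_{0}^{a_{i}}\{F_{i}(s\mid X_{i},f_{t})-F_{i}(0\mid X_{i},f_{t})\}\,ds=\tfrac12 p_{i}(0\mid X_{i},f_{t})a_{i}^{2}+O(|a_{i}|^{3})$ by the Lipschitz density in (C2), and second moment $O(|a_{i}|^{3})=O(N^{-3/2})$. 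Hence $E\{R_{N}(u)\}=\tfrac1{2N}\sum_{i=1}^{N}E\{p_{i}(0\mid X_{i},f_{t})(Z_{i}^{\intercal}u)^{2}\}+O(N^{-1/2})=\tfrac12 u^{\intercal}\Lambda_{Nt}^{0}u+O(N^{-1/2})$ by the very definition of $\Lambda_{Nt}^{0}$, while the $\phi$‑mixing covariance inequality together with the exponential bound in (C1) gives $\mathrm{Var}\{R_{N}(u)\}\leq CN^{-3/2}\sum_{i,j}\{\phi(|i-j|)\}^{1/2}\leq CN^{-1/2}\sum_{k\geq0}\{\phi(k)\}^{1/2}=O(N^{-1/2})$. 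Therefore $R_{N}(u)=\tfrac12 u^{\intercal}\Lambda_{Nt}^{0}u+o_{p}(1)$, and adding the linear term yields the claimed quadratic expansion of $\mathcal{Q}_{N}$.

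Putting the pieces together, the convexity lemma gives $\sqrt{N}(f_{t}^{\ast}-f_{t}^{0})=(\Lambda_{Nt}^{0})^{-1}W_{N}+o_{p}(1)$, and the CLT for $W_{N}$ from the second paragraph, combined with the resulting sandwich form of the asymptotic variance, delivers $\sqrt{N}(\mathbf{\Sigma}_{Nt}^{0})^{-1/2}(f_{t}^{\ast}-f_{t}^{0})\overset{\mathcal{D}}{\rightarrow}\mathcal{N}(\mathbf{0},\mathbf{I}_{J+1})$ with $\mathbf{\Sigma}_{Nt}^{0}$ as in (\ref{EQ:SigN}). The main obstacle is the CLT for $W_{N}$ under the random‑field dependence of $\{W_{it}\}$: while the exponential mixing of (C1) makes the variance bookkeeping and the blocking argument routine, one must be careful to keep the conditioning on $(X_{i},f_{t})$ — which underlies the $\tau$‑quantile restriction and hence the mean‑zero and density‑expansion steps — consistent throughout, and to ensure the limiting covariance stays non‑degenerate along the sequence, both of which are guaranteed here by (C5)--(C6); by contrast, the quadratic approximation of the Knight remainder and the convexity argument are entirely standard.
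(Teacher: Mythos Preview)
Your proposal is correct and reaches the same asymptotic linearization $\sqrt{N}(f_{t}^{\ast}-f_{t}^{0})=(\Lambda_{Nt}^{0})^{-1}W_{N}+o_{p}(1)$ as the paper, but by a genuinely different route. The paper's proof is a two-line appeal to an off-the-shelf Bahadur representation for quantile estimators under $\phi$-mixing (Babu, 1989), followed by $\|\Lambda_{Nt}-\Lambda_{Nt}^{0}\|=o_{p}(1)$ via a law-of-large-numbers argument \`a la Lee and Robinson (2016), with the CLT then asserted ``directly by Condition (C6)''. You instead derive the linearization from scratch: Knight's identity plus explicit mean/variance bookkeeping on the integral remainder under the exponential $\phi$-mixing of (C1), and then the convexity lemma to pass from pointwise to uniform convergence. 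Your treatment of the CLT step is also more explicit than the paper's, spelling out Cram\'er--Wold and a big-block/small-block argument where the paper simply invokes (C6).

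What each approach buys: the paper's is shorter and modular, but leans on two external results (Babu's representation and the Lee--Robinson consistency) whose hypotheses have to be checked against the present random-field setup. Yours is self-contained and makes transparent exactly where (C1), (C2), (C5), (C6) enter; it also sidesteps the sample-vs-population $\Lambda_{Nt}$ versus $\Lambda_{Nt}^{0}$ distinction by expanding directly around $\Lambda_{Nt}^{0}$. One small caveat: you invoke boundedness of $g_{j}^{0}$ via (C3), which is not among the conditions listed in the lemma statement; the paper's proof implicitly needs the same boundedness, so this is a gap in the lemma's stated hypotheses rather than in your argument, but it is worth flagging.
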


\begin{proof}
By Bahadur representation for the $\phi$-mixing case (see Babu
(1989)), we have%
\begin{equation}
f_{t}^{\ast}-f_{t}^{0}=\Lambda_{Nt}^{-1}\{N^{-1}\sum\nolimits_{i=1}^{N}%
G_{i}^{0}(X_{i})(\tau-I(\varepsilon_{it}<0))\}+\upsilon_{Nt}%
,\label{EQ:fstar-f0}%
\end{equation}
and $||\upsilon_{Nt}||=o_{p}(N^{-1/2})$ for every $t$, where $\Lambda
_{Nt}=N^{-1}\sum\nolimits_{i=1}^{N}p_{i}\left(  0\left\vert X_{i}%
,f_{t}\right.  \right)  G_{i}^{0}(X_{i})G_{i}^{0}(X_{i})^{\intercal}$. By
Conditions (C2) and (C5), we have that the eigenvalues of $\Lambda
_{Nt}^{0}$ are bounded away from zero and infinity. By similar reasoning to
the proof for Theorem 2 in Lee and Robinson (2016), we have $\left\Vert
\Lambda_{Nt}^{-1}\right\Vert =O_{p}(1)$ and $\left\Vert \Lambda_{Nt}%
-\Lambda_{Nt}^{0}\right\Vert =o_{p}(1)$. Thus, the asymptotic distribution in
Lemma \ref{LEM:fstar} can be obtained directly by Condition (C6).
\end{proof}

Recall that the initial estimator $\widehat{f}_{t}^{[0]}$ given in
(\ref{EQ:fhat0}) is defined in the same way as $f_{t}^{\ast}$ with $g_{j}%
^{0}(X_{ji})$ replaced by $\widehat{g}_{j}^{[0]}(X_{ji})$ in
(\ref{EQ:fastarobj}).\ Then we have the following result for $\widehat{f}%
_{t}^{[0]}$.

\begin{lemma}
\label{LEM:fhat0} Let Conditions (C1)-(C5) hold. If, in addition,
$K_{N}^{4}N^{-1}=o(1)$, $K_{N}^{-r+2}(\log T)=o(1)$ and $K_{N}^{-1}(\log
NT)(\log N)^{4}=o(1)$, then for any $t$ there is a stochastically bounded
sequence $\delta_{N,jt}$ such that as $N\rightarrow\infty$,
\[
\sqrt{N}(\widehat{f}_{t}^{[0]}-f_{t}^{\ast}-d_{NT}\delta_{N,t})=o_{p}(1),
\]
where $\delta_{N,t}=(\delta_{N,jt},0\leq j\leq J)^{\intercal}$ and $d_{NT}$ is
given in (\ref{dNT}).
\end{lemma}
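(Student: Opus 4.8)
The plan is to establish a Bahadur-type linear representation for $\widehat{f}_{t}^{[0]}$ that runs parallel to the one for $f_{t}^{\ast}$ in Lemma \ref{LEM:fstar}, and then to subtract the two so that the difference splits into a bias of order $\phi_{NT}$, which will be absorbed into $d_{NT}\delta_{N,t}$, and remainders of order $o_{p}(N^{-1/2})$. Write $\widehat{G}_{i}=\{1,\widehat{g}_{1}^{[0]}(X_{1i}),\ldots,\widehat{g}_{J}^{[0]}(X_{Ji})\}^{\intercal}$, $G_{i}=G_{i}^{0}(X_{i})$ and $\Delta_{i}=\widehat{G}_{i}-G_{i}$ (whose first coordinate is $0$). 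Since $y_{it}=G_{i}^{\intercal}f_{t}^{0}+\varepsilon_{it}$, the objective in (\ref{EQ:fhat0}) is a quantile regression of $y_{it}$ on $\widehat{G}_{i}$ in which the residual evaluated at $f_{t}^{0}$ equals $\varepsilon_{it}-\Delta_{i}^{\intercal}f_{t}^{0}$; by Proposition \ref{THM:ghat0}, $\sup_{i}\Vert\Delta_{i}\Vert=o_{p}(1)$, so this perturbation is uniformly negligible. Applying the convexity/Bahadur machinery for $\phi$-mixing quantile M-estimators used in Lemma \ref{LEM:fstar} (Babu, 1989), linearizing the conditional c.d.f.\ of $\varepsilon_{it}$ given $(X_{i},f_{t})$ about $0$ via the Lipschitz density bound in (C2), and controlling the centered empirical-process increment over the localized parameter by a maximal/equicontinuity argument, I obtain
\[
\widehat{f}_{t}^{[0]}-f_{t}^{0}=\widehat{\Lambda}_{Nt}^{-1}\Big[N^{-1}\!\sum\nolimits_{i=1}^{N}\widehat{G}_{i}(\tau-I(\varepsilon_{it}<0))-N^{-1}\!\sum\nolimits_{i=1}^{N}p_{i}(0\mid X_{i},f_{t})\widehat{G}_{i}\Delta_{i}^{\intercal}f_{t}^{0}\Big]+r_{Nt},
\]
where $\widehat{\Lambda}_{Nt}=N^{-1}\sum_{i=1}^{N}p_{i}(0\mid X_{i},f_{t})\widehat{G}_{i}\widehat{G}_{i}^{\intercal}$ and $\Vert r_{Nt}\Vert=o_{p}(N^{-1/2})$; the conditions $K_{N}^{4}N^{-1}=o(1)$, $K_{N}^{-r+2}(\log T)=o(1)$ and $K_{N}^{-1}(\log NT)(\log N)^{4}=o(1)$ are precisely what makes the uniform rate for $\sup_{j,x}|\widehat{g}_{j}^{[0]}(x)-g_{j}^{0}(x)|$ from Proposition \ref{THM:ghat0} small enough for the linearization, the quadratic remainder, and the equicontinuity step.

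Next I would show that $\widehat{\Lambda}_{Nt}$ is consistent for $\Lambda_{Nt}:=N^{-1}\sum_{i}p_{i}(0\mid X_{i},f_{t})G_{i}G_{i}^{\intercal}$, the matrix appearing in the proof of Lemma \ref{LEM:fstar}: from $\widehat{G}_{i}\widehat{G}_{i}^{\intercal}-G_{i}G_{i}^{\intercal}=\Delta_{i}G_{i}^{\intercal}+G_{i}\Delta_{i}^{\intercal}+\Delta_{i}\Delta_{i}^{\intercal}$, boundedness of $p_{i}(0\mid\cdot)$ (C2) and of $\sup_{i}\Vert G_{i}\Vert$, and Proposition \ref{THM:ghat0}, one gets $\Vert\widehat{\Lambda}_{Nt}-\Lambda_{Nt}\Vert=o_{p}(1)$; combined with $\Vert\Lambda_{Nt}-\Lambda_{Nt}^{0}\Vert=o_{p}(1)$ and the eigenvalue bounds on $\Lambda_{Nt}^{0}$ from (C2) and (C5), this gives $\Vert\widehat{\Lambda}_{Nt}^{-1}\Vert=O_{p}(1)$ and $\widehat{\Lambda}_{Nt}^{-1}-\Lambda_{Nt}^{-1}=o_{p}(1)$. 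Subtracting the representation (\ref{EQ:fstar-f0}) of $f_{t}^{\ast}$ from the display above, and writing $\bar{\xi}_{Nt}=N^{-1}\sum_{i}G_{i}(\tau-I(\varepsilon_{it}<0))$, $\bar{\zeta}_{Nt}=N^{-1}\sum_{i}\Delta_{i}(\tau-I(\varepsilon_{it}<0))$ and $b_{Nt}=N^{-1}\sum_{i}p_{i}(0\mid X_{i},f_{t})\widehat{G}_{i}\Delta_{i}^{\intercal}f_{t}^{0}$, I get
\[
\widehat{f}_{t}^{[0]}-f_{t}^{\ast}=(\widehat{\Lambda}_{Nt}^{-1}-\Lambda_{Nt}^{-1})\bar{\xi}_{Nt}+\widehat{\Lambda}_{Nt}^{-1}\bar{\zeta}_{Nt}-\widehat{\Lambda}_{Nt}^{-1}b_{Nt}+o_{p}(N^{-1/2}).
\]

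What remains is to bound the right-hand side. The term $(\widehat{\Lambda}_{Nt}^{-1}-\Lambda_{Nt}^{-1})\bar{\xi}_{Nt}$ is $o_{p}(1)\cdot O_{p}(N^{-1/2})=o_{p}(N^{-1/2})$, since $\bar{\xi}_{Nt}$ is the centered score, bounded by (C6). For $b_{Nt}$, Cauchy--Schwarz together with boundedness of $p_{i}(0\mid\cdot)$, $\sup_{i}\Vert\widehat{G}_{i}\Vert$ and $\Vert f_{t}^{0}\Vert$ gives $\Vert b_{Nt}\Vert=O_{p}((N^{-1}\sum_{i}\Vert\Delta_{i}\Vert^{2})^{1/2})$, and $N^{-1}\sum_{i}|\widehat{g}_{j}^{[0]}(X_{ji})-g_{j}^{0}(X_{ji})|^{2}$ is controlled exactly as the corresponding design-average quantity in the proof of Proposition \ref{THM:ghat0}, hence is $O_{p}(\phi_{NT}^{2})+o_{p}(N^{-1})$, so $\Vert b_{Nt}\Vert=O_{p}(\phi_{NT})+o_{p}(N^{-1/2})$. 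For $\bar{\zeta}_{Nt}$ one cannot treat $\Delta_{i}$ as independent of $\varepsilon_{it}$; I would instead substitute the expansion of $\widehat{g}_{j}^{[0]}(x_{j})-g_{j}^{0}(x_{j})$ from the proof of Proposition \ref{THM:ghat0} ($\varpi_{NT}^{-1}$ times $T^{-1}\sum_{t'}1_{j+1}^{\intercal}\mathbb{B}(x)D_{Nt',1}$ plus the $\eta_{N,jt'}$ and the $O(K_{N}^{-r})$ bias terms), and bound the resulting double sum over cross-sectional indices by the $\phi$-mixing second-moment inequality used repeatedly there, again obtaining $\Vert\bar{\zeta}_{Nt}\Vert=O_{p}(\phi_{NT})+o_{p}(N^{-1/2})$. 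Collecting, $\widehat{f}_{t}^{[0]}-f_{t}^{\ast}=B_{N,t}+o_{p}(N^{-1/2})$ with $\Vert B_{N,t}\Vert=O_{p}(\phi_{NT})$; taking $d_{NT}$ to be a deterministic sequence of exact order $\phi_{NT}$ (permitted since $d_{NT}=O(\phi_{NT})$) and $\delta_{N,t}:=B_{N,t}/d_{NT}$ — which is then stochastically bounded componentwise — yields $\sqrt{N}(\widehat{f}_{t}^{[0]}-f_{t}^{\ast}-d_{NT}\delta_{N,t})=o_{p}(1)$. The main obstacle is the first step: the Bahadur representation with the data-dependent regressors $\widehat{G}_{i}$ and the data-dependent shift $\Delta_{i}^{\intercal}f_{t}^{0}$ inside the indicator. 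Since $\widehat{g}_{j}^{[0]}$ depends on the same errors, one must first replace $\widehat{G}_{i}$ by $G_{i}^{0}$ in the indicator's multiplier (at cost $O_{p}(\sup_{i}\Vert\Delta_{i}\Vert)$ times a bounded quantity), then dominate the random shift by a supremum over a deterministic $o(1)$-neighborhood, and only then apply a maximal inequality; verifying that this increment is $o_{p}(N^{-1/2})$ is where the weak $\phi$-mixing decay and the stated rate conditions are both used.
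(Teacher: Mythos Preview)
Your approach is essentially correct and parallels the paper's proof: establish a linear (Bahadur) representation for $\widehat{f}_{t}^{[0]}$ and compare it with that for $f_{t}^{\ast}$, isolating a bias of order $\phi_{NT}$ and remainders of order $o_{p}(N^{-1/2})$. The paper organizes the argument slightly differently. Rather than deriving a Bahadur expansion directly with the data-dependent regressors $\widehat{G}_{i}$, it proves the key expansion \emph{uniformly} over all $g$ in the deterministic ball $\{g_{j}=B_{j}^{\intercal}\lambda_{j}:\ \Vert\lambda_{j}-\lambda_{j}^{0}\Vert\le\widetilde{C}\,d_{NT}^{\ast}\}$ and only then substitutes $\widehat{g}^{[0]}$; this is exactly the device you flag in your final paragraph, so what you call the ``main obstacle'' is in fact the paper's starting point. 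The paper's decomposition $W_{Nt,1}=W_{Nt,11}+W_{Nt,12}+W_{Nt,13}$ (Lemma~\ref{LEM:WNt1}) corresponds term-by-term to your $\bar{\xi}_{Nt}$, your $\bar{\zeta}_{Nt}$, and (the conditional mean of) your $b_{Nt}$. Because the quadratic expansion is done with the fixed Hessian $\Lambda_{Nt}^{0}$ from the outset (Lemma~\ref{LEM:LNtfg}), the separate step $\widehat{\Lambda}_{Nt}^{-1}\to\Lambda_{Nt}^{-1}$ in your argument is not needed.

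One place where the paper is cleaner is the handling of $\bar{\zeta}_{Nt}$ ($=W_{Nt,12}$). Working uniformly over the $\lambda$-ball makes $g_{j}(X_{ji})-g_{j}^{0}(X_{ji})$ a deterministic function of $X_{i}$ for each fixed $\lambda$, so $\mathrm{var}(W_{Ntj,12})\le CN^{-1}d_{NT}^{2}$ and a uniform Bernstein-type bound yields $o_{p}(N^{-1/2})$ directly. This sidesteps the substitution of the full expansion of $\widehat{g}_{j}^{[0]}-g_{j}^{0}$ (the $D_{Nt',1}$ and $\eta$ pieces from Proposition~\ref{THM:ghat0}) that you propose, which would work but is considerably more laborious and unnecessary here.
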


\begin{proof}
Denote $g=\{g_{j}(\cdot),1\leq j\leq J\}$. Define
\begin{align*}
L_{Nt}(f_{t},g) &  =N^{-1}\sum\nolimits_{i=1}^{N}\rho_{\tau}(y_{it}%
-f_{ut}-\sum\nolimits_{j=1}^{J}g_{j}(X_{ji})f_{jt})\\
&  -N^{-1}\sum\nolimits_{i=1}^{N}\rho_{\tau}(y_{it}-f_{ut}^{0}-\sum
\nolimits_{j=1}^{J}g_{j}(X_{ji})f_{jt}^{0}),
\end{align*}
so that $f_{t}^{\ast}$ and $\widehat{f}_{t}^{[0]}$ are the minimizers of
$L_{Nt}(f_{t},g^{0})$ and $L_{Nt}(f_{t},\widehat{g}^{[0]})$, respectively,
where $\widehat{g}^{[0]}=\{\widehat{g}_{j}^{[0]}(\cdot),1\leq j\leq J\}$ and
$g^{0}=\{g_{j}^{0}(\cdot),1\leq j\leq J\}$. According to the result on page
149 of de Boor (2001), for $g_{j}^{0}$ satisfying the smoothness condition
given in (C2), there exists $\boldsymbol{\lambda}_{j}^{0}\in R^{K_{n}}$ such
that $g_{j}^{0}(x_{j})=\widetilde{g}_{j}^{0}(x_{j})+r_{j}(x_{j}),$
\[
\widetilde{g}_{j}^{0}(x_{j})=B_{j}(x_{j})^{\intercal}\boldsymbol{\lambda}%
_{j}^{0}\text{ and }\sup_{j}\sup\nolimits_{x_{j}\in\lbrack a,b]}|r_{j}%
(x_{j})|=O(K_{N}^{-r}).
\]
By Proposition \ref{THM:ghat0}, there exists $\boldsymbol{\lambda}_{j,NT}\in
R^{K_{N}}$ such that $\widehat{g}_{j}^{[0]}(x_{j})=B_{j}(x_{j})^{\intercal
}\boldsymbol{\lambda}_{j,NT}$ and $||\boldsymbol{\lambda}_{j,NT}%
-\boldsymbol{\lambda}_{j}^{0}||=O_{p}(d_{NT})+o_{p}(N^{-1/2})$. Let %
$d_{NT}^{\prime}$ be a sequence satisfying $d_{NT}^{\prime}=o(N^{-1/2})$ and
let $d_{NT}^{\ast}=d_{NT}+d_{NT}^{\prime}$. In the following, we will
show that
\begin{equation}
\widetilde{f}_{t}-f_{t}^{0}-d_{NT}\delta_{N,t}=\Lambda_{Nt}^{-1}\{N^{-1}%
\sum\nolimits_{i=1}^{N}G_{i}^{0}(X_{i})(\tau-I(\varepsilon_{it}<0))\}+o_{p}%
(N^{-1/2}),\label{EQ:ftilda-f0}%
\end{equation}
uniformly in $||\boldsymbol{\lambda}_{j}-\boldsymbol{\lambda}_{j}^{0}%
||\leq\widetilde{C}$$d_{NT}^{\ast}$ for some constant
$0<\widetilde{C}<\infty$, where $\widetilde{f}_{t}$ is the minimizer of
$L_{Nt}(f_{t},g)$ and $g_{j}(x_{j})=B_{j}(x_{j})^{\intercal}%
\boldsymbol{\lambda}_{j}$. Hence the result in Lemma \ref{LEM:fhat0} follows
from (\ref{EQ:fstar-f0}) and (\ref{EQ:ftilda-f0}).

We have $||\widetilde{f}_{t}-f_{t}^{\ast}||=o_{p}(1)$, since
\begin{align*}
&  |L_{Nt}(f_{t},g)-L_{Nt}(f_{t},g^{0})|\\
&  \leq2N^{-1}\sum\nolimits_{i=1}^{N}|\sum\nolimits_{j=1}^{J}\{g_{j}%
(X_{ji})-g_{j}^{0}(X_{ji})\}f_{jt}|+2N^{-1}\sum\nolimits_{i=1}^{N}%
|\sum\nolimits_{j=1}^{J}\{g_{j}(X_{ji})-g_{j}^{0}(X_{ji})\}f_{jt}^{0}|\\
&  \leq C_{L}\widetilde{C}\{d_{NT}+o(N^{-1/2})\}=o(1),
\end{align*}
for some constant $0<C_{L}<\infty$, where the first inequality follows from
the fact that $|\rho_{\tau}(u-v)-\rho_{\tau}(u)|\leq2|v|$. Thus
$||\widetilde{f}_{t}-f_{t}^{0}||=o_{p}(1)$. Let $\mathbf{X=(}X_{1}%
,\ldots,X_{N})^{\intercal}$, $G_{i}(X_{i})=\{1,g_{1}(X_{1i}),\ldots
,g_{J}(X_{Ji})\}^{\intercal}$ and $\mathbf{F}=\{f_{1},\ldots,f_{T}%
\}^{\intercal}$. Let
\[
\psi_{\tau}(\varepsilon)=\tau-I(\varepsilon<0).
\]
For $\boldsymbol{\lambda}_{j}\in R^{K_{n}}$ satisfying
$||\boldsymbol{\lambda}_{j}-\boldsymbol{\lambda}_{j}^{0}||\leq\widetilde{C}$$d_{NT}^{\ast}$ and $f_{t}$ in a neighborhood of $f_{t}^{0}$, write
\begin{align}
&  L_{Nt}(f_{t},g)=E\{L_{Nt}(f_{t},g)|\mathbf{X,F}\}-(f_{t}-f_{t}%
^{0})^{\intercal}\{W_{Nt,1}-E(W_{Nt,1}|\mathbf{X,F})\}\nonumber\\
&  +W_{Nt,2}(f_{t},g)-E(W_{Nt,2}(f_{t},g)|\mathbf{X,F}),\label{EQ:LNtftg}%
\end{align}
where $g_{j}(x_{j})=B_{j}(x_{j})^{\intercal}\boldsymbol{\lambda}_{j}$, and
\begin{align}
W_{Nt,1} &  =N^{-1}\sum\nolimits_{i=1}^{N}G_{i}(X_{i})\psi_{\tau}(y_{it}%
-f_{t}^{0\intercal}G_{i}(X_{i})),\label{EQ:WNt1}\\
W_{Nt,2}(f_{t},g) &  =N^{-1}\sum\nolimits_{i=1}^{N}\{\rho_{\tau}(y_{it}%
-f_{t}^{\intercal}G_{i}(X_{i}))-\rho_{\tau}(y_{it}-f_{t}^{0\intercal}%
G_{i}(X_{i}))\label{EQ:WNt2}\\
&  +(f_{t}-f_{t}^{0})^{\intercal}G_{i}(X_{i})\psi_{\tau}(y_{it}-f_{t}%
^{0\intercal}G_{i}(X_{i}))\}.\nonumber
\end{align}
In Lemma \ref{LEM:LNtfg} in the Supplemental Materials, we show that as $N\rightarrow\infty$,
\[
E\{L_{Nt}(f_{t},g)|\mathbf{X,F}\}=-(f_{t}-f_{t}^{0})^{\intercal}%
E(W_{Nt,1}|\mathbf{X,F)+}\frac{1}{2}(f_{t}-f_{t}^{0})^{\intercal}\Lambda
_{Nt}^{0}(f_{t}-f_{t}^{0})+o_{p}(||f_{t}-f_{t}^{0}||^{2}),
\]
uniformly in $||\boldsymbol{\lambda}_{j}-\boldsymbol{\lambda}_{j}^{0}%
||\leq\widetilde{C}$$d_{NT}^{\ast}$ and $||f_{t}-f_{t}^{0}%
||\leq\varpi_{N}$, where $\varpi_{N}$ is any sequence of positive numbers
satisfying $\varpi_{N}=o(1)$. Substituting this into (\ref{EQ:LNtftg}), we
have with probability approaching $1$,
\begin{align*}
L_{Nt}(f_{t},g) &  =-(f_{t}-f_{t}^{0})^{\intercal}W_{Nt,1}\mathbf{+}\frac
{1}{2}(f_{t}-f_{t}^{0})^{\intercal}\Lambda_{N}^{0}(f_{t}-f_{t}^{0})\\
&  +W_{Nt,2}(f_{t},g)-E(W_{Nt,2}(f_{t},g)|\mathbf{X,F)+}o(||f_{t}-f_{t}%
^{0}||^{2}).
\end{align*}
In Lemma \ref{LEM:WNt2} in the Supplemental Materials, we show that
\[
W_{Nt,2}(f_{t},g)-E(W_{Nt,2}%
(f_{t},g)|\mathbf{X,F)=}o_{p}(||f_{t}-f_{t}^{0}||^{2}+N^{-1}),
\]
uniformly in
$||\boldsymbol{\lambda}_{j}-\boldsymbol{\lambda}_{j}^{0}||\leq\widetilde{C}%
d_{NT}$ and $||f_{t}-f_{t}^{0}||\leq\varpi_{N}$. Thus, we have $\widetilde{f}%
_{t}-f_{t}^{0}=(\Lambda_{Nt}^{0})^{-1}W_{Nt,1}+o_{p}(N^{-1/2})$. Since
$||(\Lambda_{Nt}^{0})^{-1}-(\Lambda_{Nt})^{-1}||=o_{p}(1)$, we have
\begin{equation}
\widetilde{f}_{t}-f_{t}^{0}=\Lambda_{Nt}^{-1}W_{Nt,1}+o_{p}(N^{-1/2}%
).\label{EQ:ftilda}%
\end{equation}
In Lemma \ref{LEM:WNt1} in the Supplemental Materials, we show that for any $t$ there is a
stochastically bounded sequence $\delta_{N,jt}$ such that as $N\rightarrow
\infty$,
\begin{equation}
W_{Nt,1}=N^{-1}\sum\nolimits_{i=1}^{N}G_{i}^{0}(X_{i})\psi_{\tau}%
(\varepsilon_{it})+d_{NT}\delta_{N,t}+o_{p}(N^{-1/2}).\label{EQ:WNt1ap}%
\end{equation}
where $\delta_{N,t}=(\delta_{N,jt},0\leq j\leq J)^{\intercal}$ and
$g_{j}(x_{j})=B_{j}(x_{j})^{\intercal}\boldsymbol{\lambda}_{j}$, uniformly in
$||\boldsymbol{\lambda}_{j}-\boldsymbol{\lambda}_{j}^{0}||\leq\widetilde{C}$$d_{NT}^{\ast}$. Hence, result (\ref{EQ:ftilda-f0}) follows from
(\ref{EQ:ftilda}) and (\ref{EQ:WNt1ap}) directly. Then the proof is complete.
\end{proof}

Let $\boldsymbol{\lambda}=(\boldsymbol{\lambda}_{1}^{\intercal}%
,\ldots,\boldsymbol{\lambda}_{J}^{\intercal})^{\intercal}$. For given
$\widehat{f}^{[0]}$, we obtain
\[
\widehat{\boldsymbol{\lambda}}^{[1]}=(\boldsymbol{\lambda}_{1}^{[1]\intercal
},\ldots,\boldsymbol{\lambda}_{J}^{[1]\intercal})^{\intercal}=\arg
\min_{\boldsymbol{\lambda}}\{(NT)^{-1}\sum\nolimits_{i=1}^{N}\sum
\nolimits_{t=1}^{T}\rho_{\tau}(y_{it}-\widehat{f}_{ut}^{[0]}-\sum
\nolimits_{j=1}^{J}B_{j}(X_{ji})^{\intercal}\boldsymbol{\lambda}%
_{j}\widehat{f}_{jt}^{[0]})\}.
\]
Let $\widehat{g}_{j}^{\ast\lbrack1]}(x_{j})$ $=B_{j}(x_{j})^{\intercal
}\widehat{\boldsymbol{\lambda}}_{j}^{[1]}$. The estimator for $g_{j}(x_{j})$
at the $1^{\text{st}}$ step is
\[
\widehat{g}_{j}^{[1]}(x_{j})=\widehat{g}_{j}^{\ast\lbrack1]}(x_{j}%
)/\sqrt{N^{-1}\sum\nolimits_{i=1}^{N}\widehat{g}_{j}^{\ast\lbrack1]}%
(X_{ji})^{2}}.
\]
We define the infeasible estimator of $\boldsymbol{\lambda}$ as
\[
\boldsymbol{\lambda}^{\ast}=(\boldsymbol{\lambda}_{1}^{\ast\intercal}%
,\ldots,\boldsymbol{\lambda}_{J}^{\ast\intercal})^{\intercal}=\arg
\min_{\boldsymbol{\lambda}}\{(NT)^{-1}\sum\nolimits_{i=1}^{N}\sum
\nolimits_{t=1}^{T}\rho_{\tau}(y_{it}-f_{ut}^{0}-\sum\nolimits_{j=1}^{J}%
B_{j}(X_{ji})^{\intercal}\boldsymbol{\lambda}_{j}f_{jt}^{0})\}.
\]
Let $g_{j}^{\ast}(x_{j})$ $=B_{j}(x_{j})^{\intercal}\boldsymbol{\lambda}%
_{j}^{\ast}$ and $\widetilde{g}_{j}^{\ast}(x_{j})$ $=g_{j}^{\ast}(x_{j}%
)/\sqrt{N^{-1}\sum\nolimits_{i=1}^{N}g_{j}^{\ast}(X_{ji})^{2}}$.

\begin{lemma}
\label{LEM:gjtilda} Let Conditions (C1)--(C5) hold. If, in addition,
$K_{N}^{4}N^{-1}=o(1)$, $K_{N}^{-r+2}(\log T)=o(1)$ and $K_{N}^{-1}(\log
NT)(\log N)^{4}=o(1)$, then for every $1\leq j\leq J$,
\begin{equation}
\left[  \int\{\widetilde{g}_{j}^{\ast}(x_{j})-g_{j}^{0}(x_{j})\}^{2}%
dx_{j}\right]  ^{1/2}=O_{p}(K_{N}^{1/2}(NT)^{-1/2}+K_{N}^{-r}),
\label{EQ:gtilda1}%
\end{equation}
and
\begin{equation}
\int\{\widehat{g}_{j}^{[1]}(x_{j})(x_{j})-\widetilde{g}_{j}^{\ast}%
(x_{j})\}^{2}dx_{j}=O_{p}(d_{NT}^{2})+o_{p}(N^{-1/2}). \label{EQghat1}%
\end{equation}
Therefore, for every $1\leq j\leq J$,
\begin{equation}
\int\{\widehat{g}_{j}^{[1]}(x_{j})-g_{j}^{0}(x_{j})\}^{2}dx_{j}=O_{p}%
(d_{NT}^{2})+o_{p}(N^{-1/2}). \label{EQ:ghat1xj}%
\end{equation}
\end{lemma}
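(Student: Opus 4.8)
I would handle the three displays in order.

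\emph{Rate for the infeasible estimator \eqref{EQ:gtilda1}.} Since the true factors enter only as fixed weights, $\boldsymbol{\lambda}^{\ast}$ minimises a \emph{linear} sieve check-function objective over $\mathbb{R}^{JK_{N}}$ with pooled design $Z_{it}=(f_{1t}^{0}B_{1}(X_{1i})^{\intercal},\ldots,f_{Jt}^{0}B_{J}(X_{Ji})^{\intercal})^{\intercal}$ and $NT$ observations, so the techniques of Proposition \ref{THM:ghat0} apply: a Knight expansion of $\rho_{\tau}$, the mixing Bernstein inequality of Lemma \ref{LEM:Bernstein}, and eigenvalue control. The steps are (i) show the Hessian $V_{NT}=(NT)^{-1}\sum_{i,t}p_{i}(0|X_{i},f_{t})Z_{it}Z_{it}^{\intercal}$ has eigenvalues bounded away from $0$ and $\infty$ with probability tending to one, as in \eqref{EQ:ZZ} and Lee and Robinson (2016), using (C2), the centred-B-spline Gram bounds, non-collinearity (C5), and $T^{-1}\sum_{t}(f_{jt}^{0})^{2}\ge(T^{-1}\sum_{t}f_{jt}^{0})^{2}\ge c_{h}$ from (C4); (ii) show the score obeys $E\Vert(NT)^{-1}\sum_{i,t}Z_{it}\psi_{\tau}(\varepsilon_{it})\Vert^{2}=O(K_{N}/(NT))$ by the second-moment computation of \eqref{EQ:Ez} (cross-terms summable by the exponential mixing in (C1), $E\Vert Z_{it}\Vert^{2}\asymp K_{N}$), hence it is $O_{p}(\sqrt{K_{N}/(NT)})$; (iii) the sieve bias is $O(K_{N}^{-r})$ from \eqref{EQ:Rjt}; (iv) the Bahadur remainder is of smaller order than (ii)--(iii) under $K_{N}^{4}N^{-1}=o(1)$ and $K_{N}^{-r+2}(\log T)=o(1)$---and there is no $K_{N}^{3/2}N^{-3/4}\sqrt{\log NT}$ term here because the pooled regression uses all $NT$ observations at once with no supremum over $t$. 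This gives $\Vert\boldsymbol{\lambda}_{j}^{\ast}-\boldsymbol{\lambda}_{j}^{0}\Vert=O_{p}(\sqrt{K_{N}/(NT)}+K_{N}^{-r})$; converting to the $L^{2}(dx_{j})$ norm via $\int(B_{j}^{\intercal}\mathbf{u})^{2}dx_{j}\asymp\Vert\mathbf{u}\Vert^{2}$ and $\int(\widetilde{g}_{j}^{0}-g_{j}^{0})^{2}dx_{j}=O(K_{N}^{-2r})$ bounds $g_{j}^{\ast}$, and the unit normalisation is negligible because $N^{-1}\sum_{i}g_{j}^{\ast}(X_{ji})^{2}=1+o_{p}(1)$ (law of large numbers under (C1), $E\{g_{j}^{0}(X_{ji})\}^{2}=1$, and the rate just obtained), which also fixes the sign of $\widetilde{g}_{j}^{\ast}$ to that of $g_{j}^{0}$ on an event of probability $\to1$. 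Display \eqref{EQ:gtilda1} follows.

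\emph{Feasible versus infeasible \eqref{EQghat1}.} I would imitate the proof of Lemma \ref{LEM:fhat0}, but perturbing the \emph{factors} instead of the $g_{j}$'s: via the quadratic approximation of $\rho_{\tau}$, uniformly over $f$ in a shrinking neighbourhood of $f^{0}$ the minimiser $\widetilde{\boldsymbol{\lambda}}(f)$ of the pooled objective $Q_{NT}(\cdot;f)$ satisfies a Bahadur representation $\widetilde{\boldsymbol{\lambda}}(f)-\boldsymbol{\lambda}^{0}=V_{NT}^{-1}S_{NT}(f)+o_{p}(N^{-1/2})$ with $V_{NT}$ the common limiting Hessian; since $\boldsymbol{\lambda}^{\ast}=\widetilde{\boldsymbol{\lambda}}(f^{0})$, $\widehat{\boldsymbol{\lambda}}^{[1]}=\widetilde{\boldsymbol{\lambda}}(\widehat{f}^{[0]})$ and $\widehat{f}_{t}^{[0]}-f_{t}^{0}=d_{NT}\delta_{N,t}+O_{p}(N^{-1/2})$ by Lemmas \ref{LEM:fstar} and \ref{LEM:fhat0}, it remains to control $V_{NT}^{-1}\{S_{NT}(\widehat{f}^{[0]})-S_{NT}(f^{0})\}$. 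Writing the residual at $\boldsymbol{\lambda}^{0}$ under the feasible factors as $\varepsilon_{it}+b_{it}-\delta_{it}$, with $b_{it}=\sum_{k}r_{k}(X_{ki})f_{kt}^{0}=O(K_{N}^{-r})$ and $\delta_{it}=(\widehat{f}_{ut}^{[0]}-f_{ut}^{0})+\sum_{k}\widetilde{g}_{k}^{0}(X_{ki})(\widehat{f}_{kt}^{[0]}-f_{kt}^{0})=O_{p}(d_{NT})+O_{p}(N^{-1/2})$, the score difference splits into: a ``design'' piece $T^{-1}\sum_{t}(\widehat{f}_{jt}^{[0]}-f_{jt}^{0})N^{-1}\sum_{i}B_{j}(X_{ji})\psi_{\tau}(\varepsilon_{it}+b_{it})$, which is $O_{p}((d_{NT}+N^{-1/2})\sqrt{K_{N}/N})$ by Cauchy--Schwarz and $E\Vert N^{-1}\sum_{i}B_{j}\psi_{\tau}\Vert^{2}=O(K_{N}/N)$, negligible since $K_{N}N^{-1}=o(1)$; and an ``increment'' piece $T^{-1}\sum_{t}f_{jt}^{0}N^{-1}\sum_{i}B_{j}(X_{ji})\{\psi_{\tau}(\varepsilon_{it}+b_{it}-\delta_{it})-\psi_{\tau}(\varepsilon_{it}+b_{it})\}$ (the part with $\widehat{f}_{jt}^{[0]}$ replaced by $\widehat{f}_{jt}^{[0]}-f_{jt}^{0}$ being of higher order). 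For the increment I would treat $\delta_{it}$ as a generic deterministic array with envelope of order $d_{NT}+N^{-1/2}$ (substituting the actual $\widehat{\delta}$ at the end, which lies in the class with probability $\to1$); conditioning on $(\mathbf{X},\mathbf{F})$ replaces the increment by $-(NT)^{-1}\sum_{i,t}B_{j}(X_{ji})f_{jt}^{0}p_{i}(0|X_{i},f_{t})\delta_{it}$ up to a Lipschitz remainder, whose $L^{2}$ norm is $O_{p}(d_{NT})+O_{p}(N^{-1/2})$---the source of the $O_{p}(d_{NT}^{2})$ on the squared scale---while the centred part is $o_{p}(N^{-1/2})$ uniformly over the array class by the mixing Bernstein inequality plus a stochastic-equicontinuity argument of the $W_{Nt,2}$ type used in Lemma \ref{LEM:fhat0}. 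With $\Vert V_{NT}^{-1}\Vert=O_{p}(1)$ and the normalisation ratio $1+O_{p}(d_{NT}+N^{-1/2})$, \eqref{EQghat1} follows.

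\emph{Combination \eqref{EQ:ghat1xj}, and the main obstacle.} The last display is immediate from \eqref{EQ:gtilda1}, \eqref{EQghat1} and the triangle inequality, since $K_{N}^{1/2}(NT)^{-1/2}+K_{N}^{-r}=O(\phi_{NT})$ is dominated by $d_{NT}$ up to the admitted $o_{p}(N^{-1/2})$ slack. The hard part is the increment piece: the perturbation $\delta_{it}$ of the quantile argument is itself built from the estimated factors, so $\psi_{\tau}(\varepsilon_{it}+b_{it}-\delta_{it})-\psi_{\tau}(\varepsilon_{it}+b_{it})$ is an indicator difference over a random, data-dependent band shrinking at rate $d_{NT}+N^{-1/2}$, and showing that its centred part is $o_{p}(N^{-1/2})$---not merely $o_{p}(1)$---uniformly requires carefully combining the $\phi$-mixing Bernstein bound with a chaining/equicontinuity argument under the stated restrictions on $K_{N},N,T$; that is where the genuine work lies.
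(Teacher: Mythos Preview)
Your proposal is correct and follows essentially the same route as the paper. For \eqref{EQ:gtilda1} the paper also derives a Bahadur representation for $\boldsymbol{\lambda}^{\ast}-\boldsymbol{\lambda}^{0}$ with the pooled design (their $\Psi_{NT}$ and $U_{NT,1},U_{NT,2}$ play the role of your $V_{NT}$ and score/bias), and for \eqref{EQghat1} the paper's Lemma \ref{LEM:VNt1} decomposes the feasible score $V_{NT,1}(\widehat{f}^{[0]})$ into exactly your ``design'' piece ($V_{NT,12}$) and ``increment'' piece ($V_{NT,13}$), bounding the centred part of the latter by the covering-plus-Bernstein argument you describe.

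One small refinement worth noting: for the conditional mean of the increment, the paper does not stop at the crude envelope $O_{p}(d_{NT}+N^{-1/2})$ you give. Instead it substitutes the explicit Bahadur form $\widehat{f}_{t}^{[0]}-f_{t}^{0}=\Lambda_{Nt}^{-1}N^{-1}\sum_{i}G_{i}^{0}(X_{i})\psi_{\tau}(\varepsilon_{it})+d_{NT}\delta_{N,t}+o_{p}(N^{-1/2})$ into $\delta_{it}$ and observes that the leading $N^{-1/2}$-order part, once averaged over $t$, is $O_{p}((NT)^{-1/2})$ rather than $O_{p}(N^{-1/2})$; this yields $\Vert E\{V_{NT,13}(\widehat{f}^{[0]})\mid\mathbf{X},\mathbf{F}\}\Vert=O_{p}(d_{NT})+o_{p}(N^{-1/2})$ directly. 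Your coarser bound still delivers \eqref{EQghat1} after squaring (since $(d_{NT}+N^{-1/2})^{2}=O(d_{NT}^{2})+O(N^{-1})$), so this is a matter of sharpness in the intermediate step, not correctness.
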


\begin{proof}
{\normalsize Denote $\widetilde{g}^{0}(x)=\{\widetilde{g}_{j}^{0}(x_{j}),1\leq
j\leq J\}^{\intercal}$ and $g^{\ast}(x)=\{g_{j}^{\ast}(x_{j}),1\leq j\leq
J\}^{\intercal}$. Let $\boldsymbol{\lambda}^{0}=(\boldsymbol{\lambda}%
_{1}^{0\intercal},\ldots,\boldsymbol{\lambda}_{J}^{0\intercal})^{\intercal}$.
Let $\mathbb{B}^{\ast}(x)=\left[  \text{diag}\left[  B_{1}(x_{1})^{\intercal
},\ldots,B_{J}(x_{J})^{\intercal}\right]  \right]  _{J\times JK_{N}}$. Then
$\mathbb{B}^{\ast}(x)\boldsymbol{\lambda}^{\ast}=g^{\ast}(x)$ and
$\mathbb{B}^{\ast}(x)\boldsymbol{\lambda}^{0}=\widetilde{g}^{0}(x)$. Let
$Q_{it}^{0}=\{B_{j}(X_{ji})^{\intercal}f_{jt}^{0},1\leq j\leq J\}^{\intercal}%
$,
\begin{equation}
\Psi_{NT}=(NT)^{-1}\sum\nolimits_{i=1}^{N}\sum\nolimits_{t=1}^{T}%
f_{\varepsilon}\left(  0\left\vert X_{i},f_{t}\right.  \right)  Q_{it}%
^{0}Q_{it}^{0\intercal},\label{EQ:CphNT}%
\end{equation}
and $r_{j,it}^{\ast}=$ $r_{j}(X_{ji})f_{jt}^{0}$. Moreover, define
\begin{align}
U_{NT,1} &  =(NT)^{-1}\sum\nolimits_{i=1}^{N}\sum\nolimits_{t=1}^{T}Q_{it}%
^{0}(\tau-I(\varepsilon_{it}<0)),\label{EQ:UNT1}\\
U_{NT,2} &  =(NT)^{-1}\sum\nolimits_{i=1}^{N}\sum\nolimits_{t=1}^{T}Q_{it}%
^{0}f_{\varepsilon}(0|X_{i},f_{t})\left(  \sum\nolimits_{j=1}^{J}%
r_{j,it}^{\ast}\right)  .\nonumber
\end{align}
By the same procedure as the proof of Lemma \ref{LEM:bahardur}, for $K_{N}%
^{4}(\log(NT))^{2}(NT)^{-1}=o(1)$, we obtain the Bahadur representation for
$\boldsymbol{\lambda}^{\ast}-\boldsymbol{\lambda}^{0}$ as
\begin{equation}
\boldsymbol{\lambda}^{\ast}-\boldsymbol{\lambda}^{0}=\Psi_{NT}^{-1}%
(U_{N,1}+U_{N,2})+R_{NT}^{\ast},\label{EQ:lambdastar}%
\end{equation}
and the remaining term $R_{NT}^{\ast}$ satisfies
\begin{align*}
||R_{NT}^{\ast}|| &  =O_{p}(K_{N}^{3/2}(NT)^{-1}+K_{N}^{3/2}(NT)^{-3/4}%
\sqrt{\log(NT)}+K_{N}^{1/2-2r}+(NT)^{-1/2}K_{N}^{-r/2+1/2})\\
&  =O_{p}(K_{N}^{3/2}(NT)^{-3/4}\sqrt{\log(NT)}+K_{N}^{1/2-2r})+o_{p}%
((NT)^{-1/2}).
\end{align*}
By (\ref{EQ:lambdastar}) and following the same reasoning as the proof for
(\ref{EQ:htilda-ho}), we have $\sup_{x_{j}\in\lbrack a,b]}|g_{j}^{\ast}%
(x_{j})-g_{j}^{0}(x_{j})|=O_{p}(K_{N}(NT)^{-1/2}+K_{N}^{-r})$, $[\int%
\{g_{j}^{\ast}(x_{j})-g_{j}^{0}(x_{j})\}^{2}dx_{j}]^{1/2}=O_{p}(K_{N}%
^{1/2}(NT)^{-1/2}+K_{N}^{-r})$, and $[N^{-1}\sum\nolimits_{i=1}^{N}%
\{g_{j}^{\ast}(X_{ji})-g_{j}^{0}(X_{ji})\}^{2}]^{1/2}=O_{p}(K_{N}%
^{1/2}(NT)^{-1/2}+K_{N}^{-r})$. Therefore, we have
\[
\{\sqrt{N^{-1}\sum\nolimits_{i=1}^{N}g_{j}^{\ast}(X_{ji})^{2}}\}^{-1}%
-\{\sqrt{N^{-1}\sum\nolimits_{i=1}^{N}g_{j}^{0}(X_{ji})^{2}}\}^{-1}%
=O_{p}(K_{N}^{1/2}(NT)^{-1/2}+K_{N}^{-r}),
\]
and thus
\begin{align*}
\sup_{x_{j}\in\lbrack a,b]}|\widetilde{g}_{j}^{\ast}(x_{j})-g_{j}^{0}(x_{j})|
&  =O_{p}(K_{N}(NT)^{-1/2}+K_{N}^{-r}),\\
\lbrack\int\{\widetilde{g}_{j}^{\ast}(x_{j})-g_{j}^{0}(x_{j})\}^{2}%
dx_{j}]^{1/2} &  =O_{p}(K_{N}^{1/2}(NT)^{-1/2}+K_{N}^{-r}).
\end{align*}
Then the result (\ref{EQ:gtilda1}) is proved. Define
\begin{align*}
L_{NT}^{\ast}(f,\boldsymbol{\lambda}) &  =(NT)^{-1}\sum\nolimits_{i=1}^{N}%
\sum\nolimits_{t=1}^{T}\rho_{\tau}(y_{it}-f_{ut}-\sum\nolimits_{j=1}^{J}%
B_{j}(X_{ji})^{\intercal}\boldsymbol{\lambda}_{j}f_{jt})\\
&  -(NT)^{-1}\sum\nolimits_{i=1}^{N}\sum\nolimits_{t=1}^{T}\rho_{\tau}%
(y_{it}-f_{ut}-\sum\nolimits_{j=1}^{J}B_{j}(X_{ji})^{\intercal}%
\boldsymbol{\lambda}_{j}^{0}f_{jt}).
\end{align*}
Hence, $\widehat{\boldsymbol{\lambda}}^{[1]}$ and $\boldsymbol{\lambda}^{\ast
}$ are the minimizers of $L_{NT}^{\ast}(\widehat{f}^{[0]},\boldsymbol{\lambda
})$ and $L_{NT}^{\ast}(f^{0},\boldsymbol{\lambda})$, respectively. In Lemma
\ref{LEM:VNt11} in the Supplemental Materials, we show that
\begin{equation}
||\widehat{\boldsymbol{\lambda}}^{[1]}-\boldsymbol{\lambda}^{0}-\Psi_{NT}%
^{-1}U_{N,1}||=O_{p}(d_{NT})+o_{p}(N^{-1/2}).\label{EQ:lambdatilda-lamda0}%
\end{equation}
Hence, by (\ref{EQ:lambdastar}), (\ref{EQ:lambdatilda-lamda0}) and
$||\Psi_{NT}^{-1}U_{N,2}||=O(K_{N}^{-r})$, we have
\begin{equation}
||\widehat{\boldsymbol{\lambda}}^{[1]}-\boldsymbol{\lambda}^{\ast}%
||=O_{p}(d_{NT})+o_{p}(N^{-1/2}).\label{EQ:lambdatilda}%
\end{equation}
Then we have $\int\{\widehat{g}_{j}^{\ast\lbrack1]}(x_{j})-g_{j}^{\ast}%
(x_{j})\}^{2}dx_{j}=O_{p}(d_{NT}^{2})$}$+o_{p}(N^{-1})$ {\normalsize  and
$N^{-1}\sum\nolimits_{i=1}^{N}\{\widehat{g}_{j}^{\ast\lbrack1]}(X_{ji}%
)-g_{j}^{\ast}(X_{ji})\}^{2}=O_{p}(d_{NT}^{2})$}$+o_{p}(N^{-1})$.
Thus,
\[
\{\sqrt{N^{-1}\sum\nolimits_{i=1}^{N}\widehat{g}_{j}^{\ast\lbrack1]}%
(X_{ji})^{2}}\}^{-1}-\{\sqrt{N^{-1}\sum\nolimits_{i=1}^{N}g_{j}^{\ast}%
(X_{ji})^{2}}\}^{-1}={\small O_{p}(d_{NT})}+o_{p}(N^{-1/2}).
\]
Therefore, the result (\ref{EQghat1}) follows from the above results directly.
\end{proof}

\begin{proof}
[Proofs of Theorems \ref{THM:fhat} and \ref{THM:ghat}] Based on
(\ref{EQ:ghat1xj}) in Lemma \ref{LEM:gjtilda}, the result in Lemma
\ref{LEM:fhat0} holds for $\widehat{f}_{t}^{[1]}$ with a different bounded
sequence. Then the result (\ref{EQ:ghat1xj}) in Lemma \ref{LEM:gjtilda} holds
for $\widehat{g}_{j}^{[2]}(x_{j})$. This process can be continued for any
finite number of iterations. By assuming that the algorithm in Section
\ref{estimator} converges at the $(i+1)^{\text{th}}$ step for any finite
number $i$, the results in Lemmas \ref{LEM:fhat0} and \ref{LEM:gjtilda} hold
for $\widehat{f}_{t}=\widehat{f}_{t}^{[i+1]}$ and $\widehat{g}_{j}%
=\widehat{g}_{j}^{[i+1]}(x_{j})$. Hence, Theorem \ref{THM:fhat} for
$\widehat{f}_{t}$ follows from Lemmas \ref{LEM:fstar} and \ref{LEM:fhat0},
directly, and Theorem \ref{THM:ghat} for $\widehat{g}_{j}$ is proved by using
Lemma \ref{LEM:gjtilda}.
\end{proof}

\subsection{{\protect\normalsize Proofs of Theorem \ref{THM:consistentGAMMA}}}

\begin{proof}
We prove the consistency of $\widehat{\Lambda}_{Nt}$. Define
\[
\widetilde{\Lambda}_{Nt}=(Nh)^{-1}\sum\nolimits_{i=1}^{N} K\left(
\frac{y_{it}-(f_{ut}^{0}+\sum\nolimits_{j=1}^{J}g_{j}^{0}(X_{ji})f_{jt}^{0}%
)}{h}\right)  G_{i}^{0}(X_{i})G_{i}^{0}(X_{i})^{\intercal},
\]
and
\[
\widehat{\Lambda}_{Nt}=(Nh)^{-1}\sum\nolimits_{i=1}^{N}K\left(  \frac
{y_{it}-(\widehat{f}_{ut}+\sum\nolimits_{j=1}^{J}\widehat{g}_{j}%
(X_{ji})\widehat{f}_{jt})}{h}\right)  \widehat{G}_{i}(X_{i})\widehat{G}%
_{i}(X_{i})^{\intercal}.
\]
We will show $||\widehat{\Lambda}_{Nt}-\widetilde{\Lambda}_{Nt}||=o_{p}(1)$
and $||\widetilde{\Lambda}_{Nt}-\Lambda_{Nt}^{0}||=o_{p}(1)$, respectively.
Let $\widehat{d}_{it}(X_{i})=\{ \widehat{f}_{ut}+\sum\nolimits_{j=1}%
^{J}\widehat{g}_{j}(X_{ji})\widehat{f}_{jt}\}-\{f_{ut}^{0}+\sum\nolimits_{j=1}%
^{J}g_{j}^{0}(X_{ji})f_{jt}^{0}\}$. Then,
\[
\widehat{\Lambda}_{Nt}-\widetilde{\Lambda}_{Nt}=D_{Nt,1}+D_{Nt,2},
\]
where
\begin{align*}
D_{Nt,1}  &  =(2Nh)^{-1}\sum\nolimits_{i=1}^{N}\{I(|\varepsilon_{it}|\leq
h)-I(|\varepsilon_{it}-\widehat{d}_{it}(X_{i})|\leq h)\}G_{i}^{0}(X_{i}%
)G_{i}^{0}(X_{i})^{\intercal},\\
D_{Nt,2}  &  =(2Nh)^{-1}\sum\nolimits_{i=1}^{N}I(|\varepsilon_{it}%
-\widehat{d}_{it}(X_{i})|\leq h)\{ \widehat{G}_{i}(X_{i})\widehat{G}_{i}%
(X_{i})^{\intercal}-G_{i}^{0}(X_{i})G_{i}^{0}(X_{i})^{\intercal}\}.
\end{align*}
Since there exist some constants $0<c_{f},c_{1}<\infty$ such that with
probability approaching 1,
\[
E\{ \widehat{d}_{it}(X_{i})\}^{2}=\int\widehat{d}_{it}^{2}(x)f_{X_{i}%
}(x)dx\leq c_{f}\int\widehat{d}_{it}^{2}(x)dx\leq c_{1}\phi_{NT}^{2}%
+o(N^{-1}),
\]
where $\phi_{NT}$ is given in (\ref{EQ:phNT}), and the last inequality follows
from the result in Theorem \ref{THM:ghat}, then\ there exists some constant
$0<c<\infty$ such that with probability approaching 1,
\begin{align*}
E||\widehat{\Lambda}_{Nt}-\widetilde{\Lambda}_{Nt}||  &  \leq c(2Nh)^{-1}%
\sum\nolimits_{i=1}^{N}E|\widehat{d}_{it}(X_{i})|\times||G_{i}^{0}(X_{i}%
)G_{i}^{0}(X_{i})^{\intercal}||\\
&  \leq c(2Nh)^{-1}\sum\nolimits_{i=1}^{N}E\{ \widehat{d}_{it}(X_{i}%
)\}^{2}E||G_{i}^{0}(X_{i})G_{i}^{0}(X_{i})^{\intercal}||^{2}\}^{1/2}\\
&  \leq cc_{1}^{1/2}(2Nh)^{-1}(\sqrt{K_{N}/(NT)}+K_{N}^{3/2}N^{-3/4}\sqrt{\log
N}+K_{N}^{-r})\times\\
&  \sum\nolimits_{i=1}^{N}\{E||G_{i}^{0}(X_{i})G_{i}^{0}(X_{i})^{\intercal
}||^{2}\}^{1/2}.
\end{align*}
By Condition (C3), we have $\sup_{x_{j}\in\lbrack a,b]}|g_{j}^{0}(x_{j})|\leq
C^{\prime}$ for all $j$, for any vector $\mathbf{a\in}R^{J+1}$ and
$||\mathbf{a||}^{2}=1$, we have
\begin{align*}
\mathbf{a}^{\intercal}G_{i}^{0}(X_{i})G_{i}^{0}(X_{i})^{\intercal}\mathbf{a}
&  \mathbf{=\{}a_{0}+\sum\nolimits_{j=1}^{J}g_{j}^{0}(X_{ji})a_{j}\}^{2}%
\leq(J+1)\{a_{0}^{2}+g_{j}^{0}(X_{ji})^{2}a_{j}^{2}\}\\
&  \leq(J+1)\{a_{0}^{2}+(C^{\prime})^{2}a_{j}^{2}\} \leq C_{a}%
\end{align*}
for some constant $0<C_{a}<\infty$. Hence, $||G_{i}^{0}(X_{i})G_{i}^{0}%
(X_{i})^{\intercal}||\leq C_{a}$, and thus we have
\begin{align*}
&  E||\widehat{\Lambda}_{Nt}-\widetilde{\Lambda}_{Nt}||\leq cc_{1}%
^{1/2}(2Nh)^{-1}(\phi_{NT}+o(N^{-1/2}))\sum\nolimits_{i=1}^{N}C_{a}\\
&  =2^{-1}cc_{1}^{1/2}C_{a}h^{-1}(\phi_{NT}+o(N^{-1/2}))=o(1)
\end{align*}
by the assumption that $h^{-1}\phi_{NT}=o(1)$ and $h^{-1}N^{-1/2}=O(1)$.
Hence, we have $||D_{Nt,1}||=o_{p}(1)$. Moreover, for any vector
$\mathbf{a\in}R^{J+1}$ and $||\mathbf{a||}^{2}=1$, we have with probability
approaching 1, there exists a constant $0<C<\infty$ such that
\begin{align*}
|\mathbf{a}^{\intercal}D_{Nt,2}\mathbf{a|}  &  \mathbf{\leq}(2Nh)^{-1}%
\sum\nolimits_{i=1}^{N}|\mathbf{\{}a_{0}+\sum\nolimits_{j=1}^{J}%
\widehat{g}_{j}(X_{ji})a_{j}\}^{2}-\mathbf{\{}a_{0}+\sum\nolimits_{j=1}%
^{J}g_{j}^{0}(X_{ji})a_{j}\}^{2}|\\
&  \leq C(2Nh)^{-1}\sum\nolimits_{i=1}^{N}\sum\nolimits_{j=1}^{J}|\{
\widehat{g}_{j}(X_{ji})-g_{j}^{0}(X_{ji})\}a_{j}|\\
&  \leq C(2h)^{-1}\sum\nolimits_{j=1}^{J}\{N^{-1}\sum\nolimits_{i=1}^{N}\{
\widehat{g}_{j}(X_{ji})-g_{j}^{0}(X_{ji})\}^{2}a_{j}^{2}\}^{1/2}\\
&  =O(h^{-1})\{O(\phi_{NT})+o(N^{-1/2})\}=o(1).
\end{align*}
Hence, we have $||D_{Nt,2}||=o_{p}(1)$. Therefore, $||\widehat{\Lambda}%
_{Nt}-\widetilde{\Lambda}_{Nt}||\leq||D_{Nt,1}||+||D_{Nt,2}||=o_{p}(1)$. Next,
we will show $||\widetilde{\Lambda}_{Nt}-\Lambda_{Nt}^{0}||=o_{p}(1)$. Since
\begin{align*}
&  |E\left\{  (2h)^{-1}I(|\varepsilon_{it}|\leq h)-p_{i}\left(  0\left\vert
X_{i}\right.  \right)  |X_{i},f_{t}\right\}  |\\
&  =|(2h)^{-1}h\{p_{i}\left(  h^{\ast}\left\vert X_{i},f_{t}\right.  \right)
+p_{i}\left(  -h^{\ast\ast}\left\vert X_{i},f_{t}\right.  \right)
\}-p_{i}\left(  0\left\vert X_{i},f_{t}\right.  \right)  |\\
&  =|2^{-1}[\{p_{i}\left(  h^{\ast}\left\vert X_{i},f_{t}\right.  \right)
-p_{i}\left(  0\left\vert X_{i},f_{t}\right.  \right)  \}+\{p_{i}\left(
-h^{\ast\ast}\left\vert X_{i},f_{t}\right.  \right)  -p_{i}\left(  0\left\vert
X_{i},f_{t}\right.  \right)  \}]|\leq c^{\prime}h
\end{align*}
for some constant $0<c^{\prime}<\infty$, where $h^{\ast}$ and $h^{\ast\ast}$
are some values between $0$ and $h$, and the last inequality follows from
Condition (C2), then by the above result and Condition (C5),
\begin{align}
||E(\widetilde{\Lambda}_{Nt}-\Lambda_{Nt}^{0})||  &  =||N^{-1}\sum
\nolimits_{i=1}^{N}E[\{(2h)^{-1}I(|\varepsilon_{it}|\leq h)-p_{i}\left(
0\left\vert X_{i},f_{t}\right.  \right)  \}G_{i}^{0}(X_{i})G_{i}^{0}%
(X_{i})^{\intercal}]||\nonumber\\
&  \leq c^{\prime}h||N^{-1}\sum\nolimits_{i=1}^{N}EQ_{i}^{0}(X_{i})G_{i}%
^{0}(X_{i})^{\intercal}||=O(h)=o(1). \label{EQ:ELamtilda}%
\end{align}
Moreover, by Conditions (C1), we have $E\{I(|\varepsilon_{it}|\leq h)\}
\leq2C^{\ast}h$ for some constant $C^{\ast}\in(0,\infty)$, and then for any
vector $\mathbf{a}\in R^{(J+1)}$ with $||\mathbf{a||=}1$, by Conditions (C1),
(C2) and (C3), we have
\begin{align}
&  \text{var}(\mathbf{a}^{\intercal}\widetilde{\Lambda}_{Nt}\mathbf{a)}%
\nonumber\\
&  =(2Nh)^{-2}\text{var}\left(  \sum\nolimits_{i=1}^{N}I(|\varepsilon
_{it}|\leq h)\mathbf{\{}a_{0}+\sum\nolimits_{j=1}^{J}g_{j}^{0}(X_{ji}%
)a_{j}\}^{2}\right) \nonumber\\
&  \leq(2Nh)^{-2}\sum\nolimits_{i,i^{\prime}}2\{ \phi(|i-i^{\prime}%
|)\}^{1/2}\times\nonumber\\
&  \left(  E\left[  I\left(  |\varepsilon_{it}|\leq h\right)  \mathbf{\{}%
a_{0}+\sum\nolimits_{j=1}^{J}g_{j}^{0}(X_{ji})a_{j}\}^{4}\right]  \right)
^{1/2}\left(  E\left[  I(|\varepsilon_{i^{\prime}t^{\prime}}|\leq
h)\mathbf{\{}a_{0}+\sum\nolimits_{j=1}^{J} g_{j}^{0}(X_{ji^{\prime}}%
)a_{j}\}^{4}\right]  \right)  ^{1/2}\nonumber\\
&  \leq(J+1)^{2}\{a_{0}^{2}+C^{\prime2}a_{j}^{2}\}(2Nh)^{-2}(2C^{\ast}%
h)^{2}\sum\nolimits_{i,i^{\prime}}2\{ \phi(|i-i^{\prime}|)\}^{1/2}\nonumber\\
&  \leq(J+1)^{2}\{a_{0}^{2}+C^{\prime2}a_{j}^{2}\}N^{-2}2C^{\ast2}K_{1}%
\sum\nolimits_{i,i^{\prime}}e^{-(\lambda_{1}/2)(|i-i^{\prime}|)}\nonumber\\
&  \leq(J+1)^{2}\{a_{0}^{2}+C^{\prime2}a_{j}^{2}\}2C^{\ast2}K_{1}%
N^{-1}\{1-e^{-(\lambda_{1}/2)}\}=O(N^{-1})=o(1). \label{EQ:VLamtilda}%
\end{align}
By (\ref{EQ:ELamtilda}) and (\ref{EQ:VLamtilda}), we have
$||\widetilde{\Lambda}_{Nt}-\Lambda_{Nt}^{0}||=o_{p}(1)$. Hence,
$||\widehat{\Lambda}_{Nt}-\Lambda_{Nt}^{0}||\leq||\widehat{\Lambda}%
_{Nt}-\widetilde{\Lambda}_{Nt}||+||\widetilde{\Lambda}_{Nt}-\Lambda_{Nt}%
^{0}||=o_{p}(1)$.
\end{proof}

\subsection{{\protect\normalsize Proofs of Theorem \ref{THM:consistenOmega}}}

\begin{proof}
Let $S_{[rN]t}=\sum\nolimits_{i=1}^{[rN]}G_{i}^{0}(X_{i}%
)(\tau-I(\varepsilon_{it}<0))$, where $[a]$ denotes the largest integer no
greater than $a$. Let $M=bN$. Define $\Lambda_{Nt}(r)=N^{-1}\sum
\nolimits_{i=1}^{[rN]}p_{i}\left(  0\left\vert X_{i},f_{t}\right.  \right)
G_{i}^{0}(X_{i})G_{i}^{0}(X_{i})^{\intercal}$, $\digamma_{Nt}(r)=N^{-1/2}%
S_{[rN]t}$, and
\[
D_{bN}(r)=N^{2}\left(  K^{\ast}\left(  \frac{[rN]+1}{bN}\right)  -K^{\ast
}\left(  \frac{[rN]}{bN}\right)  \right)  -\left(  K^{\ast}\left(  \frac
{[rN]}{bN}\right)  -K^{\ast}\left(  \frac{[rN]-1}{bN}\right)  \right)  .
\]
Denote $K_{ij}^{\ast}=K^{\ast}(\frac{i-j}{bN})$, and $\widehat{w}_{Nt}%
=\frac{\tau(1-\tau)}{N}\sum\nolimits_{i=1}^{N}\widehat{G}_{i}(X_{i}%
)\widehat{G}_{i}(X_{i})^{\intercal}-N^{-1}\sum\nolimits_{i=1}^{N}%
\widehat{v}_{it}\widehat{v}_{it}^{\intercal}$. Then
\begin{align*}
\widehat{\Omega}_{Nt,N} &  =N^{-1}\sum\nolimits_{i=1}^{N}\sum\nolimits_{j=1}%
^{N}\widehat{v}_{it}K_{ij}^{\ast}\widehat{v}_{jt}^{\intercal}+\widehat{w}%
_{Nt}\\
&  =N^{-1}\sum\nolimits_{i=1}^{N}(\widehat{v}_{it}\sum\nolimits_{j=1}%
^{N}K_{ij}^{\ast}\widehat{v}_{jt}^{\intercal})+\widehat{w}_{Nt}.
\end{align*}
Define $\widehat{S}_{nt}=\sum\nolimits_{i=1}^{n}\widehat{v}_{it}$. By the
assumptions in Theorem \ref{THM:fhat}, $\phi_{NT}N^{1/2}=o(1)$ and by the
results in Lemmas \ref{LEM:fstar}-\ref{LEM:gjtilda}, we have
\begin{equation}
\widehat{f}_{t}-f_{t}^{0}=\Lambda_{Nt}^{-1}\{N^{-1}\sum\nolimits_{i=1}%
^{N}G_{i}^{0}(X_{i})(\tau-I(\varepsilon_{it}<0))\}+o_{p}(N^{-1/2}%
),\label{EQ:fhat-f}%
\end{equation}%
\begin{equation}
\sup_{x_{j}\in\mathcal{X}_{j}}|\widehat{g}_{j}(x_{j})-g_{j}^{0}(x_{j}%
)|=O_{p}(\phi_{NT})+o_{p}(N^{-1/2})=o_{p}(N^{-1/2}).\label{EQ:supg}%
\end{equation}
Let $r\in(0,1]$. Let $\widetilde{S}_{[rN]t}=\sum\nolimits_{i=1}^{[rN]}%
G_{i}^{0}(X_{i})(\tau-I(\widehat{\varepsilon}_{it}^{0}<0))$, where
$\widehat{\varepsilon}_{it}^{0}=y_{it}-\{\widehat{f}_{ut}+\sum\nolimits_{j=1}%
^{J}g_{j}^{0}(X_{ji})\widehat{f}_{jt}\}$. By Lemma \ref{LEM:WNt1}, we have
\begin{equation}
||N^{-1/2}\widehat{S}_{[rN]t}-N^{-1/2}\widetilde{S}_{[rN]t}||=o_{p}%
(1).\label{EQ:Shat-Stilda}%
\end{equation}
For any given $f_{t}\in R^{J+1}$, define $S_{[rN]t}(f_{t})=\sum\nolimits_{i=1}%
^{[rN]}G_{i}^{0}(X_{i})(\tau-I(\varepsilon_{it}(f_{t})<0))$, \ where
$\varepsilon_{it}(f_{t})=y_{it}-\{f_{ut}+\sum\nolimits_{j=1}^{J}g_{j}%
^{0}(X_{ji})f_{jt}\}$. Following similar arguments to the proof in Lemma
\ref{LEM:VNt1}, we have
\[
\sup_{||f_{t}-f_{t}^{0}||\leq C(d_{NT}+N^{-1/2})}||N^{-1/2}[S_{[rN]t}%
(f_{t})-S_{[rN]t}(f_{t}^{0})-E[\{S_{[rN]t}(f_{t})-S_{[rN]t}(f_{t}%
^{0})\}|\mathbf{X,F]]||}=o_{p}(1).
\]
Moreover,
\begin{align}
&  N^{-1/2}E[\{S_{[rN]t}(f_{t})-S_{[rN]t}(f_{t}^{0})\}|\mathbf{X,F]}%
\nonumber\\
&  =\sum\nolimits_{i=1}^{[rN]}G_{i}^{0}(X_{i})E[(I(\varepsilon_{it}(f_{t}%
^{0})<0)-I(\varepsilon_{it}(f_{t})<0))|X_{i},f_{t}],\label{EQ:EShat-Stilda}%
\end{align}
and thus by Taylor's expansion, we have%
\begin{gather}
||N^{-1/2}E[\{S_{[rN]t}(f_{t})-S_{[rN]t}(f_{t}^{0})\}|\mathbf{X,F]}\nonumber\\
-N^{-1/2}\sum\nolimits_{i=1}^{[rN]}p_{i}\left(  0\left\vert X_{i}%
,f_{t}\right.  \right)  G_{i}^{0}(X_{i})G_{i}^{0}(X_{i})^{\intercal}(f_{t}%
^{0}-f_{t})||=o_{p}(1).\label{EQ:EShat-Stilda2}%
\end{gather}
Hence, by (\ref{EQ:Shat-Stilda}), (\ref{EQ:EShat-Stilda}) and
(\ref{EQ:EShat-Stilda2}), we have
\begin{align*}
N^{-1/2}\widehat{S}_{[rN]t} &  =N^{-1/2}\sum\nolimits_{i=1}^{[rN]}G_{i}%
^{0}(X_{i})(\tau-I(\varepsilon_{it}<0))\\
&  -N^{-1/2}\sum\nolimits_{i=1}^{[rN]}p_{i}\left(  0\left\vert X_{i}%
,f_{t}\right.  \right)  G_{i}^{0}(X_{i})G_{i}^{0}(X_{i})^{\intercal
}(\widehat{f}_{t}-f_{t}^{0})+o_{p}(1).
\end{align*}
This result, together with (\ref{EQ:fhat-f}), implies
\begin{equation}
N^{-1/2}\widehat{S}_{[rN]t}=\digamma_{Nt}(r)-\Lambda_{Nt}(r)\{\Lambda
_{Nt}(1)\}^{-1}\digamma_{Nt}(1)+o_{p}(1).\label{EQ:ShatrN}%
\end{equation}
Thus, $N^{-1/2}\widehat{S}_{Nt}=o_{p}(1)$. By following the argument above
again, we have $||N^{-1/2}\sum\nolimits_{j=1}^{N}\widehat{v}_{jt}K_{jN}^{\ast
}-N^{-1/2}\sum\nolimits_{j=1}^{N}v_{jt}K_{jN}^{\ast}||=O_{p}(1)$. Also
$||N^{-1/2}\sum\nolimits_{j=1}^{N}v_{jt}K_{jN}^{\ast}||=O_{p}(1)$ by the weak
law of large numbers. Hence, $||N^{-1/2}\sum\nolimits_{j=1}^{N}\widehat{v}%
_{jt}K_{jN}^{\ast}||=O_{p}(1)$. Therefore
\[
N^{-1}\sum\nolimits_{j=1}^{N}\widehat{v}_{jt}K_{jN}^{\ast}\widehat{S}%
_{N}^{\intercal}=O_{p}(1)o_{p}(1)=o_{p}(1).
\]
By (\ref{EQ:fhat-f}) and (\ref{EQ:supg}), $\widehat{w}_{Nt}=o_{p}(1)$. By this
result and also applying the identity that $\sum\nolimits_{l=1}^{N}a_{l}%
b_{l}=(\sum\nolimits_{l=1}^{N-1}(a_{l}-a_{l+1})\sum\nolimits_{j=1}^{l}%
b_{j})+a_{N}\sum\nolimits_{l=1}^{N}b_{l}$ to $\sum\nolimits_{j=1}^{N}%
K_{ij}^{\ast}\widehat{v}_{j}^{\intercal}$ and then again to the sum over $i$,
we obtain
\begin{align*}
\widehat{\Omega}_{Nt,M=bN} &  =N^{-1}\sum\nolimits_{i=1}^{N-1}N^{-1}%
\sum\nolimits_{j=1}^{N-1}N^{2}((K_{ij}^{\ast}-K_{i,j+1}^{\ast})-(K_{i+1,j}%
^{\ast}-K_{i+1,j+1}^{\ast}))N^{-1/2}\widehat{S}_{it}N^{-1/2}\widehat{S}%
_{jt}^{\intercal}\\
&  +N^{-1}\sum\nolimits_{j=1}^{N}\widehat{v}_{jt}K_{jN}^{\ast}\widehat{S}%
_{Nt}^{\intercal}+o_{p}(1),
\end{align*}
and thus
\begin{align}
&  \widehat{\Omega}_{Nt,M=bN}=\sum\nolimits_{i=1}^{N-1}\sum\nolimits_{j=1}%
^{N-1}((K_{ij}^{\ast}-K_{i,j+1}^{\ast})-(K_{i+1,j}^{\ast}-K_{i+1,j+1}^{\ast
}))\frac{\widehat{S}_{it}}{\sqrt{N}}\,\frac{\widehat{S}_{jt}^{\intercal}%
}{\sqrt{N}}\nonumber\\
&  +o_{p}(1).\label{EQ:Omegahat}%
\end{align}
Moreover,
\begin{equation}
N^{2}((K_{ij}^{\ast}-K_{i,j+1}^{\ast})-(K_{i+1,j}^{\ast}-K_{i+1,j+1}^{\ast
}))=-D_{bN}\{(i-j)/N\}.\label{EQ:Ddiff}%
\end{equation}
Also $\lim_{N\rightarrow\infty}D_{bN}(r)=\frac{1}{b^{2}}K^{\ast\prime\prime
}(\frac{r}{b})$,
$\vert$%
$\vert$%
$\Lambda_{Nt}(r)-r\Lambda_{t}^{0}||=o_{p}(1)$, where $\Lambda_{t}^{0}%
=\lim_{N\rightarrow\infty}\Lambda_{Nt}^{0}$ and $\digamma_{Nt}%
(r)\overset{\mathcal{D}}{\rightarrow}W_{J+1}(r)\Upsilon^{\intercal}$. Thus,
\begin{equation}
(\Lambda_{Nt}(r),\digamma_{Nt}(r)^{\intercal},D_{bN}(r))\overset{\mathcal{D}%
}{\rightarrow}\left(  r\Lambda_{t}^{0},\Upsilon W_{J+1}(r)^{\intercal}%
,\frac{1}{b^{2}}K^{\ast\prime\prime}\left(  \frac{r}{b}\right)  \right)
.\label{EQ:distr}%
\end{equation}
Hence, by (\ref{EQ:ShatrN}), (\ref{EQ:Omegahat}), and (\ref{EQ:Ddiff}), it
follows that
\begin{align}
\widehat{\Omega}_{Nt,M=bN} &  =\int_{0}^{1}\int_{0}^{1}-D_{bN}(r-s)[\digamma
_{Nt}(r)-\Lambda_{Nt}(r)\{\Lambda_{Nt}(1)\}^{-1}\digamma_{Nt}(1)]\nonumber\\
&  \times\lbrack\digamma_{Nt}(s)-\Lambda_{Nt}(s)\{\Lambda_{Nt}(1)\}^{-1}%
\digamma_{Nt}(1)]^{\intercal}drds+o_{p}(1).\label{EQ:OmegahatNtN}%
\end{align}
By the continuous mapping theorem,%
\[
\widehat{\Omega}_{N,M=bN}\overset{\mathcal{D}}{\rightarrow}\Upsilon\int%
_{0}^{1}\int_{0}^{1}-\frac{1}{b^{2}}K^{\ast\prime\prime}(\frac{r-s}%
{b})\{W_{J+1}(r)-rW_{J+1}(1)\}\{W_{J+1}(s)-sW_{J+1}(1)\}^{\intercal
}drds\Upsilon^{\intercal}.
\]
Then the proof is completed.
\end{proof}

\subsection{{\protect\normalsize Proofs of Theorems \ref{THM:null} and
\ref{THM:alternative}}}

\begin{proof}
By (\ref{EQ:fhat-f}), $\widehat{f}_{t}-f_{t}^{0}=N^{-1/2}\Lambda
_{Nt}(1)^{-1}\digamma_{Nt}(1)+o_{p}(N^{-1/2})$. Then under $H_{0}$, we have
\begin{equation}
N^{1/2}(R\widehat{f}_{t}-r)=R\Lambda_{Nt}(1)^{-1}\digamma_{Nt}(1)+o_{p}(1).
\label{EQ:Rfhat}%
\end{equation}
It directly follows from (\ref{EQ:distr}), (\ref{EQ:OmegahatNtN}) and
(\ref{EQ:Rfhat}) that
\begin{align*}
&  F_{Nt,b}\overset{\mathcal{D}}{\rightarrow}\{R\Lambda_{t}^{0-1}\Upsilon
W_{J+1}(1)\}^{\intercal}\{R\tau(1-\tau)\Lambda_{t}^{0-1}\\
&  \times(\Upsilon\int\nolimits_{0}^{1}\int\nolimits_{0}^{1}-\frac{1}{b^{2}%
}K^{\ast\prime\prime}(\frac{r-s}{b})B_{J+1}(r)B_{J+1}(s)^{\intercal
}drds\Upsilon^{\intercal})\Lambda_{t}^{0-1}R^{\intercal}\}^{-1}\\
&  \times R\Lambda_{t}^{0-1}\Upsilon W_{J+1}(1)/q.
\end{align*}
Since $R\Lambda_{t}^{0-1}\Upsilon W_{J+1}(1)$ is a $q\times1$ vector of normal
random variables with mean zero and variance $R\Lambda_{t}^{0-1}%
\Upsilon \Upsilon^{\intercal}\Lambda_{t}^{0-1}R^{\intercal}$, $R\Lambda
_{t}^{0-1}\Upsilon W_{J+1}(1)$ can be written as $\Upsilon_{t}^{\ast}W_{q}%
(1)$, where $\Upsilon_{t}^{\ast}\Upsilon_{t}^{\ast\intercal}=R\Lambda
_{t}^{0-1}\Upsilon \Upsilon^{\intercal}\Lambda_{t}^{0-1}R^{\intercal}$. Then
replacing $R\Lambda_{t}^{0-1}\Upsilon W_{J+1}(1)$ by $\Upsilon_{t}^{\ast}%
W_{q}(1)$ and canceling $\Upsilon_{t}^{\ast}$ in the above equation, we have
the result in Theorem \ref{THM:null}. Moreover, under the alternative that
$H_{1}$: $Rf_{t}^{0}=r+cN^{-1/2}$, we have%
\begin{align*}
N^{1/2}(R\widehat{f}_{t}-r)  &  =N^{1/2}(Rf_{t}^{0}-r)+R\Lambda_{Nt}%
(1)^{-1}\digamma_{Nt}(1)+o_{p}(1)\\
&  =c+R\Lambda_{Nt}(1)^{-1}\digamma_{Nt}(1)+o_{p}(1).
\end{align*}
Thus by (\ref{EQ:distr}), we have
\begin{align*}
&  F_{Nt,b}\overset{\mathcal{D}}{\rightarrow}\{c+R\Lambda_{t}^{0-1}\Upsilon
W_{J+1}(1)\}^{\intercal}\{R\tau(1-\tau)\Lambda_{t}^{0-1}\\
&  \times(\Upsilon\int\nolimits_{0}^{1}\int\nolimits_{0}^{1}-\frac{1}{b^{2}%
}K^{\ast\prime\prime}(\frac{r-s}{b})B_{J+1}(r)B_{J+1}(s)^{\intercal
}drds\Upsilon^{\intercal})\Lambda_{t}^{0-1}R^{\intercal}\}^{-1}\\
&  \times\{c+R\Lambda_{t}^{0-1}\Upsilon W_{J+1}(1)\}/q.
\end{align*}
Also $c+R\Lambda_{t}^{0-1}\Upsilon W_{J+1}(1)\equiv c+\Upsilon_{t}^{\ast}%
W_{q}(1)=\Upsilon_{t}^{\ast}(\Upsilon_{t}^{\ast-1}c+W_{q}(1))$. Then the
result in Theorem \ref{THM:alternative} follows from the above results. The
proof is completed.
\end{proof}

\section{Supplementary Material}

In this supplement, we present Lemmas \ref{LEM:Gtildatheta-theta0}-\ref{LEM:Gstar} which are used to prove Lemma \ref{LEM:bahardur} in Section \ref{SEC:ghat0}. We also give Lemmas \ref{LEM:LNtfg}-\ref{LEM:WNt1} which are used in the proofs of Lemmas \ref{LEM:fstar} and \ref{LEM:fhat0}, and Lemmas \ref{LEM:VNt11}-\ref{LEM:VNt1} which are used in the proofs of Lemma \ref{LEM:gjtilda} in Section \ref{SEC:fhatghat}.



\renewcommand{\thetheorem}{{\sc S.\arabic{theorem}}} \renewcommand{%
\theproposition}{{\sc S.\arabic{proposition}}}
\renewcommand{\thelemma}{{\sc
S.\arabic{lemma}}} \renewcommand{\thecorollary}{{\sc
S.\arabic{corollary}}} \renewcommand{\theequation}{S.\arabic{equation}} %
\renewcommand{\thesubsection}{{\it S.\arabic{subsection}}} %
\setcounter{equation}{0} \setcounter{lemma}{0} \setcounter{proposition}{0} %
\setcounter{theorem}{0} \setcounter{subsection}{0}\setcounter{corollary}{0}

\begin{lemma}
{\normalsize \label{LEM:Gtildatheta-theta0}Under Conditions (C1) and (C2), and
$K_{N}^{2}N^{-1}(\log NT)^{2}(\log N)^{8}=o(1)$ and $K_{N}^{-1}=o(1)$,
\begin{align*}
&  \sup_{1\leq t\leq T}\sup_{||\boldsymbol{\vartheta}_{t}%
-\boldsymbol{\vartheta}_{t}^{0}||\leq CK_{N}^{1/2}N^{-1/2}}||N^{-1}%
\sum\nolimits_{i=1}^{N}\widetilde{G}_{tN,i}(\boldsymbol{\vartheta}_{t}%
)-N^{-1}\sum\nolimits_{i=1}^{N}\widetilde{G}_{tN,i}(\boldsymbol{\vartheta}%
_{t}^{0})||\\
&  =O_{p}(K_{N}^{3/2}N^{-3/4}\sqrt{\log NT}).
\end{align*}
}
\end{lemma}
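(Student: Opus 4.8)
The statement is a modulus-of-continuity bound for the centred indicator process $\widetilde G_{tN,i}$ over a shrinking ball, and the plan is the classical chaining-plus-Bernstein argument adapted to the $\phi$-mixing array. First reduce to a single coordinate: since $\|\mathbf a\|\le\sqrt{d(N)}\max_{1\le\ell\le d(N)}|a_\ell|$ and $d(N)=1+JK_N\asymp K_N$, it suffices to bound, for each $\ell$ and uniformly in $\ell$, $\sup_{1\le t\le T}\sup_{\|\boldsymbol{\vartheta}_t-\boldsymbol{\vartheta}_t^0\|\le CK_N^{1/2}N^{-1/2}}|N^{-1}\sum_{i=1}^N\xi_{i,\ell}(\boldsymbol{\vartheta}_t)|$ by $O_p(K_NN^{-3/4}\sqrt{\log NT})$, where $\xi_{i,\ell}(\boldsymbol{\vartheta}_t)=\widetilde G_{tN,i\ell}(\boldsymbol{\vartheta}_t)-\widetilde G_{tN,i\ell}(\boldsymbol{\vartheta}_t^0)$; multiplying by $\sqrt{d(N)}\asymp K_N^{1/2}$ then yields the claim. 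From the definitions, $\xi_{i,\ell}(\boldsymbol{\vartheta}_t)=\{[F_i(a_{it}|X_i,f_t)-F_i(-b_t(X_i)|X_i,f_t)]-[I(\varepsilon_{it}\le a_{it})-I(\varepsilon_{it}\le -b_t(X_i))]\}Z_{i,\ell}$ with $a_{it}=Z_i^\intercal(\boldsymbol{\vartheta}_t-\boldsymbol{\vartheta}_t^0)-b_t(X_i)$, so $E\{\xi_{i,\ell}(\boldsymbol{\vartheta}_t)\mid X_i,f_t\}=0$; after the routine replacement of the empirically centred spline basis by its population counterpart, $\xi_{i,\ell}(\boldsymbol{\vartheta}_t)$ is a $\sigma(W_{it})$-measurable, hence $\phi$-mixing, array across $i$ at fixed $t$ (the relevant distance being $|i-i'|$).

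Next establish the moment bounds feeding Lemma~\ref{LEM:Bernstein}. From the construction of the centred B--spline basis one has $\sup_x\|Z_i\|\le C_1K_N^{1/2}$, so on the ball $|Z_i^\intercal(\boldsymbol{\vartheta}_t-\boldsymbol{\vartheta}_t^0)|\le C_2K_NN^{-1/2}$. Since $I(\varepsilon_{it}\le a_{it})-I(\varepsilon_{it}\le -b_t(X_i))$ equals $\pm$ the indicator of an interval of length $|Z_i^\intercal(\boldsymbol{\vartheta}_t-\boldsymbol{\vartheta}_t^0)|$, boundedness and Lipschitz continuity of the conditional density (Condition~(C2)) give $E\{\xi_{i,\ell}^2(\boldsymbol{\vartheta}_t)\mid X_i,f_t\}\le C_3|Z_i^\intercal(\boldsymbol{\vartheta}_t-\boldsymbol{\vartheta}_t^0)|Z_{i,\ell}^2$, hence $\mathrm{Var}\{\xi_{i,\ell}(\boldsymbol{\vartheta}_t)\}\le C_4K_NN^{-1/2}$ using $E(Z_{i,\ell}^2)\asymp1$, while $|\xi_{i,\ell}(\boldsymbol{\vartheta}_t)|\le 2|Z_{i,\ell}|\le C_5K_N^{1/2}$. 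Condition~(C1) with $\phi(k)\le K_1e^{-\lambda_1k}$ makes the covariances $\mathrm{cov}\{\xi_{i,\ell},\xi_{i',\ell}\}$ summable against $\phi(|i-i'|)^{1/2}$, so the mixing-adjusted variance $v^2$ of Lemma~\ref{LEM:Bernstein} is also $O(K_NN^{-1/2})$.

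Now discretise and apply Bernstein. Cover $\{\|\boldsymbol{\vartheta}_t-\boldsymbol{\vartheta}_t^0\|\le CK_N^{1/2}N^{-1/2}\}$ by a net $\mathcal N_t$ of spacing $\delta=N^{-2}$, so $\log(|\mathcal N_t|\,T\,d(N))\le C_6K_N\log(NT)$. For fixed $t,\ell$ and $\boldsymbol{\vartheta}_t\in\mathcal N_t$, Lemma~\ref{LEM:Bernstein} applied to $S_N=\sum_{i=1}^N\xi_{i,\ell}(\boldsymbol{\vartheta}_t)$ with $\varepsilon=AK_NN^{1/4}\sqrt{\log NT}$ has $v^2N\le C_4K_NN^{1/2}$, $M^2\le C_5^2K_N$ and $\varepsilon M(\log N)^2\le C_5AK_N^{3/2}N^{1/4}\sqrt{\log NT}(\log N)^2$ in the denominator of its exponent; using $K_N^2N^{-1}(\log NT)^2(\log N)^8=o(1)$ one verifies that $\varepsilon^2/\{v^2N+M^2+\varepsilon M(\log N)^2\}$ eventually exceeds $2C_6K_N\log(NT)/C_1$ for $A$ large, and a union bound over $\mathcal N_t$, over $1\le t\le T$ and over $1\le\ell\le d(N)$ gives $\max_{t,\ell}\max_{\boldsymbol{\vartheta}_t\in\mathcal N_t}|N^{-1}\sum_{i=1}^N\xi_{i,\ell}(\boldsymbol{\vartheta}_t)|=O_p(K_NN^{-3/4}\sqrt{\log NT})$. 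Finally control the oscillation: for $\boldsymbol{\vartheta}_t$ in the ball with nearest net point $\boldsymbol{\vartheta}_t'$ one has $\sup_i|Z_i^\intercal(\boldsymbol{\vartheta}_t-\boldsymbol{\vartheta}_t')|\le C_1K_N^{1/2}\delta$; the smooth ($F_i$) part of $|N^{-1}\sum_i\{\xi_{i,\ell}(\boldsymbol{\vartheta}_t)-\xi_{i,\ell}(\boldsymbol{\vartheta}_t')\}|$ is then $O(K_N\delta)$ by Lipschitz continuity, and the indicator part is sandwiched by monotonicity of $u\mapsto I(\varepsilon_{it}\le u)$ between its values at $\boldsymbol{\vartheta}_t'\pm c\delta\mathbf 1$, whose bracket has expectation $O(K_N\delta)$ by (C2) and fluctuation controlled by re-applying the bound of the previous display to the slightly enlarged, still polynomially sized, net; with $\delta=N^{-2}$ all such terms are negligible. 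This proves the single-coordinate bound, and the coordinatewise reduction concludes.

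\textbf{Main obstacle.} The technical heart is the Bernstein bookkeeping: the term $\varepsilon M(\log n)^2$ in the exponent, which is the price of the weak $\phi$-mixing dependence, is exactly what forces the extra $\sqrt{\log NT}$ in the rate, and it is where the side condition $K_N^2N^{-1}(\log NT)^2(\log N)^8=o(1)$ is consumed; everything else — the conditional variance computation via the Lipschitz density and the monotonicity-based oscillation control — is routine once that balance is secured. This lemma then serves as the template for the analogous uniform bounds invoked in the proof of Lemma~\ref{LEM:bahardur} and of Proposition~\ref{THM:ghat0}.
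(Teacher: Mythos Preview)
Your proposal is correct and follows essentially the same route as the paper: cover the ball $B_N$ by a polynomially--sized net, apply the mixing Bernstein inequality (Lemma~\ref{LEM:Bernstein}) coordinatewise at the net points with the variance bound $v^2=O(K_NN^{-1/2})$ and sup bound $M=O(K_N^{1/2})$, take a union bound over net$\times\{1,\dots,T\}\times\{1,\dots,d(N)\}$, and control the oscillation off the net by the monotone sandwich; the side condition $K_N^{2}N^{-1}(\log NT)^{2}(\log N)^{8}=o(1)$ is consumed exactly as you say, to make $v^2N$ dominate $\varepsilon M(\log N)^2$. One cosmetic slip: the bracket for the indicator part is not at ``$\boldsymbol{\vartheta}_t'\pm c\delta\mathbf 1$'' (the components of $Z_i$ have mixed signs), but rather at the thresholds $Z_i^\intercal(\boldsymbol{\vartheta}_t'-\boldsymbol{\vartheta}_t^0)-b_t(X_i)\pm\|Z_i\|\delta$, which is precisely what the paper uses (its $\Gamma_{tN,1v},\Gamma_{tN,2v}$) and what your own bound $\sup_i|Z_i^\intercal(\boldsymbol{\vartheta}_t-\boldsymbol{\vartheta}_t')|\le C_1K_N^{1/2}\delta$ already sets up.
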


\begin{proof}
Let $B_{N}=\{\boldsymbol{\vartheta}_{t}:||\boldsymbol{\vartheta}%
_{t}-\boldsymbol{\vartheta}_{t}^{0}||\leq CK_{N}^{1/2}N^{-1/2}\}$. By taking
the same strategy as given in Lemma A.5 of Horowitz and Lee (2005), we cover
the ball $B_{N}$ with cubes $\mathcal{C=\{C(}\boldsymbol{\vartheta}_{t,v})\}$,
where $\mathcal{C(}\boldsymbol{\vartheta}_{t,v})$ is a cube containing
$(\boldsymbol{\vartheta}_{t,v}-\boldsymbol{\vartheta}_{t}^{0})$ with sides of
$C\{d(N)/N^{5}\}^{1/2}$ such that $\boldsymbol{\vartheta}_{t,v}\in B_{N}$.
Then the number of the cubes covering the ball $B_{N}$ is $V=(2N^{2})^{d(N)}$.
Moreover, we have $||(\boldsymbol{\vartheta}_{t}-\boldsymbol{\vartheta}%
_{t}^{0})-(\boldsymbol{\vartheta}_{t,v}-\boldsymbol{\vartheta}_{t}^{0})||\leq
C\{d(N)/N^{5/2}\}$ for any $\boldsymbol{\vartheta}_{t}-\boldsymbol{\vartheta
}_{t}^{0}\in\mathcal{C(}\boldsymbol{\vartheta}_{t,v})$, where $v=1,\ldots,V$.
First we can decompose
\begin{align}
&  \sup_{\boldsymbol{\vartheta}_{t}\in B_{N}}||N^{-1}\sum\nolimits_{i=1}%
^{N}\widetilde{G}_{tN,i}(\boldsymbol{\vartheta}_{t})-N^{-1}\sum\nolimits_{i=1}%
^{N}\widetilde{G}_{tN,i}(\boldsymbol{\vartheta}_{t}^{0})||\nonumber\\
&  \leq\max_{1\leq v\leq V}\sup_{(\boldsymbol{\vartheta}_{t}%
-\boldsymbol{\vartheta}_{t}^{0})\in\mathcal{C(}\boldsymbol{\vartheta}_{t,v}%
)}||N^{-1}\sum\nolimits_{i=1}^{N}\widetilde{G}_{tN,i}(\boldsymbol{\vartheta
}_{t})-N^{-1}\sum\nolimits_{i=1}^{N}\widetilde{G}_{tN,i}(\boldsymbol{\vartheta
}_{t,v})||\nonumber\\
&  +\max_{1\leq v\leq V}||N^{-1}\sum\nolimits_{i=1}^{N}\widetilde{G}%
_{tN,i}(\boldsymbol{\vartheta}_{t,v})-N^{-1}\sum\nolimits_{i=1}^{N}%
\widetilde{G}_{tN,i}(\boldsymbol{\vartheta}_{t}^{0})||\nonumber\\
&  =\Delta_{tN,1}+\Delta_{tN,2}\label{EQ:diff}%
\end{align}
Let $\gamma_{N}=C\{d(N)/n^{5/2}\}$. By the same argument as given in the proof
of Lemma A.5 in Horowitz and Lee (2005), we have
\begin{equation}
\Delta_{tN,1}\leq\max_{1\leq v\leq V}|\Gamma_{tN,1v}|+\max_{1\leq v\leq
V}|\Gamma_{tN,2v}|,\label{EQ:DetlaN1}%
\end{equation}
where
\begin{align*}
\Gamma_{tN,1v} &  =N^{-1}\sum\nolimits_{i=1}^{N}||Z_{i}||\left[  F_{i}%
[Z_{i}^{\intercal}(\boldsymbol{\vartheta}_{t,v}-\boldsymbol{\vartheta}_{t}%
^{0})-b_{t}(X_{i})+||Z_{i}||\gamma_{N}|X_{i},f_{t}]\right.  \\
&  \left.  -F_{i}[Z_{i}^{\intercal}(\boldsymbol{\vartheta}_{t,v}%
-\boldsymbol{\vartheta}_{t}^{0})-b_{t}(X_{i})-||Z_{i}||\gamma_{N}|X_{i}%
,f_{t}]\right]  ,\\
\Gamma_{tN,2v} &  =N^{-1}\sum\nolimits_{i=1}^{N}\Gamma_{tN,2v,i}=N^{-1}%
\sum\nolimits_{i=1}^{N}||Z_{i}||\left[  [I\{\varepsilon_{it}\leq
Z_{i}^{\intercal}(\boldsymbol{\vartheta}_{t,v}-\boldsymbol{\vartheta}_{t}%
^{0})-b_{t}(X_{i})+||Z_{i}||\gamma_{N}\}\right.  \\
&  -F_{i}\{Z_{i}^{\intercal}(\boldsymbol{\vartheta}_{t,v}%
-\boldsymbol{\vartheta}_{t}^{0})-b_{t}(X_{i})+||Z_{i}||\gamma_{N}|X_{i}%
,f_{t}\}]\\
&  \left.  -[I\{\varepsilon_{it}\leq Z_{i}^{\intercal}(\boldsymbol{\vartheta
}_{t,v}-\boldsymbol{\vartheta}_{t}^{0})-b_{t}(X_{i})\}-F_{i}\{Z_{i}%
^{\intercal}(\boldsymbol{\vartheta}_{t,v}-\boldsymbol{\vartheta}_{t}%
^{0})-b_{t}(X_{i})|X_{i},f_{t}\}]\right]  .
\end{align*}
By Condition (C2), we have that there are some constants $0<c^{\prime
},c^{\prime\prime}<\infty$ such that
\begin{equation}
\sup_{1\leq t\leq T}\max_{1\leq v\leq V}|\Gamma_{tN,1v}|\leq c^{\prime}%
\gamma_{N}\max_{1\leq i\leq N}||Z_{i}||||Z_{i}||\leq c^{\prime\prime
}\{d(N)/N^{5/2}\}K_{N}=O(K_{N}^{2}N^{-5/2}).\label{EQ:GammaN1v}%
\end{equation}
Next we will show the convergence rate for $\max_{1\leq v\leq V}%
|\Gamma_{tN,2v}|$. It is easy to see that $E(\Gamma_{tN,2v,i})=0$. Also
$|\Gamma_{tN,2v,i}|\leq4||Z_{i}||\leq c_{1}K_{N}^{1/2}$ for some constant
$0<c_{1}<\infty$. Moreover,
\begin{align*}
&  E\left[  ||Z_{i}||I\{\varepsilon_{it}\leq Z_{i}^{\intercal}%
(\boldsymbol{\vartheta}_{t,v}-\boldsymbol{\vartheta}_{t}^{0})-b_{t}%
(X_{i})+||Z_{i}||\gamma_{N}\}-I\{\varepsilon_{it}\leq Z_{i}^{\intercal
}(\boldsymbol{\vartheta}_{t,v}-\boldsymbol{\vartheta}_{t}^{0})-b_{t}%
(X_{i})\}\right]  ^{2}\\
&  \asymp E\{||Z_{i}||^{2}||Z_{i}||\gamma_{N}\}\leq c_{2}^{\ast}\gamma
_{N}K_{N}^{1/2}\leq c_{2}K_{N}^{3/2}N^{-5/2},
\end{align*}
for some constants $0<c_{2}^{\ast}<c_{2}<\infty$. Hence $E(\Gamma
_{tN,2v,i})^{2}\leq c_{2}K_{N}^{3/2}N^{-5/2}$. By Condition (C1), we have for
$i\neq j$,
\[
|E(\Gamma_{tN,2v,i}\Gamma_{tN,2v,j})|\leq2\phi(|j-i|)^{1/2}\{E(\Gamma
_{tN,2v,i}^{2})E(\Gamma_{tN,2v,j}^{2})\}^{1/2}\leq2c_{2}\phi(|j-i|)^{1/2}K_{N}%
^{3/2}N^{-5/2}.
\]
Hence
\begin{align*}
&  E(\Gamma_{tN,2v,i})^{2}+2\sum\nolimits_{j>i}|E(\Gamma_{tN,2v,i}%
\Gamma_{tN,2v,j})|\\
&  \leq c_{2}K_{N}^{3/2}N^{-5/2}+4c_{2}\sum\nolimits_{k=1}^{N}K_{1}%
e^{-\lambda_{1}k/2}K_{N}^{3/2}N^{-5/2}\\
&  \leq c_{2}K_{N}^{3/2}N^{-5/2}(1+4K_{1}(1-e^{-\lambda_{1}/2})^{-1}%
)=c_{3}K_{N}^{3/2}N^{-5/2},
\end{align*}
where $c_{3}=c_{2}(1+4K_{1}(1-e^{-\lambda_{1}/2})^{-1})$. By Condition (C1),
for each fixed $t$, the sequence $\{(X_{i},f_{t},\varepsilon_{it}),1\leq i\leq
N\}$ has the $\phi$-mixing coefficient $\phi(k)\leq K_{1}e^{-\lambda_{1}k}$
for $K_{1},\lambda_{1}>0$. Thus, by the Bernstein's inequality given in Lemma
\ref{LEM:Bernstein}, we have for $N$ sufficiently large,
\begin{align*}
&  P\left(  |\Gamma_{tN,2v}|\geq aK_{N}^{3/2}N^{-1}(\log NT)^{3}\right)  \\
&  \leq\exp(-\frac{C_{1}(aK_{N}^{3/2}(\log NT)^{3})^{2}}{c_{3}K_{N}%
^{3/2}N^{-5/2}N+c_{1}^{2}K_{N}+aK_{N}^{3/2}(\log NT)^{3}c_{1}K_{N}^{1/2}%
\log(N)^{2}})\leq(NT)^{-c_{4}a^{2}K_{N}}%
\end{align*}
for some constant $0<c_{4}<\infty$. By the union bound of probability, we
have
\begin{align*}
&  P\left(  \sup_{1\leq t\leq T}\max_{1\leq v\leq V}|\Gamma_{tN,2v}|\geq
aK_{N}^{3/2}N^{-1}(\log NT)^{3}\right)  \\
&  \leq(2N^{2})^{d(N)}T(NT)^{-c_{4}a^{2}K_{N}}\leq2^{d(N)}N^{2(1+JK_{N}%
)-c_{4}a^{2}K_{N}}T^{1-c_{4}a^{2}K_{N}}.
\end{align*}
Hence, taking $a$ large enough, one has
\[
P\left(  \sup_{1\leq t\leq T}\max_{1\leq v\leq V}|\Gamma_{tN,2v}|\geq
aK_{N}^{3/2}N^{-1}(\log N)^{3}\right)  \leq2^{K_{N}}N^{-K_{N}}T^{-K_{N}}.
\]
Then we have
\begin{equation}
\sup_{1\leq t\leq T}\max_{1\leq v\leq V}|\Gamma_{tN,2v}|=O_{p}\{K_{N}%
^{3/2}N^{-1}(\log NT)^{3}\}.\label{EQ:GammaN2v}%
\end{equation}
Next we will show the convergence rate for $\Delta_{tN,2}$. Let $\widetilde{g}%
_{tN,i,\ell}(\boldsymbol{\vartheta}_{t,v})$ be the $\ell^{\text{th}}$ element
in $\widetilde{G}_{tN,i}(\boldsymbol{\vartheta}_{t,v})-\widetilde{G}%
_{tN,i}(\boldsymbol{\vartheta}_{t}^{0})$ for $\ell=1,\ldots,d(N)$. It is easy
to see that $E\{\widetilde{g}_{tN,i,\ell}(\boldsymbol{\vartheta}_{t,v})\}=0$.
Also $|\widetilde{g}_{tN,i,\ell}(\boldsymbol{\vartheta}_{t,v})|\leq4|Z_{i\ell
}|\leq c_{1}K_{N}^{1/2}$ for some constant $0<c_{1}<\infty$. Moreover,%
\begin{align*}
&  E\left[  [I\{\varepsilon_{it}\leq Z_{i}^{\intercal}(\boldsymbol{\vartheta
}_{t,v}-\boldsymbol{\vartheta}_{t}^{0})-b_{t}(X_{i})\}-I\{\varepsilon_{it}%
\leq-b_{t}(X_{i})\}]Z_{i\ell}\right]  ^{2}\\
&  \leq c_{1}^{\prime}||\boldsymbol{\vartheta}_{t,v}-\boldsymbol{\vartheta
}_{t}^{0}||K_{N}^{1/2}\leq c_{1}^{\prime}CK_{N}^{1/2}N^{-1/2}K_{N}^{1/2}%
=c_{1}^{\prime}CK_{N}N^{-1/2}%
\end{align*}
for some constant $0<c_{1}^{\prime}<\infty$. Hence $E(\widetilde{g}%
_{tN,i,\ell}(\boldsymbol{\vartheta}_{t,v}))^{2}\leq c_{1}^{\prime}%
CK_{N}N^{-1/2}$. By Condition (C1), we have for $i\neq j$,
\[
|E(\widetilde{g}_{tN,i,\ell}(\boldsymbol{\vartheta}_{t,v})\widetilde{g}%
_{tN,j,\ell}(\boldsymbol{\vartheta}_{t,v})|\leq4\phi(|j-i|)^{1/2}%
\{E(\Gamma_{tN,2v,i}^{2})E(\Gamma_{tN,2v,j}^{2})\}^{1/2}.
\]
Hence
\begin{align*}
&  E(\widetilde{g}_{tN,i,\ell}(\boldsymbol{\vartheta}_{t,v}))^{2}%
+2\sum\nolimits_{j>i}|E(\widetilde{g}_{tN,i,\ell}(\boldsymbol{\vartheta}%
_{t,v})\widetilde{g}_{tN,j,\ell}(\boldsymbol{\vartheta}_{t,v})|\\
&  \leq c_{1}^{\prime}CK_{N}N^{-1/2}+4\sum\nolimits_{k=1}^{N}K_{1}%
e^{-\lambda_{1}k/2}c_{1}^{\prime}CK_{N}N^{-1/2}\\
&  \leq c_{1}^{\prime}CK_{N}N^{-1/2}(1+4K_{1}(1-e^{-\lambda_{1}/2}%
)^{-1})=c_{2}K_{N}N^{-1/2},
\end{align*}
where $c_{2}=c_{1}^{\prime}C(1+4K_{1}(1-e^{-\lambda_{1}/2})^{-1})$. Thus, by
the Bernstein's inequality given in Lemma \ref{LEM:Bernstein} and $K_{N}%
^{2}N^{-1}(\log NT)^{2}(\log N)^{8}=o(1)$, we have for $N$ sufficiently
large,
\begin{align*}
&  P\left(  |N^{-1}\sum\nolimits_{i=1}^{N}\widetilde{g}_{tN,i,\ell
}(\boldsymbol{\vartheta}_{t,v})|\geq aK_{N}N^{-3/4}\sqrt{\log NT}\right)  \\
&  \leq\exp(-\frac{C_{1}(aK_{N}N^{1/4}\sqrt{\log NT})^{2}}{c_{2}K_{N}%
N^{-1/2}N+c_{1}^{2}K_{N}+aK_{N}N^{1/4}(\log NT)^{1/2}c_{1}K_{N}^{1/2}(\log
N)^{2}})\leq(NT)^{-c_{3}a^{2}K_{N}}%
\end{align*}
for some constant $0<c_{3}<\infty$. By the union bound of probability, we have%
\[
P\left(  \sup_{1\leq t\leq T}\sup_{1\leq\ell\leq d\left(  N\right)  }%
|N^{-1}\sum\nolimits_{i=1}^{N}\widetilde{g}_{tN,i,\ell}(\boldsymbol{\vartheta
}_{t,v})|\geq aK_{N}N^{-3/4}\sqrt{\log NT}\right)  \leq d(N)T(NT)^{-c_{3}%
a^{2}K_{N}}.
\]
Hence,
\begin{align*}
&  P\left(  \sup_{1\leq t\leq T}||N^{-1}\sum\nolimits_{i=1}^{N}\widetilde{G}%
_{tN,i}(\boldsymbol{\vartheta}_{t,v})-N^{-1}\sum\nolimits_{i=1}^{N}%
\widetilde{G}_{tN,i}(\boldsymbol{\vartheta}_{t}^{0})||\geq aK_{N}%
^{3/2}N^{-3/4}\sqrt{\log NT}\right)  \\
&  \leq d(N)T(NT)^{-c_{3}a^{2}K_{N}}.
\end{align*}
By the union bound of probability again, we have
\[
P\left(  \sup_{1\leq t\leq T}|\Delta_{tN,2}|\geq aK_{N}^{3/2}N^{-3/4}%
\sqrt{\log NT}\right)  \leq(2N^{2})^{d(N)}d(N)T(NT)^{-c_{3}a^{2}K_{N}}.
\]
Hence, taking $a$ large enough, one has
\[
P\left(  \sup_{1\leq t\leq T}|\Delta_{tN,2}|\geq aK_{N}^{3/2}N^{-3/4}%
\sqrt{\log NT}\right)  \leq2^{K_{N}}K_{N}N^{-K_{N}}T^{-K_{N}}.
\]
Then we have
\begin{equation}
\sup_{1\leq t\leq T}|\Delta_{tN,2}|=O_{p}\{K_{N}^{3/2}N^{-3/4}\sqrt{\log
NT}\}.\label{EQ:delta2}%
\end{equation}
Therefore, by (\ref{EQ:diff}), (\ref{EQ:DetlaN1}), (\ref{EQ:GammaN1v}),
(\ref{EQ:GammaN2v}) and (\ref{EQ:delta2}), we have%
\begin{align*}
\sup_{1\leq t\leq T} &  \sup_{\boldsymbol{\vartheta}_{t}\in B_{N}}||N^{-1}%
\sum\nolimits_{i=1}^{N}\widetilde{G}_{tN,i}(\boldsymbol{\vartheta}_{t}%
)-N^{-1}\sum\nolimits_{i=1}^{N}\widetilde{G}_{tN,i}(\boldsymbol{\vartheta}%
_{t}^{0})||\\
&  =O_{p}\{K_{N}^{2}N^{-5/2}+K_{N}^{3/2}N^{-1}(\log NT)^{3}+K_{N}%
^{3/2}N^{-3/4}\sqrt{\log NT}\}\\
&  =O_{p}(K_{N}^{3/2}N^{-3/4}\sqrt{\log NT}).
\end{align*}
\end{proof}

\begin{lemma}
{\normalsize \label{LEM:Gtildatilda}Under Conditions (C1) and (C2), $\sup_{1\leq t\leq T}||N^{-1}\sum
\nolimits_{i=1}^{N}G_{tN,i}(\widetilde{\boldsymbol{\vartheta}}_{t}%
)||=O_{p}(K_{N}^{3/2}N^{-1})$. }
\end{lemma}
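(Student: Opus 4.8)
The plan is to recognize that $N^{-1}\sum_{i=1}^{N}G_{tN,i}(\widetilde{\boldsymbol{\vartheta}}_{t})$ is, up to the usual subtleties at exactly--zero residuals, the subdifferential of the B--spline check--function objective evaluated at its minimizer, so its size is governed by the number of interpolated observations, which is $O(K_{N})$. First I would recast $G_{tN,i}(\widetilde{\boldsymbol{\vartheta}}_{t})$ in terms of the fitted residual $\widetilde{r}_{it}=y_{it}-Z_{i}^{\intercal}\widetilde{\boldsymbol{\vartheta}}_{t}$. Since $h_{ut}^{0}$ is the leading coordinate of $\boldsymbol{\vartheta}_{t}^{0}$ and $\sum_{j}h_{jt}^{0}(X_{ji})=B(X_{i})^{\intercal}\boldsymbol{\theta}_{t}^{0}+b_{t}(X_{i})$, one has $\varepsilon_{it}=y_{it}-Z_{i}^{\intercal}\boldsymbol{\vartheta}_{t}^{0}-b_{t}(X_{i})$, whence $\widetilde{r}_{it}=\varepsilon_{it}-Z_{i}^{\intercal}(\widetilde{\boldsymbol{\vartheta}}_{t}-\boldsymbol{\vartheta}_{t}^{0})+b_{t}(X_{i})$ and therefore $\{\varepsilon_{it}\leq Z_{i}^{\intercal}(\widetilde{\boldsymbol{\vartheta}}_{t}-\boldsymbol{\vartheta}_{t}^{0})-b_{t}(X_{i})\}=\{\widetilde{r}_{it}\leq0\}$. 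Thus $N^{-1}\sum_{i}G_{tN,i}(\widetilde{\boldsymbol{\vartheta}}_{t})=N^{-1}\sum_{i}(\tau-I(\widetilde{r}_{it}\leq0))Z_{i}$.

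Next I would invoke the optimality of $\widetilde{\boldsymbol{\vartheta}}_{t}$ for the convex program $\min_{\vartheta}\sum_{i}\rho_{\tau}(y_{it}-Z_{i}^{\intercal}\vartheta)$. Because $\partial\rho_{\tau}(u)=\tau-I(u<0)$ for $u\ne0$ and $\partial\rho_{\tau}(0)=[\tau-1,\tau]$, the inclusion $0\in\partial\sum_{i}\rho_{\tau}(y_{it}-Z_{i}^{\intercal}\widetilde{\boldsymbol{\vartheta}}_{t})$ produces numbers $s_{i}\in[\tau-1,\tau]$ with $\sum_{i:\widetilde{r}_{it}\ne0}(\tau-I(\widetilde{r}_{it}<0))Z_{i}=-\sum_{i:\widetilde{r}_{it}=0}Z_{i}s_{i}$. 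Splitting $\sum_{i}(\tau-I(\widetilde{r}_{it}\leq0))Z_{i}$ over the zero and nonzero residuals and substituting this relation, I obtain $\sum_{i}(\tau-I(\widetilde{r}_{it}\leq0))Z_{i}=\sum_{i:\widetilde{r}_{it}=0}(\tau-1-s_{i})Z_{i}$, and since $|\tau-1-s_{i}|\leq1$ this gives the classical subgradient bound $\|\sum_{i}G_{tN,i}(\widetilde{\boldsymbol{\vartheta}}_{t})\|\leq\sum_{i:\widetilde{r}_{it}=0}\|Z_{i}\|$ (cf.\ Koenker and Bassett (1978)).

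It then remains to control the right side uniformly in $t$. Under Condition (C2) the covariates $X_{ji}$ and the response $y_{it}$ are absolutely continuous, so the B--spline design is almost surely in general position and, taking $\widetilde{\boldsymbol{\vartheta}}_{t}$ to be a basic optimal solution (which the minimizer may always be chosen to be), the set $\{i:\widetilde{r}_{it}=0\}$ has at most $d(N)=1+JK_{N}$ elements. From the centered basis $B_{jk}(x_{j})=\sqrt{K_{N}}\{b_{j,k}(x_{j})-N^{-1}\sum_{i}b_{j,k}(X_{ji})\}$, the local support and partition--of--unity properties of the $b_{j,k}$, and Conditions (C1)--(C2) to control the empirical centering terms exactly as used for $(\ref{EQ:ZZ})$, one has $\sup_{x_{j}\in[a,b]}\|B_{j}(x_{j})\|=O_{p}(K_{N}^{1/2})$, hence $\max_{1\leq i\leq N}\|Z_{i}\|=O_{p}(K_{N}^{1/2})$. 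Combining, on a single event of probability tending to one and with a constant not depending on $t$, $\|N^{-1}\sum_{i}G_{tN,i}(\widetilde{\boldsymbol{\vartheta}}_{t})\|\leq N^{-1}d(N)\max_{i}\|Z_{i}\|=O(K_{N}^{3/2}N^{-1})$ simultaneously for $t=1,\dots,T$, which is the claim; the uniformity in $t$ is immediate since neither the bounding event nor the constant depends on $t$.

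The main obstacle, such as it is, lies in the two structural facts feeding the final display: the general--position argument that caps the cardinality of the zero--residual set at $O(K_{N})$, and the sup bound $\sup_{x_{j}}\|B_{j}(x_{j})\|=O_{p}(K_{N}^{1/2})$ for the centered B--spline vectors. Both are standard in the quantile regression and spline literatures (the former is the ``exact fit'' property of $\ell_{1}$--type fits, the latter follows from de Boor's local support bounds together with a law of large numbers for the centering terms), and everything else is bookkeeping; in particular no mixing is needed beyond what is already invoked for the spline design.
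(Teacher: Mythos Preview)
Your argument is correct and is precisely the one the paper invokes by citing Lemma~A.4 of Horowitz and Lee (2005): rewrite $G_{tN,i}(\widetilde{\boldsymbol{\vartheta}}_{t})$ via the fitted residual, use the subgradient optimality condition so that only the (at most $d(N)$) interpolated points contribute, and combine with $\max_{i}\|Z_{i}\|=O_{p}(K_{N}^{1/2})$ to get the $K_{N}^{3/2}N^{-1}$ rate uniformly in $t$. The only extra ingredient here relative to Horowitz--Lee is the uniformity over $t$, which, as you note, is automatic because neither the cardinality bound $d(N)$ nor the $O_{p}(K_{N}^{1/2})$ control of $\|Z_{i}\|$ depends on $t$.
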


\begin{lemma}
\label{LEM:Gstar}Under Conditions (C2) and (C3), as $N\rightarrow
\infty$,%
\[
\Psi_{Nt}^{-1}G_{tN,i}^{\ast}(\boldsymbol{\vartheta}_{t}%
)=-(\boldsymbol{\vartheta}_{t}-\boldsymbol{\vartheta}_{t}^{0})+N^{-1}\Psi
_{Nt}^{-1}\sum\nolimits_{i=1}^{N}p_{i}\left(  0\left\vert X_{i},f_{t}\right.
\right)  Z_{i}b_{t}(X_{i})+R_{Nt}^{\ast},
\]
where $||R_{Nt}^{\ast}||\leq C^{\ast}\{K_{N}^{1/2}||\boldsymbol{\vartheta}%
_{t}-\boldsymbol{\vartheta}_{t}^{0}||^{2}+K_{N}^{1/2-2r}\}$ for some constant
$0<C^{\ast}<\infty$, uniformly in $t$.
\end{lemma}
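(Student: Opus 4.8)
The plan is to read the identity off a second-order Taylor expansion of the conditional distribution function $F_i(\cdot|X_i,f_t)$ about the origin: the conditional quantile restriction on $\varepsilon_{it}$ removes the leading constant, the first-order term reproduces $-\Psi_{Nt}(\boldsymbol{\vartheta}_t-\boldsymbol{\vartheta}_t^{0})$ together with the bias piece, and the Lipschitz-of-order-one control of the conditional density in (C2) makes the remainder quadratic in the displacement. Throughout write $\delta_t=\boldsymbol{\vartheta}_t-\boldsymbol{\vartheta}_t^{0}$ and $a_{it}=Z_i^{\intercal}\delta_t-b_t(X_i)$, and read $G_{tN,i}^{\ast}(\boldsymbol{\vartheta}_t)$ in the displayed identity as the average $N^{-1}\sum_{i=1}^{N}G_{tN,i}^{\ast}(\boldsymbol{\vartheta}_t)$.

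First I would use that, since $\varepsilon_{it}$ has conditional $\tau$-th quantile zero given $(X_i,f_t)$ and admits a conditional density by (C2), $F_i(0|X_i,f_t)=\tau$. Hence, setting $\zeta_{it}=\int_0^{a_{it}}\{p_i(u|X_i,f_t)-p_i(0|X_i,f_t)\}\,du$,
\[
\tau-F_i(a_{it}|X_i,f_t)=-\int_0^{a_{it}}p_i(u|X_i,f_t)\,du=-p_i(0|X_i,f_t)\,a_{it}-\zeta_{it},\qquad |\zeta_{it}|\le \tfrac{1}{2}L\,a_{it}^{2},
\]
the bound on $\zeta_{it}$ coming from the Lipschitz condition in (C2) with a common constant $L$. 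Multiplying by $Z_i$, averaging over $i$, and expanding $a_{it}=Z_i^{\intercal}\delta_t-b_t(X_i)$,
\[
N^{-1}\sum\nolimits_{i=1}^{N}G_{tN,i}^{\ast}(\boldsymbol{\vartheta}_t)=-\Psi_{Nt}\delta_t+N^{-1}\sum\nolimits_{i=1}^{N}p_i(0|X_i,f_t)b_t(X_i)Z_i-N^{-1}\sum\nolimits_{i=1}^{N}\zeta_{it}Z_i ,
\]
so premultiplying by $\Psi_{Nt}^{-1}$ yields precisely the asserted identity with $R_{Nt}^{\ast}=-\Psi_{Nt}^{-1}N^{-1}\sum_{i=1}^{N}\zeta_{it}Z_i$.

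It then remains to bound $R_{Nt}^{\ast}$. Using $a_{it}^{2}\le 2(Z_i^{\intercal}\delta_t)^{2}+2b_t(X_i)^{2}$, the uniform magnitude bound $\sup_{1\le i\le N}\|Z_i\|=O(K_N^{1/2})$ (a consequence of the local support of the B-spline basis), and the spline bias bound $\sup_{x}|b_t(x)|=O(K_N^{-r})$ from (\ref{EQ:Rjt}) under (C3), I obtain
\[
\Bigl\|N^{-1}\sum\nolimits_{i=1}^{N}\zeta_{it}Z_i\Bigr\|\le \tfrac{1}{2}L\Bigl(\sup_{1\le i\le N}\|Z_i\|\Bigr)N^{-1}\sum\nolimits_{i=1}^{N}a_{it}^{2}\le C\,K_N^{1/2}\Bigl(\delta_t^{\intercal}\bigl(N^{-1}\sum\nolimits_{i=1}^{N}Z_iZ_i^{\intercal}\bigr)\delta_t+K_N^{-2r}\Bigr).
\]
By (\ref{EQ:ZZ}) together with $0<\inf_{i,t}p_i(0|X_i,f_t)$ from (C2), on an event of probability tending to one and uniformly in $t$ one has $\lambda_{\max}(N^{-1}\sum_{i=1}^{N}Z_iZ_i^{\intercal})=O(1)$ and $\|\Psi_{Nt}^{-1}\|=O(1)$; chaining these estimates gives $\|R_{Nt}^{\ast}\|\le C^{\ast}\{K_N^{1/2}\|\delta_t\|^{2}+K_N^{1/2-2r}\}$ uniformly in $t$, as claimed. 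The computation is otherwise entirely routine; the only delicate point is the uniform-in-$t$ control of the centered B-spline Gram matrix and of $\sup_i\|Z_i\|$, which is exactly what (\ref{EQ:ZZ}) and the local-support structure of B-splines supply.
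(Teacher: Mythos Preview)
Your proof is correct and is essentially the argument the paper has in mind: the paper does not spell out a proof but simply refers to Lemmas A.4 and A.7 of Horowitz and Lee (2005) together with (\ref{EQ:Rjt}) and (\ref{EQ:ZZ}), and the standard route there is exactly the first-order expansion of $F_i(\cdot|X_i,f_t)$ about zero with a Lipschitz-controlled quadratic remainder that you carry out. You also correctly flag the notational slip (the left-hand side must be the average over $i$) and the fact that the eigenvalue control of $N^{-1}\sum_i Z_iZ_i^{\intercal}$ and $\|\Psi_{Nt}^{-1}\|$ comes from (\ref{EQ:ZZ}) only with probability approaching one, uniformly in $t$, which is indeed how the bound on $R_{Nt}^{\ast}$ should be read.
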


\begin{proof}
The proofs of Lemmas \ref{LEM:Gtildatilda} and \ref{LEM:Gstar} follow
the same procedure as in Lemmas A.4 and A.7 of Horowitz and Lee (2005) by
using the results (\ref{EQ:Rjt}) and (\ref{EQ:ZZ}) which hold uniformly in
$t=1,...,T$.
\end{proof}

\begin{lemma}
\label{LEM:LNtfg} Under Conditions (C2) and (C3),
\[
E\{L_{Nt}(f_{t},g)|\mathbf{X,F}\}=-(f_{t}-f_{t}^{0})^{\intercal}%
E(W_{Nt,1}|\mathbf{X,F)+}\frac{1}{2}(f_{t}-f_{t}^{0})^{\intercal}\Lambda
_{Nt}^{0}(f_{t}-f_{t}^{0})+o_{p}(||f_{t}-f_{t}^{0}||^{2}),
\]
uniformly in $||\boldsymbol{\lambda}_{j}-\boldsymbol{\lambda}_{j}^{0}%
||\leq\widetilde{C}$$d_{NT}^{\ast}$ and $||f_{t}-f_{t}^{0}||\leq
\varpi_{N}$, where $W_{Nt,1}$ is defined in \ref{EQ:WNt1ap} and $g_{j}(x_{j})=B_{j}(x_{j})^{\intercal}\boldsymbol{\lambda
}_{j}$.
\end{lemma}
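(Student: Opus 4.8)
The plan is to apply Knight's (1998) identity for the check function, take the conditional expectation, and extract a quadratic form whose coefficient matrix converges to $\Lambda_{Nt}^{0}$. Write $u_{i}=y_{it}-f_{t}^{0\intercal}G_{i}(X_{i})$ and $v_{i}=(f_{t}-f_{t}^{0})^{\intercal}G_{i}(X_{i})$, so that $y_{it}-f_{t}^{\intercal}G_{i}(X_{i})=u_{i}-v_{i}$. Knight's identity gives
\[
\rho_{\tau}(u_{i}-v_{i})-\rho_{\tau}(u_{i})=-v_{i}\psi_{\tau}(u_{i})+\int_{0}^{v_{i}}\{I(u_{i}\leq s)-I(u_{i}\leq0)\}\,ds,
\]
hence, by the definition of $W_{Nt,1}$ in (\ref{EQ:WNt1}), $L_{Nt}(f_{t},g)=-(f_{t}-f_{t}^{0})^{\intercal}W_{Nt,1}+R_{Nt}$ with $R_{Nt}=N^{-1}\sum_{i=1}^{N}\int_{0}^{v_{i}}\{I(u_{i}\leq s)-I(u_{i}\leq0)\}\,ds$. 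Taking $E(\cdot\,|\,\mathbf{X,F})$ reproduces the linear term $-(f_{t}-f_{t}^{0})^{\intercal}E(W_{Nt,1}|\mathbf{X,F})$ verbatim, so the lemma reduces to showing $E(R_{Nt}|\mathbf{X,F})=\tfrac{1}{2}(f_{t}-f_{t}^{0})^{\intercal}\Lambda_{Nt}^{0}(f_{t}-f_{t}^{0})+o_{p}(\|f_{t}-f_{t}^{0}\|^{2})$, uniformly over the two balls.

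Next I would compute the conditional mean of the integral. Since $u_{i}=\varepsilon_{it}+\Delta_{i}$ with $\Delta_{i}=\sum_{j=1}^{J}\{g_{j}^{0}(X_{ji})-g_{j}(X_{ji})\}f_{jt}^{0}$ being $(\mathbf{X,F})$-measurable, $E\{I(u_{i}\leq s)|\mathbf{X,F}\}=F_{i}(s-\Delta_{i}|X_{i},f_{t})$, so
\[
E(R_{Nt}|\mathbf{X,F})=N^{-1}\sum_{i=1}^{N}\int_{0}^{v_{i}}\{F_{i}(s-\Delta_{i}|X_{i},f_{t})-F_{i}(-\Delta_{i}|X_{i},f_{t})\}\,ds.
\]
The Lipschitz bound on $p_{i}$ in Condition (C2) yields $F_{i}(s-\Delta_{i}|\cdot)-F_{i}(-\Delta_{i}|\cdot)=p_{i}(0|X_{i},f_{t})\,s+O(|s|(|s|+|\Delta_{i}|))$, and integrating gives $\int_{0}^{v_{i}}\{\cdots\}\,ds=\tfrac{1}{2}p_{i}(0|X_{i},f_{t})v_{i}^{2}+O(|v_{i}|^{3}+v_{i}^{2}|\Delta_{i}|)$. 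Summing, the leading term is $\tfrac{1}{2}(f_{t}-f_{t}^{0})^{\intercal}\{N^{-1}\sum_{i=1}^{N}p_{i}(0|X_{i},f_{t})G_{i}(X_{i})G_{i}(X_{i})^{\intercal}\}(f_{t}-f_{t}^{0})$.

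To replace the empirical Hessian by $\Lambda_{Nt}^{0}$ I would first note that on $\|\boldsymbol{\lambda}_{j}-\boldsymbol{\lambda}_{j}^{0}\|\leq\widetilde{C}d_{NT}^{\ast}$, $\sup_{x_{j}}|g_{j}(x_{j})-g_{j}^{0}(x_{j})|\leq\sup_{x_{j}}\|B_{j}(x_{j})\|\,\|\boldsymbol{\lambda}_{j}-\boldsymbol{\lambda}_{j}^{0}\|+\sup_{x_{j}}|r_{j}(x_{j})|=O_{p}(K_{N}^{1/2}d_{NT}^{\ast}+K_{N}^{-r})=o_{p}(1)$, using $\sup_{x_{j}}\|B_{j}(x_{j})\|=O(K_{N}^{1/2})$ and $K_{N}^{1/2}\phi_{NT}=o(1)$; since $g_{j}^{0}$ is bounded (Condition (C3)) this gives $\|G_{i}(X_{i})G_{i}(X_{i})^{\intercal}-G_{i}^{0}(X_{i})G_{i}^{0}(X_{i})^{\intercal}\|=o_{p}(1)$ uniformly in $i$ and over the $\boldsymbol{\lambda}$-ball, and $\max_{i}|\Delta_{i}|=o_{p}(1)$. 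With $p_{i}(0|\cdot)$ bounded, this and the weak-dependence law of large numbers---equivalently $\|N^{-1}\sum_{i}p_{i}(0|X_{i},f_{t})G_{i}^{0}(X_{i})G_{i}^{0}(X_{i})^{\intercal}-\Lambda_{Nt}^{0}\|=o_{p}(1)$, shown in the proof of Lemma \ref{LEM:fstar}---yield $N^{-1}\sum_{i}p_{i}(0|X_{i},f_{t})G_{i}(X_{i})G_{i}(X_{i})^{\intercal}=\Lambda_{Nt}^{0}+o_{p}(1)$ uniformly. The same facts make the largest eigenvalue of $N^{-1}\sum_{i}G_{i}(X_{i})G_{i}(X_{i})^{\intercal}$ equal to $O_{p}(1)$, so $N^{-1}\sum_{i}v_{i}^{2}=O_{p}(\|f_{t}-f_{t}^{0}\|^{2})$ and $\max_{i}|v_{i}|\leq C\|f_{t}-f_{t}^{0}\|\leq C\varpi_{N}=o(1)$; hence $N^{-1}\sum_{i}|v_{i}|^{3}\leq\max_{i}|v_{i}|\cdot N^{-1}\sum_{i}v_{i}^{2}=o_{p}(\|f_{t}-f_{t}^{0}\|^{2})$ and $N^{-1}\sum_{i}v_{i}^{2}|\Delta_{i}|=o_{p}(\|f_{t}-f_{t}^{0}\|^{2})$, all uniformly over the two balls. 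Collecting terms gives the claim.

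The analytic core---Knight's identity plus a one-term density expansion---is routine; the real work is the bookkeeping needed to make every bound \emph{uniform} over the shrinking $\boldsymbol{\lambda}$-ball and the $f_{t}$-neighbourhood. The one step that needs genuine care is controlling the plug-in bias $\Delta_{i}$ and justifying the substitution of $G_{i}^{0}$ for $G_{i}$ in the Hessian: these rest on the uniform basis bound $\sup_{x_{j}}\|B_{j}(x_{j})\|=O(K_{N}^{1/2})$ together with $K_{N}^{1/2}\phi_{NT}=o(1)$, which is precisely where the hypotheses $K_{N}^{4}N^{-1}=o(1)$ and $K_{N}^{-r+2}(\log T)=o(1)$ enter.
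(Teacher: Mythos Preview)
Your proof is correct and follows essentially the same route as the paper: Knight's identity to isolate the linear term $-(f_t-f_t^0)^{\intercal}W_{Nt,1}$, then a one-term density expansion of the conditional CDF (via the Lipschitz condition in (C2)) to extract the quadratic in $(f_t-f_t^0)$, followed by replacing the empirical Hessian $N^{-1}\sum_i p_i(0|X_i,f_t)G_i(X_i)G_i(X_i)^{\intercal}$ by $\Lambda_{Nt}^{0}$ using $\sup_{x_j}|g_j-g_j^0|=o(1)$ and a law-of-large-numbers argument. Your bookkeeping of the remainder terms $N^{-1}\sum_i|v_i|^3$ and $N^{-1}\sum_i v_i^2|\Delta_i|$ is in fact more explicit than the paper's, which simply writes the error as $o[(f_t-f_t^0)^{\intercal}\{N^{-1}\sum_i G_i(X_i)G_i(X_i)^{\intercal}\}(f_t-f_t^0)]$ and then invokes $\sup_x|f_t^{0\intercal}(G_i(x)-G_i^0(x))|=o(1)$.
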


\begin{proof}
By using the identity of Knight (1998) that
\[
\rho_{\tau}(u-v)-\rho_{\tau}(u)=-v\psi_{\tau}(u)+\int\nolimits_{0}^{v}(I(u\leq
s)-I(u\leq0))ds,
\]
we have%
\begin{align}
&  \rho_{\tau}(y_{it}-f_{t}^{\intercal}G_{i}(X_{i}))-\rho_{\tau}(y_{it}%
-f_{t}^{0\intercal}G_{i}(X_{i}))\nonumber\\
&  =-(f_{t}-f_{t}^{0})^{\intercal}G_{i}(X_{i})\psi_{\tau}(y_{it}%
-f_{t}^{0\intercal}G_{i}(X_{i}))\nonumber\\
&  +\int\nolimits_{0}^{(f_{t}-f_{t}^{0})^{\intercal}G_{i}(X_{i})}\left(
I(y_{it}-f_{t}^{0\intercal}G_{i}(X_{i})\leq s)-I(y_{it}-f_{t}^{0\intercal
}G_{i}(X_{i})\leq0)\right)  ds.\label{EQ:rhotau}%
\end{align}
By Lipschitz continuity of $p_{i}(\varepsilon|X_{i},f_{t})$ given in Condition
(C2) and boundedness of $f_{jt}^{0}$ in Condition (C3), we have%
\begin{align*}
&  F_{i}\{f_{t}^{0\intercal}(G_{i}(X_{i})-G_{i}^{0}(X_{i}))+s|X_{i}%
,f_{t}\}-F_{i}\{f_{t}^{0\intercal}(G_{i}(X_{i})-G_{i}^{0}(X_{i}))|X_{i}%
,f_{t}\}\\
&  =sp_{i}\{f_{t}^{0\intercal}(G_{i}(X_{i})-G_{i}^{0}(X_{i}))|X_{i}%
,f_{t}\}+o(s),
\end{align*}
where $o(\cdot)$ holds uniformly in $||\boldsymbol{\lambda}_{j}%
-\boldsymbol{\lambda}_{j}^{0}||\leq\widetilde{C}$$d_{NT}^{\ast}$ and
$||f_{t}-f_{t}^{0}||\leq\varpi_{N}$. Then we have
\begin{align}
&  E\{L_{Nt}(f_{t},g)|\mathbf{X,F}\}\nonumber\\
&  =-(f_{t}-f_{t}^{0})^{\intercal}E(W_{Nt,1}|\mathbf{X,F)}+N^{-1}%
\sum\nolimits_{i=1}^{N}\int\nolimits_{0}^{(f_{t}-f_{t}^{0})^{\intercal}%
G_{i}(X_{i})}[F_{i}\{f_{t}^{0\intercal}(G_{i}(X_{i})-G_{i}^{0}(X_{i}%
))+s|X_{i},f_{t}\}\nonumber\\
&  -F_{i}\{f_{t}^{0\intercal}(G_{i}(X_{i})-G_{i}^{0}(X_{i}))|X_{i}%
,f_{t}\}]ds\nonumber\\
&  =-(f_{t}-f_{t}^{0})^{\intercal}E(W_{Nt,1}|\mathbf{X,F)}+N^{-1}%
\sum\nolimits_{i=1}^{N}\int\nolimits_{0}^{(f_{t}-f_{t}^{0})^{\intercal}%
G_{i}(X_{i})}[sp_{i}\{f_{t}^{0\intercal}(G_{i}(X_{i})-G_{i}^{0}(X_{i}%
))|X_{i},f_{t}\}]ds\nonumber\\
&  +o\left[  (f_{t}-f_{t}^{0})^{\intercal}\{N^{-1}\sum\nolimits_{i=1}^{N}%
G_{i}(X_{i})G_{i}(X_{i})^{\intercal}\}(f_{t}-f_{t}^{0})\right]  \nonumber\\
&  =-(f_{t}-f_{t}^{0})^{\intercal}E(W_{Nt,1}|\mathbf{X,F)}+\frac{1}{2}%
(f_{t}-f_{t}^{0})^{\intercal}\times\nonumber\\
&  \left[  N^{-1}\sum\nolimits_{i=1}^{N}p_{i}\{f_{t}^{0\intercal}(G_{i}%
(X_{i})-G_{i}^{0}(X_{i}))|X_{i},f_{t}\}G_{i}(X_{i})G_{i}(X_{i})^{\intercal
}\right]  (f_{t}-f_{t}^{0})\nonumber\\
&  +o\left[  (f_{t}-f_{t}^{0})^{\intercal}\{N^{-1}\sum\nolimits_{i=1}^{N}%
G_{i}(X_{i})G_{i}(X_{i})^{\intercal}\}(f_{t}-f_{t}^{0})\right]
.\label{EQ:ELNtftg}%
\end{align}
Since $\sup\nolimits_{x_{j}\in\lbrack a,b]}|g_{j}(x_{j})-g_{j}^{0}%
(x_{j})|=o(1)$, then $\sup_{x\in\mathcal{X}}|f_{t}^{0\intercal}(G_{i}%
(x)-G_{i}^{0}(x))|=o(1)$. By similar reasoning to the proof for Theorem 2 in
Lee and Robinson (2016), we have \newline $N^{-1}\sum\nolimits_{i=1}^{N}G_{i}%
(X_{i})G_{i}(X_{i})^{\intercal}=N^{-1}\sum\nolimits_{i=1}^{N}E\{G_{i}(X_{i})G_{i}(X_{i})^{\intercal
}\}+o_{p}(1)$. Hence, by these results, we have the result
in Lemma \ref{LEM:LNtfg}.
\end{proof}

\begin{lemma}
 \label{LEM:WNt2} Under Conditions (C2) and (C3), we have
\[
W_{Nt,2}(f_{t},g)-E(W_{Nt,2}(f_{t},g)|\mathbf{X,F)=}o_{p}(||f_{t}-f_{t}%
^{0}||^{2}+N^{-1})
\]
uniformly in $||\boldsymbol{\lambda}_{j}-\boldsymbol{\lambda}_{j}^{0}%
||\leq\widetilde{C}$$d_{NT}^{\ast}$ and $||f_{t}-f_{t}^{0}||\leq
\varpi_{N}$, where $W_{Nt,2}(f_{t},g)$ is defined in (\ref{EQ:WNt2}) and
$g_{j}(x_{j})=B_{j}(x_{j})^{\intercal}\boldsymbol{\lambda}_{j}$.
\end{lemma}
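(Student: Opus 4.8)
The plan is to reduce $W_{Nt,2}(f_{t},g)$ to an average of small, conditionally centered terms through the Knight (1998) identity, bound their second moments, and then upgrade the resulting pointwise rate to a uniform one by the same peeling-plus-net device used in the proof of Lemma~\ref{LEM:Gtildatheta-theta0}.

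First I would apply the identity in (\ref{EQ:rhotau}), with $u_{it}=y_{it}-f_{t}^{0\intercal}G_{i}(X_{i})$ and $v_{it}=(f_{t}-f_{t}^{0})^{\intercal}G_{i}(X_{i})$, inside the definition (\ref{EQ:WNt2}); the linear $\psi_{\tau}$ terms cancel, leaving
\[
W_{Nt,2}(f_{t},g)=N^{-1}\sum\nolimits_{i=1}^{N}\xi_{it},\qquad \xi_{it}=\int\nolimits_{0}^{v_{it}}\{I(u_{it}\leq s)-I(u_{it}\leq0)\}\,ds.
\]
For $\|\boldsymbol{\lambda}_{j}-\boldsymbol{\lambda}_{j}^{0}\|\leq\widetilde{C}d_{NT}^{\ast}$ one has $\sup_{x_{j}}|g_{j}(x_{j})-g_{j}^{0}(x_{j})|=O(K_{N}^{1/2}d_{NT}^{\ast})=o(1)$ by the spline approximation bound and $\sup_{x_{j}}\|B_{j}(x_{j})\|=O(K_{N}^{1/2})$, so by (C3) the vector $G_{i}(X_{i})$ is bounded uniformly over the parameter region; hence $|v_{it}|\leq C\|f_{t}-f_{t}^{0}\|$ and $|\xi_{it}|\leq C\|f_{t}-f_{t}^{0}\|\,I(|u_{it}|\leq C\|f_{t}-f_{t}^{0}\|)$.

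Next I would compute moments. Since $u_{it}=\varepsilon_{it}+f_{t}^{0\intercal}(G_{i}^{0}(X_{i})-G_{i}(X_{i}))$ is $\varepsilon_{it}$ shifted by an $o(1)$ amount, its conditional density given $(X_{i},f_{t})$ is bounded by (C2), so $P(|u_{it}|\leq\delta\,|\,X_{i},f_{t})\leq C\delta$ for small $\delta$, which gives $E\xi_{it}^{2}\leq C\|f_{t}-f_{t}^{0}\|^{3}$. Combining this with the exponential $\phi$-mixing covariance bound from (C1) (exactly as in the displays bounding $E\|(NT)^{-1}\sum_{t}\sum_{i}\psi_{it}\|^{2}$ above) yields, for each fixed $(f_{t},g)$ in the region,
\[
\mathrm{Var}\{W_{Nt,2}(f_{t},g)-E(W_{Nt,2}(f_{t},g)|\mathbf{X,F})\}=O(N^{-1}\|f_{t}-f_{t}^{0}\|^{3}),
\]
so the deviation is $O_{p}(N^{-1/2}\|f_{t}-f_{t}^{0}\|^{3/2})$ pointwise; a direct check gives $N^{-1/2}a^{3/2}\leq CN^{-1/4}(a^{2}+N^{-1})$ for every $a=\|f_{t}-f_{t}^{0}\|>0$, so pointwise the deviation is $o_{p}(\|f_{t}-f_{t}^{0}\|^{2}+N^{-1})$.

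Finally, to make this uniform I would peel the ball $\{\|f_{t}-f_{t}^{0}\|\leq\varpi_{N}\}$ into $O(\log N)$ dyadic shells of radius $r$ down to $r\asymp N^{-1/2}$, plus the core ball $\{\|f_{t}-f_{t}^{0}\|\leq N^{-1/2}\}$; on each shell I cover $\{(f_{t},\boldsymbol{\lambda}):\ \|f_{t}-f_{t}^{0}\|\leq r,\ \|\boldsymbol{\lambda}_{j}-\boldsymbol{\lambda}_{j}^{0}\|\leq\widetilde{C}d_{NT}^{\ast}\}$ by a grid of polynomial cardinality and mesh $N^{-C}$, apply the Bernstein inequality of Lemma~\ref{LEM:Bernstein} to the centered $\xi_{it}$ at each grid point (the bounds $|\xi_{it}|\leq Cr$ and $E\xi_{it}^{2}\leq Cr^{3}$ feed directly into the tail), and control the oscillation between neighbouring grid points using the monotonicity of $s\mapsto I(u_{it}\leq s)$ together with a deterministic conditional-probability bound, exactly as in Lemma A.5 of Horowitz and Lee (2005) and in the proof of Lemma~\ref{LEM:Gtildatheta-theta0}. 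The union bound over the $O(\log N)$ shells and polynomially many grid points costs only sub-polynomial factors, so on the shell of radius $r$ the uniform deviation is $O_{p}(N^{-1/2}r^{3/2}\,\mathrm{polylog}(N))=o_{p}(r^{2})$, and $o_{p}(N^{-1})$ on the core ball, which is the claim. The main obstacle is precisely this peeling step: because the target $\|f_{t}-f_{t}^{0}\|^{2}+N^{-1}$ shrinks with $\|f_{t}-f_{t}^{0}\|$, a single net over the whole $\varpi_{N}$-ball will not suffice, and one has to verify that the $N^{-1/2}\|f_{t}-f_{t}^{0}\|^{3/2}$ rate beats the target uniformly across all scales between $N^{-1/2}$ and $\varpi_{N}$, absorbing the mild logarithmic losses from the net.
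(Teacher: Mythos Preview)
Your proposal is correct and follows essentially the same route as the paper: Knight's identity to isolate the integral remainder, the bound $E\xi_{it}^{2}\leq C\|f_{t}-f_{t}^{0}\|^{3}$ via the bounded conditional density in (C2), and then Bernstein plus a net as in Lemma~\ref{LEM:Gtildatheta-theta0} for uniformity. The only cosmetic differences are that the paper packages the uniform statement as $\sup\|f_{t}-f_{t}^{0}\|^{-3/2}|W_{Nt,2}-E(W_{Nt,2}|\mathbf{X,F})|=O_{p}(N^{-1/2})$ and then converts via $N^{-1/2}a^{3/2}\leq a^{1/2}(a^{2}+N^{-1})\leq\varpi_{N}(a^{2}+N^{-1})$, whereas you make the dyadic peeling explicit and use $N^{-1/2}a^{3/2}\leq CN^{-1/4}(a^{2}+N^{-1})$; both are valid and amount to the same argument.
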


\begin{proof}
By (\ref{EQ:rhotau}), we have%
\[
W_{Nt,2i}(f_{t},g)=\int\nolimits_{0}^{(f_{t}-f_{t}^{0})^{\intercal}G_{i}%
(X_{i})}\left(  I(y_{it}-f_{t}^{0\intercal}G_{i}(X_{i})\leq s)-I(y_{it}%
-f_{t}^{0\intercal}G_{i}(X_{i})\leq0)\right)  ds,
\]
and thus
\begin{align*}
E(W_{Nt,2i}(f_{t},g)|X_{i},f_{t}\mathbf{)} &  \mathbf{=}\int\nolimits_{0}%
^{(f_{t}-f_{t}^{0})^{\intercal}G_{i}(X_{i})}[F_{i}\{f_{t}^{0\intercal}%
(G_{i}(X_{i})-G_{i}^{0}(X_{i}))+s|X_{i},f_{t}\}\\
&  -F_{i}\{f_{t}^{0\intercal}(G_{i}(X_{i})-G_{i}^{0}(X_{i}))|X_{i},f_{t}\}]ds.
\end{align*}
By following the same reasoning as the proof for (\ref{EQ:ELNtftg}), we have
\[
\sup_{X_{i}\mathbf{\in\lbrack}a,b\mathbf{]}^{J}}|E(W_{Nt,2i}(f_{t}%
,g)|X_{i},f_{t}\mathbf{)-}\frac{1}{2}(f_{t}-f_{t}^{0})^{\intercal}%
p_{i}(0|X_{i},f_{t})G_{i}(X_{i})G_{i}(X_{i})^{\intercal}(f_{t}-f_{t}%
^{0})|=o_{p}(||f_{t}-f_{t}^{0}||^{2}).
\]
Hence with probability approaching $1$, as $N\rightarrow\infty$,
\[
\sup_{X_{i}\mathbf{\in\lbrack}a,b\mathbf{]}^{J}}|E(W_{Nt,2i}(f_{t}%
,g)|X_{i},f_{t}\mathbf{)|}\leq C_{W}||f_{t}-f_{t}^{0}||^{2},
\]
for some constant $0<C_{W}<\infty$. Moreover,
\begin{align*}
&  E\{W_{Nt,2i}(f_{t},g)\}^{2}\\
&  =E[E[\{\int\nolimits_{0}^{(f_{t}-f_{t}^{0})^{\intercal}G_{i}(X_{i}%
)}(I(y_{it}-f_{t}^{0\intercal}G_{i}(X_{i})\leq s)-I(y_{it}-f_{t}^{0\intercal
}G_{i}(X_{i})\leq0))ds\}^{2}|X_{i},f_{t}]]\\
&  \leq E[E[|I(y_{it}-f_{t}^{0\intercal}G_{i}(X_{i})\leq(f_{t}-f_{t}%
^{0})^{\intercal}G_{i}(X_{i}))-I(y_{it}-f_{t}^{0\intercal}G_{i}(X_{i}%
)\leq0)|\\
&  \times\{(f_{t}-f_{t}^{0})^{\intercal}G_{i}(X_{i})\}^{2}|X_{i},f_{t}]]\\
&  =E[E[|I(\varepsilon_{it}\leq f_{t}{}^{\intercal}G_{i}(X_{i})-f_{t}%
^{0\intercal}G_{i}(X_{i})^{0})-I(\varepsilon_{it}\leq f_{t}^{0\intercal}%
(G_{i}(X_{i})-G_{i}(X_{i})^{0})|\\
&  \times\{(f_{t}-f_{t}^{0})^{\intercal}G_{i}(X_{i})\}^{2}|X_{i},f_{t}]]\\
&  \leq C^{\prime\prime}E|(f_{t}{}-f_{t}^{0})^{\intercal}G_{i}(X_{i})|^{3}\leq
C^{\prime\prime\prime}E||f_{t}{}-f_{t}^{0}||^{3}%
\end{align*}
for some constants $0<C^{\prime\prime}<\infty$ and $0<C^{\prime\prime\prime
}<\infty$. Therefore, for $N\rightarrow\infty$,
\begin{align*}
&  E\{W_{Nt,2}(f_{t},g)-E(W_{Nt,2}(f_{t},g)|\mathbf{X,F)}\}^{2}\\
&  =N^{-2}\sum\nolimits_{i=1}^{N}E\left[  W_{Nt,2i}(f_{t},g)-E(W_{Nt,2i}%
(f_{t},g)|X_{i},f_{t}\mathbf{)}\right]  ^{2}\\
&  \leq N^{-2}\sum\nolimits_{i=1}^{N}[2E\{W_{Nt,2i}(f_{t},g)\}^{2}%
+2E[E(W_{Nt,2i}(f_{t},g)|X_{i},f_{t}\mathbf{)}]^{2}]\\
&  \leq N^{-1}(2C^{\prime\prime\prime}E||f_{t}{}-f_{t}^{0}||^{3}+2C_{W}%
^{2}E||f_{t}{}-f_{t}^{0}||^{4})\leq C^{\prime\prime\prime\prime}N^{-1}%
E||f_{t}{}-f_{t}^{0}||^{3},
\end{align*}
for some constant $0<C^{\prime\prime\prime\prime}<\infty$. Following the same
routine procedure as the proof in Lemma \ref{LEM:Gtildatheta-theta0} by
applying the Bernstein's inequality, we have
\[
\sup_{||\boldsymbol{\lambda}_{j}-\boldsymbol{\lambda}_{j}^{0}||\leq
\widetilde{C}d_{NT}^{\ast},||f_{t}-f_{t}^{0}||\leq\varpi_{N}}||f_{t}-f_{t}%
^{0}||^{-3/2}|W_{Nt,2}(f_{t},g)-E(W_{Nt,2}(f_{t},g)|\mathbf{X,F)}%
|=O_{p}(N^{-1/2}).
\]
Hence, we have $|W_{Nt,2}(f_{t},g)-E(W_{Nt,2}(f_{t},g)|\mathbf{X,F)}%
|=O_{p}(||f_{t}-f_{t}^{0}||^{-3/2}N^{-1/2})$, uniformly in
$||\boldsymbol{\lambda}_{j}-\boldsymbol{\lambda}_{j}^{0}||\leq\widetilde{C}$%
$d_{NT}^{\ast}$  and $||f_{t}-f_{t}^{0}||\leq\varpi_{N}$. Since
\begin{align*}
N^{-1/2}||f_{t}-f_{t}^{0}||^{3/2} &  \leq N^{-1}||f_{t}-f_{t}^{0}%
||^{1/2}+||f_{t}-f_{t}^{0}||^{2}||f_{t}-f_{t}^{0}||^{1/2}\\
&  \leq N^{-1}\varpi_{N}+||f_{t}-f_{t}^{0}||^{2}\varpi_{N}\text{,}%
\end{align*}
then we have $W_{Nt,2}(f_{t},g)-E(W_{Nt,2}(f_{t},g)|\mathbf{X,F})=%
o_{p}(||f_{t}-f_{t}^{0}||^{2}+N^{-1})$, uniformly in $||\boldsymbol{\lambda
}_{j}-\boldsymbol{\lambda}_{j}^{0}||\leq\widetilde{C}$$d_{NT}^{\ast}$
and $||f_{t}-f_{t}^{0}||\leq\varpi_{N}$.
\end{proof}

\begin{lemma}
 \label{LEM:WNt1} Under Conditions (C1)-(C3), for any
$t$ there is a stochastically bounded sequence $\delta_{N,jt}$ such that as
$N\rightarrow\infty$,
\[
W_{Nt,1}=N^{-1}\sum\nolimits_{i=1}^{N}G_{i}^{0}(X_{i})\psi_{\tau}%
(\varepsilon_{it})+d_{NT}\delta_{N,t}+o_{p}(N^{-1/2}),
\]
uniformly in $||\boldsymbol{\lambda}_{j}-\boldsymbol{\lambda}_{j}^{0}%
||\leq\widetilde{C}$$d_{NT}^{\ast}$, where $W_{Nt,1}$ is defined in
(\ref{EQ:WNt1}), $\delta_{N,t}=(\delta_{N,jt},0\leq j\leq J)^{\intercal}$and
$g_{j}(x_{j})=B_{j}(x_{j})^{\intercal}\boldsymbol{\lambda}_{j}$.
\end{lemma}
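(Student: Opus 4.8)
The plan is to isolate, inside $W_{Nt,1}$, the effect of replacing the true loadings $g_j^0$ by the spline loadings $g_j(x_j)=B_j(x_j)^\intercal\boldsymbol\lambda_j$ and to show that it is a drift of order $d_{NT}$ plus a stochastically negligible part. Write $y_{it}-f_t^{0\intercal}G_i(X_i)=\varepsilon_{it}+\Delta_i$ with
\[
\Delta_i=f_t^{0\intercal}\{G_i^0(X_i)-G_i(X_i)\}=-\sum_{j=1}^J f_{jt}^0\{B_j(X_{ji})^\intercal(\boldsymbol\lambda_j-\boldsymbol\lambda_j^0)+r_j(X_{ji})\},
\]
where $g_j^0=B_j^\intercal\boldsymbol\lambda_j^0+r_j$ is the de Boor (2001) spline approximation with $\sup_j\sup_{x_j}|r_j(x_j)|=O(K_N^{-r})$. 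I would first record, uniformly over $\mathcal{B}=\{\|\boldsymbol\lambda_j-\boldsymbol\lambda_j^0\|\le\widetilde{C}d_{NT}^{\ast},\ 1\le j\le J\}$: (i) $|\Delta_i|\le C\|G_i(X_i)-G_i^0(X_i)\|$ and $\sup_x\|G_i(x)-G_i^0(x)\|=o(1)$; (ii) $E\|G_i(X_i)-G_i^0(X_i)\|^2=O((d_{NT}^{\ast})^2+K_N^{-2r})=O(d_{NT}^2)$, using that the Gram matrix $E\{B_j(X_j)B_j(X_j)^\intercal\}$ has eigenvalues bounded away from $0$ and $\infty$ under (C2); (iii) $\lambda_{\max}\{N^{-1}\sum_i B_j(X_{ji})B_j(X_{ji})^\intercal\}=O_p(1)$.

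Next decompose
\[
W_{Nt,1}-N^{-1}\sum_i G_i^0(X_i)\psi_\tau(\varepsilon_{it})=A_N+B_N,
\]
\[
A_N=N^{-1}\sum_i G_i^0(X_i)\{\psi_\tau(\varepsilon_{it}+\Delta_i)-\psi_\tau(\varepsilon_{it})\},\qquad B_N=N^{-1}\sum_i\{G_i(X_i)-G_i^0(X_i)\}\psi_\tau(\varepsilon_{it}+\Delta_i),
\]
and split $B_N=B_{N,1}+B_{N,2}$ with $B_{N,1}=N^{-1}\sum_i\{G_i(X_i)-G_i^0(X_i)\}\psi_\tau(\varepsilon_{it})$, and $A_N=A_{N,1}+A_{N,2}$ with $A_{N,1}=N^{-1}\sum_i G_i^0(X_i)E[\psi_\tau(\varepsilon_{it}+\Delta_i)-\psi_\tau(\varepsilon_{it})\,|\,X_i,f_t]$. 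The two genuinely stochastic pieces $B_{N,1}$ and $A_{N,2}$ have, for each fixed $\boldsymbol\lambda\in\mathcal{B}$, summands that are $\sigma(W_{it})$-measurable and centered conditionally on $(X_i,f_t)$ — for $B_{N,1}$ because $E[\psi_\tau(\varepsilon_{it})\,|\,X_i,f_t]=\tau-F_i(0\,|\,X_i,f_t)=0$, for $A_{N,2}$ by construction — with squared norm bounded by $O(\|G_i-G_i^0\|^2)$ and by $C\,E[I(|\varepsilon_{it}|\le|\Delta_i|)\,|\,X_i,f_t]\le C'\|G_i-G_i^0\|$ respectively (here $\|G_i^0\|\le C$ by (C3), and the indicator bound uses the Lipschitz density of (C2)). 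The exponential $\phi$-mixing of (C1) then gives $E\|B_{N,1}\|^2=O(d_{NT}^2/N)$ and $E\|A_{N,2}\|^2=O(d_{NT}/N)$, hence both are $o_p(N^{-1/2})$ pointwise in $\boldsymbol\lambda$. Uniformity over $\mathcal{B}$ follows exactly as in the proof of Lemma \ref{LEM:Gtildatheta-theta0}: cover $\mathcal{B}$ by $O((2N^2)^{JK_N})$ cubes, bound the sums at the cube centres by the Bernstein inequality of Lemma \ref{LEM:Bernstein} together with a union bound, and control the within-cube oscillation by Lipschitz continuity in $\boldsymbol\lambda$ (for $B_{N,1}$) and by bracketing the indicator between shifted indicators, following Lemma A.5 of Horowitz and Lee (2005) (for $A_{N,2}$); the side conditions $K_N^4N^{-1}=o(1)$, $K_N^{-r+2}(\log T)=o(1)$ and $K_N^{-1}(\log NT)(\log N)^4=o(1)$ are precisely what forces the resulting $\log$-inflated remainders to be $o_p(N^{-1/2})$.

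It remains to extract the bias from $B_{N,2}$ and $A_{N,1}$. For $B_{N,2}$, $|\psi_\tau(\varepsilon+\Delta)-\psi_\tau(\varepsilon)|\le I(|\varepsilon|\le|\Delta|)$ gives $\|B_{N,2}\|\le N^{-1}\sum_i\|G_i-G_i^0\|\,I(|\varepsilon_{it}|\le C\|G_i-G_i^0\|)$, whose expectation is $\le C\,N^{-1}\sum_i E\|G_i-G_i^0\|^2=O(d_{NT}^2)$, so $\sup_{\mathcal{B}}\|B_{N,2}\|=O_p(d_{NT}^2)$. For $A_{N,1}$, the Lipschitz density of (C2) yields $E[\psi_\tau(\varepsilon_{it}+\Delta_i)-\psi_\tau(\varepsilon_{it})\,|\,X_i,f_t]=\tau-F_i(-\Delta_i\,|\,X_i,f_t)=\int_{-\Delta_i}^0 p_i(s\,|\,X_i,f_t)\,ds=\Delta_i\,p_i(0\,|\,X_i,f_t)+O(\Delta_i^2)$, uniformly, so substituting the formula for $\Delta_i$,
\[
A_{N,1}=-\sum_{j=1}^J f_{jt}^0\,M_{Nj}\,(\boldsymbol\lambda_j-\boldsymbol\lambda_j^0)-\sum_{j=1}^J f_{jt}^0\,N^{-1}\sum_i G_i^0(X_i)p_i(0\,|\,X_i,f_t)r_j(X_{ji})+O_p(d_{NT}^2),
\]
where $M_{Nj}=N^{-1}\sum_i G_i^0(X_i)p_i(0\,|\,X_i,f_t)B_j(X_{ji})^\intercal$ has operator norm $\le C\,\lambda_{\max}\{N^{-1}\sum_i B_j(X_{ji})B_j(X_{ji})^\intercal\}^{1/2}=O_p(1)$ by (iii) and the boundedness of $G_i^0$ and $p_i(0\,|\,X_i,f_t)$. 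Hence the first term of $A_{N,1}$ is $O_p(d_{NT}^{\ast})$, the second is $O_p(K_N^{-r})$, and, together with $B_{N,2}=O_p(d_{NT}^2)$, all of this is $O_p(d_{NT})+o_p(N^{-1/2})$; defining $d_{NT}\delta_{N,t}$ to be the collected bias $A_{N,1}+B_{N,2}$ gives the claimed expansion with $\delta_{N,t}=(\delta_{N,jt},0\le j\le J)^\intercal$ stochastically bounded over $\mathcal{B}$. I expect the main obstacle to be the uniform-in-$\boldsymbol\lambda$ control of $A_{N,2}$: its summands are not Lipschitz in $\boldsymbol\lambda$ because of the jump of $I(\varepsilon_{it}<-\Delta_i)$, so the cube argument must treat the indicator through a bracketing step, and it is there — in combination with the $\phi$-mixing Bernstein bound — that the logarithmic side conditions on $K_N$ are consumed.
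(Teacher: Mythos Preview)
Your proof is correct and follows essentially the same route as the paper's. The paper writes $W_{Nt,1}=W_{Nt,11}+W_{Nt,12}+W_{Nt,13}$, where $W_{Nt,12}$ is your $B_{N,1}$ and $W_{Nt,13}=A_N+B_{N,2}$ (the only cosmetic difference is that the paper keeps $G_i$ rather than $G_i^0$ as the outer factor in the shifted-$\psi_\tau$ term, thereby absorbing your cross piece $B_{N,2}$); it then extracts $d_{NT}\delta_{N,t}$ from $E(W_{Nt,13}\mid\mathbf{X,F})$ via the same Lipschitz/Taylor expansion of $F_i$ and bounds the centered parts by the identical variance-plus-Bernstein/covering routine you invoke from Lemma~\ref{LEM:Gtildatheta-theta0}.
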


\begin{proof}
Write
\begin{equation}
W_{Nt,1}=W_{Nt,11}+W_{Nt,12}+W_{Nt,13},\label{Wnt123}%
\end{equation}
where
\begin{align*}
W_{Nt,11} &  =N^{-1}\sum\nolimits_{i=1}^{N}G_{i}^{0}(X_{i})\psi_{\tau}%
(y_{it}-f_{t}^{0\intercal}G_{i}^{0}(X_{i})),\\
W_{Nt,12} &  =(W_{Ntj,12},0\leq j\leq J)^{\intercal}=N^{-1}\sum\nolimits_{i=1}%
^{N}(G_{i}(X_{i})-G_{i}^{0}(X_{i}))\psi_{\tau}(y_{it}-f_{t}^{0\intercal}%
G_{i}^{0}(X_{i})),\\
W_{Nt,13} &  =(W_{Ntj,13},0\leq j\leq J)^{\intercal}\\
&  =N^{-1}\sum\nolimits_{i=1}^{N}G_{i}(X_{i})\{\psi_{\tau}(y_{it}%
-f_{t}^{0\intercal}G_{i}(X_{i}))-\psi_{\tau}(y_{it}-f_{t}^{0\intercal}%
G_{i}^{0}(X_{i}))\}.
\end{align*}
It is easy to see that $E(W_{Ntj,12})=0$. Also by the $\phi$-mixing
distribution condition given in Condition (C1), we have var$\left(
W_{Ntj,12}\right)  \leq C_{W_{12}}N^{-1}d_{NT}^{2}$ for some constant
$0<C_{W_{12}}<\infty$, then by following the routine procedure as the proof in
Lemma \ref{LEM:Gtildatheta-theta0}, we have
\begin{equation}
\sup\nolimits_{||\boldsymbol{\lambda}_{j}-\boldsymbol{\lambda}_{j}^{0}%
||\leq\widetilde{C}d_{NT}^{\ast}}|W_{Ntj,12}|=o_{p}(N^{-1/2}).\label{EQ:WNt12}%
\end{equation}
Moreover,
\begin{align*}
E(W_{Ntj,13}|\mathbf{X,F)} &  \mathbf{=}N^{-1}\sum\nolimits_{i=1}^{N}%
g_{j}(X_{ji})E\{I(y_{it}-f_{t}^{0\intercal}G_{i}^{0}(X_{i})\leq0)-I(y_{it}%
-f_{t}^{0\intercal}G_{i}(X_{i})\leq0)|X_{i},f_{t}\}\\
&  =N^{-1}\sum\nolimits_{i=1}^{N}g_{j}(X_{ji})\int_{f_{t}^{0\intercal}%
(G_{i}(X_{i})-G_{i}^{0}(X_{i}))}^{0}p_{i}(s|X_{i},f_{t})ds\\
&  =N^{-1}\sum\nolimits_{i=1}^{N}g_{j}(X_{ji})p_{i}(0|X_{i},f_{t}%
)f_{t}^{0\intercal}(G_{i}^{0}(X_{i})-G_{i}(X_{i}))+O(d_{NT}^{2})+o(N^{-1}).
\end{align*}
Let
\[
d_{NT}\delta_{N,jt}=N^{-1}\sum\nolimits_{i=1}^{N}g_{j}(X_{ji})p_{i}%
(0|X_{i},f_{t})f_{t}^{0\intercal}(G_{i}^{0}(X_{i})-G_{i}(X_{i}))+O(d_{NT}%
^{2}).
\]
Since $N^{-1}\sum\nolimits_{i=1}^{N}\{g_{j}(X_{ji})-g_{j}^{0}(X_{ji}%
)\}^{2}\leq(\widetilde{C}d_{NT}^{\ast})^{2}$, then as $N\rightarrow\infty$,
$|d_{NT}\delta_{N,jt}|\leq C_{\delta}$$d_{NT}^{\ast}$ for some
constant $0<C_{\delta}<\infty$. Therefore,
\begin{equation}
E(W_{Ntj,13}|\mathbf{X,F})=d_{NT}\delta_{N,jt}+o(N^{-1/2}).\label{EQ:EWNt13}%
\end{equation}
Also by the $\phi$--mixing condition given in Condition (C1), we have
$E\{W_{Ntj,13}-E(W_{Ntj,13}|\mathbf{X,F)}\}^{2}\leq C_{\delta}^{\prime}%
N^{-1}d_{NT}$ for some constant $0<C_{\delta}^{\prime}<\infty$. \ Therefore,
by following the procedure as the proof in Lemma \ref{LEM:Gtildatheta-theta0},
we have
\begin{equation}
\sup\nolimits_{||\boldsymbol{\lambda}_{j}-\boldsymbol{\lambda}_{j}^{0}%
||\leq\widetilde{C}d_{NT}^{\ast}}|W_{Ntj,13}-E(W_{Ntj,13}|\mathbf{X,F)|=}%
o_{p}(N^{-1/2}).\label{EQ:WNt13}%
\end{equation}
Therefore, the result in Lemma \ref{LEM:WNt1} is proved by (\ref{Wnt123}),
(\ref{EQ:WNt12}), (\ref{EQ:EWNt13}) and (\ref{EQ:WNt13}).
\end{proof}

\begin{lemma}
\label{LEM:VNt11} Let Conditions (C1)-(C4) hold. If, in addition,
$K_{N}^{4}N^{-1}=o(1)$, $K_{N}^{-r+2}(\log T)=o(1)$ and $K_{N}^{-1}(\log
NT)(\log N)^{4}=o(1)$, then we have%
\[
||\widehat{\boldsymbol{\lambda}}^{[1]}-\boldsymbol{\lambda}^{0}-\Psi_{NT}%
^{-1}U_{N,1}||=O_{p}(d_{NT})+o_{p}(N^{-1/2}),
\]
where $U_{NT,1}$ is defined in (\ref{EQ:UNT1}) and $\Psi_{NT}$ is defined in (\ref{EQ:CphNT}).
\end{lemma}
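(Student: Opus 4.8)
The plan is to run the perturbation analysis of Lemma \ref{LEM:fhat0} and Lemma \ref{LEM:WNt1} with the roles of the factor vector and the sieve coefficients interchanged, pooled over the $T$ time periods, and to reduce the assertion to the Bahadur representation of the infeasible estimator $\boldsymbol{\lambda}^{\ast}$ already established in (\ref{EQ:lambdastar}). Recall that $\widehat{\boldsymbol{\lambda}}^{[1]}$ and $\boldsymbol{\lambda}^{\ast}$ minimize $L_{NT}^{\ast}(\widehat{f}^{[0]},\boldsymbol{\lambda})$ and $L_{NT}^{\ast}(f^{0},\boldsymbol{\lambda})$, respectively, and that Lemmas \ref{LEM:fstar} and \ref{LEM:fhat0} give $\max_{1\le t\le T}\|\widehat{f}_{t}^{[0]}-f_{t}^{\ast}\|=O_{p}(d_{NT})+o_{p}(N^{-1/2})$ and $\max_{1\le t\le T}\|f_{t}^{\ast}-f_{t}^{0}\|=O_{p}(N^{-1/2})$, the maxima over the polynomially many $t$ costing nothing since the per-$t$ bounds carry exponential tails. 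Since (\ref{EQ:lambdastar}), together with $\|\Psi_{NT}^{-1}U_{N,2}\|=O(K_{N}^{-r})$ and the stated order of $R_{NT}^{\ast}$, already yields $\|\boldsymbol{\lambda}^{\ast}-\boldsymbol{\lambda}^{0}-\Psi_{NT}^{-1}U_{N,1}\|=O_{p}(d_{NT})+o_{p}(N^{-1/2})$, it is enough to prove $\|\widehat{\boldsymbol{\lambda}}^{[1]}-\boldsymbol{\lambda}^{\ast}\|=O_{p}(d_{NT})+o_{p}(N^{-1/2})$.

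First I would establish a quadratic expansion of $L_{NT}^{\ast}(f,\boldsymbol{\lambda})$, uniform over $\boldsymbol{\lambda}$ in a ball of radius $\rho_{NT}:=C(d_{NT}+N^{-1/2})$ around $\boldsymbol{\lambda}^{0}$ and over $f=(f_{t})$ with $\max_{t}\|f_{t}-f_{t}^{0}\|\le\rho_{NT}$. Applying Knight's identity (\ref{EQ:rhotau}) and writing $Q_{it}(f)=\{B_{j}(X_{ji})^{\intercal}f_{jt},1\le j\le J\}^{\intercal}$ produces the decomposition of (\ref{EQ:LNtftg}): a conditional-mean part that is quadratic with Hessian $(NT)^{-1}\sum_{i,t}p_{i}(0|X_{i},f_{t})Q_{it}(f)Q_{it}(f)^{\intercal}$; a linear score term $V_{NT}(f):=(NT)^{-1}\sum_{i,t}Q_{it}(f)\psi_{\tau}(y_{it}-f_{ut}-\sum_{j}B_{j}(X_{ji})^{\intercal}\boldsymbol{\lambda}_{j}^{0}f_{jt})$; and a centered higher-order remainder. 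Since $\max_{t}\|f_{t}-f_{t}^{0}\|=o_{p}(1)$ and $\|Q_{it}(f)-Q_{it}^{0}\|\lesssim K_{N}^{1/2}\|f_{t}-f_{t}^{0}\|$, the rate conditions $K_{N}^{4}N^{-1}=o(1)$ and $K_{N}^{-1}(\log NT)(\log N)^{4}=o(1)$ give $\|(NT)^{-1}\sum_{i,t}p_{i}(0)Q_{it}(f)Q_{it}(f)^{\intercal}-\Psi_{NT}\|=o_{p}(1)$ uniformly over the ball, so the Hessian may be replaced by $\Psi_{NT}$, whose eigenvalues are bounded away from $0$ and $\infty$ by the argument behind (\ref{EQ:CphNT}). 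The higher-order remainder is controlled, uniformly over the growing-dimensional ($\asymp JK_{N}$) $\boldsymbol{\lambda}$-ball by the Horowitz--Lee covering/chaining device of Lemma \ref{LEM:Gtildatheta-theta0} and over the $f$-ball likewise, using the $\phi$-mixing Bernstein inequality of Lemma \ref{LEM:Bernstein}; the pooling over $t$ merely replaces the effective sample size $N$ by $NT$ and is accommodated through the two-dimensional mixing structure of (C1), exactly as in the variance bound (\ref{EQ:Ez}) of Proposition \ref{THM:ghat0}. This yields, mirroring (\ref{EQ:lambdastar}), that the minimizer $\widetilde{\boldsymbol{\lambda}}(f)$ of $L_{NT}^{\ast}(f,\cdot)$ satisfies $\widetilde{\boldsymbol{\lambda}}(f)-\boldsymbol{\lambda}^{0}=\Psi_{NT}^{-1}V_{NT}(f)+O_{p}(\phi_{NT})+o_{p}(N^{-1/2})$ uniformly over the $f$-ball. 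Evaluating this at the deterministic member $f=\widehat{f}^{[0]}$ of the ball (legitimate because the bound is uniform, exactly as $\widehat{f}$ was inserted in (\ref{EQ:ShatrN})) and at $f=f^{0}$ gives $\widehat{\boldsymbol{\lambda}}^{[1]}-\boldsymbol{\lambda}^{\ast}=\Psi_{NT}^{-1}\{V_{NT}(\widehat{f}^{[0]})-V_{NT}(f^{0})\}+O_{p}(\phi_{NT})+o_{p}(N^{-1/2})$.

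Next I would compare $V_{NT}(\widehat{f}^{[0]})$ with $V_{NT}(f^{0})$ via the decomposition of Lemma \ref{LEM:WNt1}. With $\widehat{Q}_{it}=Q_{it}(\widehat{f}^{[0]})$ and $\widehat{d}_{it}=G_{i}^{0}(X_{i})^{\intercal}(\widehat{f}_{t}^{[0]}-f_{t}^{0})$, the residual of the $\boldsymbol{\lambda}$-regression at $\boldsymbol{\lambda}^{0}$ equals $\varepsilon_{it}+\sum_{j}r_{j,it}^{\ast}-\widehat{d}_{it}$ up to an $O(K_{N}^{-r}\|\widehat{f}_{t}^{[0]}-f_{t}^{0}\|)$ error, while at $f^{0}$ it equals $\varepsilon_{it}+\sum_{j}r_{j,it}^{\ast}$. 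Splitting $V_{NT}(\widehat{f}^{[0]})-V_{NT}(f^{0})$ into a piece (I) in which the regressor perturbation $\widehat{Q}_{it}-Q_{it}^{0}$ meets the score evaluated at the truth, and a piece (II) in which the residual is perturbed by $\widehat{d}_{it}$: piece (I) is a double sum of near-mean-zero summands with per-coordinate conditional variance $O(N^{-1}\rho_{NT}^{2})$, so by the variance-plus-chaining argument that made $W_{Ntj,12}$ negligible in (\ref{EQ:WNt12}) (now with effective sample size $NT$) it is $o_{p}(N^{-1/2})$; in piece (II), the $(\widehat{Q}_{it}-Q_{it}^{0})$-contribution is $O_{p}(K_{N}^{1/2}\rho_{NT}^{2})=O_{p}(d_{NT})+o_{p}(N^{-1/2})$, and after replacing $\widehat{Q}_{it}$ by $Q_{it}^{0}$, stochastic equicontinuity over $\max_{t}\|f_{t}-f_{t}^{0}\|\le\rho_{NT}$ together with a first-order Taylor expansion of the conditional mean (as in the $E(W_{Ntj,13}|\mathbf{X},\mathbf{F})$ computation of Lemma \ref{LEM:WNt1}) show that (II) equals $-(NT)^{-1}\sum_{i,t}p_{i}(0|X_{i},f_{t})Q_{it}^{0}G_{i}^{0}(X_{i})^{\intercal}(\widehat{f}_{t}^{[0]}-f_{t}^{0})$ up to $o_{p}(N^{-1/2})$. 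Hence $\widehat{\boldsymbol{\lambda}}^{[1]}-\boldsymbol{\lambda}^{\ast}=-\Psi_{NT}^{-1}(NT)^{-1}\sum_{i,t}p_{i}(0|X_{i},f_{t})Q_{it}^{0}G_{i}^{0}(X_{i})^{\intercal}(\widehat{f}_{t}^{[0]}-f_{t}^{0})+O_{p}(d_{NT})+o_{p}(N^{-1/2})$.

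The crux, which I expect to be the main obstacle, is to show that this induced bias is $O_{p}(d_{NT})+o_{p}(N^{-1/2})$ rather than the $O_{p}(K_{N}^{1/2}\rho_{NT})$ a naive bound would give, and this uses two features of the set-up. Write $\widehat{f}_{t}^{[0]}-f_{t}^{0}=(\widehat{f}_{t}^{[0]}-f_{t}^{\ast})+(f_{t}^{\ast}-f_{t}^{0})$. For $v_{t}=\widehat{f}_{t}^{[0]}-f_{t}^{\ast}$: the $j$-th block of $(NT)^{-1}\sum_{i,t}p_{i}(0)Q_{it}^{0}G_{i}^{0\intercal}v_{t}$ equals $N^{-1}\sum_{i}\big(T^{-1}\sum_{t}p_{i}(0|X_{i},f_{t})f_{jt}^{0}G_{i}^{0}(X_{i})^{\intercal}v_{t}\big)B_{j}(X_{ji})$, so by the bounded-eigenvalue property of the spline Gram matrix (cf. (\ref{EQ:ZZ})) its norm is at most a constant times $\max_{t}\|v_{t}\|=\max_{t}\|\widehat{f}_{t}^{[0]}-f_{t}^{\ast}\|=O_{p}(d_{NT})+o_{p}(N^{-1/2})$, with no factor $K_{N}^{1/2}$; this is where the spline normalization (so that $N^{-1}\sum_{i}B_{j}(X_{ji})B_{j}(X_{ji})^{\intercal}$ has bounded eigenvalues) does the work. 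For $v_{t}=f_{t}^{\ast}-f_{t}^{0}$: substituting the Bahadur representation (\ref{EQ:fstar-f0}), $f_{t}^{\ast}-f_{t}^{0}=\Lambda_{Nt}^{-1}N^{-1}\sum_{i'}G_{i'}^{0}(X_{i'})\psi_{\tau}(\varepsilon_{i't})+o_{p}(N^{-1/2})$, the term becomes $(NT)^{-1}\sum_{i',t}A_{t}G_{i'}^{0}(X_{i'})\psi_{\tau}(\varepsilon_{i't})+o_{p}(N^{-1/2})$ with $A_{t}=N^{-1}\sum_{i}p_{i}(0|X_{i},f_{t})Q_{it}^{0}G_{i}^{0}(X_{i})^{\intercal}\Lambda_{Nt}^{-1}$; since $\|A_{t}G_{i'}^{0}(X_{i'})\|=O(1)$ uniformly (again by the Gram-matrix bound, using that $G_{i''}^{0\intercal}\Lambda_{Nt}^{-1}G_{i'}^{0}$ is bounded), this is a $U_{N,1}$-type double sum of bounded, conditionally mean-zero, exponentially weakly dependent summands, so its mean square is $O((NT)^{-1})$ by the computation of (\ref{EQ:Ez}), i.e. it is $O_{p}((NT)^{-1/2})=o_{p}(N^{-1/2})$ because $T\to\infty$ -- here it is the pooling over the $T$ periods that averages away the $N^{-1/2}$-sized sampling error of $f_{t}^{\ast}$. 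Premultiplying by $\Psi_{NT}^{-1}=O_{p}(1)$, the induced bias is $O_{p}(d_{NT})+o_{p}(N^{-1/2})$, and with the reduction of the first paragraph this completes the proof.
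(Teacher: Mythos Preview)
Your proposal is correct and follows essentially the same route as the paper's proof. The paper likewise establishes a uniform quadratic expansion of $L_{NT}^{\ast}(f,\boldsymbol{\lambda})$ over the ball $\|f_{t}-f_{t}^{0}\|\le C_{f}(d_{NT}+N^{-1/2})$ (its Lemmas \ref{LEM:LNtfg} and \ref{LEM:WNt2} redone with $\boldsymbol{\lambda}$ and $f$ swapped), obtains $\widehat{\boldsymbol{\lambda}}^{[1]}-\boldsymbol{\lambda}^{0}=\Psi_{NT}^{-1}V_{NT,1}(\widehat{f}^{[0]})+o_{p}\{(NT)^{-1/2}\}$, and then controls $V_{NT,1}(\widehat{f}^{[0]})-U_{NT,1}$ via exactly your pieces (I) and (II) (its $V_{NT,12}$ and $V_{NT,13}$), finishing (II) by substituting the Bahadur representation (\ref{EQ:fstar-f0}) and showing the resulting double sum is $O_{p}\{(NT)^{-1/2}\}$. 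The only difference is organizational: you route through $\boldsymbol{\lambda}^{\ast}$ and bound $\widehat{\boldsymbol{\lambda}}^{[1]}-\boldsymbol{\lambda}^{\ast}$, whereas the paper compares $V_{NT,1}(\widehat{f}^{[0]})$ directly to $U_{NT,1}$; and your treatment of the induced bias, splitting $\widehat{f}_{t}^{[0]}-f_{t}^{0}$ into $(\widehat{f}_{t}^{[0]}-f_{t}^{\ast})+(f_{t}^{\ast}-f_{t}^{0})$ and using the spline Gram-matrix bound to avoid the $K_{N}^{1/2}$ factor, is more explicit than the paper's somewhat terse justification in Lemma \ref{LEM:VNt1}.
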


\begin{proof}
By Lemma \ref{LEM:fhat0} and (\ref{EQ:fstar-f0}), we have
$||\widehat{f}_{t}^{[0]}-f_{t}^{0}||\leq C_{f}(d_{NT}+N^{-1/2})$ for some
constant $0<C_{f}<\infty$. Let $Q_{it}=\{B_{j}(X_{ji})^{\intercal}f_{jt},1\leq
j\leq J\}^{\intercal}$. Let $f=(f_{1}^{\intercal},\ldots,f_{T}^{\intercal
})^{\intercal}$ satisfy that $||f_{t}-f_{t}^{0}||\leq C_{f}(d_{NT}+N^{-1/2})$.
Write
\begin{align}
&  L_{NT}^{\ast}(f,\boldsymbol{\lambda})\nonumber\\
&  =E\{L_{NT}^{\ast}(f,\boldsymbol{\lambda})|\mathbf{X,F}%
\}-(\boldsymbol{\lambda}-\boldsymbol{\lambda}^{0})^{\intercal}\{V_{NT,1}%
(f)-E(V_{NT,1}(f)|\mathbf{X,F})\}\nonumber\\
&  +\,V_{NT,2}(f,\boldsymbol{\lambda})-E(V_{NT,2}(f,\boldsymbol{\lambda
})|\mathbf{X,F}),\label{EQ:LNTstar}%
\end{align}
where
\begin{align}
V_{NT,1}(f) &  =(NT)^{-1}\sum\nolimits_{i=1}^{N}\sum\nolimits_{t=1}^{T}%
Q_{it}\psi_{\tau}(y_{it}-f_{ut}-\boldsymbol{\lambda}^{0\intercal}%
Q_{it}),\label{VNt1}\\
V_{NT,2}(f,\boldsymbol{\lambda}) &  =(NT)^{-1}\sum\nolimits_{i=1}^{N}%
\sum\nolimits_{t=1}^{T}\{\rho_{\tau}(y_{it}-f_{ut}-\boldsymbol{\lambda
}^{\intercal}Q_{it})-\rho_{\tau}(y_{it}-f_{ut}-\boldsymbol{\lambda
}^{0\intercal}Q_{it})\nonumber\\
&  +(\boldsymbol{\lambda}-\boldsymbol{\lambda}^{0})^{\intercal}Q_{it}%
\psi_{\tau}(y_{it}-f_{ut}-\boldsymbol{\lambda}^{0\intercal}Q_{it})\}.\nonumber
\end{align}
By following the same reasoning as in the proofs of Lemmas \ref{LEM:LNtfg} and
\ref{LEM:WNt2}, we have
\begin{equation}
E\{L_{NT}^{\ast}(f,\boldsymbol{\lambda})|\mathbf{X}\}=-(\boldsymbol{\lambda
}-\boldsymbol{\lambda}^{0})^{\intercal}E(V_{NT,1}(f)|\mathbf{X,F)+}\frac{1}%
{2}(\boldsymbol{\lambda}-\boldsymbol{\lambda}^{0})^{\intercal}\Psi
_{NT}(\boldsymbol{\lambda}-\boldsymbol{\lambda}^{0})+o_{p}%
(||\boldsymbol{\lambda}-\boldsymbol{\lambda}^{0}||^{2}),\label{EQ:ELNTstar}%
\end{equation}%
\begin{equation}
V_{NT,2}(f,\boldsymbol{\lambda})-E(V_{NT,2}(f,\boldsymbol{\lambda
})|\mathbf{X,F)=}o_{p}(||\boldsymbol{\lambda}-\boldsymbol{\lambda}^{0}%
||^{2}+(NT)^{-1}),\label{VT2}%
\end{equation}
uniformly in $||f_{t}-f_{t}^{0}||\leq C_{f}(d_{NT}+N^{-1/2})$ and
$||\boldsymbol{\lambda}-\boldsymbol{\lambda}^{0}||\leq\varsigma_{NT}$, where
$\varsigma_{NT}$ is any sequence of positive numbers satisfying $\varsigma
_{NT}=o(1)$.Thus, by (\ref{EQ:LNTstar}), (\ref{EQ:ELNTstar}) and (\ref{VT2}),
we have
\[
L_{NT}^{\ast}(f,\boldsymbol{\lambda})=-(\boldsymbol{\lambda}%
-\boldsymbol{\lambda}^{0})^{\intercal}V_{NT,1}(f)\mathbf{+}\frac{1}%
{2}(\boldsymbol{\lambda}-\boldsymbol{\lambda}^{0})^{\intercal}\Psi
_{NT}(\boldsymbol{\lambda}-\boldsymbol{\lambda}^{0})\mathbf{+}o_{p}%
(||\boldsymbol{\lambda}-\boldsymbol{\lambda}^{0}||^{2}+(NT)^{-1}),
\]
uniformly in $||f_{t}-f_{t}^{0}||\leq C_{f}(d_{NT}+N^{-1/2})$ and
$||\boldsymbol{\lambda}-\boldsymbol{\lambda}^{0}||\leq\varsigma_{NT}$.
Therefore, we have
\[
\widehat{\boldsymbol{\lambda}}^{[1]}-\boldsymbol{\lambda}^{0}=\Psi_{NT}%
^{-1}V_{NT,1}(\widehat{f}^{[0]})+o_{p}\{(NT)^{-1/2}\}.
\]
By following the same reasoning as the proof for (\ref{EQ:ZZ}), as
$(N,T)\rightarrow\infty$ with probability approaching $1$, we have
$||\Psi_{NT}^{-1}||\leq C_{\Psi}^{\prime}$ for some constant $0<C_{\Psi
}^{\prime}<\infty$. In Lemma \ref{LEM:VNt1}, we will show that $||V_{NT,1}%
(\widehat{f}^{[0]})-U_{NT,1}||=O_{p}(d_{NT})+o_{p}(N^{-1/2})$. Therefore, the
result in Lemma \ref{LEM:VNt11} follows from the above results, and thus the
proof is completed.
\end{proof}

\begin{lemma}
\label{LEM:VNt1} Let Conditions (C1)-(C4) hold. If, in addition,
$K_{N}^{4}N^{-1}=o(1)$, $K_{N}^{-r+2}(\log T)=o(1)$ and $K_{N}^{-1}(\log
NT)(\log N)^{4}=o(1)$, then we have%
\[
||V_{NT,1}(\widehat{f}^{[0]})-U_{NT,1}||=O_{p}(d_{NT})+o_{p}(N^{-1/2}),
\]
where $V_{NT,1}$ and $U_{NT,1}$ are defined in (\ref{VNt1}) and (\ref{EQ:UNT1}%
), respectively.
\end{lemma}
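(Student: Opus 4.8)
The plan is to compare $V_{NT,1}(\widehat f^{[0]})$ with $U_{NT,1}$ by expanding around the true factor returns $f^{0}$ and then inserting the Bahadur representation of $\widehat f^{[0]}$ from Lemmas~\ref{LEM:fstar} and \ref{LEM:fhat0}. Write $\widehat Q_{it}=\{B_{j}(X_{ji})^{\intercal}\widehat f_{jt}^{[0]},\,1\le j\le J\}^{\intercal}$, so that $V_{NT,1}(\widehat f^{[0]})=(NT)^{-1}\sum_{i,t}\widehat Q_{it}\psi_{\tau}(y_{it}-\widehat f_{ut}^{[0]}-\boldsymbol{\lambda}^{0\intercal}\widehat Q_{it})$, and note that $y_{it}-\widehat f_{ut}^{[0]}-\boldsymbol{\lambda}^{0\intercal}\widehat Q_{it}=\varepsilon_{it}+\Delta_{it}$ with $\Delta_{it}=-\widetilde{G}_{i}^{0}(X_{i})^{\intercal}(\widehat f_{t}^{[0]}-f_{t}^{0})+\sum_{j}r_{j}(X_{ji})f_{jt}^{0}$. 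Adding and subtracting $Q_{it}^{0}\psi_{\tau}(\varepsilon_{it})$ and $\widehat Q_{it}\psi_{\tau}(\varepsilon_{it})$ gives $V_{NT,1}(\widehat f^{[0]})-U_{NT,1}=\mathrm{I}_{a}+\mathrm{I}_{b}+\mathrm{II}$ with $\mathrm{I}_{a}=(NT)^{-1}\sum_{i,t}(\widehat Q_{it}-Q_{it}^{0})\psi_{\tau}(\varepsilon_{it})$, $\mathrm{I}_{b}=(NT)^{-1}\sum_{i,t}(\widehat Q_{it}-Q_{it}^{0})\{\psi_{\tau}(\varepsilon_{it}+\Delta_{it})-\psi_{\tau}(\varepsilon_{it})\}$ and $\mathrm{II}=(NT)^{-1}\sum_{i,t}Q_{it}^{0}\{\psi_{\tau}(\varepsilon_{it}+\Delta_{it})-\psi_{\tau}(\varepsilon_{it})\}$. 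All estimates are carried out uniformly over the shrinking neighbourhood $\mathcal{F}_{NT}=\{f:\sup_{t}\|f_{t}-f_{t}^{0}\|\le C(d_{NT}+N^{-1/2})\}$, which contains $\widehat f^{[0]}$ with probability tending to one by Lemmas~\ref{LEM:fstar} and \ref{LEM:fhat0} together with (\ref{EQ:fstar-f0}).

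For $\mathrm{I}_{a}$, since $\widehat f_{jt}^{[0]}-f_{jt}^{0}$ is constant in $i$, its $j$-th block is $T^{-1}\sum_{t}(\widehat f_{jt}^{[0]}-f_{jt}^{0})\{N^{-1}\sum_{i}B_{j}(X_{ji})\psi_{\tau}(\varepsilon_{it})\}$; the inner cross-sectional average is conditionally mean zero given $(X_{i},f_{t})$ and is $O_{p}(\sqrt{K_{N}/N}\sqrt{\log K_{N}T})$ uniformly in $t$ by the $\phi$-mixing Bernstein-plus-union-bound argument that controls $D_{Nt,1}$ in the proof of Proposition~\ref{THM:ghat0}, so $\|\mathrm{I}_{a}\|=O_{p}\big((d_{NT}+N^{-1/2})\sqrt{K_{N}\log(NT)/N}\big)=o_{p}(d_{NT})+o_{p}(N^{-1/2})$ under $K_{N}^{4}N^{-1}=o(1)$ and $K_{N}^{-1}(\log NT)(\log N)^{4}=o(1)$. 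For $\mathrm{I}_{b}$, the bracketed indicator difference is bounded by $1$ and vanishes unless $|\varepsilon_{it}|\le|\Delta_{it}|$; combining $\|\widehat Q_{it}-Q_{it}^{0}\|\le O(\sqrt{K_{N}})\,\|\widehat f_{t}^{[0]}-f_{t}^{0}\|$ (from $\sup_{x}\|B_{j}(x)\|^{2}=O(K_{N})$), $\sup_{i,t}|\Delta_{it}|=O_{p}(d_{NT}+N^{-1/2})$ and $(NT)^{-1}\sum_{i,t}I(|\varepsilon_{it}|\le\sup_{i,t}|\Delta_{it}|)=O_{p}(d_{NT}+N^{-1/2})$ yields $\|\mathrm{I}_{b}\|=O_{p}\big(\sqrt{K_{N}}(d_{NT}+N^{-1/2})^{2}\big)=o_{p}(d_{NT})+o_{p}(N^{-1/2})$, because $\sqrt{K_{N}}(d_{NT}+N^{-1/2})=o(1)$ under the stated rates.

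The term $\mathrm{II}$ supplies the genuine $O_{p}(d_{NT})$ contribution. For $f\in\mathcal{F}_{NT}$ put $\Delta_{it}(f)=-\widetilde{G}_{i}^{0}(X_{i})^{\intercal}(f_{t}-f_{t}^{0})+\sum_{j}r_{j}(X_{ji})f_{jt}^{0}$ and split $\mathrm{II}=\mathrm{II}_{\mathrm{dev}}(\widehat f^{[0]})+\bar{\mathrm{II}}(\widehat f^{[0]})$, where $\mathrm{II}_{\mathrm{dev}}(f)=(NT)^{-1}\sum_{i,t}Q_{it}^{0}\big\{[I(\varepsilon_{it}<0)-I(\varepsilon_{it}<-\Delta_{it}(f))]-E[I(\varepsilon_{it}<0)-I(\varepsilon_{it}<-\Delta_{it}(f))\,|\,X_{i},f_{t}]\big\}$ and $\bar{\mathrm{II}}(f)$ collects the conditional-mean terms. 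A covering/chaining argument over $\mathcal{F}_{NT}$ in the style of Lemma~\ref{LEM:Gtildatheta-theta0} — cover each ball $\{f_{t}:\|f_{t}-f_{t}^{0}\|\le C(d_{NT}+N^{-1/2})\}$ by small cubes, apply the Bernstein inequality of Lemma~\ref{LEM:Bernstein} at each grid point (where the conditional second moments of the summands are controlled by $\sup_{i,t}|\Delta_{it}(f)|$), and union-bound over grid points and over $t$ — gives $\sup_{f\in\mathcal{F}_{NT}}\|\mathrm{II}_{\mathrm{dev}}(f)\|=O_{p}\big(\sqrt{K_{N}(d_{NT}+N^{-1/2})\log(NT)/N}\big)=o_{p}(N^{-1/2})$. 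For $\bar{\mathrm{II}}$, a first-order expansion of $F_{i}(\cdot|X_{i},f_{t})$ with the Lipschitz bound in (C2) gives $\bar{\mathrm{II}}(f)=-T^{-1}\sum_{t}M_{Nt}(f_{t}-f_{t}^{0})+U_{NT,2}+O\big((NT)^{-1}\sum_{i,t}\|Q_{it}^{0}\|\Delta_{it}(f)^{2}\big)$ with $M_{Nt}=N^{-1}\sum_{i}Q_{it}^{0}p_{i}(0|X_{i},f_{t})\widetilde{G}_{i}^{0}(X_{i})^{\intercal}$ and $U_{NT,2}$ as in (\ref{EQ:UNT1}). The key point is that, although $\|Q_{it}^{0}\|=O(\sqrt{K_{N}})$ pointwise, the compact support of the B-splines makes each row of the \emph{averaged} matrix $M_{Nt}$ of order $O(K_{N}^{-1/2})$, so that $\|M_{Nt}\|\le\|M_{Nt}\|_{F}=O_{p}(1)$ uniformly in $t$, and the same localization gives $\|U_{NT,2}\|=O(K_{N}^{-r})=O(d_{NT})$, while the quadratic remainder is $O_{p}(\sqrt{K_{N}}(d_{NT}+N^{-1/2})^{2})=o_{p}(d_{NT})+o_{p}(N^{-1/2})$ as in $\mathrm{I}_{b}$. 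Inserting $\widehat f_{t}^{[0]}-f_{t}^{0}=\Lambda_{Nt}^{-1}\{N^{-1}\sum_{i}G_{i}^{0}(X_{i})\psi_{\tau}(\varepsilon_{it})\}+d_{NT}\delta_{N,t}+o_{p}(N^{-1/2})$ from Lemmas~\ref{LEM:fstar} and \ref{LEM:fhat0} into $-T^{-1}\sum_{t}M_{Nt}(\widehat f_{t}^{[0]}-f_{t}^{0})$ produces $-d_{NT}T^{-1}\sum_{t}M_{Nt}\delta_{N,t}=O_{p}(d_{NT})$, the term $U_{NT,2}=O(d_{NT})$, and $-T^{-1}\sum_{t}M_{Nt}\Lambda_{Nt}^{-1}N^{-1}\sum_{i}G_{i}^{0}(X_{i})\psi_{\tau}(\varepsilon_{it})$, which is a time average of conditionally mean-zero cross-sectional averages whose conditional second moment is $O((NT)^{-1})$ by the $\phi$-mixing bound in (C1), hence $O_{p}((NT)^{-1/2})=o_{p}(N^{-1/2})$. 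Collecting the four contributions gives $\|V_{NT,1}(\widehat f^{[0]})-U_{NT,1}\|=O_{p}(d_{NT})+o_{p}(N^{-1/2})$.

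The hard part will be the stochastic-equicontinuity control of $\mathrm{II}_{\mathrm{dev}}$ over the shrinking neighbourhood $\mathcal{F}_{NT}$ while keeping track of the $\sqrt{K_{N}}$ factors carried by $Q_{it}^{0}$: one has to run the full covering/Bernstein/union-bound machinery of Lemma~\ref{LEM:Gtildatheta-theta0} and then verify that the resulting rate $\sqrt{K_{N}(d_{NT}+N^{-1/2})\log(NT)/N}$ is genuinely $o(N^{-1/2})$ under $K_{N}^{4}N^{-1}=o(1)$, $K_{N}^{-1}(\log NT)(\log N)^{4}=o(1)$ and $d_{NT}=O(\phi_{NT})$. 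A second, structural subtlety — without which every bound above is inflated by a factor $\sqrt{K_{N}}$ — is that the naive per-observation estimate $\|Q_{it}^{0}\|=O(\sqrt{K_{N}})$ must be replaced, for the averaged objects $M_{Nt}$, $U_{NT,2}$ and the inner average appearing in $\mathrm{I}_{a}$, by the sharper $\|M_{Nt}\|=O_{p}(1)$ and $\|U_{NT,2}\|=O(K_{N}^{-r})$ obtained from B-spline support localization.
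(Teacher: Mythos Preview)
Your argument is correct and follows the same architecture as the paper's proof: the paper writes $V_{NT,1}(f)=V_{NT,11}+V_{NT,12}(f)+V_{NT,13}(f)$ with $V_{NT,11}=U_{NT,1}$, where $V_{NT,12}$ is your $\mathrm{I}_a$ and $V_{NT,13}$ is your $\mathrm{I}_b+\mathrm{II}$ (the paper keeps the prefactor $Q_{it}$ rather than splitting it as $Q_{it}^{0}+(\widehat Q_{it}-Q_{it}^{0})$). Both proofs then center $V_{NT,13}$ (your $\mathrm{II}$) by its conditional mean, bound the deviation via the Bernstein/covering machinery of Lemma~\ref{LEM:Gtildatheta-theta0}, expand the conditional mean to first order, and insert the Bahadur representation of $\widehat f_t^{[0]}$ from (\ref{EQ:fstar-f0}) and Lemma~\ref{LEM:fhat0}.

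Two small points of comparison. First, the paper states the deviation rate for $V_{NT,13}-E(V_{NT,13}\mid\mathbf{X},\mathbf{F})$ as $O_{p}\{K_{N}^{1/2}(d_{NT}+N^{-1/2})^{1/2}(NT)^{-1/2}\}=o_{p}(d_{NT})$, exploiting the $T$-averaging, whereas your $\mathrm{II}_{\mathrm{dev}}$ bound has only $N^{-1/2}$ with a log; both suffice for the lemma, and your cruder bound sidesteps the delicate issue of covering the product set $\prod_t\{f_t:\|f_t-f_t^{0}\|\le C(d_{NT}+N^{-1/2})\}$ with $T$ factors. Second, your explicit identification of $M_{Nt}$ and the B-spline support--localization argument for $\|M_{Nt}\|=O_p(1)$ (and for $\|U_{NT,2}\|=O(K_N^{-r})$) makes precise a step the paper handles more tersely via ``$\|(NT)^{-1}\sum_{i,t}B_i(X_i)p_i(0|X_i,f_t)\|=O_p(1)$'' when bounding the leading stochastic piece of $E\{V_{NT,13}(\widehat f^{[0]})\mid\mathbf{X},\mathbf{F}\}$; your treatment is the cleaner one here.
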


\begin{proof}
Write
\begin{equation}
V_{NT,1}(f)=V_{NT,11}+V_{NT,12}(f)+V_{NT,13}(f),\label{VNT1}%
\end{equation}
where
\begin{align*}
V_{NT,11} &  =U_{NT,1}=(NT)^{-1}\sum\nolimits_{i=1}^{N}\sum\nolimits_{t=1}%
^{T}Q_{it}^{0}\psi_{\tau}(\varepsilon_{it}),\\
V_{NT,12}(f) &  =(NT)^{-1}\sum\nolimits_{i=1}^{N}\sum\nolimits_{t=1}%
^{T}(Q_{it}-Q_{it}^{0})\psi_{\tau}(\varepsilon_{it})),\\
V_{NT,13}(f) &  =(NT)^{-1}\sum\nolimits_{i=1}^{N}\sum\nolimits_{t=1}^{T}%
Q_{it}\{\psi_{\tau}(y_{it}-f_{ut}-\boldsymbol{\lambda}^{0\intercal}%
Q_{it}))-\psi_{\tau}(\varepsilon_{it})\}.
\end{align*}
Since $||N^{-1}\sum\nolimits_{i=1}^{N}B(X_{i})\psi_{\tau}(\varepsilon
_{it})||=O_{p}(N^{-1/2})$, we have with probability approaching 1,
\begin{align}
&  & \sup_{||f_{t}-f_{t}^{0}||\leq C_{f}(d_{NT}+N^{-1/2})}||V_{NT,12}|| &
\leq T^{-1}\sum\nolimits_{t=1}^{T}||N^{-1}\sum\nolimits_{i=1}^{N}B(X_{i}%
)\psi_{\tau}(\varepsilon_{it})||\nonumber\\
&  & \times\ \sup_{||f_{t}-f_{t}^{0}||\leq C_{f}(d_{NT}+N^{-1/2})}%
||f_{t}-f_{t}^{0}|| &  =O\{N^{-1/2}(d_{NT}+N^{-1/2})\}=o(N^{-1/2}%
+d_{NT}).\label{Vj12}%
\end{align}
By following the same procedure as the proof for (\ref{EQ:VLamtilda}), we have
for any vector $\mathbf{a}\in R^{K_{N}J}$ with $||\mathbf{a||=}1$,
\[
\text{var}(\mathbf{a}^{\intercal}V_{NT,13}(f)\mathbf{a)=}O\mathbf{\{}%
K_{N}(d_{NT}+N^{-1/2})(NT)^{-1}\},
\]
uniformly in $||f_{t}-f_{t}^{0}||\leq C_{f}(d_{NT}+N^{-1/2})$. Then by the
procedure as the proof in Lemma \ref{LEM:Gtildatheta-theta0}, we have%
\begin{align*}
\sup_{||f_{t}-f_{t}^{0}||\leq C_{f}(d_{NT}+N^{-1/2})}||V_{NT,13}%
(f)-E\{V_{NT,13}(f)\}|| &  \mathbf{=}O_{p}\mathbf{\{}K_{N}^{1/2}%
(d_{NT}+N^{-1/2})^{1/2}(NT)^{-1/2}\}\\
&  =o_{p}(d_{NT}).
\end{align*}
Hence,
\begin{equation}
||V_{NT,13}(\widehat{f}^{[0]})-E\{V_{NT,13}(\widehat{f}^{[0]})\}||=o_{p}%
(d_{NT}).\label{V-EV}%
\end{equation}
Let
\[
\kappa_{it}(f)=f_{ut}^{0}-f_{ut}+\sum\nolimits_{j=1}^{J}(\widetilde{g}_{j}%
^{0}(X_{ji})(f_{jt}^{0}-f_{jt})+r_{j,it}^{\ast}).
\]
Then there exist constants $0<C,C^{\prime}<\infty$ such that
\begin{align}
||E\{V_{NT,13}(f)|\mathbf{X,F}\}|| &  \leq C||E[(NT)^{-1}\sum\nolimits_{i=1}%
^{N}\sum\nolimits_{t=1}^{T}B_{i}(X_{i})\{I(\varepsilon_{it}\leq
0)-I(\varepsilon_{it}\leq\kappa_{it}(f))\}|\mathbf{X,F}]||\nonumber\\
&  \leq C^{\prime}||(NT)^{-1}\sum\nolimits_{i=1}^{N}\sum\nolimits_{t=1}%
^{T}B_{i}(X_{i})\kappa_{it}(f)p_{i}(0|X_{i},f_{t})||\label{EQ:EV}%
\end{align}
uniformly in $||f_{t}-f_{t}^{0}||\leq C_{f}(d_{NT}+N^{-1/2})$. Moreover, by
(\ref{EQ:fstar-f0}) and Lemma \ref{LEM:fhat0}, we have
\begin{align}
&  \left\Vert (NT)^{-1}\sum\nolimits_{i=1}^{N}\sum\nolimits_{t=1}^{T}%
B_{i}(X_{i})\kappa_{it}(\widehat{f}^{[0]})p_{i}(0|X_{i},f_{t})\right.
\nonumber\\
&  \left.  +(NT)^{-1}\sum\nolimits_{i=1}^{N}\sum\nolimits_{t=1}^{T}B_{i}%
(X_{i})p_{i}(0|X_{i},f_{t})\widetilde{g}^{0}(X_{i})^{\intercal}[\Lambda
_{N}^{-1}\{N^{-1}\sum\nolimits_{i=1}^{N}G_{i}^{0}(X_{i})(\tau-I(\varepsilon
_{it}<0))\}]\right\Vert \nonumber\\
&  =O(d_{NT})+o_{p}(N^{-1/2}).\label{EQ:Bkit}%
\end{align}
Since $||(NT)^{-1}\sum\nolimits_{t=1}^{T}\sum\nolimits_{i=1}^{N}G_{i}%
^{0}(X_{i})(\tau-I(\varepsilon_{it}<0))||=O_{p}\{(NT)^{-1/2}\}$, and
\[
||(NT)^{-1}\sum\nolimits_{i=1}^{N}\sum\nolimits_{t=1}^{T}B_{i}(X_{i}%
)p_{i}(0|X_{i},f_{t})||=O_{p}(1),
\]
we have
\begin{align*}
&  \left\Vert (NT)^{-1}\sum\nolimits_{i=1}^{N}\sum\nolimits_{t=1}^{T}%
B_{i}(X_{i})p_{i}(0|X_{i},f_{t})\widetilde{g}^{0}(X_{i})^{\intercal}%
[\Lambda_{N}^{-1}\{N^{-1}\sum\nolimits_{i=1}^{N}G_{i}^{0}(X_{i})(\tau
-I(\varepsilon_{it}<0))\}]\right\Vert \\
&  =O_{p}\{(NT)^{-1/2}\}.
\end{align*}
Therefore, by (\ref{EQ:EV}) and (\ref{EQ:Bkit}), we have with probability
approaching 1,
\begin{equation}
||E\{V_{NT,13}(\widehat{f}^{[0]})|\mathbf{X,F}\}||=O(d_{NT})+o(N^{-1/2}%
).\label{EV}%
\end{equation}
By (\ref{V-EV}) and (\ref{EV}), we have
\begin{equation}
||V_{NT,13}(\widehat{f}^{[0]})||=O_{p}(d_{NT})+o_{p}(N^{-1/2}).\label{Vjk13}%
\end{equation}
\ Therefore, the result in Lemma \ref{LEM:VNt1} follows from (\ref{VNT1}),
(\ref{Vj12}), and (\ref{Vjk13}) directly.
\end{proof}


\begin{thebibliography}{99}

\bibitem {AP09}{\small {\footnotesize Angrist, J., and Pischke, J. (2009):
``Mostly Harmless Econometrics: An Empiricist's Companion," 1st Edition, New
Jersey: Princeton University Press. }}

\bibitem {B89}{\small {\footnotesize Babu, G. J. (1989): ``Strong
Representations for LAD Estimators in Linear Models," \textit{Probability
Theory and Related Fields}, 83, 547-558. }}

\bibitem {BL12}{\small {\footnotesize Bai, J. and Li, K. (2012): ``Statistical Analysis of Factor Models of High Dimension," \emph{The Annals of Statistics},
40,436-465.}}

\bibitem {BN02}{\small {\footnotesize Bai, J. and Ng, S. (2002): ``Determining
the Number of Factors in Approximate Factor Models," \emph{Econometrica},
70,191-221.}}

\bibitem {Banz81}{\small {\footnotesize Banz, R.W. (1981): ``The Relationship
between Return and Market Value of Common Stocks," \emph{Journal of Financial
Economics}, 9, 3-18. } }

\bibitem {BKK04}{\small {\footnotesize Bassett., G.W., Koenker, R. and
Kordas, G. (2004): ``Pessimistic Portfolio Allocation and Choquet Expected
Utility," \emph{Journal of Financial Econometrics}, 2, 477-492.} }

\bibitem {Basu77}{\small {\footnotesize Basu, S. (1977): ``The Investment
Performance of Common Stocks in Relation to Their Price to Earnings Ratio: a
Test of the Efficient Markets Hypothesis," \emph{Journal of Finance}, 32,
663-682. } }

\bibitem {B61}{\small {\footnotesize Bellman, R. E. (1961): ``Adaptive Control
Processes," Princeton University Press, Princeton, N.J.. } }

\bibitem {BCF16}{\small {\footnotesize Belloni, A., Chernozhukov, V., and
Fernandez-Val, I. (2016): ``Conditional Quantile Processes based on Series or Many
Regressors," Available at arXiv:1105.6154v1.} }

\bibitem {BLV15}{\small {\footnotesize Boneva, L., Linton, O. and Vogt, M.
(2015): ``The Effect of Fragmentation in Trading on Market Quality in the UK
Equity Market," \emph{Journal of Applied Econometrics}, Available at
http://dx.doi.org/10.1002/jae.2438 } }

\bibitem {B98}{\small {\footnotesize Bosq, D. (1998): ``Nonparametric
Statistics for Stochastic Processes," New York: Springer-Verlag. } }

\bibitem {Brown89}{\small {\footnotesize Brown, S. J. (1989): ``The Number of
Factors in Security Returns," \textit{Journal of Finance}, 44, 1247-1262. } }

\bibitem {Bry15}{\small {\footnotesize Bryzgalova, S. (2015): ``Spurious
Factors in Linear Asset Pricing Models. Working Paper. } }


\bibitem {Cai11}{\small {\footnotesize Cai, T. and Liu, W. (2011): ``Adaptive thresholding for sparse covariance matrix estimation," \textit{Journal of the American Statistical Association}, 106, 672-684. } }

\bibitem {chen11}{\small {\footnotesize Chen, X. (2011): ``Penalized Sieve
Estimation and Inference of Semi--Nonparametric Dynamic Models: A Selective
Review," Cowles Foundation Discussion Paper No. 1804.} }


\bibitem {CC15}{\small {\footnotesize Chen, X., and Christensen, T. (2015):
``Optimal Uniform Convergence Rates and Asymptotic Normality for Series
Estimators under Weak Dependence and Weak Conditions," \emph{Journal of
Econometrics}, 188, 447--465.} }

\bibitem {CL12}{\small {\footnotesize Chen, X., and Liao, Z. (2012):
``Asymptotic Properties of Penalized M Estimators with Time Series
Observations," In \emph{Causality, Prediction and Specification Analysis:
Essays in Honour of Halbert White} (eds. X. Chen and N. Swanson).
Springer--Verlag, Berlin.} }


\bibitem {CP15}{\small {\footnotesize Chen, X., and Pouzo, D. (2015): ``Sieve
Quasi--Likelihood Ratio Inference on Semi--Nonparametric Conditional Moment
Models," \emph{Econometrica}, 83, 1013--1079.} }

\bibitem {C99}{\small {\footnotesize Conley, T. G. (1999): ``GMM Estimation
with Cross--Sectional Dependence," \textit{Journal of Econometrics}, 92, 1-45.
} }


\bibitem {CK93}{\small {\footnotesize Connor, G., and Korajczyk, R.A. (1993):
``A Test for the Number of Factors in an Approximate Factor Model,"
\emph{Journal of Finance}, 48, 1263-1288. } }

\bibitem {CL07}{\small {\footnotesize Connor, G., and Linton, O. (2007):
``Semiparametric Estimation of a Characteristic--Based Factor Model of Stock
Returns," \textit{Journal of Empirical Finance}, 14, 694-717. } }

\bibitem {CML12}{\small {\footnotesize Connor, G., Hagmann, M., and Linton, O.
(2012): ``Efficient Semiparametric Estimation of the Fama--French Model and
Extensions," \textit{Econometrica}, 80, 713-754. } }


\bibitem {DT97}{\small {\footnotesize Daniel, K., and Titman, S. (1997):
``Evidence on the Characteristics of Cross--sectional Variation in Stock
Returns," \emph{Journal of Finance}, 52, 1-34. } }

\bibitem {Davis94}{\small {\footnotesize Davis, J. (1994): ``The
Cross--Section of Realized Stock Returns: the pre-Compustat Evidence,"
\emph{Journal of Finance}, 49, 1579-1593. } }

\bibitem {B01}{\small {\footnotesize de Boor, C. (2001): {A Practical Guide to
Splines}," Applied Mathematical Sciences 27, New York: Springer. } }

\bibitem {D86}{\small {\footnotesize Demko, S. (1986): ``Spectral Bounds for
$|a^{-1}|_{\infty}$," \emph{Journal of Approximation Theory}, 48, 207-212. } }

\bibitem {DGP15}{\small {\footnotesize Dong, C., Gao, J. and Peng, B. (2015):
``Semiparametric Single--Index Panel Data Models with Cross--Sectional
Dependence," \emph{Journal of Econometrics}, {188}, 301--312 } }


\bibitem {FF92}{\small {\footnotesize Fama, E.F., and French, K.R. (1992):
``The Cross--Section of Expected Stock Returns," \emph{Journal of Finance},
47, 427-465. } }

\bibitem {FF93}{\small {\footnotesize Fama, E.F., and French, K.R. (1993):
``Common Risk Factors in the Returns to Stocks and Bonds," \emph{Journal of
Financial Economics}, 33, 3-56. } }

\bibitem {FF95}{\small {\footnotesize Fama, E.F., and French, K.R. (1995):
``Size and Book to Market Factors in Earnings and Returns," \emph{Journal of
Finance}, 50, 131-156. } }

\bibitem {FF96}{\small {\footnotesize Fama, E.F., and French, K.R. (1996):
``Multifactor Explanations of Asset Pricing Anomalies," \emph{Journal of
Finance}, 51, 55-84. } }

\bibitem {FF98}{\small {\footnotesize Fama, E.F., and French, K.R. (1998):
``Value versus Growth: the International Evidence," \emph{Journal of Finance},
53, 1975-2000. } }

\bibitem {FLM13}{\small {\footnotesize Fan, J., Liao, Y. and, Micheva, M.
(2013): ``Large Covariance Estimation by Thresholding Principal Orthogonal
Complements (with Discussion)," \textit{Journal of the Royal Statistical
Society Series B}, 75, 603--680. } }

\bibitem {FLW16}{\small {\footnotesize Fan, J., Liao, Y. and Wang, W.  (2016):
``Projected Principal Component Analysis in Factor Models," \textit{The Annals
of Statistics}, 44, 219-254. } }



\bibitem {GLT06}{\small {\footnotesize Gao, J., Lu, Z., and Tj{\o }stheim, D.
(2006): ``Estimation in Semiparametric Spatial Regression," \textit{The Annals
of Statistics}, 34, 1395-1435. } }

\bibitem {H1995}{\small {\footnotesize Haugen, R. (1995): ``The New Finance:
the case against Efficient Markets," New Jersey: Prentice-Hall, Englewood
Cliffs. } }


\bibitem {HS96}{\small {\footnotesize He, X. and Shi, P.
(1996): ``Bivariate tensor-product B-splines in a partly linear model," \textit{Journal of Multivariate Analysis}, 58, 162-181. } }

\bibitem {HNS99}{\small {\footnotesize Hodrick, R., Ng, D., and Sengmueller, P.
(1999): ``An International Dynamic Asset Pricing Model," \emph{International
Taxation and Public Finance}, 6, 597-620. } }

\bibitem {HL05}{\small {\footnotesize Horowitz, J. L., and Lee, S. (2005):
``Nonparametric Estimation of an Additive Quantile Regression Model,"
\textit{Journal of the American Statistical Association}, 100, 1238-1249. } }

\bibitem {HM11}{\small {\footnotesize Horowitz, J. L., and Mammen, E. (2011):
``Oracle--Efficient Nonparametric Estimation of an Additive Model with an
Unknown Link Function," \textit{Econometric Theory}, 27, 582--608. } }



\bibitem {KV02}{\small {\footnotesize Kiefer, N. M., and Vogelsang, T. J.
(2002): ``Heteroskedasticity--Autocorrelation Robust Standard Errors using the
Bartlett Kernel without Truncation," \textit{Econometrica,} 70, 2093-2095. } }

\bibitem {KV05}{\small {\footnotesize Kiefer, N. M., and Vogelsang, T. J.
(2005): ``A New Asymptotic Theory for Heteroskedasticity--Autocorrelation
Robust Tests," \textit{Econometric Theory}, 21, 1130-1164. } }

\bibitem {K98}{\small {\footnotesize Knight, K. (1998): ``Limiting
Distribution for L$_{1}$ Regression Estimators under General Conditions,"
\textit{The Annals of Statistics}, 26, 755-770. } }

\bibitem {KB78}{\small {\footnotesize Koenker, R., and Bassett, G.  (1978):
``Regression Quantiles," \textit{Econometrica}, 46, 33-50. } }

\bibitem {LSV94}{\small {\footnotesize Lakonishok, J., Shleifer, A., and
Vishny, R.W. (1994): ``Contrarian Investment, Extrapolation and Risk," \emph{Journal
of Finance}, 49, 1541-1578. } }

\bibitem {LY12}{\small {\footnotesize Lam, C. and Yao, Q. (2012): ``Factor Modeling for High-dimensional Time Series: Inference
for the Number of Factors," \emph{The Annals of Statistics}, 40, 694-726. } }

\bibitem {LR15}{\small {\footnotesize Lee, J., and Robinson, P. M. (2016):
``Series Estimation under Cross--Sectional Dependence," \textit{Journal of
Econometrics}, 190, 1--17. } }

\bibitem {LJ99}{\small {\footnotesize Lewellen, J. (1999): ``The Time--Series
Relations among Expected Return, Risk, and Book to Market Value,"
\emph{Journal of Financial Economics}, 54, 5-44. } }

\bibitem {LCFW17}{\small {\footnotesize Li, Q., Cheng, G., Fan, J., and Wang, Y. (2017):
``Embracing Blessing of Dimensionality in Factor Models," \textit{Journal of American Statistical Association}, forthcoming. } }


\bibitem {MSW13}{\small {\footnotesize Ma, S., Song, Q., and Wang, L.  (2013):
``Simultaneous Variable Selection and Estimation in Semiparametric Modelling
of Longitudinal/clustered Data," \textit{Bernoulli}, 19, 252-274. } }

\bibitem {MY11}{\small {\footnotesize Ma, S., and Yang, L.  (2011):
``Spline--Backfitted Kernel Smoothing of Partially Linear Additive Model,"
\textit{Journal of Statistical Planning and Inference}, 14, 204-219. } }

\bibitem {Mac95}{\small {\footnotesize MacKinlay, A.C. (1995): ``Multifactor
Models do not Explain Deviations from the CAPM," \emph{Journal of Finance},
38, 3-23. } }

\bibitem {MPR09}{\small {\footnotesize Merlev\`{e}de, Peligrad, F., M., and
Rio, E. (2009): ``Bernstein Inequality and Moderate Deviations under Strong Mixing
Conditions," \textit{IMS Collections, High Dimensional Probability V: The
Luminy Volume}, 5, 273-292. } }

{\small {\footnotesize
} }

\bibitem {PMH06}{\small {\footnotesize Pesaran, M.H. (2006): ``Estimation and
Inference in Large Heterogeneous Panels with a Multifactor Error Structure,"
\textit{Econometrica}, 74, 967-1012. } }

{\small {\footnotesize
} }

\bibitem {Pow91}{\small {\footnotesize Powell, J. L. (1991): ``Estimation of
Monotonic Regression Models under Quantile Restrictions," In W. Barnett, J.
Powell, G. Tauchen (Eds.), Nonparametric and Semiparametric Models in
Econometrics, Cambridge: Cambridge University Press. } }

\bibitem {RT12}{\small {\footnotesize Robinson, P.M., and Thawornkaiwong, S.
(2012): ``Statistical Inference on Regression with Spatial Dependence,"
\emph{Journal of Econometrics}, 167, 521-542. } }

\bibitem {Rosen74}{\small {\footnotesize Rosenberg, B. (1974): Extra--Market
Components of Covariance Among Security Prices," \emph{Journal of Financial
and Quantitative Analysis}, 9, 263-274. } }

\bibitem {RRL85}{\small {\footnotesize Rosenberg, B., Reid, K., and Lanstein, R.
(1985): ``Persuasive Evidence of Market Inefficiency," \emph{Journal of
Portfolio Management}, 11, 9-17. } }


\end{thebibliography}
\end{document}